\newcolumntype{C}{>{$}c<{$}}
\newcolumntype{R}{>{$}r<{$}}
\newcolumntype{L}{>{$}l<{$}}
\newcommand{\may}[1]{\langle{}#1\rangle}
\newcommand{\must}[1]{{[} #1 {]}}
\newcommand{\wmay}[1]{\langle\!\langle #1 \rangle\!\rangle}
\newcommand{\Set}[1]{\left\{#1\right\}}
\newcommand{\casesep}{\quad | \quad}
\newcommand{\setsep}{\mid}
\newcommand{\nameset}{\mathcal{N}}
\newcommand{\formset}{\mathcal{A}}
\newcommand{\actions}{\mbox{\sc act}}
\newcommand{\states}{\mbox{\sc states}}
\newcommand{\predicates}{\mbox{\sc pred}}
\newcommand{\conjuncset}[1]{\bigwedge_{i \in #1}}
\newcommand{\conjunc}{\conjuncset{I}}
\newcommand{\disjuncset}[1]{\bigvee_{i \in #1}}
\newcommand{\disjunc}{\disjuncset{I}}
\newcommand{\bisim}{\stackrel{\cdot}{\sim}}
\newcommand{\valid}[2]{#1 \nobreak \, \models \nobreak \, #2 }
\newcommand{\entails}[2]{#1 \, \vdash \, #2}
\newcommand{\trans}[1]{\xrightarrow{#1}}
\newcommand{\Trans}[1]{\stackrel{#1}{\Rightarrow}}
\newcommand{\HTrans}[1]{\Trans{\hat{#1}}}
\newcommand{\freshin}{\#}
\newcommand{\bn}{\operatorname{bn}}
\newcommand{\n}{\operatorname{supp}}
\newcommand{\flogeq}{\stackrel{F/L}{=}}
\newcommand{\logeq}{\stackrel{\cdot}{=}}
\newcommand{\ssubst}[2]{\{{}^{#1}\!/\!{}_{#2}\}}
\newcommand{\new}{\reflectbox{\ensuremath{\mathsf{N}}}}
\newcommand{\effect}[2]{#1 @ #2}
\newcommand{\effectfa}{\effect{f}{A}}
\newcommand{\effectset}{{\mathcal{P}}_{\mbox{fs}}({\mathcal{F}})}
\newcommand{\stateid}{\mbox{\tt id}}
\newcommand{\setid}{\{\stateid\}}
\newcommand{\substx}{\mbox{\tt subst}_x}
\newcommand{\fbisim}{\stackrel{F/L}{\sim}}
\newcommand{\fabisim}{\stackrel{L(\alpha,F,f)/L}{\sim}}
\newcommand{\falogeq}{\stackrel{L(\alpha,F,f)/L}{=}}
\newcommand{\wbisim}{\stackrel{\cdot}{\approx}}
\newcommand{\wlogeq}{\stackrel{\cdot}{\equiv}}
\newcommand{\ac}{{\mbox{\sc ac}}}
\newcommand{\ef}{{\mbox{\sc ef}}}
\newcommand{\finset}[1]{\mathcal{P}_{\mbox{fin}}(#1)}
\newcommand{\nameabs}[1]{<\!\!#1\!\!>\!\!}
\newcommand{\topic}[1]{

\vspace{1em}

\noindent
\textbf{#1.}}
\newcommand{\firsttopic}[1]{\textbf{#1.}}
\begin{document}
 
\title{Modal Logics for Nominal Transition Systems}
\author[Parrow, Borgstr\"om, Eriksson, Gutkovas, Weber]{Joachim Parrow, Johannes
Borgstr\"om, Lars-Henrik Eriksson,\\Ram{\=u}nas Forsberg Gutkovas, Tjark Weber}
\address{Uppsala University,
  Sweden}

\begin{abstract}
We define a general notion of transition system
 where states and action labels can be from arbitrary nominal sets, actions may bind names, and state predicates from an arbitrary logic define properties of states. 
 A Hennessy-Milner logic for these systems is introduced, and proved adequate and expressively complete for bisimulation equivalence. A main technical novelty is the use of finitely supported infinite conjunctions. We show how to treat different bisimulation variants such as early, late, open and weak in a systematic way, explore the folklore theorem that state predicates can be replaced by actions, and make substantial comparisons with related work. The main definitions and theorems have been formalised in Nominal Isabelle.
\end{abstract}
\maketitle

\section{Introduction}

Transition systems are ubiquitous in models of computing. Specifications about what may and must happen during executions are often formulated in a modal logic. There is a plethora of different versions of both transition systems and logics, including a variety of higher-level constructs such as updatable data structures, new name generation, alias generation, dynamic topologies for parallel components etc. In this paper we formulate a general kind of transition system where such aspects can be treated uniformly, and define accompanying modal logics. Our results are on adequacy, i.e., that logical equivalence coincides with bisimilarity, and expressive completeness, i.e., that any bisimulation-preserving property can be expressed.

\topic{States} In any transition system there is a set of {\em states} $P,Q,\ldots$ representing the configurations a system can reach, and a relation telling how a computation can move between them. Many formalisms, for example all process algebras, define languages for expressing states, but in the present paper we shall make no assumptions about any such syntax and just assume that the states form a set. 

\topic{Actions} In systems describing communicating parallel processes, the transitions  are often labelled with {\em actions} $\alpha, \beta$, representing the externally observable effect of the transition. A transition $P \trans{\alpha} P'$ thus says that in state $P$ the execution can progress to $P'$ while performing the action $\alpha$, which is visible to the rest of the world. For example, in CCS these actions are atomic and partitioned into output and input communications. In value-passing calculi the actions can be more complicated, consisting of a channel designation and a value from some data structure to be sent along that channel.

\topic{Scope openings}
With the advent of the pi-calculus~\cite{MPWpi} two important aspects of transitions were introduced: name generation and scope opening. The main idea is that names (i.e.,\ atomic identifiers) can be scoped to represent local resources. They can also be transmitted in actions, to give the environment access to this resource. In the monadic pi-calculus such an action is written $\overline{a}(\nu b)$, to mean that the local name $b$ is exported along the channel $a$.  These names can be subjected to alpha-conversion: if $P \trans{\overline{a}(\nu b)} P'$ and $c$ is a fresh name then also $P \trans{\overline{a}(\nu c)} P'\{c/b\}$, where $P'\{c/b\}$ is $P'$ with all free occurrences of $b$ replaced by~$c$. Making this idea fully formal is not entirely trivial and many papers gloss over it.
In the polyadic pi-calculus several names can be exported in one action, and in psi-calculi arbitrary data structures may contain local names.  In this paper we make no assumptions about how actions are expressed, and just assume that for any action $\alpha$ there is a finite set of names $\bn(\alpha)$, the {\em binding names}, representing exported names. In our formalisation we use nominal sets~\cite{PittsNominalSets}, an attractive theory to reason about objects depending on names on a high level and in a fully rigorous way.

\topic{State predicates}
The final general components of our transition systems are the {\em state predicates}, ranged over by $\varphi$, representing what can be concluded in a given state. For example state predicates can be equality tests of expressions, or connectivity between communication channels.
We write $\entails{P}{\varphi}$ to mean that in state $P$ the state predicate $\varphi$ holds. We make no assumptions about what the state predicates are, beyond the fact that they form a nominal set, and that $\entails{}{}$ does not depend on particular names.

A structure with states, transitions, and state predicates as discussed above we call a {\em nominal  transition system}.

\topic{Hennessy-Milner Logic}
Modal logic has been used since the 1970s to describe how facts evolve through computation. We  use the popular and general branching time logic known as Hennessy-Milner Logic~\cite{DBLP:journals/jacm/HennessyM85} (HML). Here the idea is that an action modality~$\may{\alpha}$ expresses a possibility to perform an action $\alpha$. If $A$ is a formula then $\may{\alpha}A$ says that it is possible to perform $\alpha$ and reach a state where $A$ holds. With conjunction and negation this gives a powerful logic shown to be {\em adequate} for bisimulation equivalence: two processes satisfy the same formulas exactly if they are bisimilar. 
In the general case, conjunction must take an infinite number of operands when the transition systems have states with an infinite number of outgoing transitions.

\topic{Contributions}
Our definition of nominal transition systems is very general since we leave open what the states, transitions and predicates are. The only requirement is that transitions satisfy alpha-conversion.
A technically important point is that we do not assume the usual {\em name preservation principle}, that if $P \trans{\alpha} P'$ then the names occurring in $P'$ must be a subset of those occurring in $P$ and $\alpha$. This means that the results are applicable to a wide range of calculi. For example, the pi-calculus represents a trivial instance where there are no state predicates. CCS represent an even more trivial instance where $\bn$ always returns the empty set. In the fusion calculus and the applied pi-calculus the state contains an environmental part which tells which expressions are equal to which. In the general framework of psi-calculi the states are processes together with  assertions describing their environments. All of these are special cases of nominal transition systems.

We define a modal logic with the $\may{\alpha}$ operator that binds the names in $\bn(\alpha)$. The fully formal treatment of this requires care in ensuring that infinite conjunctions do not exhaust all names, leaving none available for alpha-conversion. All previous works that have considered this issue properly have used uniformly bounded conjunction, i.e.,\ the set of all names in all conjuncts is finite.
Instead of this we use the notion of finite support from nominal sets: the conjunction of an infinite set of formulas is admissible if the set has finite support.
This results in a much greater generality and expressiveness. For example, we can now define quantifiers and the next-step modalities as derived operators. Also the traditional fixpoint operators from the modal mu-calculus are definable through an infinite set of approximants.

We establish {\em adequacy}: that logical equivalence coincides with bisimilarity. Compared to previous such adequacy results our  proof takes a new twist. We further establish {\em expressive completeness}: that all properties (i.e., subsets of the set of states) that are bisimulation-closed can be expressed as formulas. To our knowledge this result is the first of its kind. 

We provide versions of the logic adequate for a whole family of bisimulation equivalences, including late, early, their corresponding congruences,  open, and hyper. Traditionally these differ in how they take name substitutions into account. We define a general kind of effect function, including many different kinds of substitution, and show how all variants can be obtained by varying it. We also show how such effects can be represented directly as transitions. Thus these different kinds of bisimulation can all be considered the same, only on different transition systems.

Weak bisimulation, where the so called silent actions do not count, identifies many more states than strong bisimulation. We provide adequate and expressively complete 
logics for weak bisimulation. In the presence of arbitrary state predicates this is a particularly challenging area and there seems to be more than one alternative formulation. We formally prove the folklore theorem that for strong and weak bisimulation, state predicates can be encoded as actions on self-loops. The counterpart for weak logic is less clear and there appear to be a few different possibilities.

Finally we compare our logic to several other proposed logics for CCS and developments of the pi-calculus. A conclusion is that we can easily represent most of them. The correspondence is not exact because of our slightly different treatment of conjunction, but we certainly gain simplicity and robustness in otherwise complicated logics. We also show how our framework can be applied to obtain logics where none have been suggested previously.

Our main definitions and theorems have been formalised in Nominal Isabelle~\cite{nominal2}. This has required significant new ideas to represent data types with infinitary constructors like infinite conjunction, and their alpha-equivalence classes. As a result we corrected several details in our formulations and proofs, and now have very high confidence in their correctness. The formalisation effort has been substantial, and we consider it a very worthwhile investment. It is hard to measure it precisely since five persons worked on this on and off since 2015, but at a very rough estimate it is less than half of the total effort. The main hurdle was in the very beginning, to get a representation of the infinitely wide formulas with bound names, and accompanying induction schemes.

\topic{Exposition}
In the following section we provide the necessary background on nominal sets. In Section~\ref{sec:nominaltransitionsystems} we present our main definitions and results on nominal transition systems and modal logics. In Sections~\ref{sec:derived-formulas} and~\ref{sec:fixedpoints} we derive useful operators such as quantifiers and fixpoints, and indicate some practical uses. Section~\ref{sec:variants} shows how to treat variants of bisimilarity such as late and open in a uniform way, and in Section~\ref{section:weak} we treat a logic for weak bisimilarity. Section~\ref{sec:state-predicates-versus-actions} presents an encoding of state predicates as actions. In Sections~\ref{sec:applications} and~\ref{sec:related} we compare with related work and demonstrate how our framework can be applied to recover earlier results uniformly. Section~\ref{sec:formalisation} contains a brief account of the formalisation in Nominal Isabelle, and finally Section~\ref{sec:conclusion} concludes with some ideas for further work.

This paper is an extended and revised version of~\cite{DBLP:conf/concur/ParrowBEGW15} and~\cite{DBLP:conf/forte/ParrowWBE17}. 

\section{Background on nominal sets}\label{sec:background}
 {\em Nominal sets}~\cite{PittsNominalSets} is a general theory of objects which depend on names, and in particular formulates the notion of alpha-equivalence when names can be bound. The reader need not know nominal set theory to follow this paper, but some key definitions will make it easier to appreciate our work and we recapitulate them here.

We assume a countably infinite multi-sorted set of atomic identifiers or {\em names} $\nameset$ ranged over by $a, b,\ldots$. \footnote{In~\cite{PittsNominalSets} they are called {\em atoms} and the set of atoms is written $\mathbb{A}$.}
A {\em permutation} is a sort-preserving bijection on names that leaves all but finitely many names invariant. The singleton permutation which swaps names $a$ and $b$ and has no other effect is written $(a\, b)$, and the identity permutation that swaps nothing is written id. Permutations are ranged over by $\pi, \pi'$. The effect of applying a permutation $\pi$ to an object~$X$ is written $\pi \cdot X$. Formally, the permutation action $\cdot$ can be any operation that satisfies $\mbox{id} \cdot X = X$ and $\pi \cdot (\pi' \cdot X) = (\pi \circ \pi') \cdot X$, but a reader may comfortably think of $\pi \cdot X$ as the object obtained by permuting all names in $X$ according to $\pi$.

A set of names $N$ {\em supports} an object $X$ if for all $\pi$ that leave all members of $N$ invariant it holds $\pi \cdot X = X$. In other words, if $N$ supports $X$ then names outside $N$ do not matter to $X$.
If a finite set supports $X$ then there is a unique minimal set supporting $X$, called the {\em support} of $X$, written $\n(X)$, intuitively consisting of exactly the names that matter to $X$.
As an example the set of names textually occurring in a datatype element is the support of that element, and the set of free names is the support of the alpha-equivalence class of the element. Note that in general, the support of a set is not the same as the union of the support of its members. An example is the set of all names; each element has itself as support, but the whole set has empty support since $\pi\cdot\nameset=\nameset$ for any $\pi$.

We write $a \freshin X$, pronounced ``$a$ is fresh for $X$'', for $a \not\in \n(X)$. The intuition is that if $a \freshin X$ then $X$ does not depend on $a$ in the sense that $a$ can be replaced with any fresh name without affecting $X$. If $N$ is a set of names we write $N\freshin X$ for $\forall a \in N\,.\, a \freshin X$.

A {\em nominal set $S$} is a set equipped with a permutation action such that if $X$ is in $S$, then also $\pi\cdot X$ is in $S$, and where each member of $S$  has finite support. A main point is that then each member has infinitely many fresh names available for alpha-conversion.
Similarly, a set of names~$N$ supports a function $f$ on a nominal set if for all $\pi$ that leave $N$ invariant it holds $\pi \cdot f(X) = f(\pi\cdot X)$, and similarly for relations and functions of higher arity. Thus we extend the notion of support to finitely supported functions and relations as the minimal finite support, and can derive general theorems such as $\n(f(X)) \subseteq \n(f) \cup \n(X)$.

An object that has empty support we call {\em equivariant}. For example, a unary
 function~$f$ is equivariant if $\pi \cdot f(X) = f(\pi \cdot X)$ for all $\pi, X$. The intuition is that an equivariant object does not treat any name special.
 
In order to reason about bound names we adopt the notion of generalised name abstraction for finite sets of names (see~\cite[Chapter~4.6]{PittsNominalSets}). Let $S$ be a nominal set and $\finset{\nameset}$ be the finite subsets of names. The {\em nominal alpha-equivalence} $=_\alpha$ is the binary relation on $\finset{\nameset} \times S$ defined by $(N,X) =_\alpha (N',X')$ if there is a permutation $\pi$ such that $\pi \cdot(N,X) = (N',X')$ and $\pi(a)=a$ for all $a \in  \n(X) \setminus N$. It is easily proven that $=_\alpha$ is an equivariant equivalence. The intuition is that the pair $(N,X)$ means  $X$ with bound names $N$, and that alpha-converting the bound names  is allowed as long as there are no captures, i.e., collisions with names free in $X$.

The set of equivalence classes $(\finset{\nameset} \times S)/ \!\!=_\alpha$ is traditionally written $[\finset\nameset] S$, and the equivalence class containing $(N,X)$ is written $\nameabs{N}X$ (which has support $\n(X)\setminus N$). This is an unfortunate clash of notation with Hennessy-Milner logics, where these brackets signify a modal action operator.

\section{Nominal transition systems and Hennessy-Milner logic}
\label{sec:nominaltransitionsystems}

\subsection{Basic definitions}
\label{subsec:basicdefinitions}

We define nominal transition systems, bisimilarity, and a corresponding Hennessy-Milner logic in this subsection.

\begin{defi}
\label{def:nominaltransitionsystem}
A {\em nominal transition system} is characterised by the following
\begin{itemize}
\item \states: A nominal set of {\em states} ranged over by $P,Q$.
\item \predicates: A nominal set of {\em state predicates} ranged over by $\varphi$.
\item
An equivariant binary relation $\vdash$ on \states{} and \predicates. We write $\entails{P}{\varphi}$ to mean that  in state $P$ the state predicate $\varphi$ holds.
\item \actions: A nominal set of {\em actions} ranged over by $\alpha$.
\item An equivariant function $\bn$ from \actions{} to finite sets of names, which for each $\alpha$ returns a subset of $\n(\alpha)$, called the {\em binding names}.
\item An equivariant  subset of $\states \times [\finset\nameset](\actions \times \states)$, called the 
 {\em transition relation}, written $\rightarrow$. If  $(P, \nameabs{N}(\alpha, Q)) \in \mathop{\rightarrow}$ it must hold that $\bn(\alpha)=N$.
\end{itemize}
\end{defi}

We call $\nameabs{\bn(\alpha)}(\alpha, P')$ a {\em residual}.  A residual is thus an alpha-equivalence class of a pair of an action and a state, where the scope of the binding names in the action also contains the state. For $(P, \nameabs{\bn(\alpha)}(\alpha, P')) \in \mathop{\rightarrow}$ we write $P\trans{\alpha} P'$.  This follows the traditional notation in process algebras like the pi-calculus, although it hides the fact that the scope of the names bound by the action extends into the target state.  Transitions satisfy alpha-conversion in the following sense:  If $a \in \bn(\alpha)$, $b \freshin \alpha, P'$ and $ P\trans{\alpha} P'$ then also $P\trans{(a \,b)\cdot\alpha} (a\, b)\cdot P'$ denotes the same transition.

As an example, basic CCS from~\cite{MilnerCCS} is a trivial nominal transition system. Here the \states{} are the CCS agents, \actions{} the CCS actions, $\bn(\alpha) = \emptyset$ for all actions, and $\predicates = \emptyset$. For the pi-calculus, \states{} are the pi-calculus agents, and $\actions{}$ the four kinds of pi-calculus actions (silent, output, input, bound output). In the early semantics $\bn$ returns the empty set except for bound outputs where $\bn(\overline{a}(\nu x)) = \{x\}$. In the late semantics there are actions like $a(x)$ where $x$ is a placeholder so also $\bn(a(x)) = \{x\}$. In the polyadic pi-calculus each action may bind a finite set of names.

In the original terminology of these and similar calculi these occurrences of $x$ are referred to as ``bound.''  We believe a better terminology is ``binding,'' since they bind into the target state.  For higher-order calculi this distinction is important.  Consider an example where objects transmitted in a communication are processes, and a communicated object contains a bound name:
\[P \trans{\overline{a}(\nu b)Q} R\]
The action here transmits the process $(\nu b)Q$ along the channel~$a$.  The name~$b$ is local to~$Q$, so alpha-converting~$b$ to some new name affects only~$Q$.  Normally agents are considered up to alpha-equivalence; this means that~$b$ is not in the support of the action, and we have $\bn(\overline{a} (\nu b)Q) = \emptyset$.

In the same calculus we may also have a different action
\[P \trans{(\nu b) \overline{a}Q} R\]
Again this transmits a process along the channel~$a$, but the process here is just~$Q$.  The name~$b$ is shared between~$Q$ and~$R$, and is extruded in the action.  An alpha-conversion of~$b$ thus affects both~$Q$ and~$R$ simultaneously.  In the action~$b$ is a free name, in the sense that~$b$ is in its support, and it cannot be replaced by another name in the action alone.  Here we have $\bn((\nu b)\overline{a}Q) = \{b\}$.

In all of the above we have $\predicates = \emptyset$ since communication is the only way a process may influence a parallel process, and thus communications are the only things that matter for process equivalence. More general examples come from psi-calculi~\cite{PsiLMCS} where there are so called ``conditions'' representing what holds in different states; those would then correspond to $\predicates$. Other calculi, e.g.~those presented in~\cite{gardner.wischik:explicit-fusions,buscemi.montanari:cc-pi},
 also have mechanisms where processes can influence each other without explicit communication, such as fusions and updates of a constraint store. All of these are straightforward to accommodate as nominal transition systems. Section~\ref{sec:related} contains further descriptions of these and other examples.

\begin{defi}
\label{def:bisim}
A {\em bisimulation} $R$ is a symmetric binary relation on states in a nominal transition system satisfying the following two criteria: $R(P,Q)$ implies
\begin{enumerate}
\item {\em Static implication}: For each $\varphi \in \predicates$, $\entails{P}{\varphi}$ implies  $\entails{Q}{\varphi}$.
\item {\em Simulation}: For all $\alpha, P'$ such that $\bn(\alpha)\freshin  Q$ and $P \trans{\alpha} P'$ there exists some $Q'$ such that
 $Q \trans{\alpha} Q'$ and $R(P',Q')$
\end{enumerate}
We write $P \bisim Q$ to mean that there exists a bisimulation $R$ such that $R(P,Q)$.
\end{defi}

Static implication means that bisimilar states must satisfy the same state predicates; this is reasonable since these can be tested by an observer or parallel process. The simulation requirement is familiar from the pi-calculus. Note that this definition corresponds to ``early'' bisimulation in the pi-calculus. In Section~\ref{sec:variants} we will consider other variants of bisimilarity.

\begin{prop}
\label{bisimequiv}
$\bisim$ is an equivariant equivalence relation.
\end{prop}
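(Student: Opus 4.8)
The plan is to verify the three defining properties of an equivalence relation---reflexivity, symmetry, and transitivity---by exhibiting in each case a suitable bisimulation, and then to establish equivariance by showing that the collection of all bisimulations is closed under the permutation action, so that its union $\bisim$ is invariant.

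For \textbf{reflexivity} I would check that the identity relation $\mathrm{Id} = \{(P,P) : P \in \states\}$ is a bisimulation: it is trivially symmetric, static implication is immediate, and any transition $P \trans{\alpha} P'$ is matched by itself, yielding $\mathrm{Id}(P',P')$. Hence $P \bisim P$ for every $P$. For \textbf{symmetry}, the relation $\bisim$ is symmetric precisely because bisimulations are required to be symmetric: if $R$ witnesses $P \bisim Q$ then $R(Q,P)$ holds as well, so $Q \bisim P$.

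The substantial case is \textbf{transitivity}. Given bisimulations $R_1, R_2$ with $R_1(P,Q)$ and $R_2(Q,S)$, I would show that the relational composition $R_1 \circ R_2$, closed under symmetry, is a bisimulation containing $(P,S)$. Since $R_1, R_2$ are symmetric, the converse of $R_1 \circ R_2$ is $R_2 \circ R_1$, so the union $(R_1 \circ R_2) \cup (R_2 \circ R_1)$ is symmetric by construction. Static implication then follows by chaining the two static implications, and the symmetric half is handled by swapping the roles of $R_1$ and $R_2$. For the simulation clause, the difficulty is the freshness side condition: to simulate $P \trans{\alpha} P'$ with $\bn(\alpha) \freshin S$ through the intermediate state $Q$, I first need $\bn(\alpha) \freshin Q$ as well, but only $\bn(\alpha) \freshin S$ is given. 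Since $Q$ and $S$ are finitely supported, infinitely many fresh names are available, so I would invoke the alpha-conversion property of transitions to rename the binding names of $\alpha$ to names fresh for both $Q$ and $S$; applying $R_1$ then supplies a matching $Q'$ and $R_2$ a matching $S'$, with the composite relation linking the targets. I expect this alpha-renaming---keeping the two simulation steps in register while respecting both freshness conditions simultaneously---to be the main obstacle.

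Finally, for \textbf{equivariance} I would observe that whenever $R$ is a bisimulation, so is $\pi \cdot R$, using the equivariance of $\vdash$, of $\bn$, and of the transition relation to transport the static-implication and simulation clauses through $\pi$ (e.g.\ $\pi \cdot P \vdash \varphi$ iff $P \vdash \pi^{-1}\cdot\varphi$, and similarly for transitions). Consequently the family of bisimulations is closed under the permutation action, and since $\bisim$ is the union of all bisimulations, $\pi \cdot \bisim = \bisim$ for every $\pi$; that is, $\bisim$ has empty support. In particular, from $P \bisim Q$ we obtain $\pi \cdot P \bisim \pi \cdot Q$, which is the required equivariance.
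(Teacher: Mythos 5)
Your treatment of reflexivity, symmetry, and equivariance is correct and coincides with the paper's proof (identity relation; symmetry built into Definition~\ref{def:bisim}; closure of the family of bisimulations under the permutation action). The genuine gap is in transitivity, at exactly the point you flag as ``the main obstacle'': the claim that $(R_1 \circ R_2) \cup (R_2 \circ R_1)$ is a bisimulation is in general \emph{false}, and the alpha-renaming you propose cannot repair it. After renaming $\bn(\alpha)$ to fresh names via a permutation $\pi$, your two simulation steps yield a transition $S \trans{\pi\cdot\alpha} S''$ with $(\pi\cdot P', S'')$ in the composite; but the simulation clause demands a match for the given representative $(\alpha, P')$, i.e.\ some $S'$ with $S \trans{\alpha} S'$ and $(P', S')$ in the composite. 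Transporting your match back along $\pi^{-1}$ uses equivariance of the transition relation (unproblematic) \emph{and of the relation being checked} --- and an individual bisimulation, unlike $\bisim$, need not be equivariant. Concretely, in the early pi-calculus let $P_0=(\nu z)\overline{a}\langle z\rangle.z(w).0$ and $Q_0=(\nu z)\overline{a}\langle z\rangle.(z(w).0+[b=b]z(w).0)$, and let $R$ consist of $(P_0,Q_0)$, the pairs $(c(w).0,\;c(w).0+[b=b]c(w).0)$ for all $c\notin\{a,b\}$, the pair $(0,0)$, and all converses. This $R$ is a bisimulation: at $(P_0,Q_0)$ the proviso $\bn(\alpha)\freshin Q_0$ exempts precisely the representative of the bound output whose binding name is $b$, so no pair involving $b(w).0$ is ever demanded. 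Yet the symmetrized composite $R\circ R$ contains $(P_0,P_0)$, and the transition $P_0\trans{\overline{a}(b)}b(w).0$ --- now a legitimate instance, since $b\freshin P_0$ --- has no match: its only possible target is $b(w).0$, and $(b(w).0,b(w).0)\notin R\circ R$ because $b(w).0$ is not in the domain of $R$.

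This is why the paper proves equivariance \emph{before} transitivity and then shows that the composition of $\bisim$ \emph{with itself} is a bisimulation. Since $\bisim$ is the union of the permutation-closed family of all bisimulations, it is equivariant, hence so is $\bisim\circ\bisim$; that equivariance is exactly what licenses permuting the matched transition back from $\pi\cdot\alpha$ to $\alpha$ at the end of your argument. Your proposal can be repaired in the same spirit without mentioning $\bisim$: replace your composite by its closure under all permutations, $\bigcup_{\pi}\pi\cdot\bigl((R_1\circ R_2)\cup(R_2\circ R_1)\bigr)$, and show that this equivariant, symmetric relation is a bisimulation containing $(P,S)$.
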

\begin{proof}
The proof has been formalised in Isabelle.  Equivariance is a
simple calculation, based on the observation that if~$R$ is a
bisimulation, then $\pi \cdot R$ is a bisimulation.  To prove
reflexivity of~$\bisim$, we note that the identity relation is a bisimulation.
Symmetry is immediate from Definition~\ref{def:bisim}.  To prove
transitivity, we show that the composition of~$\bisim$ with itself is
a bisimulation; the simulation requirement is proved by a considering
an alpha-variant of~$P\trans{\alpha} P'$ where~$\bn(\alpha)$ is fresh
for~$Q$.
\end{proof}

We shall now define a Hennessy-Milner logic for nominal transition systems. As mentioned above, we shall use conjunctions with an infinite set of conjuncts, and need to take care to avoid set-theoretic paradoxes. For example, if we allow conjunctions over arbitrary subsets of formulas then the formulas will not form a set (because its cardinality would then be the same as the cardinality of the set of its subsets).
We thus fix an infinite cardinal~$\kappa$ to bound the conjunctions. We assume that~$\kappa$ is sufficiently large; specifically, we require $\kappa > \aleph_0$ (so that we may form countable conjunctions) and $\kappa > \lvert \states\rvert$. Our logic for nominal transition systems is the following.
\begin{defi}
\label{def:formulas}
The nominal set of formulas $\formset$ ranged over by $A$ is defined by induction as follows:
\[ A \quad ::= \quad
   \conjunc A_i \casesep
    \neg A \casesep
    \varphi \casesep
    \may{\alpha} A \]
\end{defi}
Name permutation distributes over all formula constructors.  
The only binding construct is in $\may{\alpha} A$ where $\bn(\alpha)$ is abstracted and binds into $\alpha$ and $A$. To be completely formally correct we should write the final clause of Definition~\ref{def:formulas} as $\nameabs{\bn(\alpha)}(\may{\alpha}A)$. As with the transitions, we abbreviate this to just~$\may{\alpha}A$, letting the scope of the names bound in~$\alpha$ tacitly extend into~$A$. This means that $\n(\may\alpha A) = \n(\alpha,A)\setminus\bn(\alpha)$. To avoid notation clashes, we shall in the following let name abstractions be implicit in all transitions and logical formulas.

The formula $\conjunc A_i$ denotes the conjunction of the set of formulas $\{A_i\setsep i \in I\}$.
This set must have bounded cardinality,
  by which we mean that $\lvert \{A_i\setsep i \in I\} \rvert < \kappa$.
It is also required that $\{A_i\setsep i \in I\}$ has finite support;
  this is then the support of the conjunction.

For a simple example related to the pi-calculus with the late semantics, consider the formula $\may{a(x)} \may{\overline{b}x} \top$ where $\top$ is the empty conjunction and thus always true. Since $\bn(a(x))) = \{x\}$ the name $x$ is abstracted in the formula, so
$\may{a(y)}\may{\overline{b}y} \top$ is an alpha-variant. The formula says that it must be possible to input something along $a$ and then output it along $b$. In the early semantics, where the input action contains the received object rather than a placeholder, the corresponding formula is
\[\bigwedge_{x\in \nameset} \may{ax}\may{\overline{b}x} \top\]
in other words, the conjunction is over the set of formulas $S = \{\may{ax}\may{\overline{b}x} \top \,|\, x \in \nameset\}$. This set has finite support, in fact the support is just $\{a,b\}$. The reason is that for $c,d \# \{a,b\}$ we have $(c\, d) \cdot S = S$. Note that the formulas in this set have no finite common support, i.e., there is no finite set of names that supports all elements, and thus the conjunction is not a formula in the usual logics for the pi-calculus.

This example highlights one of the main novelties in Definition~\ref{def:formulas}, that we use conjunction of a possibly infinite and finitely supported set of conjuncts. In comparison, the earliest HML for CCS, Hennessy and Milner (1985)~\cite{DBLP:journals/jacm/HennessyM85}, uses finite conjunction, meaning that the logic is adequate only for finitely branching transition systems. In his subsequent book (1989)~\cite{MilnerCCS} Milner admits arbitrary infinite conjunction.
Abramsky (1991)~\cite{DBLP:journals/iandc/Abramsky91} employs a kind of uniformly bounded conjunction, with a finite set of names that supports all conjuncts, an idea that is also used in the first HML for the pi-calculus (1993)~\cite{Milner:1993ys}. Almost all subsequent developments follow one of these three approaches. Note that with arbitrary infinite conjunction the formulas become a proper class and not a set, meaning we cannot reason formally about the formulas using set theories like ZF or HOL.

Our main point is that both finite and uniformly bounded conjunctions are expressively weak, in that the logic is not adequate for the full range of nominal transition systems, and in that quantifiers over infinite structures are not definable. In contrast, our use of finitely supported sets of conjuncts is adequate for all nominal transition systems (cf.~Theorems~\ref{thm:bisimlog} and \ref{thm:logbisim} below) and admits quantifiers as derived operators (cf. Section~\ref{sec:derived-formulas} below). Universal quantification over names $\forall x \in \mathcal{N}. A(x)$ is usually defined to mean that $A(n)$ must hold for all $n \in \mathcal{N}$. We can define this as the (infinite) conjunction of all these $A(n)$. This set of conjuncts is not uniformly bounded if $n \in \n(A(n))$. But it is supported by $\n(A)$ since, for any permutation $\pi$ not affecting $\n(A)$ we have $\pi \cdot A(n) = A(\pi(n))$ which is also a conjunct; thus the  set of conjuncts is unaffected by $\pi$.

The validity of a formula $A$ for a state $P$ is written
$\valid{P}{A}$ and is defined by recursion over~$A$ as follows.
\begin{defi}
\label{def:validity}
\[\begin{array}{ll}
\valid{P}{\conjunc A_{i}} & \mbox{if for all $A\in \{A_{i}\setsep i\in I\}$ it holds that $\valid{P}{A}$} \\
\valid{P}{\neg A} & \mbox{if not $\valid{P}{A}$}\\
\valid{P}{\varphi} & \mbox{if $\entails{P}{\varphi}$} \\
\valid{P}{\may{\alpha} A} &  \mbox{if there exists $P'$ such that $P \trans{\alpha} P'$ and $\valid{P'}{A}$}
\end{array}\]
\end{defi}
In the last clause we assume that $\may{\alpha}A$ is a representative of its alpha-equivalence class such that
$\bn(\alpha) \freshin P$.

\begin{lem}
$\valid{}{}$ is equivariant.
\end{lem}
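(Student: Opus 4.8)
The plan is to prove equivariance by structural induction on the formula $A$, establishing for every permutation $\pi$, state $P$, and formula $A$ the biconditional $\valid{P}{A} \iff \valid{\pi \cdot P}{\pi \cdot A}$. Since equivariance of a relation amounts to $\pi \cdot {\models} = {\models}$, and applying the hypothesis to $\pi^{-1}$ recovers the converse, it suffices to prove one implication. The recursion of Definition~\ref{def:validity} follows the grammar of Definition~\ref{def:formulas}, so the induction has one case per constructor; the infinitary conjunction causes no difficulty, since its subformulas are strictly smaller in the structural order over the (bounded-cardinality, finitely supported) set of conjuncts.

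The three propositional cases are routine appeals to the induction hypothesis together with the equivariance assumptions already in place. For $A = \varphi$, validity reduces to $\entails{P}{\varphi}$, and the desired equivalence is exactly the equivariance of $\vdash$ assumed in Definition~\ref{def:nominaltransitionsystem}. For $A = \neg B$, I would note that $\pi \cdot (\neg B) = \neg(\pi \cdot B)$ and simply negate both sides of the induction hypothesis. For the conjunction $A = \conjunc A_i$, permutation distributes over the constructor so that $\pi \cdot \conjunc A_i = \bigwedge_{i \in I}(\pi \cdot A_i)$, and applying the induction hypothesis conjunct-by-conjunct gives the result; the finite-support and cardinality side conditions are preserved because they are themselves equivariant properties.

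The only delicate case is the modality $A = \may{\alpha} B$, where the binding of $\bn(\alpha)$ and the alpha-renaming convention must be handled with care. Here $\pi \cdot (\may{\alpha} B) = \may{\pi \cdot \alpha}(\pi \cdot B)$, and since $\bn$ is equivariant we have $\bn(\pi \cdot \alpha) = \pi \cdot \bn(\alpha)$. Assuming $\valid{P}{\may{\alpha} B}$, there is a $P'$ with $P \trans{\alpha} P'$ and $\valid{P'}{B}$; equivariance of the transition relation $\rightarrow$ yields $\pi \cdot P \trans{\pi \cdot \alpha} \pi \cdot P'$, and the induction hypothesis gives $\valid{\pi \cdot P'}{\pi \cdot B}$, so $\valid{\pi \cdot P}{\may{\pi \cdot \alpha}(\pi \cdot B)}$ as required. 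The step requiring attention is the side condition in Definition~\ref{def:validity} that the chosen representative satisfy $\bn(\alpha) \freshin P$: I would check that it transports under $\pi$, i.e.\ $\bn(\pi \cdot \alpha) = \pi \cdot \bn(\alpha) \freshin \pi \cdot P$, which holds since freshness is equivariant.

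I expect this modality case to be the main obstacle, since it is where the name-abstraction machinery of Section~\ref{sec:background} interacts with the recursion. In a fully formal development the point is bound up with justifying that the recursion over alpha-equivalence classes is well defined in the first place: the clause for $\may{\alpha}A$ picks a representative with $\bn(\alpha) \freshin P$, and one must know this choice is immaterial before the equivariance argument above can even be stated. Granting that well-definedness (as the paper does in taking $\models$ as given by Definition~\ref{def:validity}), the inductive argument goes through using only equivariance of $\vdash$, $\bn$, $\rightarrow$, and freshness.
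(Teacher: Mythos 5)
Your proposal is correct and matches the paper's own (formalised) proof: both proceed by structural induction on $A$, proving $\valid{P}{A} \iff \valid{\pi\cdot P}{\pi\cdot A}$ using equivariance of $\vdash$, $\bn$, $\rightarrow$ and freshness, with the modality case handled via the fact that $\pi\cdot(\may{\alpha}A) = \may{\pi\cdot\alpha}(\pi\cdot A)$ respects the choice of alpha-representative. The paper additionally cites Pitts' Equivariance Principle as a one-line justification, but its detailed argument is the same induction you give.
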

\begin{proof}
By the Equivariance Principle in Pitts
(2013)~\cite[page~21]{PittsNominalSets}.  A detailed proof that
verifies $\valid{P}{A} \iff \valid{\pi\cdot P}{\pi\cdot A}$ for any
permutation~$\pi$ has been formalised in Isabelle.  The proof proceeds
by structural induction on~$A$, using equivariance of all involved
relations.  For the case~$\may{\alpha} A$ in particular, we use the
fact that if $\may{\alpha'} A' = \may{\alpha} A$, then
$\may{\pi\cdot\alpha'} (\pi\cdot A') = \may{\pi\cdot\alpha} (\pi\cdot
A)$.
\end{proof}

\begin{defi}
\label{def:logequiv}
Two states $P$ and $Q$ are {\em logically equivalent}, written $P \logeq Q$, if
for all $A$ it holds that $\valid{P}{A}$ iff $\valid{Q}{A}$.
\end{defi}

\subsection{Logical adequacy}
\label{sec:adequacy}

We show that the logic defined in Section~\ref{subsec:basicdefinitions} is adequate for bisimilarity; that is, bisimilarity and logical equivalence coincide.

\begin{thm}
\label{thm:bisimlog}
$P \bisim Q\ \implies\ P \logeq Q$
\end{thm}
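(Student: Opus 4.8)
The statement is that bisimilarity implies logical equivalence: if $P \bisim Q$ then $P$ and $Q$ satisfy exactly the same formulas. The natural approach is to fix a bisimulation $R$ with $R(P,Q)$ and prove, by structural induction on the formula $A$, the stronger claim that for all states $P,Q$ with $R(P,Q)$ we have $\valid{P}{A} \implies \valid{Q}{A}$. Since bisimulations are symmetric by Definition \ref{def:bisim}, the reverse implication follows by swapping the roles of $P$ and $Q$, giving the ``iff'' required by Definition \ref{def:logequiv}.

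The plan is to treat the four formula constructors in turn. For $A = \varphi$, the case is immediate from the static implication clause of Definition \ref{def:bisim}: $\valid{P}{\varphi}$ means $\entails{P}{\varphi}$, and $R(P,Q)$ yields $\entails{Q}{\varphi}$, i.e.\ $\valid{Q}{\varphi}$. For $A = \conjunc A_i$, suppose $\valid{P}{\conjunc A_i}$; then $\valid{P}{A_i}$ for every $i \in I$, the induction hypothesis applied to each $A_i$ gives $\valid{Q}{A_i}$, and hence $\valid{Q}{\conjunc A_i}$. For $A = \neg A'$, I would use symmetry of $R$ together with the induction hypothesis in the contrapositive direction: if $\valid{P}{\neg A'}$ but not $\valid{Q}{\neg A'}$, then $\valid{Q}{A'}$, and since $R(Q,P)$ the hypothesis gives $\valid{P}{A'}$, contradicting $\valid{P}{\neg A'}$. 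For $A = \may{\alpha} A'$, I would first choose, using the finite support of the relevant objects, a representative of the alpha-equivalence class of $\may{\alpha} A'$ with $\bn(\alpha) \freshin Q$, so that Definition \ref{def:validity} applies and so that the freshness premise of the simulation clause is met. From $\valid{P}{\may{\alpha} A'}$ obtain $P'$ with $P \trans{\alpha} P'$ and $\valid{P'}{A'}$; the simulation clause then yields $Q'$ with $Q \trans{\alpha} Q'$ and $R(P',Q')$, and the induction hypothesis on $A'$ gives $\valid{Q'}{A'}$, whence $\valid{Q}{\may{\alpha} A'}$.

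The main obstacle is the diamond case, specifically the bookkeeping of bound names. One must make sure that the chosen alpha-variant of the formula has $\bn(\alpha)$ simultaneously fresh for $Q$ (and indeed for $P$, so that the validity clause is applicable on both sides) without disturbing the already-selected conjunct structure; this is where the finite-support discipline of Definition \ref{def:formulas} is essential, guaranteeing enough fresh names are available. A subtle point is that the induction must be carried out on alpha-equivalence classes of formulas rather than on raw syntax, so the induction principle used has to be the one for the nominal datatype $\formset$; the freshness of $\bn(\alpha)$ must be established before invoking the hypothesis, and equivariance of $\valid{}{}$ (the preceding lemma) is what licenses moving between alpha-variants freely. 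Once the freshness is arranged, the remaining steps are routine applications of the bisimulation clauses and the induction hypothesis.
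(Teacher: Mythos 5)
Your proposal is correct and follows essentially the same route as the paper's own proof: structural induction on (alpha-equivalence classes of) formulas, with static implication handling the predicate case, immediate use of the induction hypothesis for conjunction and negation, and an alpha-variant with $\bn(\alpha)$ fresh for $Q$ feeding the simulation clause in the diamond case. The only cosmetic difference is that you prove a single implication quantified over all $R$-related pairs and recover the biconditional via symmetry of the bisimulation, whereas the paper proves the biconditional directly under the assumption $P \bisim Q$ and symmetrises at the end; these are interchangeable formulations of the same argument.
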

\begin{proof}
The proof has been formalised in Isabelle.  Assume~$P \bisim
Q$.  We show~$\valid{P}{A} \iff \valid{Q}{A}$ by structural induction
on~$A$.
\begin{enumerate}
\item Base case: $A = \varphi$. Then $\valid{P}{A} \iff
  \entails{P}{\varphi} \iff \entails{Q}{\varphi} \iff \valid{Q}{A}$ by
  static implication and symmetry of~$\mathop{\bisim}$.
\item Inductive steps $\conjunc A_i$ and $\neg A$: immediate by
  induction.
\item Inductive step $\may{\alpha} A$: Assume
  $\valid{P}{\may{\alpha}A}$.  Then for some alpha-variant 
  $\may{\alpha'}A' = \may{\alpha}A$, $\exists P' \,.\, P
  \trans{\alpha'} P'$ and $\valid{P'}{A'}$.  Without loss of
  generality we assume also $\bn(\alpha') \freshin Q$, otherwise just
  find an alpha-variant of~$\may{\alpha'}A'$ where this holds.  Then
  by simulation $\exists Q' \,.\, Q \trans{\alpha'} Q'$ and $P' \bisim
  Q'$.  By induction and~$\valid{P'}{A'}$ we get~$\valid{Q'}{A'}$,
  whence by definition~$\valid{Q}{\may{\alpha}A}$.  The proof
  of~$\valid{Q}{\may{\alpha}A} \implies \valid{P}{\may{\alpha}A}$ is
  symmetric, using the fact that~$P \bisim Q$ entails~$Q \bisim P$. \qedhere
\end{enumerate}
\end{proof}

The converse result uses the idea of {\em distinguishing formulas}.

\begin{defi}
\label{def:distinguishing}
A {\em distinguishing formula} for $P$ and $Q$ is a formula $A$ such that $\valid{P}{A}$ and not $\valid{Q}{A}$.
\end{defi}

The following lemma says that we can find such a formula $B$ where, a bit surprisingly, the support of $B$ does not depend on $Q$.

\begin{lem}
\label{lemma:distsup}
If $P\not\logeq Q$ then there exists a distinguishing formula $B$ for $P$ and $Q$ such that $\n(B) \subseteq \n(P)$.
\end{lem}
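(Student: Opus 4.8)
The plan is to start from any formula witnessing $P \not\logeq Q$ and \emph{symmetrise} it over the pointwise stabiliser of $\n(P)$, exploiting equivariance of validity. First I would extract a distinguishing formula in the sense of Definition~\ref{def:distinguishing}: since $P \not\logeq Q$, there is some $A$ with $\valid{P}{A}$ but not $\valid{Q}{A}$, or else not $\valid{P}{A}$ but $\valid{Q}{A}$; in the latter case $\neg A$, which has the same support as $A$, satisfies $\valid{P}{\neg A}$ and not $\valid{Q}{\neg A}$. So without loss of generality I may assume a formula $A$ with $\valid{P}{A}$ and not $\valid{Q}{A}$.

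The key construction is as follows. Let $G = \{\pi \setsep \forall a \in \n(P)\,.\, \pi(a)=a\}$ be the group of permutations fixing $\n(P)$ pointwise, and set $B = \bigwedge_{\pi \in G} \pi \cdot A$, that is, the conjunction of the orbit $S = \{\pi \cdot A \setsep \pi \in G\}$. I would then check that $B$ is a legitimate formula in the sense of Definition~\ref{def:formulas}. For the support, since $G$ is a group and any $\sigma \in G$ satisfies $\sigma \cdot S = \{(\sigma\pi)\cdot A \setsep \pi \in G\} = S$ (because $\sigma G = G$), the finite set $\n(P)$ supports $S$, whence $\n(B) = \n(S) \subseteq \n(P)$, which is exactly the support bound claimed. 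For cardinality, because $\n(A)$ supports $A$, the formula $\pi \cdot A$ depends only on the restriction of $\pi$ to the finite set $\n(A)$; there are only countably many sort-preserving injections from $\n(A)$ into the countable set $\nameset$, so $\lvert S \rvert \le \aleph_0 < \kappa$, and the conjunction is admissible.

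It then remains to verify that $B$ distinguishes $P$ from $Q$. For $\valid{P}{B}$: every $\pi \in G$ fixes $\n(P)$, hence $\pi \cdot P = P$, so by equivariance of $\valid{}{}$ we have $\valid{P}{\pi \cdot A} \iff \valid{\pi\cdot P}{\pi \cdot A} \iff \valid{P}{A}$, and the last statement holds; thus $\valid{P}{\pi \cdot A}$ for every conjunct, giving $\valid{P}{B}$. For not $\valid{Q}{B}$: the identity permutation lies in $G$, so $A \in S$ is one of the conjuncts, and since not $\valid{Q}{A}$ the conjunction already fails at $Q$.

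I expect the only genuinely delicate point to be verifying that $B$ is a well-formed formula, specifically that the orbit $S$ is finitely supported with support inside $\n(P)$ (the group/stabiliser argument) and has cardinality below $\kappa$; the two validity checks are then immediate from equivariance and from membership of the identity in $G$. This symmetrisation over the stabiliser of $\n(P)$ is precisely where finitely supported infinite conjunction earns its keep: the orbit $S$ will in general have no finite common support, so the construction would be impossible with uniformly bounded conjunction.
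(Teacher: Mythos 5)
Your proof is correct and takes essentially the same approach as the paper: both symmetrise a distinguishing formula over the group of permutations fixing $\n(P)$ pointwise (Pitts's $\mathrm{hull}_{\n(P)}\,A$), bound the support of the orbit by $\n(P)$, and conclude by equivariance plus the fact that the identity permutation contributes $A$ itself as a conjunct. If anything you are slightly more careful than the paper, which leaves both the reduction to a distinguishing formula (the negation step) and the cardinality bound $\lvert S\rvert \le \aleph_0 < \kappa$ implicit.
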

\newcommand{\thepermutations}{\Pi_P}
\begin{proof}
The proof has been formalised in Isabelle.  If $P\not\logeq Q$ then there exists a distinguishing formula $A$ for $P$ and $Q$, i.e.,  $\valid{P}{A}$ and not $\valid{Q}{A}$.
Let $\thepermutations= \{\pi \setsep n \in\n(P) \Rightarrow \pi(n)=n\}$ be the group of name permutations that leave $\n(P )$ invariant and let $\mathcal{B}$ be the $\thepermutations$-orbit of
$A$, i.e.,
\[{\mathcal{B}} = \{ \pi \cdot A \setsep \pi \in \thepermutations\}\]
 In the terminology of Pitts~\cite{PittsNominalSets} ch.~5,  $\mathcal{B}$ is
$\mbox{hull}_{\n(P)} A$.
Clearly, if $a,b \freshin P$ and $\pi \in \thepermutations$ then $(a\, b) \circ \pi \in \thepermutations$. Thus $(a\, b) \cdot \mathcal{B} = \mathcal{B}$ and therefore $\n(\mathcal{B}) \subseteq \n(P )$.
This means
that the formula $B = \bigwedge {\mathcal{B}}$ is well-formed and $\n(B)
\subseteq \n(P)$. For all $\pi \in \thepermutations$ we have by definition $P =\pi \cdot P$ and by equivariance $\pi \cdot P \models \pi \cdot A$, i.e., $ P \models \pi \cdot A$.
Therefore
$\valid{P}{B}$.  Furthermore, since the identity permutation is in
$\thepermutations$ and not $\valid{Q}{A}$ we get not
$\valid{Q}{B}$.
\end{proof}
Note that in this proof, $B$ uses a conjunction that is not uniformly bounded.

\begin{thm}
 \label{thm:logbisim}
$P \logeq Q\ \implies\ P \bisim Q$
\end{thm}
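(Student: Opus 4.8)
The plan is to show that logical equivalence $\logeq$ is itself a bisimulation; since $P \logeq Q$ then exhibits a bisimulation relating $P$ and $Q$, the theorem follows at once from Definition~\ref{def:bisim}. Symmetry of $\logeq$ is immediate from Definition~\ref{def:logequiv}, and static implication is routine: if $P \logeq Q$ and $\entails{P}{\varphi}$, then $\valid{P}{\varphi}$ (since each $\varphi \in \predicates$ is itself a formula), hence $\valid{Q}{\varphi}$, hence $\entails{Q}{\varphi}$. So the real content lies in verifying the simulation requirement.

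For simulation I would proceed by contradiction. Suppose $P \logeq Q$, $P \trans{\alpha} P'$, and $\bn(\alpha) \freshin Q$, but assume there is \emph{no} $Q'$ with $Q \trans{\alpha} Q'$ and $P' \logeq Q'$. Let $\mathcal{Q} = \{Q' \setsep Q \trans{\alpha} Q'\}$ for the representative of $\alpha$ with $\bn(\alpha) \freshin Q$. By assumption $P' \not\logeq Q'$ for every $Q' \in \mathcal{Q}$, so Lemma~\ref{lemma:distsup} supplies for each such $Q'$ a distinguishing formula $B_{Q'}$ with $\valid{P'}{B_{Q'}}$, not $\valid{Q'}{B_{Q'}}$, and crucially $\n(B_{Q'}) \subseteq \n(P')$.

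The key construction is the conjunction $B = \bigwedge \{B_{Q'} \setsep Q' \in \mathcal{Q}\}$, for which the two side conditions on conjunctions must be checked. Cardinality is fine because $\lvert\mathcal{Q}\rvert \le \lvert\states\rvert < \kappa$ --- this is precisely what the assumption $\kappa > \lvert\states\rvert$ buys us. Finite support is where Lemma~\ref{lemma:distsup} earns its keep: since every conjunct has support contained in the \emph{fixed} finite set $\n(P')$, any permutation fixing $\n(P')$ pointwise maps each conjunct to itself and hence fixes the whole set, so the set of conjuncts is supported by $\n(P')$ regardless of how $Q'$ ranges. Thus $B$ is a well-formed formula with $\n(B) \subseteq \n(P')$, we have $\valid{P'}{B}$, and no $Q' \in \mathcal{Q}$ satisfies $B$, each failing already its own conjunct $B_{Q'}$.

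Finally I would form $\may{\alpha} B$. From $P \trans{\alpha} P'$ and $\valid{P'}{B}$ we get $\valid{P}{\may{\alpha}B}$; and since $\bn(\alpha) \freshin Q$ we may evaluate validity at $Q$ using the representative $\alpha$, so $\valid{Q}{\may{\alpha}B}$ would demand some $Q' \in \mathcal{Q}$ with $\valid{Q'}{B}$, which we have ruled out. Hence $\may{\alpha}B$ distinguishes $P$ from $Q$, contradicting $P \logeq Q$, and the simulation clause holds. I expect the finite-support check on the conjunction of distinguishing formulas to be the main obstacle; it goes through only because Lemma~\ref{lemma:distsup} bounds every $\n(B_{Q'})$ by $\n(P')$ \emph{uniformly} in $Q'$. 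A distinguishing formula with non-uniform support would leave this conjunction without finite support, and the argument would collapse --- which is exactly why that lemma was isolated beforehand.
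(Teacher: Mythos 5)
Your proof is correct and follows essentially the same route as the paper's: it establishes that $\logeq$ is a bisimulation, with the simulation clause proved by contradiction using Lemma~\ref{lemma:distsup} to obtain distinguishing formulas whose supports are uniformly bounded by $\n(P')$, so that their conjunction $B$ is a well-formed formula and $\may{\alpha}B$ distinguishes $P$ from $Q$. The only difference is cosmetic: the paper additionally assumes $\bn(\alpha)\freshin P$ (by passing to an alpha-variant of the transition) so that the chosen representative of $\may{\alpha}B$ lines up with the freshness convention of Definition~\ref{def:validity} when concluding $\valid{P}{\may{\alpha}B}$; your argument needs the same routine alpha-conversion at that step.
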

\begin{proof}
The proof has been formalised in Isabelle.  We establish that
$\logeq$ is a bisimulation.  Obviously it is symmetric.  So assume
$P\logeq Q$, we need to prove the two requirements on a bisimulation.
\begin{enumerate}
\item Static implication. $\entails{P}{\varphi}$ iff
  $\valid{P}{\varphi}$ iff $\valid{Q}{\varphi}$ iff
  $\entails{Q}{\varphi}$.
\item Simulation. The proof is by contradiction. Assume that $\logeq$
  does not satisfy the simulation requirement. Then there exist
  $P,Q,P',\alpha$ with $\bn(\alpha) \freshin Q$ such that $P\logeq Q$
  and $P \trans{\alpha}P'$ and, letting ${\mathcal{Q}} = \{Q' \setsep Q
  \trans{\alpha} Q'\}$, for all $Q' \in {\mathcal{Q}}$ it holds that
  $P'\not\logeq Q'$. Assume $\bn(\alpha)\freshin P$, otherwise just find an alpha-variant of the transition satisfying this requirement. By $P' \not\logeq Q'$, for all $Q' \in {\mathcal{Q}}$ there exists a
  distinguishing formula for $P'$ and $Q'$. The formula may depend
  on~$Q'$, and by Lemma~\ref{lemma:distsup} we can find such a
  distinguishing formula $B_{Q'}$ for $P'$ and $Q'$ with $\n(B_{Q'})
  \subseteq \n(P')$.  Therefore the formula
 \[B = \bigwedge_{Q' \in {\mathcal{Q}}} B_{Q'}\]
 is well-formed with support included in $\n(P')$. We thus get that
 $\valid{P}{\may{\alpha} B}$ but not $\valid{Q}{\may{\alpha} B}$,
 contradicting $P\logeq Q$. \qedhere
\end{enumerate}
\end{proof}

This proof of the simulation property is different from other such proofs in the literature. For finitely branching transition systems, $\mathcal Q$ is finite so a finite conjunction is enough to define~$B$. For transition systems with the  name preservation property, i.e., that if $P \trans{\alpha} P'$ then $\n(P') \subseteq  \n(P) \cup \n(\alpha)$, a uniformly bounded conjunction suffices with common support $\n(P )  \cup \n(Q) \cup \n(\alpha)$. Without the name preservation property, we here use a non-uniformly bounded conjunction in Lemma~\ref{lemma:distsup}.

\subsection{Expressive completeness}
\label{subsec:expressive-completeness}

Theorem~\ref{thm:logbisim} shows that for every pair of non-bisimilar states, there is a formula that distinguishes them.  We can prove an even stronger result: the logic contains characteristic formulas for the equivalence classes of bisimilarity. Moreover, every finitely supported set of states that is closed under bisimilarity has a characteristic formula.

We first strengthen Lemma~\ref{lemma:distsup} by showing that there is an equivariant function that yields distinguishing formulas for non-equivalent states.

\begin{lem}
  \label{lemma:distinguishing-equivariant}
  If $P\not\logeq Q$, write~$B_{P,Q}$ for a distinguishing formula for~$P$ and~$Q$ such that $\n(B_{P,Q}) \subseteq \n(P)$.  Then 
  \[D(P,Q) := \bigwedge_\pi \pi^{-1} \cdot B_{\pi \cdot P, \pi \cdot Q}\]
  defines a distinguishing formula for~$P$ and~$Q$ with support included in~$\n(P)$.  Moreover, the function $D$ is equivariant.
\end{lem}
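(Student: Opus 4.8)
The plan is to check, in order, that $D(P,Q)$ is a well-formed formula, that it is distinguishing for $P$ and $Q$, that its support lies in $\n(P)$, and finally that $D$ is equivariant. I fix once and for all the choice function $B_{\cdot,\cdot}$ supplied by Lemma~\ref{lemma:distsup}, so that for every pair with $P'\not\logeq Q'$ we have a distinguishing $B_{P',Q'}$ with $\n(B_{P',Q'})\subseteq\n(P')$. Since $\valid{}{}$ is equivariant, so is $\logeq$; hence $P\not\logeq Q$ gives $\pi\cdot P\not\logeq\pi\cdot Q$ for every $\pi$, and each subscript $B_{\pi\cdot P,\pi\cdot Q}$ is therefore defined.

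For well-formedness I would first bound the support of a single conjunct. As $\n(B_{\pi\cdot P,\pi\cdot Q})\subseteq\n(\pi\cdot P)=\pi\cdot\n(P)$ and support is equivariant, $\n(\pi^{-1}\cdot B_{\pi\cdot P,\pi\cdot Q})=\pi^{-1}\cdot\n(B_{\pi\cdot P,\pi\cdot Q})\subseteq\n(P)$, so every conjunct is supported by the single finite set $\n(P)$. A set all of whose members share the finite support $N$ is itself supported by $N$, because any permutation fixing $N$ pointwise fixes each member and hence the whole set; thus the set of conjuncts is supported by $\n(P)$, and by Definition~\ref{def:formulas} this set's support is exactly the support of the conjunction, yielding $\n(D(P,Q))\subseteq\n(P)$. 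For the cardinality side-condition I note that $\nameset$ is countable, so there are only countably many permutations and hence at most countably many conjuncts; as $\kappa>\aleph_0$ this is below $\kappa$. Therefore $D(P,Q)$ is a formula and the support bound is established.

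Next the distinguishing property. For $\valid{P}{D(P,Q)}$ I would verify that $P$ satisfies every conjunct: by equivariance of validity $\valid{P}{\pi^{-1}\cdot B_{\pi\cdot P,\pi\cdot Q}}$ iff $\valid{\pi\cdot P}{B_{\pi\cdot P,\pi\cdot Q}}$, and the right-hand side holds since $B_{\pi\cdot P,\pi\cdot Q}$ distinguishes $\pi\cdot P$ from $\pi\cdot Q$. For the failure at $Q$ it is enough to exhibit one unsatisfied conjunct, and the choice $\pi=\mathrm{id}$ yields the conjunct $B_{P,Q}$, which by construction satisfies not $\valid{Q}{B_{P,Q}}$; hence not $\valid{Q}{D(P,Q)}$.

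Finally equivariance. For arbitrary $\rho$ I would reindex the conjunction over the whole permutation group: putting $\sigma=\pi\circ\rho$ (so that $\pi^{-1}=\rho\circ\sigma^{-1}$) and using that name permutation distributes over conjunction,
\[ D(\rho\cdot P,\rho\cdot Q)=\bigwedge_\pi \pi^{-1}\cdot B_{(\pi\circ\rho)\cdot P,\,(\pi\circ\rho)\cdot Q}=\rho\cdot\bigwedge_\sigma \sigma^{-1}\cdot B_{\sigma\cdot P,\sigma\cdot Q}=\rho\cdot D(P,Q). \]
I expect the main obstacle to be the sharp support bound rather than the equivariance. The tempting direct argument --- that $\n(P)$ supports the set of conjuncts via a reindexing --- breaks down, since a permutation fixing $\n(P)$ can still move the names of $Q$ that occur inside the subscripts. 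The clean resolution is the per-conjunct estimate above, which exploits precisely that each $B$ was chosen with support inside its first argument, so that conjugating by $\pi^{-1}$ cancels the $\pi$ coming from the orbit and draws every conjunct's support back into $\n(P)$ uniformly in $Q$. Equivariance alone yields only the weaker $\n(D(P,Q))\subseteq\n(P)\cup\n(Q)$, which is exactly why the separate estimate is needed.
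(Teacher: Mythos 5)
Your proposal is correct and follows essentially the same route as the paper's proof: the per-conjunct bound $\n(\pi^{-1}\cdot B_{\pi\cdot P,\pi\cdot Q})\subseteq\n(P)$ for well-formedness and the support claim, the identity permutation for the failure at $Q$, and the group reindexing $\sigma=\pi\circ\rho$ for equivariance (the paper reindexes with $\sigma=\pi\circ\rho^{-1}$ starting from $\rho\cdot D(P,Q)$, which is the same computation read in the other direction). Your explicit cardinality check via countability of the group of finite permutations is a detail the paper leaves implicit, but it does not change the argument.
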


\begin{proof}
  The proof has been formalised in Isabelle.  Assume~$P\not\logeq Q$.  For any permutation~$\pi$, $\pi \cdot P \not\logeq \pi \cdot Q$ by equivariance of~$\mathop{\logeq}$.  Thus the formulas~$B_{\pi \cdot P, \pi \cdot Q}$ exist by Lemma~\ref{lemma:distsup}, and $\n(\pi^{-1}\cdot B_{\pi \cdot P, \pi \cdot Q}) \subseteq \n(P)$.  Hence the conjunction~$D(P,Q)$ is well-formed with support included in~$\n(P)$.

  From~$\valid{\pi \cdot P}{B_{\pi \cdot P, \pi \cdot Q}}$ we have $\valid{P}{\pi^{-1} \cdot B_{\pi \cdot P, \pi \cdot Q}}$ by equivariance of~$\mathop{\models}$.  Hence $\valid{P}{D(P,Q)}$. Also not $\valid{Q}{B_{P,Q}}$, hence (by considering the identity permutation) not $\valid{Q}{D(P,Q)}$.  Therefore~$D(P,Q)$ is a distinguishing formula for~$P$ and~$Q$.

  For equivariance, we have
  \begin{eqnarray*}
    \rho \cdot D(P, Q) 
    &=& \bigwedge_\pi (\rho \circ \pi^{-1}) \cdot B_{\pi \cdot P, \pi \cdot Q} \\
   &\stackrel{\{ \sigma \,:=\, \pi\circ\rho^{-1}\}}=& \bigwedge_\sigma \sigma^{-1} \cdot B_{(\sigma \circ \rho) \cdot P, (\sigma \circ \rho) \cdot Q} \\
    &=& D(\rho \cdot P, \rho \cdot Q)
  \end{eqnarray*}
  as required.
\end{proof}

\begin{defi}
  \label{def:characteristic}
  A \emph{characteristic formula} for $P$ is a formula $A$ such that for all~$Q$, 
\[P \bisim Q \quad\mbox{iff}\quad\valid{Q}{A}\]
\end{defi}

Characteristic formulas can be obtained as a (possibly infinite) conjunction of distinguishing formulas.

\begin{lem}
  \label{lemma:characteristic}
  Let~$D$ be defined as in Lemma~\ref{lemma:distinguishing-equivariant}.  The formula 
 \[\mathrm{Char}(P) := \bigwedge_{P\not\logeq Q} D(P,Q)\]
  is a characteristic formula for~$P$.
\end{lem}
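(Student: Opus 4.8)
The plan is to discharge the three obligations hidden in Definition~\ref{def:characteristic}: that $\mathrm{Char}(P)$ is a legal formula at all, and that both directions of $P \bisim Q \iff \valid{Q}{\mathrm{Char}(P)}$ hold for every $Q$. All the semantic content will come from adequacy (Theorems~\ref{thm:bisimlog} and~\ref{thm:logbisim}, which together give $P \bisim Q \iff P \logeq Q$) and from the two properties of $D$ recorded in Lemma~\ref{lemma:distinguishing-equivariant}: that $D(P,Q)$ is a distinguishing formula for $P$ and $Q$ with $\n(D(P,Q)) \subseteq \n(P)$, and that $D$ is equivariant.

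\emph{Well-formedness.} This is the step I expect to be the main obstacle, since it is where the nominal machinery is actually needed; the two implications afterwards are then almost bookkeeping. For the cardinality bound, the conjuncts $\{D(P,Q) \setsep P \not\logeq Q\}$ are indexed by a subset of $\states$, so the set of conjuncts has cardinality at most $\lvert\states\rvert < \kappa$. For finite support I would show that $\n(P)$ supports the set of conjuncts. Take any permutation $\pi$ fixing $\n(P)$ pointwise, so $\pi \cdot P = P$. Equivariance of $D$ gives $\pi \cdot D(P,Q) = D(\pi\cdot P, \pi\cdot Q) = D(P, \pi\cdot Q)$, and equivariance of $\logeq$ gives that $Q \mapsto \pi\cdot Q$ is a bijection of $\{Q \setsep P \not\logeq Q\}$ onto itself. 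Hence $\pi$ merely permutes the conjuncts among themselves and fixes the set $\{D(P,Q)\setsep P\not\logeq Q\}$, so the conjunction is finitely supported with $\n(\mathrm{Char}(P)) \subseteq \n(P)$ and is therefore a genuine formula.

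For the two implications I would argue as follows. First note $\valid{P}{\mathrm{Char}(P)}$: each conjunct $D(P,Q)$ is distinguishing for $P$ and $Q$, so in particular $\valid{P}{D(P,Q)}$, and the semantics of conjunction then gives $\valid{P}{\mathrm{Char}(P)}$. For the forward direction, assume $P \bisim Q$; then $P \logeq Q$ by Theorem~\ref{thm:bisimlog}, and since $\valid{P}{\mathrm{Char}(P)}$, logical equivalence yields $\valid{Q}{\mathrm{Char}(P)}$. For the converse, assume $\valid{Q}{\mathrm{Char}(P)}$ and suppose towards a contradiction that $P \not\bisim Q$; then $P \not\logeq Q$ by the contrapositive of Theorem~\ref{thm:logbisim}, so $Q$ is one of the indices and $D(P,Q)$ is one of the conjuncts. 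From $\valid{Q}{\mathrm{Char}(P)}$ we then get $\valid{Q}{D(P,Q)}$, contradicting the fact that $D(P,Q)$ is distinguishing for $P$ and $Q$ (so that not $\valid{Q}{D(P,Q)}$). Hence $P \bisim Q$, which completes the biconditional.
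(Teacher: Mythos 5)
Your proof is correct and follows essentially the same route as the paper's: well-formedness of the conjunction with support in $\n(P)$, then $\valid{P}{\mathrm{Char}(P)}$ plus adequacy (Theorems~\ref{thm:bisimlog} and~\ref{thm:logbisim}) for the forward direction, and a contradiction via the distinguishing property of $D(P,Q)$ for the converse. The only cosmetic difference is in the support argument: the paper uses $\n(D(P,Q)) \subseteq \n(P)$ directly, so each conjunct is individually fixed by permutations fixing $\n(P)$, whereas you use equivariance of $D$ and $\logeq$ to show such permutations merely permute the conjuncts among themselves --- both facts are supplied by Lemma~\ref{lemma:distinguishing-equivariant} and both suffice.
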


\begin{proof}
  The proof has been formalised in Isabelle.  By Lemma~\ref{lemma:distinguishing-equivariant}, $\n(D(P,Q)) \subseteq \n(P)$ for all~$Q$ with~$P\not\logeq Q$.  Thus the conjunction is well-formed with support included in~$\n(P)$.

  We show that $P\logeq Q$ iff~$\valid{Q}{\mathrm{Char}(P)}$.  The lemma then follows because bisimilarity and logical equivalence coincide (Theorems~\ref{thm:bisimlog} and~\ref{thm:logbisim}).

  $\Longrightarrow$:  Assume~$P\logeq Q$.  By the definition of distinguishing formulas, $\valid{P}{D(P,Q')}$ for all~$Q'$ with~$P\not\logeq Q'$.  Hence~$\valid{P}{\mathrm{Char}(P)}$.  Therefore~$\valid{Q}{\mathrm{Char}(P)}$ by logical equivalence.

  $\Longleftarrow$:  Assume~$\valid{Q}{\mathrm{Char}(P)}$. If $P\not\logeq Q$,
  then~$\mathrm{Char}(P)$ has $D(P,Q)$ as a conjunct.
  By assumption we then have $\valid{Q}{D(P,Q)}$,
  which is a contradiction since~$D(P,Q)$ is a distinguishing formula for~$P$ and~$Q$ by Lemma~\ref{lemma:distsup}.
\end{proof}

\begin{lem}
  \label{lemma:characteristic-equivariant}
  Let~$\mathrm{Char(P)}$ be defined as in Lemma~\ref{lemma:characteristic}.  The function
  \[\mathrm{Char}\colon P \mapsto \mathrm{Char}(P)\]
   is equivariant.
\end{lem}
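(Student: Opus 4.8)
The plan is to verify the defining condition of equivariance for a unary function, namely $\rho \cdot \mathrm{Char}(P) = \mathrm{Char}(\rho \cdot P)$ for every permutation $\rho$ and every state $P$. Since name permutation distributes over all formula constructors (Definition~\ref{def:formulas}), and in particular a conjunction is determined by its underlying set of conjuncts, I would first push $\rho$ through the conjunction:
\[\rho \cdot \mathrm{Char}(P) \;=\; \rho \cdot \bigwedge_{P \not\logeq Q} D(P,Q) \;=\; \bigwedge \{\rho \cdot D(P,Q) \setsep P \not\logeq Q\}.\]
This is legitimate because $\mathrm{Char}(P)$ is well-formed by Lemma~\ref{lemma:characteristic}, so its set of conjuncts has finite support and permuting it yields another finitely supported set.

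Next I would invoke the equivariance of $D$ established in Lemma~\ref{lemma:distinguishing-equivariant}, i.e.\ $\rho \cdot D(P,Q) = D(\rho \cdot P, \rho \cdot Q)$, rewriting the set of conjuncts as $\{D(\rho \cdot P, \rho \cdot Q) \setsep P \not\logeq Q\}$. The final step is a re-indexing: substituting $Q' := \rho \cdot Q$, the map $Q \mapsto \rho \cdot Q$ is a bijection on states with inverse $Q' \mapsto \rho^{-1} \cdot Q'$, and by equivariance of $\logeq$ the side condition $P \not\logeq Q$ holds precisely when $\rho \cdot P \not\logeq \rho \cdot Q = Q'$. Hence the set of conjuncts equals $\{D(\rho \cdot P, Q') \setsep \rho \cdot P \not\logeq Q'\}$, whose conjunction is by definition $\mathrm{Char}(\rho \cdot P)$.

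Since the argument is a short chain of equalities, I do not expect a genuine obstacle; the only subtlety worth flagging is that conjunction is taken over a \emph{set} of formulas rather than an indexed family, so the re-indexing must be justified as an honest equality of sets, not merely a bijective pairing of conjuncts. This is exactly where the equivariance of $D$ is essential: it guarantees that each permuted conjunct $\rho \cdot D(P,Q)$ is again of the form $D(\rho \cdot P, \,\cdot\,)$, so that after re-indexing the two sets coincide on the nose. Well-formedness of $\mathrm{Char}(\rho \cdot P)$ and the support bound $\n(\mathrm{Char}(\rho\cdot P)) \subseteq \n(\rho\cdot P)$ are inherited directly from Lemma~\ref{lemma:characteristic}.
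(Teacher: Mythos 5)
Your proof is correct and follows essentially the same route as the paper's: the paper's (very terse) argument verifies $\pi \cdot \mathrm{Char}(P) = \mathrm{Char}(\pi \cdot P)$ by appealing to exactly the three equivariance facts you use — equivariance of $\bigwedge$ (permutation acting on the set of conjuncts), of $\mathop{\not\logeq}$ (to re-index the conjunction), and of $D$ (Lemma~\ref{lemma:distinguishing-equivariant}). Your write-up merely makes the re-indexing step explicit, which is a faithful expansion rather than a different approach.
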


\begin{proof}
  The proof has been formalised in Isabelle.  We verify that $\pi \cdot \mathrm{Char}(P) = \mathrm{Char}(\pi\cdot P)$ for any permutation~$\pi$ using the equivariance of~$\mathop{\bigwedge}$, $\mathop{\not\logeq}$, and~$D$.
\end{proof}

\begin{defi}
A {\em property} of states is a subset of the states. A property $S$ is {\em well-formed} if it is finitely supported and bisimulation closed, i.e., if $P\in S$ and  $P \sim Q$ then also $Q \in S$.
\end{defi}
Well-formed properties can be described as a (possibly infinite) disjunction of characteristic formulas.  (Disjunction~$\bigvee_{i\in I} A_i$ is defined in the usual way as $\neg\bigwedge_{i\in I}\neg A_i$; see Section~\ref{sec:derived-formulas} for further details.)
\begin{thm}[Expressive Completeness]
  \label{thm:expressive-completeness}
  Let~$S$ be a well-formed property.  Then there exists a formula $A$
  such that
  \[\valid{P}A \quad\mbox{iff}\quad P \in S\]
 \end{thm}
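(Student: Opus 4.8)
The plan is to express $S$ as a disjunction of the characteristic formulas of its members. Concretely, I would set
\[A := \bigvee_{P \in S} \mathrm{Char}(P),\]
where $\mathrm{Char}$ is the equivariant characteristic-formula function from Lemmas~\ref{lemma:characteristic} and~\ref{lemma:characteristic-equivariant}. The guiding intuition is that a state $Q$ satisfies $A$ precisely when it is bisimilar to some member of $S$, and because $S$ is bisimulation-closed this occurs precisely when $Q$ itself lies in $S$.

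The first thing to verify is that this disjunction is a legitimate formula, i.e., that the underlying set of conjuncts $\{\neg\,\mathrm{Char}(P) \setsep P \in S\}$ (recall $\bigvee$ abbreviates $\neg\bigwedge\neg$) has bounded cardinality and finite support. Bounded cardinality is immediate, since this set is no larger than $S \subseteq \states$ and $\lvert\states\rvert < \kappa$ by our standing assumption on $\kappa$. For finite support I would combine the finite support of $S$ with the equivariance of $\mathrm{Char}$: for any permutation $\pi$ fixing $\n(S)$ we have $\pi\cdot S = S$, and hence, using $\pi\cdot\mathrm{Char}(P) = \mathrm{Char}(\pi\cdot P)$, we get $\pi \cdot \{\mathrm{Char}(P) \setsep P \in S\} = \{\mathrm{Char}(P') \setsep P' \in S\}$. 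Thus the set of disjuncts is supported by $\n(S)$, so $A$ is a well-formed formula.

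It then remains to check the biconditional. Unfolding the semantics of disjunction, $\valid{Q}{A}$ holds iff $\valid{Q}{\mathrm{Char}(P)}$ for some $P \in S$, which by the defining property of characteristic formulas (Lemma~\ref{lemma:characteristic}) is equivalent to $P \bisim Q$ for some $P \in S$. If $Q \in S$, then reflexivity of $\bisim$ (Proposition~\ref{bisimequiv}) lets us take the witness $P = Q$, so $\valid{Q}{A}$. Conversely, if $\valid{Q}{A}$ then $P \bisim Q$ for some $P \in S$, and since $S$ is bisimulation-closed we conclude $Q \in S$. This yields $\valid{Q}{A}$ iff $Q \in S$, as required.

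I expect the only genuine work to be the finite-support check for the disjunction; the biconditional is then a direct unfolding of definitions. That support check is exactly where the equivariance of $\mathrm{Char}$ (Lemma~\ref{lemma:characteristic-equivariant}) is indispensable, since without it the orbit of the disjunct set under permutations fixing $\n(S)$ would not visibly close up, and the infinite disjunction would fail to be a well-formed formula.
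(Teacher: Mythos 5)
Your proof is correct and takes essentially the same route as the paper: the same disjunction $\bigvee_{P'\in S}\mathrm{Char}(P')$, well-formedness via the finite support of $S$ together with equivariance of $\mathrm{Char}$ (Lemma~\ref{lemma:characteristic-equivariant}), and the biconditional via Lemma~\ref{lemma:characteristic}, reflexivity of $\bisim$, and bisimulation-closedness of $S$. Your explicit cardinality check and the orbit argument for the support of the disjunct set are details the paper leaves implicit, but the argument is the same.
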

\begin{proof}
  The proof has been formalised in Isabelle.  Let
  \[A= \bigvee_{P'\in S} \mathrm{Char}(P')\]
  $S$ is finitely supported by assumption, and by the equivariance of
  $\mathrm{Char}$ (Lemma~\ref{lemma:characteristic-equivariant}) we
  have $\n(\{ \mathrm{Char}(P')\mid P'\in S \})\subseteq\n(S)$.  Hence
  the disjunction is well-formed.

  Assume~$P\in S$.  Since~$\valid{P}{\mathrm{Char}(P)}$ we get~$\valid{P}{\bigvee_{P'\in S} \mathrm{Char}(P')}$.  Conversely, assume that~$\valid{P}{\bigvee_{P'\in S} \mathrm{Char}(P')}$.  Then for some~$P'\in S$, $\valid{P}{\mathrm{Char}(P')}$.  By Lemma~\ref{lemma:characteristic}, $P' \bisim P$.  Hence~$P\in S$ because~$S$ is closed under bisimilarity.
\end{proof}

There are many relative expressiveness results in connection with logics in general and modal logics in particular. A classic example is van Benthem's result that modal logic is the bisimulation-invariant fragment of first-order logic~\cite{vanBenthem}. Theorem~\ref{thm:expressive-completeness} is very different in that it establishes a kind of absolute expressiveness: any (finitely supported and bisimulation closed) set of states can be characterized by a single formula. This is clearly impossible in any logic with only countably many formulas, since the set of sets of states may be uncountable.

\newcommand{\interp}[1]{\llbracket#1\rrbracket}
\newcommand{\INTERP}{\interp{\text{~}}}
\newcommand{\interpe}[2][\varepsilon]{\interp{#2}_{#1}}
\newcommand{\encodefix}[1]{\underline{#1}}
\newcommand{\cof}{\textsc{cof}}
\newcommand{\deq}{\mathrel{:=}}
\newcommand{\powfin}[1]{{\mathcal P}_{\mathrm{fs}}(#1)}
\newcommand{\CTLstar}{$\mbox{CTL}^{*}$}
\newcommand{\lfp}{\operatorname{lfp}}

\section{Derived formulas}
\label{sec:derived-formulas}
\noindent\firsttopic{Dual connectives}
We define logical disjunction $\bigvee_{i\in I}A_i$ in the usual way as
$\neg \bigwedge_{i\in I}\neg A_i$, when $\{A_i\setsep i\in I\}$
has bounded cardinality and finite support. A special case is $I=\{1,2\}$: we then write
$A_1 \land A_2$ instead of $\bigwedge_{i\in I}A_i$, and dually for $A_1 \lor A_2$.
We write $\top$ for the empty conjunction $\bigwedge_{i\in\emptyset}$, and $\bot$ for $\neg\top$.
We also write $A\implies B$ for $B\lor\neg A$.

The must modality $\must{\alpha}A$ is defined as
$\neg\may\alpha\neg A$, and requires $A$ to hold after every possible
$\alpha$-labelled transition from the current state.
  Note that $\bn(\alpha)$ bind into $A$.
By the semantics of the logic, $\must\alpha (A\land B)$ is equivalent to  $\must\alpha
A\land \must\alpha B$, and dually $\may\alpha (A\lor B)$ is equivalent to  $\may\alpha
A\lor \may\alpha B$.

\topic{Quantifiers}
\newcommand{\quantdot}{.}
Let $S$ be any finitely supported set of bounded cardinality and use $v$ to range over members of $S$.
Write $A\ssubst{v}{x}$ for the substitution of $v$ for the name $x$ in $A$, and assume this substitution function is equivariant.
Then we define
$\forall x\in S \quantdot A$ as $\bigwedge_{v\in S}A\ssubst vx$. There is not necessarily a common finite support for
 the formulas $A\ssubst vx$, for example if $S$ is some term algebra over names, but
the set $\{A\ssubst vx \setsep v\in S\}$ has finite support bounded by
$\{x\}\cup\n(S)\cup\n(A)$.
In our examples in Section~\ref{sec:related}, substitution is defined inductively on the structure of formulas,
based on primitive substitution functions for actions and state predicates,
which are capture-avoiding and preserve the binding names of actions.

Existential quantification $\exists x\in S\quantdot A$ is defined as the dual $\neg \forall x\in S\quantdot \neg A$.
When $X$ is a metavariable used to range over a nominal set $\mathcal{X}$, we simply write $X$ for ``$X\in \mathcal{X}$''.
As an example, $\forall a\quantdot A$ means that the formula $A\ssubst na$ holds for all names $n\in\nameset$.

\topic{New name quantifier}
The new name quantifier $\new x.A$~\cite{pitts:nominal-logic} intuitively states that
$\valid{P}{A{\ssubst{n}x}}$ holds where $n$ is a fresh name for $P$.  For
example, suppose we have actions of the form $a\,b$ for input, and
$\overline{a}\,b$ for output where $a$ and $b$ are free names, then the formula $
\new x.  \must{a\,x}\may{\overline{b}\,x}\top $ expresses that whenever a process
inputs a fresh name $x$ on channel $a$, it has to be able to output that name on
channel~$b$. If the name received is not fresh (i.e., already present in $P$) then $P$ is not required to do anything. Therefore this formula is weaker than
$\forall x \quantdot
\must{a\,x}\may{\overline{b}\,x}\top$.

Since $A$ and $P$  have finite support, if $\valid{P}{A\ssubst{n}x}$ holds for some $n$ fresh for $P$, by equivariance it also holds for almost all $n$, i.e., all but finitely many $n$. Conversely, if it holds for almost all $n$, it must hold for some $n\freshin \n(P)$. Therefore $\new x$ is often pronounced ``for almost all $x$''.
In other words,  $P\models\new x.A$ holds if $\{x \setsep P\models A(x)\}$ is a cofinite set of
names \cite[Definition 3.8]{PittsNominalSets}.

To avoid the need for a substitution function,
we here define the new name quantifier using name swapping $(a\;n)$.
Letting $\cof=\{ S\subseteq\nameset \setsep \nameset \setminus S\text{ is finite} \}$
we thus encode  $\new x.A$ as $\bigvee_{S\in\cof}\bigwedge_{n\in S\setminus\n(A)}(x\,n)\!\cdot\! A$. This formula
states there is a cofinite set of names $n$ such that for all of them that are fresh for $A$,
$(x\,n)\!\cdot\! A$ holds. The support of $\bigwedge_{n\in S\setminus\n(A)}(x\,n)\!\cdot\! A$  is bounded by
$(\nameset \setminus S) \cup \n(A)$ where $S \in \cof$, and
the support of the encoding $\bigvee_{S\in\cof}\bigwedge_{n\in S\setminus\n(A)}(x\,n)\!\cdot\! A$ is
bounded by $\n(A)$.

\topic{Next step}
We can generalise the action modality to sets of actions:
if $T$ is a finitely supported set of actions that has bounded cardinality, we write $\may{T}A$ for $\bigvee_{\alpha \in T}\may\alpha A$.  The
support of $\{\may\alpha A \setsep \alpha \in T\}$ is bounded by
$\n(T)\cup\n(A)$ and thus finite.  Dually, we write $\must{T}A$ for
$\neg\may{T}\neg A$, denoting that $A$ holds after all transitions
with actions in $T$.  Note that binding names of actions in $T$ bind
into $A$, and so $\may{\alpha}A$ is equivalent to $\may{\{\alpha\}}A$
for any~$\alpha$.

To encode the next-step modality, let $\actions_A=\{\alpha \setsep
\bn(\alpha)\freshin A\}$.  Note that $\n(\actions_A) \subseteq \n(A)$
is finite.  If $\kappa$ (Definition~\ref{def:formulas}) is larger than
$\lvert \actions_A\rvert$, we write $\may{\,}A$ for
$\may{\actions_A}A$, meaning that we can make some (non-capturing)
transition to a state where $A$ holds.  As an example, $\may{\,}\top$
means that the current state is not deadlocked.  The dual modality
$\must{\,}A=\neg\may{\,}\neg A$ means that $A$ holds after every
transition from the current state.  Larsen~\cite{LarsenHMLrecursion}
uses the same approach to define next-step operators in HML, though
his version is less expressive since he uses a finite action set to
define the next-step modality.

\section{Fixpoint operators}
\label{sec:fixedpoints}

Fixpoint operators are a way to introduce recursion into a logic.  For
example, they can be used to concisely express safety and liveness
properties of a transition system, where by safety we mean that some
invariant holds for all reachable states, and by liveness that some
property will eventually hold.  Kozen~\cite{Kozen:1983} introduced the
least ($\mu X. A$) and the greatest ($\nu X. A$) fixpoint operators in
modal logic.

By combining the fixpoint and next-step operators, we can encode the
temporal logic CTL~\cite{Clarke:1981}, following
Emerson~\cite{Emerson:1997}.  The CTL formula $\mathsf{AG}\, A$, which
states that~$A$ holds along all paths, is defined as $\nu X. A \land
\must{\,} X$.  Dually the formula $\mathsf{E F}\,A$, stating the there
is some path where $A$ eventually holds, is defined $\mu X. A \lor
\may{\,} X$.  These are special cases of more general formulas: the
formula $\mathsf{A}[A \,\mathsf{U}\, B]$ states that for all paths~$A$
holds until~$B$ holds, and dually $\mathsf{E}[A \,\mathsf{U}\, B]$
states that there is a path along which~$A$ holds until~$B$.  They are
encoded as $\nu X. B \lor (\must{\,}X \land A)$ and $\mu X. B \lor
(\may{\,}X \land A)$, respectively.
For example, deadlock-freedom is given by the CTL formula
$\mathsf{AG}\,\may{\,}\top$ expressing that every reachable state has
a transition. The encoding of this formula is $\nu X. \may{\,}\top
\land \must{\,} X$.

We extend the logic of Definition~\ref{def:formulas} with the least fixpoint
operator and give meaning to formulas as sets of satisfying states. We show
that the meaning of the fixpoint operator is indeed a fixpoint. We proceed
to show that the least fixpoint operator can be directly encoded in the logic.
The greatest fixpoint operator can then be expressed as the dual of the least fixpoint.

\subsection{Logic with the least fixpoint operator}

\begin{defi}
  We extend the nominal set of formulas with the least fixpoint operator:
  \[
  A \quad ::= \quad
  \conjunc A_i \casesep
  \neg A \casesep
  \varphi \casesep
  \may{\alpha} A \casesep
  X \casesep
  \mu X. A
  \]
  where~$X$ ranges over a countably infinite set of equivariant (i.e.,
  $\pi \cdot X = X$ for all~$\pi$) variables.  We require that all
  occurrences of a variable~$X$ in a formula~$\mu X. A$ are in the
  scope of an even number of negations.
\end{defi}

An occurrence of a variable~$X$ in~$A$ is said to be free if it is not
a subterm of some~$\mu X.B$.  We say that a formula~$A$ is closed if
for every variable~$X$, none of its occurrences in~$A$ are free.  We
use a capture-avoiding substitution function~$[A/X]$ on formulas that
substitutes~$A$ for the free occurrences of the variable $X$.  In
particular, $(\may\alpha B)[A/X] = \may\alpha (B[A/X])$ when
$\bn(\alpha)$ is fresh for~$A$.

We give a semantics to formulas containing variables and fixpoint
modalities as sets of satisfying states.

\begin{defi} \label{def:interpretation}
  A \emph{valuation function}~$\varepsilon$ is a finitely supported
  map from variables to (finitely supported) sets of states.
  We write $\varepsilon[X \mapsto S]$ for the valuation function that
  maps~$X$ to~$S$, and any variable~$X' \neq X$ to~$\varepsilon(X')$.

  \noindent We define the \emph{interpretation} of formula~$A$ under
  valuation~$\varepsilon$ by structural induction as the set of
  states~$\interpe{A}$:
  \begin{longtable*}{RCL}
    \interpe{\conjuncset{I} A_i} &=& \bigcap_{i\in I} \interpe{A_i} \\
    \interpe{\neg A} &=& \states \setminus \interpe{A} \\
    \interpe{\varphi} &=& \Set{ P \setsep \entails{P}\varphi } \\
    \interpe{\may{\alpha} A} &=& \Set{ P \setsep \exists \alpha' \, A' \, P' \,.\, \may{\alpha} A = \may{\alpha'} A' \wedge \bn(\alpha') \freshin P,\varepsilon \wedge P \trans{\alpha'} P' \wedge P' \in \interpe{A'}} \\
    \interpe{X} &=& \varepsilon(X) \\
    \interpe{\mu X. A} &=& \bigcap\Set{ S \in\powfin{\states} \setsep \interp{A}_{\varepsilon[X \mapsto S]} \subseteq S }
  \end{longtable*}
  We write $\INTERP$ for the function $(A,\varepsilon)\mapsto\interpe{A}$.
\end{defi}

\begin{lem} \label{lem:interp-equiv}
  $\INTERP$ is equivariant.
\end{lem}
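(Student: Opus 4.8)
The plan is to prove the equivalent unfolded statement that $\pi \cdot \interpe{A} = \interp{\pi \cdot A}_{\pi \cdot \varepsilon}$ for every permutation $\pi$, formula $A$ and valuation $\varepsilon$, proceeding by structural induction on $A$ using the induction principle for the nominal datatype of formulas. Throughout I would use that permutations act as bijections, so they distribute over the set-theoretic operations $\bigcap$ and $\setminus$ occurring in Definition~\ref{def:interpretation}, and that $\states$ is a nominal set, hence $\pi \cdot \states = \states$.

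Four of the six cases are routine. For $\conjuncset{I} A_i$ I would push $\pi$ through the intersection and apply the induction hypothesis to each $A_i$, using that name permutation distributes over the conjunction constructor. For $\neg A$ I would use $\pi \cdot (\states \setminus \interpe{A}) = \states \setminus (\pi \cdot \interpe{A})$ together with the induction hypothesis. For $\varphi$ the valuation is irrelevant and the claim reduces to equivariance of $\vdash$, namely $\entails{\pi^{-1}\cdot P}{\varphi}$ iff $\entails{P}{\pi \cdot \varphi}$. For a variable $X$ I would unfold the permutation action on the function $\varepsilon$, namely $(\pi\cdot\varepsilon)(X) = \pi\cdot\varepsilon(\pi^{-1}\cdot X)$, and use that variables are equivariant, so $\pi^{-1}\cdot X = X = \pi\cdot X$; this yields $\pi\cdot\interpe{X} = \pi\cdot\varepsilon(X) = (\pi\cdot\varepsilon)(X) = \interp{\pi\cdot X}_{\pi\cdot\varepsilon}$.

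The two cases that need care are the action modality and the fixpoint. For $\may{\alpha}A$ I would apply $\pi$ to the defining set-builder and transport the existential witnesses $\alpha', A', P'$ to $\pi\cdot\alpha', \pi\cdot A', \pi\cdot P'$. The conjunct $\may{\alpha}A = \may{\alpha'}A'$ is preserved because alpha-equivalence of formulas is equivariant; $P \trans{\alpha'} P'$ becomes $\pi\cdot P \trans{\pi\cdot\alpha'} \pi\cdot P'$ by equivariance of the transition relation; and $P'\in\interp{A'}_\varepsilon$ becomes $\pi\cdot P'\in\interp{\pi\cdot A'}_{\pi\cdot\varepsilon}$ by the induction hypothesis applied to $A'$. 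The delicate conjunct is the freshness side-condition $\bn(\alpha')\freshin P,\varepsilon$: here I would use equivariance of $\bn$ and of the support function, so that this condition transforms into $\bn(\pi\cdot\alpha')\freshin \pi\cdot P,\pi\cdot\varepsilon$, which is exactly the condition required for membership of $\pi\cdot P$ in $\interp{\pi\cdot(\may{\alpha}A)}_{\pi\cdot\varepsilon}$. For $\mu X. A$ I would reindex the intersection by $S' := \pi\cdot S$, noting that $\pi$ restricts to a bijection of $\powfin{\states}$ since finite supportedness is preserved by permutation. The pre-fixpoint condition $\interp{A}_{\varepsilon[X\mapsto S]}\subseteq S$ then becomes, by the induction hypothesis, $\interp{\pi\cdot A}_{\pi\cdot(\varepsilon[X\mapsto S])}\subseteq S'$, and since $X$ is equivariant we have $\pi\cdot(\varepsilon[X\mapsto S]) = (\pi\cdot\varepsilon)[X\mapsto S']$; hence the two intersections range over the same collection of sets and coincide.

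I expect the action-modality case to be the main obstacle, precisely because of the implicit name abstraction and the freshness requirement on both $P$ and the valuation $\varepsilon$: one must verify that applying $\pi$ sends the chosen non-capturing alpha-variant $\may{\alpha'}A'$ to an admissible non-capturing alpha-variant on the image side, which is where equivariance of $\bn$, of alpha-equivalence, and of support are all invoked in concert. As elsewhere in the paper, this argument has been formalised in Isabelle.
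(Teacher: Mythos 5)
Your proof is correct, but it takes a genuinely different route from the paper's. The paper dispatches this lemma in one line, by appeal to the Equivariance Principle of Pitts~\cite[page~21]{PittsNominalSets}: since $\INTERP$ is defined by structural recursion from primitives that are all equivariant ($\vdash$, $\rightarrow$, $\bn$, the set-theoretic operations, and the equivariant variables) and the definition never mentions any particular name, the resulting function is automatically equivariant. What you have done is unfold that meta-argument into a concrete structural induction proving $\pi \cdot \interpe{A} = \interp{\pi\cdot A}_{\pi\cdot\varepsilon}$, and your treatment of the two delicate cases is right: for $\may{\alpha}A$ you transport the witnesses and use equivariance of alpha-equivalence, of $\trans{}$, of $\bn$ and of support to preserve the freshness side-condition $\bn(\alpha')\freshin P,\varepsilon$; for $\mu X.A$ the reindexing $S' := \pi\cdot S$ together with $\pi\cdot(\varepsilon[X\mapsto S]) = (\pi\cdot\varepsilon)[X\mapsto S']$ is exactly what is needed. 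One small point you gloss over: in the modality case the induction hypothesis is stated for the body of a chosen representative, whereas you apply it to the alpha-variant body $A'$; this is harmless because $A' = \rho\cdot A$ for some permutation $\rho$, and your induction hypothesis, being quantified over all permutations and valuations, transfers to $A'$ by composing permutations---but it deserves a sentence. The trade-off between the two approaches is the usual one: the paper's appeal is short but relies on trusting the meta-theorem; yours is self-contained and exhibits exactly which equivariance assumptions of the transition system are used where. Finally, a factual slip: your closing claim that this argument has been formalised in Isabelle is wrong---Section~\ref{sec:formalisation} states explicitly that the definitions and results of Section~\ref{sec:fixedpoints} on fixpoint operators are among the parts lacking formal proofs.
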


\begin{proof}
  By the Equivariance Principle~\cite[page~21]{PittsNominalSets}.
\end{proof}

\begin{lem}
  For any formula $A$ and valuation function $\varepsilon$,
  $\interpe[\varepsilon]{A}\in\powfin{\states}$.
\end{lem}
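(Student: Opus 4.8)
The plan is to obtain finite support directly from the equivariance of $\INTERP$ already established in Lemma~\ref{lem:interp-equiv}, rather than to run a fresh structural induction over $A$. The key observation is that equivariance of $\INTERP$ is the plain identity $\pi \cdot \interpe{A} = \interpe[\pi\cdot\varepsilon]{\pi\cdot A}$, holding for every permutation $\pi$, where $\pi$ acts on a subset of $\states$ pointwise by $\pi\cdot S = \{\pi\cdot P \setsep P \in S\}$; this identity makes sense and is valid for an arbitrary subset of $\states$, irrespective of whether that subset is finitely supported.

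I would then argue as follows. Since $A$ is an element of the nominal set $\formset$ it has finite support $\n(A)$, and since $\varepsilon$ is a valuation function it is, by Definition~\ref{def:interpretation}, finitely supported with finite support $\n(\varepsilon)$. Take any permutation $\pi$ fixing every name in the finite set $\n(A)\cup\n(\varepsilon)$. Then $\pi\cdot A = A$ and $\pi\cdot\varepsilon = \varepsilon$, so the equivariance identity collapses to $\pi\cdot\interpe{A} = \interpe{A}$. Hence $\n(A)\cup\n(\varepsilon)$ is a finite supporting set for $\interpe{A}$, which is exactly to say that $\interpe{A}$ is finitely supported, i.e.\ $\interpe{A}\in\powfin{\states}$ with $\n(\interpe{A})\subseteq\n(A)\cup\n(\varepsilon)$.

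The step that needs the most care is justifying why we may invoke equivariance here at all without circularity. The general fact $\n(f(X))\subseteq\n(f)\cup\n(X)$ from the background section is stated for functions between nominal sets, whereas $\INTERP$ takes values in the full powerset of $\states$, which is not a nominal set (it contains subsets of infinite support). The resolution is that we do not need that general fact: the equivariance identity above is a bare set equation, and from it we read off a concrete finite supporting set, so finite support of the output is a conclusion rather than a hypothesis. For the same reason the equivariance route is preferable to a direct structural induction, which would run into trouble precisely at the infinitary clauses $\conjunc A_i$ and $\mu X.\,A$: there the interpretation is an intersection of (finitely supported) sets, and an arbitrary intersection of finitely supported sets need not be finitely supported, so one would have to re-derive support from equivariance locally in any case.
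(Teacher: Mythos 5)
Your proof is correct and takes essentially the same approach as the paper: the paper's entire proof is the one-line observation that equivariance of $\INTERP$ (Lemma~\ref{lem:interp-equiv}) yields $\n(\interpe[\varepsilon]{A}) \subseteq \n(A) \cup \n(\varepsilon)$, which is finite. Your explicit unfolding—taking any permutation fixing $\n(A)\cup\n(\varepsilon)$ and reading off a supporting set from the bare equivariance identity, thereby avoiding any appeal to facts about functions between nominal sets—is exactly the justification implicit in that one-liner.
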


\begin{proof}
  By equivariance of $\INTERP$,
  $\n(\interpe[\varepsilon]{A}) \subseteq \n(A) \cup \n(\varepsilon)$.
\end{proof}

Temporal operators such as ``eventually'' can be encoded using the
least fixpoint operator.  For instance, the formula $\mu
X. \may{\alpha}X \lor A$ states that~$A$ eventually holds on some path
labelled with~$\alpha$.
We define the greatest fixpoint operator $\nu X. A$ in terms of the
least as $\neg \mu X. \neg A[\neg X/X]$.  Using the greatest fixpoint
operator we can state global invariants: $\nu X. \must\alpha X \land
A$ expresses that~$A$ holds along all paths labelled with~$\alpha$.
We can freely mix the fixpoint operators to obtain formulas like $\nu
X. \must\alpha X \land (\mu Y . \may\beta Y \lor A)$, which means that
for each state along any path labelled with~$\alpha$, a state
where~$A$ holds is reachable along a path labelled with~$\beta$.

As sanity checks for our definition, we prove that the interpretation
of formulas without fixpoint modalities is unchanged, and that the
interpretation of the formula~$\mu X.A$ is indeed the least fixpoint
of the function~$F_A^{\varepsilon}\colon
S\mapsto\interp{A}_{\varepsilon[X\mapsto S]}$.

\begin{prop} \label{prop:valid-iff-interpe}
  Let~$A$ be a formula as in Definition~\ref{def:formulas}.  Then for
  any valuation function~$\varepsilon$ and state~$P$, $\valid{P}{A}$
  if and only if~$P \in \interpe{A}$.
\end{prop}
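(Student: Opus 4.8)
The plan is to prove the biconditional $\valid{P}{A} \iff P \in \interpe{A}$ for all states $P$ and all valuations $\varepsilon$ simultaneously, by structural induction on the formula $A$ of Definition~\ref{def:formulas}. Since these formulas contain neither variables nor fixpoints, Definition~\ref{def:validity} applies to them directly, and the induction hypothesis may be taken in the strong form quantified over all $P$ and $\varepsilon$. I would use the nominal structural induction principle so that, in the one binding case, the bound names can be assumed fresh for the ambient data $(P,\varepsilon)$.

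The base and propositional cases are routine and I would dispatch them quickly. For $A=\varphi$ both sides unfold to $\entails{P}{\varphi}$, independently of $\varepsilon$. For $A=\conjunc A_i$ the left-hand side is ``$\valid{P}{A_i}$ for every $i\in I$'' and the right-hand side is $P\in\bigcap_{i\in I}\interpe{A_i}$; these agree conjunct-by-conjunct by the induction hypothesis. For $A=\neg A'$ one compares ``not $\valid{P}{A'}$'' with $P\in\states\setminus\interpe{A'}$, closing again by the induction hypothesis.

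The only real work is the modality case $A=\may{\alpha}B$, where the two definitions treat alpha-variants differently: Definition~\ref{def:validity} reads a single representative under the convention $\bn(\alpha)\freshin P$, whereas the clause for $\interpe{\may{\alpha}B}$ explicitly existentially quantifies over alpha-variants $\may{\alpha'}B'$ with $\bn(\alpha')\freshin P,\varepsilon$. Using the strong induction principle I would fix a representative with $\bn(\alpha)\freshin P,\varepsilon$. With this choice one implication is immediate: the representative $(\alpha,B)$ itself meets the side condition of the interpretation clause, so if $P\trans{\alpha}P''$ with $P''\in\interpe{B}$ then $P\in\interpe{\may{\alpha}B}$, and by the induction hypothesis $P''\in\interpe{B}$ is interchangeable with $\valid{P''}{B}$, matching Definition~\ref{def:validity}; this yields the direction from validity to interpretation.

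The hard part, and the step I expect to be the main obstacle, is the converse: showing that the alpha-variant quantification in the interpretation collapses to this fixed representative. Given a witness $(\alpha',B',P')$ with $\may{\alpha}B=\may{\alpha'}B'$, $\bn(\alpha')\freshin P,\varepsilon$, $P\trans{\alpha'}P'$ and $P'\in\interpe{B'}$, I would choose a renaming permutation $\pi$ of the bound names, supported within $\bn(\alpha)\cup\bn(\alpha')$, with $\pi\cdot(\alpha',B')=(\alpha,B)$; since $\bn(\alpha)$ and $\bn(\alpha')$ are fresh for both $P$ and $\varepsilon$, this $\pi$ fixes $P$ and $\varepsilon$. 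Transporting the transition along $\pi$ (alpha-conversion of transitions, as stated after Definition~\ref{def:nominaltransitionsystem}) and the membership along $\pi$ (equivariance of $\INTERP$, Lemma~\ref{lem:interp-equiv}) yields $P\trans{\alpha}\pi\cdot P'$ with $\pi\cdot P'\in\interp{\pi\cdot B'}_{\pi\cdot\varepsilon}=\interpe{B}$. Setting $P''=\pi\cdot P'$, the induction hypothesis gives $\valid{P''}{B}$, and Definition~\ref{def:validity} then delivers $\valid{P}{\may{\alpha}B}$. The delicate points here are the availability of a bound-name renaming supported within $\bn(\alpha)\cup\bn(\alpha')$ and the verification that it leaves $P$ and $\varepsilon$ invariant; both are standard consequences of the definition of $=_\alpha$ from Section~\ref{sec:background} together with the freshness arranged by the strong induction principle.
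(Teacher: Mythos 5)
Your proof is correct and takes essentially the same route as the paper's: structural induction on $A$ (never touching the variable/fixpoint clauses), trivial propositional cases, and the modality case $\may{\alpha}B$ resolved by fixing a representative with $\bn(\alpha)\freshin P,\varepsilon$, so that one direction is immediate and the other reduces to reconciling the alpha-variant produced by the interpretation clause. The only divergence is cosmetic: in that converse direction the paper applies the induction hypothesis directly to the variant body $A''$ (implicitly using that the inductive statement is closed under permutation), whereas you explicitly construct the permutation, transport the transition and the membership $P'\in\interpe{B'}$ by equivariance, and apply the hypothesis to the fixed representative's body --- a more explicit rendering of the same argument.
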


\begin{proof}
  By structural induction on~$A$.  The clauses for~$X$ and~$\mu X.A'$
  in Definition~\ref{def:interpretation} are not used.  The
  interesting case is
  \begin{description}
  \item[Case $\may{\alpha}A'$] Assume $\valid{P}{\may{\alpha}A'}$.
    Without loss of generality assume also $\bn(\alpha) \freshin P,
    \varepsilon$, otherwise just find an alpha-variant
    of~$\may{\alpha}A'$ where this holds.  From
    Definition~\ref{def:validity} we obtain~$P'$ such that $P
    \trans\alpha P'$ and~$\valid{P'}{A'}$.  Then~$P' \in \interpe{A'}$
    by the induction hypothesis, hence $P \in
    \interpe{\may{\alpha}A'}$ by Definition~\ref{def:interpretation}.

    Next, assume~$P \in \interpe{\may{\alpha}A'}$.  From
    Definition~\ref{def:interpretation} we obtain an alpha-variant
    $\may{\alpha'}A'' = \may{\alpha}A'$ and~$P'$ such that
    $\bn(\alpha') \freshin P$, $P \trans{\alpha'} P'$ and $P' \in
    \interpe{A''}$.  Then~$\valid{P'}{A''}$ by the induction
    hypothesis.  Hence $\valid{P}{\may{\alpha'}A''} = \may{\alpha}A'$
    by Definition~\ref{def:validity}. \qedhere
  \end{description}
\end{proof}

\begin{lem} \label{lem:mu-fininite-supp}
  For any formula~$\mu X.A$ and valuation function~$\varepsilon$,
  $F_{A}^{\varepsilon}$ has finite support.
\end{lem}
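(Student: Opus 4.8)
The plan is to show directly that $\n(A)\cup\n(\varepsilon)$ is a finite supporting set for $F_{A}^{\varepsilon}$. Recall that a permutation~$\pi$ supports a function exactly when $\pi\cdot F_{A}^{\varepsilon}=F_{A}^{\varepsilon}$ for every $\pi$ fixing that set pointwise, which (unfolding the permutation action on functions) amounts to the identity $\pi\cdot F_{A}^{\varepsilon}(S)=F_{A}^{\varepsilon}(\pi\cdot S)$ for all $S\in\powfin{\states}$. The set $\n(A)\cup\n(\varepsilon)$ is finite because every formula lies in a nominal set and every valuation function is finitely supported by Definition~\ref{def:interpretation}, so establishing this identity for all $\pi$ fixing it suffices.

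First I would compute, for an arbitrary permutation~$\pi$,
\[
\pi\cdot F_{A}^{\varepsilon}(S)\;=\;\pi\cdot\interp{A}_{\varepsilon[X\mapsto S]}\;=\;\interp{\pi\cdot A}_{\pi\cdot(\varepsilon[X\mapsto S])},
\]
where the last step is the equivariance of $\INTERP$ (Lemma~\ref{lem:interp-equiv}). The crucial intermediate step is the commutation of the permutation with the valuation update, namely
\[
\pi\cdot(\varepsilon[X\mapsto S])\;=\;(\pi\cdot\varepsilon)[X\mapsto\pi\cdot S].
\]
This holds because variables are equivariant ($\pi\cdot X=X$): evaluating the left-hand side at any variable $X'$ gives $\pi\cdot(\varepsilon[X\mapsto S])(\pi^{-1}\cdot X')=\pi\cdot(\varepsilon[X\mapsto S])(X')$, which yields $\pi\cdot S$ when $X'=X$ and $(\pi\cdot\varepsilon)(X')$ otherwise, matching the right-hand side.

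Combining these, when $\pi$ fixes $\n(A)\cup\n(\varepsilon)$ we have $\pi\cdot A=A$ and $\pi\cdot\varepsilon=\varepsilon$, so
\[
\pi\cdot F_{A}^{\varepsilon}(S)\;=\;\interp{A}_{\varepsilon[X\mapsto\pi\cdot S]}\;=\;F_{A}^{\varepsilon}(\pi\cdot S),
\]
which is precisely the required function-support condition. Hence $\n(F_{A}^{\varepsilon})\subseteq\n(A)\cup\n(\varepsilon)$ is finite. I expect the only genuinely delicate point to be the valuation-update commutation above; once the equivariance of the variable~$X$ is invoked it becomes routine, but it is the step where the argument~$S$ must be seen to transform covariantly with~$\pi$, which is exactly what the support-of-a-function condition demands.
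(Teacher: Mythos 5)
Your proposal is correct and takes essentially the same route as the paper: the paper's proof is the one-line observation that equivariance of $\INTERP$ (Lemma~\ref{lem:interp-equiv}) yields $\n(F_{A}^{\varepsilon})\subseteq\n(A)\cup\n(\varepsilon)$, which is exactly the bound you establish. Your write-up merely makes explicit the steps the paper leaves implicit — unfolding the support condition for functions and verifying the commutation $\pi\cdot(\varepsilon[X\mapsto S])=(\pi\cdot\varepsilon)[X\mapsto\pi\cdot S]$ via equivariance of the variable $X$ — and these details are all sound.
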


\begin{proof}
  By equivariance of $\INTERP$,
  $\n(F_{A}^{\varepsilon})\subseteq\n(A)\cup\n(\varepsilon).$
\end{proof}

\begin{lem} \label{lem:mu-monotonic}
  For any formula $\mu X. A$ and 
  for any
  valuation function~$\varepsilon$, the
  function $F_{A}^\varepsilon \colon \powfin\states \to
  \powfin\states$ is monotonic with respect to subset inclusion.
\end{lem}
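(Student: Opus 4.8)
The plan is to prove a stronger statement by simultaneous induction, because monotonicity on its own is not stable under negation. Fix the variable~$X$ and abbreviate $\varepsilon[X\mapsto S]$ by~$\varepsilon_S$. Call~$X$ \emph{positive} (resp.\ \emph{negative}) in a formula~$B$ when every free occurrence of~$X$ in~$B$ lies under an even (resp.\ odd) number of negations. I would then prove, by structural induction on~$B$ and uniformly over all valuations~$\varepsilon$, the two claims:
\begin{enumerate}
\item[(P)] if~$X$ is positive in~$B$, then $S \mapsto \interp{B}_{\varepsilon_S}$ is monotonic;
\item[(N)] if~$X$ is negative in~$B$, then $S \mapsto \interp{B}_{\varepsilon_S}$ is antitonic.
\end{enumerate}
The lemma is case~(P) for $B = A$, since well-formedness of~$\mu X.A$ is exactly positivity of~$X$ in~$A$, and $F_A^{\varepsilon}(S)=\interp{A}_{\varepsilon_S}$. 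The two claims must be proved together because negation flips both the parity of occurrences and the direction of inclusion: $\neg B'$ is monotonic precisely when~$B'$ is antitonic.

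The propositional cases are routine. For~$B=\varphi$ and for a variable~$B=Y$ with $Y\neq X$ the interpretation is independent of~$S$, so the map is constant and both (P) and (N) hold; for~$B=X$ it is the identity, which is monotonic. Conjunction commutes with intersection and leaves all parities unchanged, and intersections of monotonic (resp.\ antitonic) maps are again monotonic (resp.\ antitonic). Negation is discharged by invoking the complementary induction hypothesis and complementing in~$\states$. One preliminary observation is repeatedly useful: since $\pi\cdot X = X$, equivariance of~$\INTERP$ (Lemma~\ref{lem:interp-equiv}) makes claims (P) and (N) equivariant, so if they hold for~$B'$ they hold for every alpha-variant~$\pi\cdot B'$; this will let me apply the induction hypothesis to alpha-renamed subformulas.

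For the fixpoint case~$B=\mu Y.C$ (if $Y=X$ then~$X$ is not free and the map is constant) I would argue directly on the defining intersection of prefixpoints rather than invoking Knaster--Tarski, which sidesteps any question of completeness of the lattice~$\powfin\states$. Positivity of~$X$ in~$B$ gives positivity of~$X$ in~$C$. For $S\subseteq S'$ and any~$T$, the induction hypothesis (P) for~$C$ yields $\interp{C}_{\varepsilon_S[Y\mapsto T]}\subseteq\interp{C}_{\varepsilon_{S'}[Y\mapsto T]}$, so every prefixpoint of the $S'$-operator is also a prefixpoint of the $S$-operator. The $S$-family of prefixpoints therefore contains the $S'$-family, and intersecting over the larger family yields the smaller set, giving $\interp{\mu Y.C}_{\varepsilon_S}\subseteq\interp{\mu Y.C}_{\varepsilon_{S'}}$; the antitonic case is symmetric.

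I expect the modality case~$B=\may{\alpha}B'$ to be the crux, because the side condition $\bn(\alpha')\freshin P,\varepsilon_S$ in Definition~\ref{def:interpretation} ties the binding names to~$\n(\varepsilon_S)$, and $\n(S)$ need not grow with~$S$ (the support of a subset is not monotone). To prove monotonicity, given $S\subseteq S'$ and $P\in\interp{\may{\alpha}B'}_{\varepsilon_S}$, I would pick a representative $\may{\alpha'}A'$ of the alpha-equivalence class whose binding names are fresh for~$P$, $\varepsilon$, $S$ and~$S'$ all at once (possible as each has finite support), and transport the witnessing transition onto this representative by alpha-conversion of transitions. Because the swapping involved fixes~$\varepsilon_S$, equivariance of~$\INTERP$ relocates the witness into~$\interp{A'}_{\varepsilon_S}$; the induction hypothesis applied to~$A'$ (legitimate since (P) is equivariant and $A'$ is an alpha-variant of~$B'$) gives $\interp{A'}_{\varepsilon_S}\subseteq\interp{A'}_{\varepsilon_{S'}}$, and as $\bn(\alpha')$ is fresh for~$\varepsilon_{S'}$ too, the same representative witnesses $P\in\interp{\may{\alpha}B'}_{\varepsilon_{S'}}$. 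The antitonic case is identical with the inclusion reversed. The difficulty here is entirely the bookkeeping forced by the valuation-dependent freshness condition, not any deeper phenomenon.
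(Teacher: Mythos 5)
Your proposal is correct and follows essentially the same route as the paper's proof: the same strengthening to simultaneous claims (positive occurrences give monotonicity, negative give antitonicity) proved by structural induction for arbitrary valuations, and the same direct argument on the defining intersection of prefixpoints in the fixpoint case (the family of prefixpoints for the larger set is contained in that for the smaller, so the intersections compare the right way), avoiding any appeal to completeness of the lattice. The only difference is one of emphasis: the paper singles out the fixpoint case as the interesting one and leaves the $\may{\alpha}B'$ case implicit, whereas you spell out the freshness and alpha-conversion bookkeeping there, which is a sound elaboration of a step the paper treats as routine.
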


\begin{proof}
  By structural induction on~$A$, for arbitrary~$\varepsilon$.  Let
  $S$, $T \in \powfin\states$ such that~$S \subseteq\nobreak T$.  We
  prove a more general statement: if all occurrences of~$X$ in~$A$ are
  positive (i.e., within the scope of an even number of negations),
  $F_{A}^\varepsilon(S) \subseteq F_{A}^\varepsilon(T)$, and if all
  occurrences of~$X$ in~$A$ are negative (i.e., within the scope of an
  odd number of negations), $F_{A}^\varepsilon(T) \subseteq
  F_{A}^\varepsilon(S)$.  The interesting case is
  \begin{description}
  \item[Case $\mu X'.A'$] If $X = X'$, note that for any~$U$,
    $V\in\powfin{\states}$, $\varepsilon[X\mapsto U][X\mapsto\nobreak
    V] = \varepsilon[X\mapsto V]$.  Therefore, $\interp{\mu
    X'. A'}_{\varepsilon[X \mapsto S]} = \interp{\mu
    X'. A'}_{\varepsilon[X \mapsto T]}$ is immediate from
    Definition~\ref{def:interpretation}.

    Otherwise, $X \neq X'$.  Suppose that all occurrences of~$X$
    in~$A$ are positive.  Then all occurrences of~$X$ in~$A'$ are
    positive, and for any~$V\in\powfin{\states}$,
    $\interp{A'}_{\varepsilon[X'\mapsto V][X\mapsto S]} \subseteq
    \interp{A'}_{\varepsilon[X'\mapsto V][X\mapsto T]}$ by the
    induction hypothesis applied to~$A'$ and~$\varepsilon[X'\mapsto
      V]$.  Since $X\neq\nobreak X'$, for any~$U$,
    $V\in\powfin{\states}$, $\varepsilon[X\mapsto U][X'\mapsto V] =
    \varepsilon[X'\mapsto V][X\mapsto U]$.  Thus,
      \begin{multline*}
        \qquad\quad \interp{\mu X'. A'}_{\varepsilon[X \mapsto S]} = \bigcap\Set{
          S' \in\powfin{\states} \setsep \interp{A'}_{\varepsilon[X
            \mapsto S][X'\mapsto S']} \subseteq S' } \subseteq\\
        \bigcap\Set{ S' \in\powfin{\states} \setsep
          \interp{A'}_{\varepsilon[X\mapsto T][X' \mapsto S']}
          \subseteq S' } = \interp{\mu X'. A'}_{\varepsilon[X \mapsto
          T]}.
      \end{multline*}
      The case where all occurrences of~$X$ in~$A$ are negative is
      similar.\qedhere
  \end{description}
\end{proof}

We use a nominal version of Tarski's fixpoint
theorem~\cite{tarski1955} to show existence, uniqueness, and the
construction of the least fixpoint of~$F_{A}^{\varepsilon}$.  Note
that the usual Tarski fixpoint theorem does not apply, since the
lattice~$\powfin\states$ is not necessarily complete.  (For a simple
counterexample, consider $\states = \nameset$: sets that are neither finite nor cofinite are not elements of $\powfin\nameset$.)

\begin{thm} \label{thm:tarski}
  Suppose $X$ is a nominal set, and $f\colon\powfin X \to \powfin X$
  is finitely supported and monotonic with respect to subset
  inclusion.  Then~$f$ has a least fixpoint~$\lfp f$, and $$\lfp f =
  \bigcap\Set{ S \in\powfin{X} \setsep f(S) \subseteq S}.$$
\end{thm}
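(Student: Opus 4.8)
The plan is to run the classical Knaster--Tarski argument, the only new ingredient being a nominal check that the candidate fixpoint lies in $\powfin X$. I would write $\mathcal{C} = \Set{ S \in \powfin{X} \setsep f(S) \subseteq S }$ for the set of $f$-closed sets (pre-fixpoints) and set $L = \bigcap \mathcal{C}$. Since $X$ is a nominal set it is permutation-closed and equivariant, so $X \in \powfin X$ and $f(X) \subseteq X$ holds trivially; hence $\mathcal{C}$ is nonempty and $L$ is the intersection of a genuine \emph{set} (not a proper class) of subsets of $X$, so it is well-defined.

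The first and crucial step is to show $L \in \powfin X$. I would argue that $\mathcal{C}$ is supported by $\n(f)$: for any $\pi$ fixing $\n(f)$ we have $\pi \cdot f = f$, so from $f(S) \subseteq S$ and the fact that permutations preserve inclusion we obtain $f(\pi \cdot S) = \pi \cdot f(S) \subseteq \pi \cdot S$, giving $\pi \cdot \mathcal{C} \subseteq \mathcal{C}$, with equality following by applying $\pi^{-1}$. Since $\pi \cdot \bigcap \mathcal{C} = \bigcap \Set{\pi \cdot S \setsep S \in \mathcal{C}} = \bigcap(\pi \cdot \mathcal{C})$, every $\pi$ fixing $\n(f)$ fixes $L$; hence $\n(L) \subseteq \n(f)$ is finite and $L \in \powfin X$.

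Once $L \in \powfin X$ is secured, the rest is routine. For $f(L) \subseteq L$: each $S \in \mathcal{C}$ satisfies $L \subseteq S$, so $f(L) \subseteq f(S) \subseteq S$ by monotonicity, and intersecting over $\mathcal{C}$ gives $f(L) \subseteq L$. For $L \subseteq f(L)$: monotonicity yields $f(f(L)) \subseteq f(L)$, so $f(L) \in \mathcal{C}$ (it lies in $\powfin X$ because $f$ maps into $\powfin X$), whence $L \subseteq f(L)$. Thus $f(L) = L$, so $L$ is a fixpoint. Finally, any fixpoint $M \in \powfin X$ satisfies $f(M) = M \subseteq M$, so $M \in \mathcal{C}$ and $L \subseteq M$, making $L$ least; therefore $\lfp f = L = \bigcap \mathcal{C}$ as claimed.

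The main obstacle is precisely the finite-support verification of the second paragraph: in the usual Tarski theorem completeness of the lattice renders the defining intersection unproblematic, but here $\powfin X$ is not complete, so I must confirm separately that $\bigcap \mathcal{C}$ does not escape $\powfin X$. The observation that saves the day is that the hypothesis ``$f$ finitely supported'' propagates to ``$\mathcal{C}$ finitely supported,'' and the intersection of a finitely supported family of sets inherits that finite support.
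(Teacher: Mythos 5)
Your proof is correct and follows essentially the same route as the paper's: establish that the set of pre-fixpoints (and hence its intersection) is supported by $\n(f)$, so the candidate $\bigcap\Set{ S \in\powfin{X} \setsep f(S) \subseteq S}$ stays inside $\powfin X$, and then replay the standard Knaster--Tarski argument. The paper states this in two sentences; you have merely filled in the details (non-emptiness via $X \in \mathcal{C}$, equivariance of $f$ under permutations fixing $\n(f)$, and the usual two inclusions plus minimality), all of which are accurate.
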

\begin{proof} (Due to Pitts~\cite{pitts15email})\\
  Since~$f$ is finitely supported and~$\bigcap$ is equivariant, also
  $\bigcap\Set{ S \in\powfin{X} \setsep f(S) \subseteq S}$ is finitely
  supported (with support contained in~$\n(f)$).  It then follows by a
  replay of the usual Tarski argument that $\bigcap\Set{ S
    \in\powfin{X} \setsep f(S) \subseteq S}$ is the least fixpoint
  of~$f$.
\end{proof}

Finally, we can show that the interpretation of a fixpoint
formula~$\mu X. A$ is the least fixpoint of the
function~$F_{A}^\varepsilon$.

\begin{prop} \label{prop:mu-fixpoint}
  For any formula~$\mu X. A$ and valuation function~$\varepsilon$,
  $\interpe{\mu X. A} = \lfp F_{A}^\varepsilon$.
\end{prop}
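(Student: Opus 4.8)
The plan is to observe that the right-hand side of the defining equation for $\interpe{\mu X.A}$ in Definition~\ref{def:interpretation} is, after unfolding the abbreviation $F_A^\varepsilon$, syntactically identical to the explicit least-fixpoint formula supplied by the nominal Tarski theorem (Theorem~\ref{thm:tarski}). So almost no new work is required: the whole argument reduces to checking that the hypotheses of that theorem are met and then reading off its conclusion.

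First I would recall that $F_A^\varepsilon$ is the function $S\mapsto\interp{A}_{\varepsilon[X\mapsto S]}$, so that Definition~\ref{def:interpretation} reads
\[
  \interpe{\mu X.A} \;=\; \bigcap\Set{ S\in\powfin{\states} \setsep F_A^\varepsilon(S)\subseteq S}.
\]
Next I would verify the two preconditions of Theorem~\ref{thm:tarski}, instantiated with the nominal set $\states$ and $f=F_A^\varepsilon$: finite support of $F_A^\varepsilon$ is exactly Lemma~\ref{lem:mu-fininite-supp}, and monotonicity of $F_A^\varepsilon$ with respect to subset inclusion is exactly Lemma~\ref{lem:mu-monotonic} (which applies because $\mu X.A$ is a well-formed formula, so every occurrence of $X$ in $A$ is positive, i.e.\ within the scope of an even number of negations). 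With both hypotheses in hand, Theorem~\ref{thm:tarski} yields that $F_A^\varepsilon$ has a least fixpoint and
\[
  \lfp F_A^\varepsilon \;=\; \bigcap\Set{ S\in\powfin{\states} \setsep F_A^\varepsilon(S)\subseteq S}.
\]
Chaining the two displayed equalities gives $\interpe{\mu X.A}=\lfp F_A^\varepsilon$, as required.

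There is essentially no obstacle in this proposition itself; the real content has already been discharged in the preceding results. The genuine difficulties lie elsewhere: establishing monotonicity of $F_A^\varepsilon$ in Lemma~\ref{lem:mu-monotonic}, where the nested-fixpoint case and the bookkeeping of positive and negative occurrences under an even or odd number of negations is the delicate part, and the nominal Tarski theorem itself, which had to be re-proved because $\powfin{\states}$ need not be a complete lattice. Once those are available, Proposition~\ref{prop:mu-fixpoint} is just the alignment of notations and a single invocation of Theorem~\ref{thm:tarski}.
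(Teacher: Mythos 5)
Your proof is correct and matches the paper's own argument exactly: the paper likewise observes that the proposition is immediate from Theorem~\ref{thm:tarski} once its hypotheses are discharged by Lemmas~\ref{lem:mu-fininite-supp} and~\ref{lem:mu-monotonic}. You merely spell out the notational alignment (that the defining intersection for $\interpe{\mu X. A}$ coincides with the formula Theorem~\ref{thm:tarski} gives for $\lfp F_A^\varepsilon$), which the paper leaves implicit.
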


\begin{proof}
  Using Lemmas~\ref{lem:mu-fininite-supp} and~\ref{lem:mu-monotonic},
  the proposition is immediate from Theorem~\ref{thm:tarski}.
\end{proof}

\subsection{Encoding the least fixpoint operator}
\label{sec:encod-least-fixp}

The least fixpoint operator can be encoded in our logic of
Section~\ref{sec:nominaltransitionsystems}.  The basic idea here is
simple: we translate the fixpoint modality into a transfinite
disjunction that at each step~$\alpha$ unrolls the formula
$\alpha$~times.  This then semantically corresponds to a limit of an
increasing chain generated by a monotonic function, i.e., a least
fixpoint.

Recall that the cardinality of a set of conjuncts---and thus also of a
set of disjuncts---must be less than some fixed infinite
cardinal~$\kappa$ (see Definition~\ref{def:formulas}).  As before, we
require $\kappa > \aleph_0$ and $\kappa > \lvert \states\rvert$.

\begin{defi}
  We define the formula $\mathsf{unroll}_{\alpha}(\mu X.A)$ for all
  ordinals $\alpha < \kappa$ by transfinite induction.
  \[
  \begin{array}{rcl}
    \mathsf{unroll}_0(\mu X.A) &=& \bot \\
    \mathsf{unroll}_{\alpha+1}(\mu X.A) &=& A[\mathsf{unroll}_\alpha(\mu X.A)/X] \\
    \mathsf{unroll}_{\lambda}(\mu X.A) &=& \bigvee_{\alpha<\lambda} \mathsf{unroll}_\alpha(\mu X.A) \text{\ \ when $\lambda$ is a limit ordinal}
  \end{array}
  \]
\end{defi}

Since~$\mathsf{unroll}_\alpha$ is equivariant, the disjunction in the
limit case has finite support bounded by~$\n(A)$.  Note that
if~$A$ does not contain any fixpoint modalities then
$\mathsf{unroll}_\alpha(\mu X.A)$ also does not contain any fixpoint
modalities.

To show that for some ordinal~$\gamma < \kappa$, the interpretation of
$\mathsf{unroll}_{\gamma}(\mu X.A)$ indeed is the least fixpoint
of~$F^{\varepsilon}_{A}$, we use a nominal version of a chain fixpoint
theorem for sets by Kuratowski~(1922)~\cite{kuratowski22:transfinis}, augmented with a bound on the
depth of the unrolling.

\begin{thm} \label{thm:fixpoint}
  Suppose $X$ is a nominal set, and $f \colon\powfin X \to \powfin X$
  is finitely supported and monotonic with respect to subset
  inclusion.  Set $f^0 = \emptyset$, $f^{\alpha + 1} = f(f^\alpha)$,
  and $f^\lambda = \bigcup_{\alpha < \lambda} f^\alpha$ for limit
  ordinals~$\lambda$.  Then~$f$ has a least fixpoint~$\lfp f$, and
  if~$\nu$ is a cardinal with~$\nu > \lvert X \rvert$, there exists an
  ordinal~$\gamma < \nu$ such that
  \[
    \lfp f = f^\gamma.
  \]
\end{thm}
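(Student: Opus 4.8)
The plan is to reproduce the classical Kuratowski iteration but carried out inside the incomplete lattice $\powfin X$, so that the whole argument hinges on keeping every iterate finitely supported and on replacing lattice completeness by a cardinality bound. I would establish four facts in turn: that each $f^\alpha$ is a well-defined element of $\powfin X$, that the chain $(f^\alpha)$ is increasing, that it must stabilise at some ordinal below $\nu$, and that the stabilised value is $\lfp f$.

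First I would show by transfinite induction that $\n(f^\alpha) \subseteq \n(f)$ for every ordinal $\alpha$, so that each $f^\alpha$ lies in $\powfin X$. The base case $f^0 = \emptyset$ is trivial, and the successor case uses the derived inequality $\n(f(Y)) \subseteq \n(f) \cup \n(Y)$ from Section~\ref{sec:background} together with the induction hypothesis. The crucial case is a limit ordinal $\lambda$: here $f^\lambda = \bigcup_{\alpha<\lambda} f^\alpha$, and I would observe that any permutation $\pi$ fixing $\n(f)$ pointwise fixes each $f^\alpha$ setwise (by the induction hypothesis $\n(f^\alpha) \subseteq \n(f)$) and hence fixes their union, so that $\n(f)$ supports $f^\lambda$ and $f^\lambda \in \powfin X$. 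This uniform support bound is exactly what lets the transfinite union remain inside the incomplete lattice $\powfin X$, where unrestricted unions are unavailable; I expect it to be the main obstacle and the genuinely nominal part of the proof.

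Next I would check that the chain is increasing, i.e.\ $f^\alpha \subseteq f^\beta$ whenever $\alpha \le \beta$, by the standard transfinite induction on $\beta$ using monotonicity of $f$ in the successor step and taking unions at limits. To locate a stabilisation point I would run the usual Kuratowski counting argument: whenever $f^\alpha \subsetneq f^{\alpha+1}$ choose some $x_\alpha \in f^{\alpha+1} \setminus f^\alpha$; since the chain is increasing, $\alpha \mapsto x_\alpha$ is injective into $X$, so the set of ordinals $\alpha < \nu$ at which the chain grows strictly has cardinality at most $\lvert X \rvert < \nu$. As $\nu$ is a cardinal it is not covered by this strictly smaller set of ordinals, so there is some $\gamma < \nu$ lying outside it, i.e.\ with $f^\gamma = f^{\gamma+1}$; then $f(f^\gamma) = f^{\gamma+1} = f^\gamma$, so $f^\gamma$ is a fixpoint of $f$.

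Finally I would identify $f^\gamma$ with $\lfp f$. Existence of $\lfp f$ is already supplied by Theorem~\ref{thm:tarski}, and since $\lfp f$ is the least fixpoint while $f^\gamma$ is a fixpoint we obtain $\lfp f \subseteq f^\gamma$. For the reverse inclusion I would prove $f^\alpha \subseteq \lfp f$ for all $\alpha$ by transfinite induction, using $f(\lfp f) = \lfp f$ and monotonicity at successors and taking unions at limits; in particular $f^\gamma \subseteq \lfp f$. Combining the two inclusions gives $\lfp f = f^\gamma$ with $\gamma < \nu$, as required.
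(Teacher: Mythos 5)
Your proof is correct and follows essentially the same route as the paper's: the uniform support bound $\n(f^\alpha)\subseteq\n(f)$ by transfinite induction, monotonicity of the chain, the counting argument producing $\gamma<\nu$ with $f^\gamma=f^{\gamma+1}$, and transfinite induction showing every iterate lies below every fixpoint. The only cosmetic difference is that you import existence of $\lfp f$ from Theorem~\ref{thm:tarski} and then prove two inclusions, whereas the paper gets existence and least-ness in one stroke by showing $f^\gamma\subseteq y$ for every fixpoint $y$.
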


\begin{proof}
  First, each $f^\alpha$ is finitely supported, since $\n(f^\alpha)
  \subseteq \n(f)$ for every ordinal~$\alpha$ by transfinite
  induction.  Also, using monotonicity of~$f$, we have $f^\alpha
  \subseteq f^\beta$ for all~$\alpha\leq\beta$.

  We then show that $f$ has a fixpoint $f^\gamma$ for some ordinal
  $\gamma < \nu$, by contradiction.  Otherwise, for each $\gamma <
  \nu$ there is $x_{\gamma}\in f^{\gamma + 1}\setminus f^{\gamma}$.
  This yields an injective function $g \colon \nu \to X$ with
  $g(\gamma)=x_{\gamma}$, which is a contradiction since $\nu > \lvert
  X \rvert$.

  Let $y$ be any fixpoint of~$f$.  For every ordinal~$\alpha$,
  $f^\alpha\subseteq y$ by transfinite induction, so in particular
  $f^\gamma\subseteq y$.  Thus~$f^\gamma$ is the least fixpoint
  of~$f$.
\end{proof}

Let $\nu = \max \, \{\lvert\states\rvert^{+}, \aleph_0\}$ denote the
least infinite cardinal larger than~$\lvert\states\rvert$.  Note that
$\nu \leq \kappa$ by assumption.

\begin{lem} \label{lemma:fixpoint-F}
  For any formula~$\mu X.A$ and valuation function~$\varepsilon$,
  there exists an ordinal~$\gamma < \nu$ such that $\lfp
  F^{\varepsilon}_{A} = (F^{\varepsilon}_{A})^\alpha$ for all ordinals
  $\alpha \geq \gamma$.
\end{lem}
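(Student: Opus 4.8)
The plan is to invoke the nominal chain fixpoint theorem (Theorem~\ref{thm:fixpoint}) with the nominal set $X := \states$, the map $f := F_{A}^{\varepsilon}$, and the cardinal $\nu$. First I would verify the hypotheses of that theorem: $\states$ is a nominal set by Definition~\ref{def:nominaltransitionsystem}, the function $F_{A}^{\varepsilon}$ is finitely supported by Lemma~\ref{lem:mu-fininite-supp} and monotonic by Lemma~\ref{lem:mu-monotonic}, and $\nu > \lvert\states\rvert$ holds because $\nu$ was chosen as the least infinite cardinal exceeding $\lvert\states\rvert$. Theorem~\ref{thm:fixpoint} then delivers an ordinal $\gamma < \nu$ with $\lfp F_{A}^{\varepsilon} = (F_{A}^{\varepsilon})^\gamma$.

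It remains to upgrade this single equation at $\gamma$ to the claimed equality at every $\alpha \geq \gamma$. I would do this by transfinite induction, showing $(F_{A}^{\varepsilon})^\alpha = \lfp F_{A}^{\varepsilon}$. The base case $\alpha = \gamma$ is exactly what the theorem provides. For the successor case, if $(F_{A}^{\varepsilon})^\alpha = \lfp F_{A}^{\varepsilon}$ then $(F_{A}^{\varepsilon})^{\alpha+1} = F_{A}^{\varepsilon}(\lfp F_{A}^{\varepsilon}) = \lfp F_{A}^{\varepsilon}$, using that the least fixpoint is a fixpoint. For a limit ordinal $\lambda > \gamma$, by definition $(F_{A}^{\varepsilon})^\lambda = \bigcup_{\alpha < \lambda}(F_{A}^{\varepsilon})^\alpha$; I would split this union at $\gamma$. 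The terms with $\alpha < \gamma$ are all contained in $(F_{A}^{\varepsilon})^\gamma = \lfp F_{A}^{\varepsilon}$ by the monotonicity of the iteration chain already established inside the proof of Theorem~\ref{thm:fixpoint}, while the terms with $\gamma \leq \alpha < \lambda$ each equal $\lfp F_{A}^{\varepsilon}$ by the induction hypothesis; since $\lambda > \gamma$ this second range is nonempty, so the union is exactly $\lfp F_{A}^{\varepsilon}$.

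The only point requiring care is the limit case, where one must confirm that the union neither overshoots nor undershoots the fixpoint. Undershooting is ruled out because $\gamma$ itself lies below $\lambda$ and already contributes the full value $\lfp F_{A}^{\varepsilon}$; overshooting is ruled out because every term of the chain satisfies $(F_{A}^{\varepsilon})^\alpha \subseteq \lfp F_{A}^{\varepsilon}$, which follows from $\lfp F_{A}^{\varepsilon}$ being a fixpoint together with the chain monotonicity $(F_{A}^{\varepsilon})^\alpha \subseteq (F_{A}^{\varepsilon})^\beta$ for $\alpha \leq \beta$. With these observations the transfinite induction closes, giving $(F_{A}^{\varepsilon})^\alpha = \lfp F_{A}^{\varepsilon}$ for all $\alpha \geq \gamma$ as required.
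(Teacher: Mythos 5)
Your proposal is correct and follows the same route as the paper: the paper's proof simply invokes Theorem~\ref{thm:fixpoint} with Lemmas~\ref{lem:mu-fininite-supp} and~\ref{lem:mu-monotonic} and $\nu > \lvert\states\rvert$, treating the extension to all $\alpha \geq \gamma$ as immediate. Your explicit transfinite induction (successor via the fixpoint property, limit via splitting the union at $\gamma$ and using chain monotonicity) just spells out that final step in detail, and it is sound.
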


\begin{proof}
  Since $\nu >\lvert\states\rvert$, the lemma is an immediate
  consequence of Theorem~\ref{thm:fixpoint}, whose other assumptions
  follow from Lemmas~\ref{lem:mu-fininite-supp}
  and~\ref{lem:mu-monotonic}.
\end{proof}

From this lemma, we obtain an equivariant function~$\mathsf{conv}$
that maps each formula~$\mu X.A$ and each valuation
function~$\varepsilon$ to the least ordinal~$\mathsf{conv}(\mu X.A,
\varepsilon) < \nu$ such that $$\lfp F^{\varepsilon}_{A} =
(F^{\varepsilon}_{A})^{\mathsf{conv}(\mu X.A, \varepsilon)}.$$

When~$\mathcal{E}$ is a non-empty set of valuation functions, we write
$\mathsf{Conv}(\mu X.A, \mathcal{E})$ for
$\sup_{\varepsilon\in\mathcal{E}} \, \allowbreak \mathsf{conv}(\mu
X.A, \varepsilon)$.

\begin{lem} \label{lemma:sup-lt-kappa}
  Let~$\mathcal{E}$ be a non-empty set of valuation functions such
  that $\lvert\mathcal{E}\rvert < \nu$.  Then $\mathsf{Conv}(\mu X.A,
  \mathcal{E}) < \nu$.
\end{lem}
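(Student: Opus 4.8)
The plan is to reduce the statement to the \emph{regularity} of the cardinal~$\nu$. First I would recall that, for each individual~$\varepsilon \in \mathcal{E}$, the ordinal $\mathsf{conv}(\mu X.A, \varepsilon)$ is by construction the least~$\gamma < \nu$ with $\lfp F_A^\varepsilon = (F_A^\varepsilon)^\gamma$; this is exactly what Lemma~\ref{lemma:fixpoint-F} guarantees, so in particular each $\mathsf{conv}(\mu X.A, \varepsilon) < \nu$. Hence $\{\mathsf{conv}(\mu X.A, \varepsilon) \setsep \varepsilon \in \mathcal{E}\}$ is a set of ordinals, each strictly below~$\nu$, and its cardinality is at most $\lvert\mathcal{E}\rvert < \nu$, since it is the image of~$\mathcal{E}$ under the map $\varepsilon \mapsto \mathsf{conv}(\mu X.A, \varepsilon)$.

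The crucial observation is that~$\nu$ is a regular cardinal. By definition $\nu = \max\{\lvert\states\rvert^{+}, \aleph_0\}$, so if~$\states$ is infinite then $\nu = \lvert\states\rvert^{+}$ is a successor cardinal, while if~$\states$ is finite then $\nu = \aleph_0$. In both cases~$\nu$ is regular: successor cardinals are regular (a standard consequence of the axiom of choice, which is available in Isabelle/HOL), and $\aleph_0$ is regular as well. Regularity means precisely that~$\nu$ cannot be expressed as the supremum of fewer than~$\nu$ ordinals all of which are smaller than~$\nu$.

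With regularity established, the conclusion is immediate: $\mathsf{Conv}(\mu X.A, \mathcal{E}) = \sup_{\varepsilon \in \mathcal{E}} \mathsf{conv}(\mu X.A, \varepsilon)$ is the supremum of a set of ordinals of cardinality~$< \nu$, each lying below~$\nu$, and therefore by regularity this supremum is itself~$< \nu$. I expect the only genuine content to be the regularity argument, and within it the successor-cardinal case; the one point to check carefully is that taking the $\max$ with~$\aleph_0$ does not destroy regularity, which it cannot, since $\aleph_0$ is itself regular.
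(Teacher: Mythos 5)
Your proof is correct and follows essentially the same route as the paper's: both establish regularity of~$\nu$ (since $\aleph_0$ and successor cardinals are regular) and conclude that the set $\{\mathsf{conv}(\mu X.A,\varepsilon) \setsep \varepsilon\in\mathcal{E}\}$, having cardinality below~$\nu$, cannot be cofinal in~$\nu$, so its supremum stays below~$\nu$. Your additional care about the $\max$ with~$\aleph_0$ and the case split on whether $\states$ is finite is a fine elaboration of the same argument, not a different one.
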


\begin{proof}
  Note that~$\aleph_0$ is regular, and every successor cardinal is
  regular.  Hence~$\nu$ is regular.  Therefore, the set $\{
  \mathsf{conv}(\mu X.A, \varepsilon) \mid \varepsilon\in\mathcal{E}
  \}$ (whose cardinality is less than~$\nu$) is not cofinal in~$\nu$.
\end{proof}

\begin{lem} \label{lem:interpe-substitution}
  For any formulas~$A$,~$B$ and valuation function~$\varepsilon$,
  if~$A$ does not contain any fixpoint operators, then
  $\interpe{A[B/X]} =
  \interpe[{\varepsilon[X\mapsto\interpe{B}]}]{A}$.
\end{lem}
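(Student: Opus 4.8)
The plan is to prove $\interpe{A[B/X]} = \interpe[{\varepsilon[X\mapsto\interpe{B}]}]{A}$ by structural induction on the formula $A$, which does not contain any fixpoint operators by hypothesis. Since $A$ is fixpoint-free, the grammar cases to consider are exactly $\conjunc A_i$, $\neg A'$, $\varphi$, $\may{\alpha}A'$, and the variable case $Y$; the cases $\mu Y.A'$ need not be treated. In each case I would unfold the definition of substitution $[B/X]$ on the left, unfold the interpretation from Definition~\ref{def:interpretation} on both sides, and apply the induction hypothesis to the immediate subformulas.

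First I would dispatch the easy cases. For $A = \varphi$, substitution is the identity and both sides equal $\Set{P \setsep \entails{P}{\varphi}}$, which does not mention the valuation at all. For $A = \conjunc A_i$ and $A = \neg A'$, substitution distributes over the constructor, the interpretation commutes with $\bigcap$ and with complementation in $\states$, and the claim follows immediately from the induction hypothesis applied to each $A_i$ (respectively to $A'$). The variable case splits in two: if the variable is $X$ itself, then $X[B/X] = B$, so the left side is $\interpe{B}$, while the right side is $\interpe[{\varepsilon[X\mapsto\interpe{B}]}]{X} = \varepsilon[X\mapsto\interpe{B}](X) = \interpe{B}$; if the variable is some $Y \neq X$, then $Y[B/X] = Y$, and both sides equal $\varepsilon(Y)$ since the updated valuation agrees with $\varepsilon$ off $X$.

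The main obstacle, as usual, is the action-modality case $A = \may{\alpha}A'$, because substitution is only defined to commute with the modality under a freshness side condition, namely $(\may{\alpha}A')[B/X] = \may{\alpha}(A'[B/X])$ when $\bn(\alpha)\freshin B$, and the interpretation clause for $\may{\alpha}$ itself quantifies over alpha-variants with $\bn(\alpha')$ fresh for the state and the valuation. The plan is to first choose, without loss of generality, a representative of the alpha-equivalence class $\may{\alpha}A'$ with $\bn(\alpha)$ fresh for $B$, for $\varepsilon$, and for $\interpe{B}$; this is possible because all these objects are finitely supported and $\bn(\alpha)$ can be renamed by alpha-conversion. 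Under this choice the substitution pushes inside, so the left side is $\interpe{\may{\alpha}(A'[B/X])}$, and after expanding the modality clause and using the induction hypothesis on $A'$ I would match it against the right side $\interpe[{\varepsilon[X\mapsto\interpe{B}]}]{\may{\alpha}A'}$. Here I must check that the freshness condition $\bn(\alpha')\freshin \varepsilon$ appearing on the left corresponds correctly to $\bn(\alpha')\freshin \varepsilon[X\mapsto\interpe{B}]$ on the right; this holds precisely because $\bn(\alpha)$ was chosen fresh for $\interpe{B}$, so the supports of the two valuations relevant to the freshness side condition agree on the bound names, and the set of admissible alpha-variants is the same on both sides. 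The remaining equality of the two sets of states then follows by applying the induction hypothesis to $A'$ under the valuation obtained after the (fresh) renaming.
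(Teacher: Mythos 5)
Your proposal is correct and follows essentially the same route as the paper, whose entire proof is ``by structural induction on~$A$; the clause for~$\mu X.A'$ in Definition~\ref{def:interpretation} is not used'' --- you have simply spelled out the cases, including the nominal freshness bookkeeping in the $\may{\alpha}A'$ case that the paper leaves implicit. One small imprecision: the sets of admissible alpha-variants on the two sides need not literally coincide (since $\n(\varepsilon)\not\subseteq\n(\varepsilon[X\mapsto\interpe{B}])$ in general), but this is harmless because the interpretation clause only requires \emph{some} admissible witness, and one can always pass to a representative whose binding names are fresh for $P$, $\varepsilon$, $B$ and $\interpe{B}$ simultaneously.
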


\begin{proof}
  By structural induction on $A$.  The clause for~$\mu X.A'$ in
  Definition~\ref{def:interpretation} is not used.
\end{proof}

\begin{lem} \label{lem:unroll-fixpoint}
  For any formula~$A$ and valuation function~$\varepsilon$, if~$A$
  does not contain any fixpoint operators, then
  $\interpe{\mathsf{unroll}_\alpha(\mu X.A)} =
  (F_A^\varepsilon)^\alpha$ for all ordinals~$\alpha < \kappa$.
\end{lem}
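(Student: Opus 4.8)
The plan is to argue by transfinite induction on the ordinal $\alpha$, mirroring the three clauses defining $\mathsf{unroll}$. Before starting I would record the facts that make all the interpretations meaningful: each formula $\mathsf{unroll}_\beta(\mu X.A)$ is well-formed, since (as noted right after its definition) the limit-stage disjunction has support bounded by $\n(A)$ and, because $\beta < \kappa$, fewer than $\kappa$ disjuncts; and each iterate $(F_A^\varepsilon)^\beta$ is a well-defined element of $\powfin\states$ by the transfinite induction carried out in the proof of Theorem~\ref{thm:fixpoint}. The single place where the hypothesis that $A$ contains no fixpoint operators is needed is the successor step, through Lemma~\ref{lem:interpe-substitution}.

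For the base case $\alpha = 0$ I would unfold $\mathsf{unroll}_0(\mu X.A) = \bot = \neg\top$, where $\top$ is the empty conjunction. Definition~\ref{def:interpretation} then gives $\interpe{\top} = \bigcap_{i\in\emptyset}\interpe{A_i} = \states$ and hence $\interpe{\bot} = \states\setminus\states = \emptyset = (F_A^\varepsilon)^0$. For the successor case $\alpha = \beta + 1$ I would start from $\mathsf{unroll}_{\beta+1}(\mu X.A) = A[\mathsf{unroll}_\beta(\mu X.A)/X]$ and apply Lemma~\ref{lem:interpe-substitution}, which is licensed precisely because $A$ has no fixpoint operators:
\[\interpe{A[\mathsf{unroll}_\beta(\mu X.A)/X]} = \interp{A}_{\varepsilon[X\mapsto\interpe{\mathsf{unroll}_\beta(\mu X.A)}]}.\]
Substituting the induction hypothesis $\interpe{\mathsf{unroll}_\beta(\mu X.A)} = (F_A^\varepsilon)^\beta$ and then reading off the definition $F_A^\varepsilon(S) = \interp{A}_{\varepsilon[X\mapsto S]}$, the right-hand side becomes $F_A^\varepsilon((F_A^\varepsilon)^\beta) = (F_A^\varepsilon)^{\beta+1}$, as wanted.

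For the limit case $\alpha = \lambda$ I would first record the identity $\interpe{\bigvee_{\beta<\lambda}B_\beta} = \bigcup_{\beta<\lambda}\interpe{B_\beta}$, which follows from the definition of disjunction as $\neg\bigwedge_{\beta}\neg B_\beta$ together with De Morgan applied to the clauses for $\bigwedge$ and $\neg$ in Definition~\ref{def:interpretation}. Taking $B_\beta = \mathsf{unroll}_\beta(\mu X.A)$ and applying the induction hypothesis termwise gives $\interpe{\mathsf{unroll}_\lambda(\mu X.A)} = \bigcup_{\beta<\lambda}(F_A^\varepsilon)^\beta$, which is exactly $(F_A^\varepsilon)^\lambda$ by the definition of the iterate at limit ordinals.

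I expect the only genuinely delicate point to be getting the bookkeeping of the successor step right: Lemma~\ref{lem:interpe-substitution} requires the \emph{outer} formula $A$ to be fixpoint-free, whereas the substituted formula $\mathsf{unroll}_\beta(\mu X.A)$ may be an arbitrarily large transfinite disjunction, so one must invoke the lemma with the roles of its two formula arguments in that order and not the reverse. Everything else is routine once the De Morgan computation for $\bigvee$ and the triviality $\interpe{\bot} = \emptyset$ are in hand; no further support or cardinality estimates are needed beyond the well-formedness already established.
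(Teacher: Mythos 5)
Your proof is correct and follows essentially the same route as the paper's: transfinite induction on $\alpha$, with the base case unfolding $\interpe{\bot}=\emptyset$, the successor case combining Lemma~\ref{lem:interpe-substitution} (applied with the fixpoint-free $A$ as the outer formula) with the induction hypothesis, and the limit case pushing the interpretation through the disjunction as a union. Your added bookkeeping --- well-formedness of the unrollings, the De Morgan justification of $\interpe{\bigvee_\beta B_\beta}=\bigcup_\beta\interpe{B_\beta}$, and the observation that Lemma~\ref{lem:interpe-substitution} constrains only the outer formula --- merely spells out details the paper leaves implicit.
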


\begin{proof}
  By transfinite induction on~$\alpha$.
  \begin{enumerate}
    \item Base case: $\interpe{\mathsf{unroll}_0(\mu X.A)} =
      \interpe{\bot} = \emptyset = (F_A^\varepsilon)^0$ by definition.

    \item Inductive step:
      \begin{align*}
        \interpe{\mathsf{unroll}_{\alpha+1}(\mu X.A)}
        &{}= \interpe{A[\mathsf{unroll}_\alpha(\mu X.A)/X]} \\
        &{}\stackrel{(1)}{=} \interp{A}_{\varepsilon[X\mapsto\interpe{\mathsf{unroll}_\alpha(\mu X.A)}]}  \\
        &{}\stackrel{(2)}{=} \interp{A}_{\varepsilon[X\mapsto (F_A^\varepsilon)^\alpha]}  \\
        &{}= F_A^\varepsilon((F_A^\varepsilon)^\alpha) \\
        &{}= (F_A^\varepsilon)^{\alpha+1}
      \end{align*}
      as required.  Above, equality~(1) follows from
      Lemma~\ref{lem:interpe-substitution} and equality~(2) follows
      from the induction hypothesis for~$\alpha$.

    \item Limit case: The limit case is straightforward.  We have
      \begin{align*}
        \interpe{\mathsf{unroll}_{\lambda}(\mu X.A)}
        &{}= \interpe{\bigvee_{\alpha<\lambda} \mathsf{unroll}_\alpha(\mu X.A)} \\
        &{}= \bigcup_{\alpha<\lambda} \interpe{\mathsf{unroll}_\alpha(\mu X.A)} \\
        &{}\stackrel{(1)}{=} \bigcup_{\alpha<\lambda} (F_A^\varepsilon)^\alpha \\
        &{}= (F_A^\varepsilon)^\lambda
      \end{align*}
      where equality~(1) follows from the induction hypothesis for all~$\alpha <
      \lambda$. \qedhere
  \end{enumerate}
\end{proof}

\begin{defi} \label{def:encoding}
  Given any non-empty set of valuation functions~$\mathcal{E}$ with
  $\lvert\mathcal{E}\rvert < \nu$, we define the
  formula~$\encodefix{A}_\mathcal{E}$ homomorphically on the structure
  of~$A$.  The encoding $\encodefix{\mu X.A}_\mathcal{E}$ of a
  fixpoint formula is its unrolling up to a sufficiently large
  ordinal.
  \[\begin{array}{rcll}
    \encodefix{\conjunc A_i}_\mathcal{E} &=& \conjunc \encodefix{A_i}_\mathcal{E} \\[.3em]
    \encodefix{\neg A}_\mathcal{E} &=& \neg\encodefix{A}_\mathcal{E} \\
    \encodefix{\varphi}_\mathcal{E} &=& \varphi \\[.3em]
    \encodefix{\may{\alpha} A}_\mathcal{E} &=& \may \alpha \encodefix{A}_\mathcal{E} \\[.3em]
    \encodefix{X}_\mathcal{E} &=& X \\[.3em]
    \encodefix{\mu X.A}_\mathcal{E} &=& \mathsf{unroll}_{\gamma}(\mu X.\encodefix{A}_{\mathcal{E}'}) & \text{where } \mathcal{E}' = \{\varepsilon[X\mapsto (F_A^\varepsilon)^\alpha] \mid \varepsilon\in\mathcal{E}, \alpha \leq \mathsf{conv}(\mu X.A, \varepsilon)\}\\
    & & & \text{and } \gamma = \mathsf{Conv}(\mu X.\encodefix{A}_{\mathcal{E}'}, \mathcal{E})
  \end{array}\]
\end{defi}

In the fixpoint case, since $\lvert\mathcal{E}\rvert < \nu$ and
$\mathsf{Conv}(\mu X.A, \mathcal{E}) < \nu$
(Lemma~\ref{lemma:sup-lt-kappa}), we also have
$\lvert\mathcal{E}'\rvert < \nu\cdot\nu = \nu$.  Since the encoding
function is equivariant, it preserves the finite support property for
conjunctions.  Clearly,~$\encodefix{A}_\mathcal{E}$ does not contain
any fixpoint operators.  Moreover, if~$A$ is closed,
then~$\encodefix{A}_\mathcal{E}$ does not contain any variables, and
is therefore a formula in the sense of Definition~\ref{def:formulas}.

\begin{thm} \label{thm:mu-admissable}
  Let~$\mathcal{E}$ be a non-empty set of valuation functions such
  that $\lvert\mathcal{E}\rvert < \nu$.  For any formula $A$ and
  valuation function $\varepsilon\in\mathcal{E}$,
  $\interpe{\encodefix{A}_\mathcal{E}} = \interpe{A}$.
\end{thm}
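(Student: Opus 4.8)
The plan is to prove the statement by structural induction on $A$, keeping the set of valuation functions universally quantified: the induction hypothesis asserts $\interpe[\varepsilon']{\encodefix{A'}_{\mathcal{F}}} = \interpe[\varepsilon']{A'}$ for every subformula $A'$ of $A$, every admissible $\mathcal{F}$ (non-empty, $\lvert\mathcal{F}\rvert < \nu$), and every $\varepsilon' \in \mathcal{F}$. This generality is essential, since the fixpoint clause of Definition~\ref{def:encoding} encodes the body with a \emph{different} set $\mathcal{E}'$. The homomorphic cases $\conjunc A_i$, $\neg A$, $\varphi$, and $X$ would be immediate from the corresponding clauses of Definition~\ref{def:interpretation} together with the induction hypothesis applied with the unchanged $\mathcal{E}$. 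For the case $\may{\alpha} A$ I would first choose, by Lemma~\ref{lem:interp-equiv} and alpha-renaming, a representative with $\bn(\alpha) \freshin \varepsilon$; since the encoding is equivariant and leaves $\alpha$ (hence $\bn(\alpha)$) unchanged, the interpretation of $\may{\alpha}\encodefix{A}_{\mathcal{E}}$ reduces to membership $P' \in \interpe{\encodefix{A}_{\mathcal{E}}}$, which coincides with $P' \in \interpe{A}$ by the induction hypothesis.

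The real work is the fixpoint case $\mu X.A$. Writing $c_\varepsilon = \mathsf{conv}(\mu X.A, \varepsilon)$ and $B = \encodefix{A}_{\mathcal{E}'}$, we have $\encodefix{\mu X.A}_{\mathcal{E}} = \mathsf{unroll}_{\gamma}(\mu X.B)$ with $\gamma = \mathsf{Conv}(\mu X.B, \mathcal{E})$. Since $\lvert\mathcal{E}'\rvert < \nu$ (as noted after Definition~\ref{def:encoding}), the induction hypothesis applies to $A$ with $\mathcal{E}'$, giving $\interp{B}_{\varepsilon''} = \interp{A}_{\varepsilon''}$ for every $\varepsilon'' \in \mathcal{E}'$. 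As $B$ contains no fixpoint operators, Lemma~\ref{lem:unroll-fixpoint} yields $\interpe{\mathsf{unroll}_{\gamma}(\mu X.B)} = (F_B^\varepsilon)^{\gamma}$, so it remains to show $(F_B^\varepsilon)^{\gamma} = \lfp F_A^\varepsilon$, because $\interpe{\mu X.A} = \lfp F_A^\varepsilon$ by Proposition~\ref{prop:mu-fixpoint}.

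The crux is that $\mathcal{E}'$ has been engineered to contain exactly the valuations $\varepsilon[X \mapsto (F_A^\varepsilon)^\alpha]$ for $\alpha \leq c_\varepsilon$, which forces $F_B^\varepsilon$ and $F_A^\varepsilon$ to agree on the relevant chain even though they may differ elsewhere. Concretely, the induction hypothesis gives $F_B^\varepsilon((F_A^\varepsilon)^\alpha) = F_A^\varepsilon((F_A^\varepsilon)^\alpha)$ for all $\alpha \leq c_\varepsilon$, and a transfinite induction on $\alpha$ then shows $(F_B^\varepsilon)^\alpha = (F_A^\varepsilon)^\alpha$ for all $\alpha \leq c_\varepsilon$ (the limit step being an easy union). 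In particular $(F_B^\varepsilon)^{c_\varepsilon} = (F_A^\varepsilon)^{c_\varepsilon} = \lfp F_A^\varepsilon$ by Lemma~\ref{lemma:fixpoint-F}; applying the agreement once more at $\alpha = c_\varepsilon$ shows this set is a fixpoint of $F_B^\varepsilon$, so the increasing chain of iterates of $F_B^\varepsilon$ has stabilised by stage $c_\varepsilon$, whence $\lfp F_B^\varepsilon = (F_B^\varepsilon)^{c_\varepsilon} = \lfp F_A^\varepsilon$. Finally, since $\gamma = \mathsf{Conv}(\mu X.B, \mathcal{E}) \geq \mathsf{conv}(\mu X.B, \varepsilon)$, Lemma~\ref{lemma:fixpoint-F} applied to $B$ gives $(F_B^\varepsilon)^{\gamma} = \lfp F_B^\varepsilon$, completing the chain of equalities.

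I expect the main obstacle to be precisely the bookkeeping in this last case. One must check that $\mu X.B$ is a well-formed fixpoint formula (the homomorphic encoding preserves the positivity of $X$, so Lemmas~\ref{lem:mu-fininite-supp} and~\ref{lem:mu-monotonic} apply to $F_B^\varepsilon$), and that $\gamma < \kappa$ so that Lemma~\ref{lem:unroll-fixpoint} is applicable (it is, since $\gamma < \nu \leq \kappa$ by Lemma~\ref{lemma:sup-lt-kappa}). Conceptually, the delicate point is that $F_B^\varepsilon$ and $F_A^\varepsilon$ agree \emph{only} along the chain $(F_A^\varepsilon)^\alpha$ with $\alpha \leq c_\varepsilon$, yet this already suffices to identify their least fixpoints; the justification is exactly that the increasing chain never leaves this set of values before reaching the fixpoint.
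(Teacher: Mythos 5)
Your proposal is correct and takes essentially the same route as the paper's own proof: structural induction on the formula with $\mathcal{E}$ and $\varepsilon$ kept arbitrary, reduction of both sides of the fixpoint case to least fixpoints via Lemma~\ref{lem:unroll-fixpoint}, Lemma~\ref{lemma:fixpoint-F} and Proposition~\ref{prop:mu-fixpoint}, and a transfinite induction showing that the iterate chains of $F_{B}^{\varepsilon}$ (for the encoded body $B$) and $F_{A}^{\varepsilon}$ agree along the relevant ordinals, using that $\varepsilon[X\mapsto(F_{A}^{\varepsilon})^{\alpha}]\in\mathcal{E}'$ exactly when $\alpha\leq\mathsf{conv}(\mu X.A,\varepsilon)$. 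The only cosmetic difference is that the paper runs the inner induction up to $\mathsf{conv}(\mu X.A,\varepsilon)+1$ and then concludes with an explicit two-inclusion argument that the least fixpoints coincide, whereas you stop the induction at $\mathsf{conv}(\mu X.A,\varepsilon)$ and apply the functional agreement once more to see that the chain of $F_{B}^{\varepsilon}$-iterates has stabilised at the least fixpoint of $F_{A}^{\varepsilon}$; the mathematical content is identical.
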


\begin{proof}
  By structural induction on~$A$, for arbitrary~$\mathcal{E}$
  and~$\varepsilon$.  The interesting case is
  \begin{description}
  \item[Case $\mu X.A'$] We need to show that $\interpe{\encodefix{\mu
      X.A'}_\mathcal{E}} = \interpe{\mu X.A'}$.  Let~$\mathcal{E}'$
    and~$\gamma$ be as in Definition~\ref{def:encoding}.  First, we
    compute the left-hand side to $\interpe{\encodefix{\mu
        X.A'}_\mathcal{E}} = \interpe{\mathsf{unroll}_\gamma(\mu
      X. \encodefix{A'}_{\mathcal{E}'})} =
    (F_{\encodefix{A'}_{\mathcal{E}'}}^\varepsilon)^\gamma = \lfp
    F_{\encodefix{A'}_{\mathcal{E}'}}^\varepsilon$ by
    Lemma~\ref{lem:unroll-fixpoint} and Lemma~\ref{lemma:fixpoint-F},
    where we use the fact that $\gamma \geq \mathsf{conv}(\mu
    X.\encodefix{A'}_{\mathcal{E}'},\varepsilon)$ since
    $\varepsilon\in\mathcal{E}$.
    For the right-hand side, we have $\interpe{\mu X.A'} = \lfp
    F_{A'}^\varepsilon$ by Proposition~\ref{prop:mu-fixpoint}.

    We now show
    $(F_{\encodefix{A'}_{\mathcal{E}'}}^\varepsilon)^\alpha =
    (F_{A'}^\varepsilon)^\alpha$ for all ordinals~$\alpha \leq
    \mathsf{conv}(\mu X.A', \varepsilon) + 1$ by transfinite induction
    on~$\alpha$.  The base case ($\alpha = 0$) and limit case ($\alpha
    = \lambda$) are straightforward.  For the inductive step, we have
      \begin{align*}
        (F_{\encodefix{A'}_{\mathcal{E}'}}^\varepsilon)^{\alpha+1}
        &{}= F_{\encodefix{A'}_{\mathcal{E}'}}^\varepsilon((F_{\encodefix{A'}_{\mathcal{E}'}}^\varepsilon)^\alpha) \\
        &{}\stackrel{(1)}{=} F_{\encodefix{A'}_{\mathcal{E}'}}^\varepsilon((F_{A'}^\varepsilon)^\alpha) \\
        &{}= \interp{\encodefix{A'}_{\mathcal{E}'}}_{\varepsilon[X\mapsto (F_{A'}^\varepsilon)^\alpha]}  \\
        &{}\stackrel{(2)}{=} \interp{A'}_{\varepsilon[X\mapsto (F_{A'}^\varepsilon)^\alpha]}\\
        &{}= F_{A'}^\varepsilon((F_{A'}^\varepsilon)^\alpha) \\
        &{}= (F_{A'}^\varepsilon)^{\alpha+1}
      \end{align*}
      as required.  Above, equality~(1) follows from the induction
      hypothesis for~$\alpha$, and equality~(2) follows from the outer
      induction hypothesis applied to~$\mathcal{E}'$
      and~$\varepsilon[X\mapsto(F_{A'}^\varepsilon)^\alpha]$.  Note
      that $\varepsilon[X\mapsto(F_{A'}^\varepsilon)^\alpha] \in
      \mathcal{E}'$ since $\alpha\leq\mathsf{conv}(\mu X.A',
      \varepsilon)$.

      It follows with Lemma~\ref{lemma:fixpoint-F} that
      $(F_{\encodefix{A'}_{\mathcal{E}'}}^\varepsilon)^{\mathsf{conv}(\mu
        X.A',\varepsilon)} = \lfp F_{A'}^\varepsilon =
      (F_{\encodefix{A'}_{\mathcal{E}'}}^\varepsilon)^{\mathsf{conv}(\mu
        X.A',\varepsilon)+1}$.  Hence $\lfp F_{A'}^\varepsilon$ is a
      fixpoint of $F_{\encodefix{A'}_{\mathcal{E}'}}^\varepsilon$, and
      thus $\lfp F_{\encodefix{A'}_{\mathcal{E}'}}^\varepsilon
      \subseteq \lfp F_{A'}^\varepsilon$.  Moreover,
      $(F_{\encodefix{A'}_{\mathcal{E}'}}^\varepsilon)^\alpha
      \subseteq \lfp F_{\encodefix{A'}_{\mathcal{E}'}}^\varepsilon$
      for any ordinal~$\alpha$ (as in the proof of
      Theorem~\ref{thm:fixpoint}), so in particular $\lfp
      F_{A'}^\varepsilon =
      (F_{\encodefix{A'}_{\mathcal{E}'}}^\varepsilon)^{\mathsf{conv}(\mu
        X.A',\varepsilon)} \subseteq \lfp
      F_{\encodefix{A'}_{\mathcal{E}'}}^\varepsilon$.  By combining
      both inclusions, $\lfp
      F_{\encodefix{A'}_{\mathcal{E}'}}^\varepsilon = \lfp
      F_{A'}^\varepsilon$. \qedhere
  \end{description}
\end{proof}

If~$A$ is closed, its semantics does not depend on~$\varepsilon$.  In
this case, we can pick an arbitrary valuation function to perform the
encoding: e.g., let $\varepsilon_\emptyset$ be the valuation that maps
every variable to~$\emptyset$.  We simply write~$\encodefix{A}$
for~$\encodefix{A}_{\{\varepsilon_\emptyset\}}$.

  Every closed formula containing fixpoint operators can be translated
into an equivalent formula without fixpoint operators.

\begin{cor}
  For any~$\varepsilon$,~$P$ and closed formula~$A$, we have
  $\valid{P}{\encodefix A}$ iff $P\in\interpe{A}$.
\end{cor}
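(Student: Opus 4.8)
The plan is to combine the two main results already available for closed formulas: Theorem~\ref{thm:mu-admissable}, which states that $\interpe{\encodefix{A}_\mathcal{E}} = \interpe{A}$ for formulas under a valuation set $\mathcal{E}$, and Proposition~\ref{prop:valid-iff-interpe}, which connects the satisfaction relation $\valid{}{}$ to the interpretation $\INTERP$ for fixpoint-free formulas. Since $\encodefix A$ is by definition $\encodefix{A}_{\{\varepsilon_\emptyset\}}$, I would first invoke Theorem~\ref{thm:mu-admissable} with the specific singleton valuation set $\mathcal{E} = \{\varepsilon_\emptyset\}$, noting that this set is non-empty and has cardinality $1 < \nu$, so the theorem's hypothesis is met.

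First I would establish the chain of equalities on the semantic side. By Theorem~\ref{thm:mu-admissable} applied with $\mathcal{E} = \{\varepsilon_\emptyset\}$ and $\varepsilon_\emptyset \in \mathcal{E}$, we obtain $\interp{\encodefix A}_{\varepsilon_\emptyset} = \interp{A}_{\varepsilon_\emptyset}$. Next, because $A$ is closed, its interpretation is independent of the chosen valuation, so $\interp{A}_{\varepsilon_\emptyset} = \interpe{A}$ for the arbitrary $\varepsilon$ in the statement; the same independence applies to $\encodefix A$, which the remarks after Definition~\ref{def:encoding} guarantee is a genuine formula of Definition~\ref{def:formulas} (containing no fixpoint operators and, since $A$ is closed, no free variables). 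Thus $\interp{\encodefix A}_\varepsilon = \interpe{A}$.

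Then I would bring in Proposition~\ref{prop:valid-iff-interpe}. Since $\encodefix A$ is a fixpoint-free, variable-free formula in the sense of Definition~\ref{def:formulas}, the proposition applies directly to it, giving $\valid{P}{\encodefix A}$ iff $P \in \interp{\encodefix A}_\varepsilon$ for any $\varepsilon$. Combining this with the semantic equality above yields $\valid{P}{\encodefix A}$ iff $P \in \interpe{A}$, which is exactly the claim.

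I do not expect any genuine obstacle here, as this corollary is essentially a bookkeeping assembly of prior results; the substantive work was already discharged in Theorem~\ref{thm:mu-admissable}. The one point requiring a little care is justifying that the valuation-independence of closed formulas holds for both $A$ and its encoding simultaneously, and that $\encodefix A$ legitimately falls under the scope of Proposition~\ref{prop:valid-iff-interpe} (which is stated for formulas as in Definition~\ref{def:formulas}). Both facts are already noted in the surrounding text, so the argument is short: apply Theorem~\ref{thm:mu-admissable}, use closedness to drop the valuation, and finish with Proposition~\ref{prop:valid-iff-interpe}.
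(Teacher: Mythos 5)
Your proposal is correct and follows exactly the paper's own route: the paper's proof is literally the one-line citation of Theorem~\ref{thm:mu-admissable} and Proposition~\ref{prop:valid-iff-interpe}, which you have simply unfolded with the appropriate bookkeeping (the singleton valuation set $\{\varepsilon_\emptyset\}$, valuation-independence for closed formulas, and the observation that $\encodefix{A}$ is a fixpoint-free formula in the sense of Definition~\ref{def:formulas}).
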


\begin{proof}
  By Theorem~\ref{thm:mu-admissable} and
  Proposition~\ref{prop:valid-iff-interpe}.
\end{proof}

\section{Logics for variants of bisimilarity}
\label{sec:variants}

\subsection{Variants of bisimilarity}
\label{sec:bisimvariants}
There are variants of bisimilarity, differing in the effect the binding in a transition can have on the target state. A typical example is in the pi-calculus where the input action binds a name, and the target state must be considered for all possible instantiations of it. There is then a difference between the so-called {\em late} bisimilarity, where the target states must bisimulate before instantiating the input, and {\em early} bisimilarity, where it is enough to bisimulate after each instantiation. There are also corresponding congruences, obtained by closing bisimilarity under all substitutions of names for names. The original value-passing variant of CCS from 1989~\cite{MilnerCCS} uses early bisimilarity. The original bisimilarity for the pi-calculus (1992) is of the late kind~\cite{MPWpi}, where it also was noted that late equivalence is the corresponding congruence. Early bisimilarity and congruence in the pi-calculus were introduced in 1993~\cite{Milner:1993ys}, where HMLs adequate for a few different bisimilarity definitions are explored. Other ways to treat the name instantiations include Sangiorgi's {\em open bisimilarity} (1993)~\cite{open} and Parrow and Victor's {\em hyperbisimilarity} (1998)~\cite{fusion}. Hyperbisimilarity is the requirement that the bisimulation relation is closed under all name instantiations, and corresponds to a situation where the environment of a process may at any time instantiate any name. Open bisimilarity is between late and hyper: the environment may at any time instantiate any name except those that have previously been extruded or declared constant.

In our definition of nominal transition systems there are no particular input variables in the states or in the actions, and thus no a priori concept of replacing a name by something else. In order to cover all of the above variants of the pi-calculus and also of high-level extensions of it, we generalise name instantiation using a notion of effect functions.

\begin{defi}
 An {\em effect} is a finitely supported function from states to states. We let $\mathcal{F}$ stand for a nominal set of effects, and we let $\effectset$, ranged over by $F$, be the finitely supported subsets of~$\mathcal{F}$.
\end{defi}

For instance, in the monadic pi-calculus the effects would be the functions replacing one name by another. In a value-passing calculus the effects would be substitutions of values for variables. In the psi-calculi framework the effects would be sequences of parallel substitutions. Variants of bisimilarity then correspond to the use of various effects. For instance, if the action contains an input variable $x$, then the effects appropriate for late bisimilarity would be substitutions for $x$. Our only requirement is that the effects form a nominal set.

The definition of bisimilarity will now be through a $\effectset$-indexed family~$\{R_F\}$ of bisimulations. The index $F$ (``first'') simply says which effects must be taken into account before performing a transition. There is a function $L$ (``later''), which determines what effects should be considered after taking a transition. In its simplest form these effects depend only on the action of the transition, that is, $L$ has type  $\actions \rightarrow \effectset$. We require that~$L$ is equivariant.

\begin{defi}[Simple $L$-bisimulation and $\fbisim$]
\label{def:alternatebisim}
A {\em simple L-bisimulation}, for $L\colon\actions
\rightarrow \effectset$ equivariant, is a $\effectset$-indexed
family~$\{R_F\}$ of symmetric binary relations on states satisfying
the following:

For all $F \in \effectset$, $R_F (P, Q)$ implies
\begin{enumerate}
\item {\em Static implication}: For all $f \in F$, $\entails{f(P)}{\varphi}$ implies  $\entails{f(Q)}{\varphi}$.
\item {\em Simulation}: For all $f\in F$ and $\alpha$, $P'$ such that $\bn(\alpha)\freshin f(Q), F$ there exists $Q'$ such that
\[\mbox{if $f(P) \trans{\alpha} P'$ then $f(Q) \trans{\alpha} Q'$ and $R_{L(\alpha)}(P',Q')$}\]
\end{enumerate}

We write $P \fbisim Q$, called $F/L$-bisimilarity, to mean that there
exists a simple $L$-bisimulation $\{R_F\}_{F\in\effectset}$ such that
$R_F(P,Q)$.
\end{defi}

To exemplify $F/L$-bisimilarity we shall consider some of the popular bisimulation equivalences in the monadic pi-calculus; the ideas obviously generalize to more advanced settings. Thus the states are pi-calculus agents, and there are no state predicates. In the monadic pi-calculus there are input actions written $a(x)$, where $a$ and $x$ are names and the intention is that the binding input object $x$ shall be instantiated with another name received in a communication with a parallel process. There are also binding output actions
$\overline{a}(x)$ signifying the output of a local name $x$. We let $\bn(a(x)) = \bn(\overline{a}(x)) = \{x\}$, and for all other actions $\alpha$ we let $\bn(\alpha) = \emptyset$.

The relevant effects are the name substitutions, i.e., functions $\sigma$ from names to names that are identity almost everywhere. For any substitution, the set of names which are substituted to something else, or something else is substituted to, is finite. Clearly if a permutation only permutes names outside this set, then its action does not change the substitution. Thus every substitution has finite support, and we can let  $\mathcal{F}$ be the set of substitutions.
The effect of applying $\sigma$ to a state $P$ is notated $P\sigma$, in conformance with most of the literature on the pi-calculus.

Write $\stateid$ for the identity function on names, and let $\substx = \{\sigma \setsep \forall y\neq x.\ \sigma(y) = y\}$ be the set of substitutions for $x$. Note that $\n(\substx) = \{x\}$. We now get:

\begin{itemize}
\item {\bf Early bisimilarity} $\bisim_{\rm E}$ does not use binding input actions, instead it uses non-binding actions where the received object is already present. Early bisimilarity thus is precisely as defined in Definition~\ref{def:bisim}, which is the same as  $\setid \,/\, L_{\rm E}$-bisimilarity where $L_{\rm E}(\alpha)=\setid$ for all $\alpha$. No substitutive effect is needed; the substitution of output object for input object is included already in the semantics and thus already present in the corresponding nominal transition system.

\item {\bf Early equivalence} $\sim_{\rm E}$ is early bisimilarity closed under all possible substitutions, i.e., $P \sim_{\rm E} Q$ if for all $\sigma$ it holds
$P\sigma \bisim_{\rm E} Q\sigma$. Therefore $\sim_{\rm E}$ is ${\mathcal{F}}\,/\, L_{\rm E}$-bisimilarity where $L_{\rm E}$ is as above. Any substitution can be applied initially, and thereafter no substitution is needed. Early equivalence is the smallest congruence including early bisimilarity.

\item{\bf Late bisimilarity} $\bisim_{\rm L}$ has a binding input action and should consider all possible instantiations of the bound input object before the next transition. In other words,~$\bisim_{\rm L}$ is $\setid \,/\, L_{\rm L}$-bisimilarity where $L_{\rm L}(a(x)) = \substx$ and $L_{\rm L}(\alpha) = \setid$ for all other actions $\alpha$.

\item {\bf Late equivalence} $\sim_{\rm L}$ is late bisimilarity closed under all possible substitutions, i.e., $P \sim_{\rm L} Q$ if for all $\sigma$ it holds $P\sigma \bisim_{\rm L} Q\sigma$. Therefore $\sim_{\rm L}$ is $\mathcal{F} \,/\, L_{\rm L}$-bisimilarity where~$L_{\rm L}$ is as above. Late equivalence is the smallest congruence including late bisimilarity.

\item{\bf Hyperbisimilarity} $\sim_{\rm H}$ means that any name can be substituted at any time, thus it is ${\mathcal{F}}\,/\,L_{\rm H}$-bisimilarity where $L_{\rm H}(\alpha) = \mathcal{F}$ for all $\alpha$.
\end{itemize}

Open bisimilarity is more involved and requires a generalisation of the $L$-function to take additional parameters. We begin by quoting the definition from Sangiorgi 1993~\cite{open}.
\begin{defi}[Open bisimilarity, Sangiorgi]
\label{def:open-sangiorgi}
A {\em distinction} is a finite symmetric and irreflexive relation on names. A substitution $\sigma$ {\em respects} a distinction $D$ if $(a,b) \in D$ implies $\sigma(a) \neq \sigma(b)$. A distinction-indexed family of symmetric relations $\{S_D\}_D$ is an {\em open bisimulation} if for all $S_D$ and for each $\sigma$ which respects $D$, $(P,Q) \in S_D$ implies
\begin{enumerate}
\item If $P\sigma \trans{\overline{a}(b)}P'$ with $b$ fresh then $Q'$ exists s.t.\ $Q\sigma \trans{\overline{a}(b)}Q'$ and $(P',Q')\in S_{D'}$ where $D' = D\sigma \cup (\{b\} \times \mbox{fn}(P\sigma, Q\sigma))$ (with symmetric closure)
\item If $P\sigma \trans{\alpha} P'$ with $\alpha$ not a binding output and $\bn(\alpha)$ fresh then $Q'$ exists s.t.\ $Q\sigma \trans{\alpha} Q'$ and $(P',Q')\in S_{D\sigma}$
\end{enumerate}
Write $P \sim_{\rm O} Q$ to mean that $(P,Q) \in S_\emptyset$ for some open bisimulation $\{S_D\}_D$.
\end{defi}
In this definition, ``$b$ is fresh'' means that it ``is supposed to be different from any other name appearing in the objects of the statement, like processes or distinctions.'' The function~$\mbox{fn}$ extracts the free names of a process; in nominal terms this is the support.

Clearly, the distinctions here correspond to our effect sets since they determine which substitutions should be taken into account. A complication is then that the distinction $D'$ after the transition in clause~(1) depends not only on the action but also on $D$, $\sigma$, $P$ and $Q$. We therefore present an alternative definition of open bisimulation, where distinctions may be infinite but still finitely supported. Write $D+b$ for the distinction $D \cup (\{b\} \times (\nameset-\{b\}))$ (and its symmetric closure) and $D-b$ for the distinction $\{(x,y) \in D\setsep x,y \neq b\}$. Note that $\n(D+b) \subseteq \n(D)\cup \{b\}$ and $\n(D-b) \subseteq \n(D)\cup \{b\}$. It is easy to see that the function that maps a distinction to the set of substitutions respecting it is an equivariant injection.
Therefore we let  sets of substitutions be represented as distinctions, and say ``$\sigma \in D$'' instead of ``$\sigma$ respects $D$.''
\begin{defi}[Alternative definition of open bisimulation]
\label{def:altopen}
A distinction-indexed family of symmetric relations $\{S_D\}_D$ is an open bisimulation if for all $S_D$ and for each $\sigma \in D$, $(P,Q) \in S_D$ implies
\begin{enumerate}
\item If $P\sigma \trans{\overline{a}(b)}P'$ with $b\freshin Q\sigma,D,\sigma$ then $Q'$ exists s.t.\ $Q\sigma \trans{\overline{a}(b)}Q'$ and $(P',Q')\in S_{D\sigma+b}$
\item If $P\sigma \trans{\alpha} P'$ with $\bn(\alpha)=\emptyset$ then $Q'$ exists s.t.\ $Q\sigma \trans{\alpha} Q'$ and $(P',Q')\in S_{D\sigma}$
\item If $P\sigma \trans{a(b)}P'$ with $b\freshin Q\sigma,D,\sigma$ then $Q'$ exists s.t.\ $Q\sigma \trans{a(b)}Q'$ and $(P',Q')\in S_{D\sigma-b}$
\end{enumerate}
\end{defi}

Compared to Definition~\ref{def:open-sangiorgi}, clause~(1) now requires $b$ to be distinct from {\em all} names, not just the names in $P\sigma$ and $Q\sigma$. In the pi-calculus, it is known that if $P\sigma \trans{\overline{a}(b)} P'$ then $\n(P') \subseteq \n(P\sigma) \cup \{b\}$. Therefore, the difference between the clauses only concerns names outside $\n(P',Q')$, meaning that the additional substitutions allowed by clause~(1) of Definition~\ref{def:open-sangiorgi} will be injective when restricted to $\n(P',Q')$, and open bisimilarity (with any distinction) is closed under injective substitutions.

All names ``occur'' in  $D\sigma+b$ even though it has finite support. Thus, in a subsequent input $a(b)$ it will be impossible to choose $b$ fresh according to Sangiorgi. Instead we get the same effect with clause~(3): removing an input binder $b$ from the distinction is the same as choosing a new one that does not occur there. We also tighten and make explicit the necessary freshness condition on $b$.

To capture open bisimulation in our framework, we extend the simple $L$-functions to take more parameters as follows:

\begin{defi}[$L$-bisimulation]
\label{def:generalalternatebisim}
A {\em (general) L-bisimulation}, for $L\colon\actions \times \effectset \times \mathcal{F}
\rightarrow \effectset$ equivariant, is a $\effectset$-indexed
family~$\{R_F\}$ of symmetric binary relations on states satisfying
the following:

For all $F \in \effectset$, $R_F (P, Q)$ implies
\begin{enumerate}
\item {\em Static implication}: For all $f \in F$, $\entails{f(P)}{\varphi}$ implies  $\entails{f(Q)}{\varphi}$.
\item {\em Simulation}: For all $f\in F$ and $\alpha, P'$ such that $\bn(\alpha)\freshin  f(Q), F, f$ there exist $Q'$ such that
\[\mbox{if $f(P) \trans{\alpha} P'$ then $f(Q) \trans{\alpha} Q'$ and $R_{L(\alpha,F,f)}(P',Q')$}\]
\end{enumerate}
\end{defi}

The simple $L$-bisimulation of Definition~\ref{def:alternatebisim} is thus the special case when $L$ does not depend on $F$ or $f$. In the following when we write $L$-bisimulation we always refer to the general case of Definition~\ref{def:generalalternatebisim}.

To represent open bisimulation as an $L$-bisimulation we let $\mathcal{F}$ be the set of all substitutions and use distinctions to represent sets of substitutions (thus $\mathcal{F}$ is the empty distinction), and
define the function $L_{\rm O}$ by:
\[
  L_{\rm O}(\alpha,D,\sigma) = \left\{
     \begin{array}{ll}
         D\sigma +b  & \mbox{if}\; \alpha = \overline{a}(b) \\
         D\sigma     & \mbox{if}\; \bn(\alpha) = \emptyset \\
         D\sigma -b  & \mbox{if}\; \alpha = a(b)
     \end{array}
  \right.
\]
Strictly speaking this $L_{\rm O}$ is partial since there are sets of substitutions in $\effectset$ that cannot be represented by a distinction. These substitution sets will not matter and we can make~$L_{\rm O}$ total by assigning an arbitrary value in those cases. We now immediately get, using Definition~\ref{def:altopen}:
\begin{prop}
$P \sim_{\rm O} Q \quad$ iff $\quad P \stackrel{\mathcal{F}/L_{\rm O}}{\sim} Q$
\end{prop}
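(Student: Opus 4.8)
The plan is to establish the stronger statement that, under the identification of distinctions with effect sets, a distinction-indexed family $\{S_D\}_D$ is an open bisimulation in the sense of Definition~\ref{def:altopen} if and only if the corresponding family $\{R_F\}$, with $R_D = S_D$, is an $L_{\rm O}$-bisimulation in the sense of Definition~\ref{def:generalalternatebisim}. Since $\sim_{\rm O}$ is witnessed by membership in $S_\emptyset$ and $\stackrel{\mathcal{F}/L_{\rm O}}{\sim}$ by the relation $R_{\mathcal{F}}$, and since $\mathcal{F}$ is by convention the empty distinction (so that $S_\emptyset$ and $R_{\mathcal{F}}$ name the same relation), the proposition follows at once from this correspondence.

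First I would fix the identification. By the remark preceding Definition~\ref{def:altopen}, the map sending a distinction $D$ to the set of substitutions respecting it is an equivariant injection; using it I regard each distinction $D$ as the element of $\effectset$ it represents, so that ``$\sigma$ respects $D$'', ``$\sigma \in D$'' and ``$f \in F$'' all denote the same condition. Since the states are pi-calculus agents with $\predicates = \emptyset$, the static implication clause of Definition~\ref{def:generalalternatebisim} holds vacuously, so only the simulation clauses need be compared.

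Next I would match the simulation clauses by a case analysis on $\alpha$ following the three branches in the definition of $L_{\rm O}$. For $f = \sigma \in F = D$ the side condition $\bn(\alpha) \freshin f(Q), F, f$ of Definition~\ref{def:generalalternatebisim} becomes $\bn(\alpha) \freshin Q\sigma, D, \sigma$. When $\alpha = \overline{a}(b)$ we have $\bn(\alpha) = \{b\}$, the side condition reads $b \freshin Q\sigma, D, \sigma$, and $L_{\rm O}(\alpha, D, \sigma) = D\sigma + b$, which is exactly clause~(1) of Definition~\ref{def:altopen}; when $\bn(\alpha) = \emptyset$ the side condition is vacuous and $L_{\rm O}(\alpha, D, \sigma) = D\sigma$, matching clause~(2); and when $\alpha = a(b)$ the side condition again reads $b \freshin Q\sigma, D, \sigma$ and $L_{\rm O}(\alpha, D, \sigma) = D\sigma - b$, matching clause~(3). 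Thus, index by index, the simulation requirement of one definition holds precisely when that of the other does.

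The one point requiring care --- and the main obstacle --- is the partiality of $L_{\rm O}$ noted before the proposition: not every $F \in \effectset$ represents a distinction, and on such $F$ the value of $L_{\rm O}$ was fixed arbitrarily. To see that this is harmless, I would first observe that $D\sigma$, $D\sigma + b$ and $D\sigma - b$ are again finitely supported distinctions whenever $D$ is, using $\n(D\sigma) \subseteq \n(D) \cup \n(\sigma)$, $\n(D+b) \subseteq \n(D) \cup \{b\}$ and $\n(D-b) \subseteq \n(D) \cup \{b\}$; hence starting from the empty distinction $\mathcal{F}$, every index that $L_{\rm O}$ ever produces is distinction-representable, so the arbitrary values are never consulted. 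For the direction from open bisimulation to $L_{\rm O}$-bisimulation I would extend $\{S_D\}_D$ to a full $\effectset$-indexed family by setting $R_F = \emptyset$ on the non-distinction indices, where the simulation clause holds vacuously; conversely, restricting an $L_{\rm O}$-bisimulation to its distinction indices yields an open bisimulation, the target index $L_{\rm O}(\alpha, D, \sigma)$ always being a distinction by the observation above. With the clause-by-clause matching of the previous paragraph, both directions of the biconditional are then immediate, and since $\mathcal{F}$ is the empty distinction we conclude that $P \sim_{\rm O} Q$ holds iff $P \stackrel{\mathcal{F}/L_{\rm O}}{\sim} Q$.
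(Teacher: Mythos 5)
Your proposal is correct and follows essentially the same route as the paper: the paper treats the proposition as immediate from Definition~\ref{def:altopen}, i.e., a clause-by-clause identification of the alternative open bisimulation with an $L_{\rm O}$-bisimulation under the distinction-as-effect-set correspondence, with $S_\emptyset$ matching $R_{\mathcal{F}}$ and the arbitrary values of the partial $L_{\rm O}$ never consulted. Your write-up merely makes explicit what the paper leaves implicit (vacuous static implication, the three-way case match on $L_{\rm O}$, and padding the family with empty relations on non-distinction indices), all of which is sound.
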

In conclusion we have a uniform framework to define most  variants of bisimulation co-inductively. The common claim that open bisimulation is the only co-inductively defined congruence in the pi-calculus is somewhat contradicted by our framework, which represents both late and early equivalence conveniently.

\subsection{Variants of the logic}

In view of the previous subsection, we only need to provide a modal logic adequate for (general) $F/L$-bisimilarity; it can then immediately be specialised to all of the above variants.  To this end we introduce a new kind of logical {\em effect consequence} operator $\effect{}{}$, which appears in front of state predicates and actions in formulas.  We define the formulas that can directly use effects from~$F$ and after actions use effects according to~$L$, ranged over by $A^{F/L}$, in the following way:
\begin{defi}\label{def:fl-formulas}
Given~$L$ as in Definition~\ref{def:generalalternatebisim}, for all $F \in \effectset$ define ${\mathcal{A}}^{F/L}$ as the set of formulas given by the mutually recursive definitions:
\[ A^{F/L} \quad ::= \quad
   \conjunc A^{F/L}_i \casesep
    \neg A^{F/L} \casesep
    \effect{f}{\varphi} \casesep
    \effect{f}{\may{\alpha}} A^{L(\alpha,F,f)/L}
\]
where we require~$f\in F$ and $\bn(\alpha)\freshin f,F$ and that the conjunction has bounded cardinality and finite support.
\end{defi}

Validity of a formula for a state~$P$ is defined as in Definition~\ref{def:validity}, where in the last clause we assume that~$\may{\alpha} A$ is a representative of its alpha-equivalence class such that $\bn(\alpha) \freshin P,F,f$.  Validity of formulas involving the effect consequence operator is defined as follows.
\begin{defi}
\label{def:consequence}
For each $f \in \mathcal{F}$,
\[\valid{P}{\effectfa} \quad\mbox{if}\quad \valid{f(P)}{A}\]
\end{defi}
Thus the formula $\effect{f}{A}$ means that~$A$ holds when the effect~$f$ is applied to the state.
The effect consequence operator is similar in spirit to the action modalities: both~$\effect{f}A$ and~$\may{\alpha}A$ assert that something (an effect or action) must be possible and that~$A$ holds afterwards.  Indeed, effects can be viewed as a special case of transitions (as formalised in Definition~\ref{def:lftransform} below).

\begin{lem}
\label{lemma:distsup2}
If $A \in {\mathcal{A}}^{F/L}$ is a distinguishing formula for $P$ and $Q$, then there exists a distinguishing formula $B\in {\mathcal{A}}^{F/L}$ for $P$ and $Q$  such that $\n(B) \subseteq \n(P,F)$.
\end{lem}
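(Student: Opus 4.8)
The plan is to reuse the orbit construction from the proof of Lemma~\ref{lemma:distsup}, but now closing under the permutations that fix $\n(P,F)$ pointwise rather than only $\n(P)$; this keeps the resulting conjunction inside the formula family ${\mathcal{A}}^{F/L}$ and accounts for the enlarged support bound $\n(P,F)$.

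First I would establish a preliminary equivariance fact for the formula families of Definition~\ref{def:fl-formulas}: for every permutation $\pi$, $\pi\cdot{\mathcal{A}}^{F/L} = {\mathcal{A}}^{\pi\cdot F/L}$. This is proved by a routine induction on the structure of $A^{F/L}$, using equivariance of $\bn$, of $\vdash$, and of $L$ (so that $L(\pi\cdot\alpha,\pi\cdot F,\pi\cdot f)=\pi\cdot L(\alpha,F,f)$), together with the fact that the side conditions $f\in F$ and $\bn(\alpha)\freshin f,F$ are equivariant. The consequence I need is that whenever $\pi\cdot F=F$, the set ${\mathcal{A}}^{F/L}$ is closed under the action of $\pi$.

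Next I would let $\Pi=\{\pi\setsep n\in\n(P,F)\Rightarrow\pi(n)=n\}$ be the group of permutations fixing $\n(P,F)=\n(P)\cup\n(F)$ pointwise, and form the orbit $\mathcal{B}=\{\pi\cdot A\setsep\pi\in\Pi\}$. Each $\pi\in\Pi$ fixes $\n(F)$ pointwise, hence $\pi\cdot F=F$, so by the preliminary fact every $\pi\cdot A$ again belongs to ${\mathcal{A}}^{F/L}$. The orbit is countable, and hence of bounded cardinality since $\kappa>\aleph_0$, because $\pi\cdot A$ is determined by the image of the finite support $\n(A)$. Just as in Lemma~\ref{lemma:distsup}, for $a,b\freshin P,F$ and $\pi\in\Pi$ we have $(a\,b)\circ\pi\in\Pi$, so $(a\,b)\cdot\mathcal{B}=\mathcal{B}$ and therefore $\n(\mathcal{B})\subseteq\n(P,F)$. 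Thus $B=\bigwedge\mathcal{B}$ is a well-formed formula of ${\mathcal{A}}^{F/L}$ with $\n(B)\subseteq\n(P,F)$.

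Finally I would verify that $B$ distinguishes $P$ and $Q$. Every $\pi\in\Pi$ satisfies $\pi\cdot P=P$, so equivariance of $\models$ turns $\valid{P}{A}$ into $\valid{P}{\pi\cdot A}$ for each conjunct, giving $\valid{P}{B}$; and since the identity lies in $\Pi$, the formula $A$ is itself a conjunct of $B$, so that $\valid{Q}{A}$ failing yields that $\valid{Q}{B}$ fails as well. The only step beyond Lemma~\ref{lemma:distsup} is the preliminary equivariance and closure property of ${\mathcal{A}}^{F/L}$, and this is exactly what dictates the support bound: one may permute freely only outside $\n(F)$ if the orbit is to remain within the $F$-indexed family, which is why $\n(P,F)$ replaces $\n(P)$.
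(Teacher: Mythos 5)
Your proposal is correct and takes essentially the same route as the paper: the orbit construction of Lemma~\ref{lemma:distsup} with $\n(P,F)$ in place of $\n(P)$, plus the key observation that equivariance of~$L$ gives $\pi \cdot A \in {\mathcal{A}}^{(\pi\cdot F)/L} = {\mathcal{A}}^{F/L}$ for permutations~$\pi$ fixing $\n(P,F)$ pointwise, so the conjunction stays in the family. Your explicit cardinality check on the orbit is a detail the paper leaves implicit, but it is not a different argument.
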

\begin{proof}
  The proof has been formalised in Isabelle. It is an easy generalisation of the proof of Lemma~\ref{lemma:distsup}, just replace $\mathcal{A}$ by $\mathcal{A}^{F/L}$ and $\n(P)$ by $\n(P,F)$ everywhere. We additionally need to prove that $B \in \mathcal{A}^{F/L}$. Since~$L$ is equivariant, $A \in {\mathcal{A}}^{F/L}$ implies $\pi \cdot A \in {\mathcal{A}}^{(\pi \cdot F)/L} = {\mathcal{A}}^{F/L}$ for all $\pi \in \Pi_{(P,F)}$, this establishes $B \in {\mathcal{A}}^{F/L}$.
\end{proof}

Let $P \flogeq Q$ mean that $P$ and $Q$ satisfy the same formulas in ${\mathcal{A}}^{F/L}$.

\begin{thm}
\label{thm:bisimloglf}
$P \fbisim Q \quad$ iff $\quad P \flogeq Q$
\end{thm}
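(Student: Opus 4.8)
The plan is to prove both directions by lifting the two basic adequacy proofs, Theorems~\ref{thm:bisimlog} and~\ref{thm:logbisim}, to the effect-indexed setting, treating the effect consequence operator $\effect{}{}$ essentially like a state-transforming prefix whose semantics is fixed by Definition~\ref{def:consequence}. The one structural point that must be arranged from the start is that, because the action clause $\effect{f}{\may{\alpha}}A^{L(\alpha,F,f)/L}$ of Definition~\ref{def:fl-formulas} sends its body into the \emph{shifted} family ${\mathcal{A}}^{L(\alpha,F,f)/L}$, both the induction in the forward direction and the candidate bisimulation in the converse must be stated uniformly over all $F$, so that the shift from $F$ to $L(\alpha,F,f)$ is always covered by the hypothesis.

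For the forward direction ($\fbisim \Rightarrow \flogeq$) I would fix an $L$-bisimulation $\{R_F\}$ with $R_F(P,Q)$ and prove, by structural induction on the formula, the statement: for every $F$, every $A\in{\mathcal{A}}^{F/L}$, and every pair with $R_F(P,Q)$, $\valid{P}{A}\iff\valid{Q}{A}$. The base case $\effect{f}{\varphi}$ reduces via Definition~\ref{def:consequence} to $\entails{f(P)}{\varphi}$ and follows from the static implication clause of Definition~\ref{def:generalalternatebisim} applied to $f\in F$, together with symmetry of $R_F$. Conjunction and negation are immediate from the induction hypothesis. For $\effect{f}{\may{\alpha}}A'$ I would unfold validity to $\valid{f(P)}{\may{\alpha}A'}$, choose a representative with $\bn(\alpha)\freshin f(Q),F,f$, obtain a matching transition $f(Q)\trans{\alpha}Q'$ with $R_{L(\alpha,F,f)}(P',Q')$ from the simulation clause, and close using the induction hypothesis at the shifted index $L(\alpha,F,f)$.

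For the converse ($\flogeq\Rightarrow\fbisim$) I would show that the $\effectset$-indexed family $\{R_F\}$ given by $R_F=\{(P,Q)\setsep P\flogeq Q\}$ (with $\flogeq$ taken at that index $F$) is an $L$-bisimulation; since $R_F(P,Q)$ holds whenever $P\flogeq Q$, this yields $P\fbisim Q$. Symmetry is clear, and static implication holds because $\effect{f}{\varphi}\in{\mathcal{A}}^{F/L}$ for $f\in F$, so $\entails{f(P)}{\varphi}$ forces $\entails{f(Q)}{\varphi}$. The simulation clause is proved by contradiction exactly as in Theorem~\ref{thm:logbisim}: assuming $f(P)\trans{\alpha}P'$ (with $\bn(\alpha)\freshin f(Q),F,f$, and alpha-renamed to be fresh for $f(P)$ as well) has no match, set $G=L(\alpha,F,f)$ and collect for each $Q'\in\{Q'\setsep f(Q)\trans{\alpha}Q'\}$ a distinguishing formula in ${\mathcal{A}}^{G/L}$; Lemma~\ref{lemma:distsup2} guarantees one of support bounded by $\n(P',G)$, independent of $Q'$. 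Their conjunction $B=\bigwedge_{Q'}B_{Q'}$ is then well-formed and lies in ${\mathcal{A}}^{G/L}$, so $\effect{f}{\may{\alpha}}B\in{\mathcal{A}}^{F/L}$ holds of $P$ but not of $Q$, contradicting $P\flogeq Q$.

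I expect the genuinely new content of the theorem to be entirely absorbed into Lemma~\ref{lemma:distsup2}, which already supplies distinguishing formulas of the correct support inside the correct formula family; the remaining work is bookkeeping. The hard part will therefore be the index discipline rather than any single calculation: one must check at every step that the freshness side-conditions $\bn(\alpha)\freshin f,F$ survive the alpha-renaming of $\may{\alpha}$ and that each constructed conjunction and modal formula genuinely lands in the family ${\mathcal{A}}^{F/L}$ (respectively ${\mathcal{A}}^{G/L}$) dictated by Definition~\ref{def:fl-formulas}, so that the distinguishing formula produced in the converse is admissible and the induction in the forward direction is well-founded.
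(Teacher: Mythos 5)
Your proposal is correct and follows essentially the same route as the paper's proof: the forward direction is the same structural induction (static implication for $\effect{f}{\varphi}$, simulation plus the index shift to $L(\alpha,F,f)$ for $\effect{f}{\may{\alpha}}A'$), and the converse shows that the $F$-indexed family of logical equivalences $\flogeq$ is an $L$-bisimulation, deriving the simulation clause by contradiction from a conjunction of distinguishing formulas supplied by Lemma~\ref{lemma:distsup2}. The paper likewise treats Lemma~\ref{lemma:distsup2} as carrying the real content, with the rest being the index bookkeeping you describe.
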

\begin{proof}
The proof has been formalised in Isabelle. Direction $\Rightarrow$ is a generalisation of Theorem~\ref{thm:bisimlog}.
\begin{enumerate}
\item
Base case: $A = \effect{f}{\varphi}$ and $f \in F$. Then $\entails{f(P)}{\varphi}$. By static implication $\entails{f(Q)}{\varphi}$, which means $\valid{Q}{A}$.
\item
  Inductive step $\effect{f}{\may{\alpha} A}$ where $A \in {\mathcal{A}}^{L(\alpha,F,f)/L}$ with $f \in F$ and $\bn(\alpha) \freshin f,F$: Assume $\valid{P}{\effect{f}{\may{\alpha} A}}$. Then $\exists P'. \, f(P) \trans{\alpha} P'$ and $\valid{P'}{A}$. Without loss of generality we assume also $\bn(\alpha) \freshin f(Q)$, otherwise just find an alpha-variant of the transition where this holds. Then by simulation $\exists Q'. \, f(Q) \trans{\alpha} Q'$ and $P' \fabisim Q'$. By induction and $\valid{P'}{A}$ we get $\valid{Q'}{A}$, whence by definition $\valid{Q}{\effect{f}{\may{\alpha} A}}$.
\end{enumerate}

The direction $\Leftarrow$ is a generalisation of Theorem~\ref{thm:logbisim}: we prove that $\flogeq$ is an $F/L$-bisimulation.  The modified clauses are:
\begin{enumerate}
\item Static implication. Assume $f \in F$, then $\entails{f(P)}{\varphi}$ iff $\valid{P}{\effect{f}{\varphi}}$ iff $\valid{Q}{\effect{f}{\varphi}}$ iff $\entails{f(Q)}{\varphi}$.
\item Simulation. The proof is by contradiction. Assume that $\flogeq$ does not satisfy the simulation requirement. Then there exist $f \in F$, $P$, $Q$, $P'$, $\alpha$ with $\bn(\alpha) \freshin f(Q), F, f$ such that $P \flogeq Q$ and
$f(P) \trans{\alpha}P'$ and, letting ${\mathcal{Q}} = \{Q' \setsep f(Q) \trans{\alpha} Q'\}$, for all $Q' \in {\mathcal{Q}}$ it holds that not $P' \falogeq Q'$. Thus, for all $Q' \in {\mathcal{Q}}$ there exists a distinguishing formula in ${\mathcal{A}}^{L(\alpha,F,f)/L}$ for $P'$ and $Q'$. The formula may depend on~$Q'$, and by Lemma~\ref{lemma:distsup2} we can find such a distinguishing formula $B_{Q'} \in {\mathcal{A}}^{L(\alpha,F,f)/L}$ for~$P'$ and~$Q'$ with $\n(B_{Q'}) \subseteq\n(P',L(\alpha,F,f))$.  Therefore the formula
 \[B = \bigwedge_{Q' \in {\mathcal{Q}}} B_{Q'}\]
 is well formed in ${\mathcal{A}}^{L(\alpha,F,f)/L} $ with support included in $\n(P',L(\alpha,F,f))$. We thus get that
 $\valid{P}{\effect{f}{\may{\alpha} B}}$ but not $\valid{Q}{\effect{f}{\may{\alpha} B}}$, contradicting $P\flogeq Q$. \qedhere
 \end{enumerate}
\end{proof}

\subsection{Effects as transitions}
It is possible to view effects as transitions, and thus not use the effect consequence operator explicitly.
The idea is to let the effect functions be part of the transition relation, thus $f(P)=P'$ becomes $P \trans{f} P'$. We here make this idea fully formal. Given a transition system {\bf T} we construct a new transition system $L(\mathbf{T})$ without effects. The effects in {\bf T} are included among the actions in $L(\mathbf{T})$. The states in $L(\mathbf{T})$ are of two kinds. One is $\ef(F,P)$ where $P$ is a state in {\bf T} and $F \in \effectset$, corresponding to the state $P$ where $F$ contains the effects that should be considered before taking an action or checking a state predicate. The transitions from $\ef(F,P)$ are effects in $F$. The other kind is $\ac(f,F,P)$, corresponding to a state $P$ where $f \in F$ has been applied; the transitions from $\ac(f,F,P)$ are the actions from $P$.

To define this formally, and to talk about different nominal transition systems in the same definition, we write $\states_{\mathbf{T}}$ for the states of the nominal transition system {\bf T} and similarly for  actions, predicates, etc.
\begin{defi}
\label{def:lftransform}
Let {\bf T} be a nominal transition system with a set of effects $\mathcal{F}$. Let $L$ be as in Definition~\ref{def:generalalternatebisim}. The {\em $L$-transform} $L(\mathbf{T})$ of a nominal transition system~{\bf T} is a nominal transition system where:
\begin{itemize}
\item
$
    \begin{array}[t]{@{}ll}
\states_{L(\mathbf{T})} = & \{\ac(f,F,P) \setsep
     f \in \mathcal{F}, F\in\effectset, P \in \states_{\mathbf{T}} \}
     \\
     & \cup\;
     \{  \ef(F,P) \setsep F\in\effectset, P \in \states_{\mathbf{T}} \}
     \end{array}
$

\item $\predicates_{L(\mathbf{T})} = \predicates_{\mathbf{T}}$

\item $\ac(f,F,P) \vdash_{L(\mathbf{T})} \varphi$ if $P \vdash_\mathbf{T} \varphi$, and $\ef(F,P) \vdash_{L(\mathbf{T})} \varphi$ never holds.

\item $\actions_{L(\mathbf{T})} = \actions_\mathbf{T} \uplus \mathcal{F}$.

\item $\bn_{L(\mathbf{T})}(\alpha) = \bn_\mathbf{T}(\alpha)$ for $\alpha \in \actions_\mathbf{T}$; $\bn_{L(\mathbf{T})}(f) = \emptyset$ for $f \in \mathcal{F}$.

\item The transitions in $L(\mathbf{T})$ are of two kinds. If in {\bf T} we have $P \trans{\alpha} P'$ with $\bn(\alpha) \freshin f,F$, then in $L(\mathbf{T})$ there is a transition $\ac(f,F,P) \trans{\alpha} \ef(L(\alpha,F,f),P')$.  Additionally, for each $f \in F$ it holds $\ef(F,P) \trans{f} \ac(f, F, f(P))$.
\end{itemize}
\end{defi}

The intuition is that states of kind~\ac{} can perform ordinary actions, and states of kind~\ef{} can commit effects.  The analogous transform of modal formulas in {\bf T} to formulas in $L(\mathbf{T})$ simply replaces effects by actions: $L(\effect{f}A) = \may{f}L(A)$, and~$L$ is homomorphic on all other formula constructors.

The $L$-transform preserves satisfaction of formulas in the following sense:

\begin{thm}
  \label{thm:lfformulas}
  Assume $A \in {\mathcal{A}}^{F/L}$.  Then
  $\valid{P}{A}$\ \ iff\ \ $\valid{\ef(F,P)}{L(A)}$.
\end{thm}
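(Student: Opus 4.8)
The plan is to prove the biconditional by structural induction on the formula $A \in \mathcal{A}^{F/L}$, stating the induction hypothesis uniformly over \emph{all} admissible effect sets: for every $F \in \effectset$ with $A \in \mathcal{A}^{F/L}$ and every state $P$, $\valid{P}{A}$ iff $\valid{\ef(F,P)}{L(A)}$. Quantifying over $F$ is essential, because the modality clause of Definition~\ref{def:fl-formulas} passes from $\mathcal{A}^{F/L}$ to $\mathcal{A}^{L(\alpha,F,f)/L}$, so the effect index changes as we descend into subformulas; treating $F$ as a parameter of the hypothesis lets me reindex freely when I reach a proper subformula. Note also that $L(A)$ depends only on the term structure of $A$, not on $F$, so a single formula term may inhabit several $\mathcal{A}^{F/L}$, all covered at once.

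First I would dispatch the propositional and atomic cases. For the base case $A = \effect{f}{\varphi}$ with $f \in F$, I unfold the left-hand side with Definitions~\ref{def:consequence} and~\ref{def:validity} to $\entails{f(P)}{\varphi}$. On the right, $L(A) = \may{f}\varphi$, and by Definition~\ref{def:lftransform} the unique $f$-labelled transition out of $\ef(F,P)$ is $\ef(F,P) \trans{f} \ac(f,F,f(P))$ (which exists precisely because $f \in F$; and $\bn(f) = \emptyset$, so there is no binding to manage). Hence $\valid{\ef(F,P)}{\may{f}\varphi}$ reduces to $\valid{\ac(f,F,f(P))}{\varphi}$, and the clause $\ac(f,F,f(P)) \vdash_{L(\mathbf{T})} \varphi \iff f(P) \vdash_{\mathbf{T}} \varphi$ makes both sides equal to $\entails{f(P)}{\varphi}$. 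The conjunction and negation cases are immediate, since $L$ is homomorphic on these constructors and both validity and the induction hypothesis distribute over them.

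The crux is the modality case $A = \effect{f}{\may{\alpha} A'}$ with $A' \in \mathcal{A}^{L(\alpha,F,f)/L}$, $f \in F$ and $\bn(\alpha) \freshin f,F$, where $L(A) = \may{f}\may{\alpha}L(A')$. Here I would unwind the right-hand side in two steps mirroring the effect-then-action structure of $L(\mathbf{T})$. The outer $\may{f}$ again leads uniquely to $\ac(f,F,f(P))$. From there, by the first transition clause of Definition~\ref{def:lftransform}, the $\alpha$-successors of $\ac(f,F,f(P))$ are exactly the states $\ef(L(\alpha,F,f),P')$ for which $f(P) \trans{\alpha} P'$ holds in $\mathbf{T}$. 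Thus $\valid{\ef(F,P)}{L(A)}$ holds iff there is $P'$ with $f(P) \trans{\alpha} P'$ and $\valid{\ef(L(\alpha,F,f),P')}{L(A')}$. Applying the induction hypothesis to the structurally smaller $A' \in \mathcal{A}^{L(\alpha,F,f)/L}$, with effect set $L(\alpha,F,f)$, rewrites the latter as $\valid{P'}{A'}$; and $\exists P'.\, f(P)\trans{\alpha}P' \wedge \valid{P'}{A'}$ is exactly $\valid{f(P)}{\may{\alpha}A'}$, i.e.\ $\valid{P}{\effect{f}{\may{\alpha}A'}}$ by Definition~\ref{def:consequence}. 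Every step is reversible, so the biconditional drops out directly.

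I expect the main obstacle to be the bookkeeping of $\alpha$-conversion and freshness rather than the logical skeleton. I must check that the side-condition $\bn(\alpha)\freshin f,F$ attached to the formula (Definition~\ref{def:fl-formulas}) coincides with the one attached to the $L(\mathbf{T})$-transition (Definition~\ref{def:lftransform}), and that choosing a representative of $\may{\alpha}A'$ with $\bn(\alpha)$ additionally fresh for $f(P)$ (as the validity clause permits) is the same choice needed at the state $\ac(f,F,f(P))$, whose support is contained in $\n(f)\cup\n(F)\cup\n(f(P))$. As a preliminary I would record that the transform $L$ is equivariant and respects $\alpha$-equivalence of formulas; this holds because $L$ is homomorphic and preserves binding names ($\bn_{L(\mathbf{T})}=\bn_{\mathbf{T}}$ on old actions and $\bn_{L(\mathbf{T})}(f)=\emptyset$), so $L(A)$ is well-defined on $\alpha$-equivalence classes and the induction is meaningful.
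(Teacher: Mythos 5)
Your proposal is correct and follows essentially the same route as the paper's proof: structural induction over $\mathcal{A}^{F/L}$ with the effect set as a parameter of the induction hypothesis, the key cases unwound through the effect-then-action structure of $L(\mathbf{T})$, uniqueness of the $f$-labelled transition justified by $\bn_{L(\mathbf{T})}(f)=\emptyset$, and the induction hypothesis applied at the reindexed set $L(\alpha,F,f)$. The freshness bookkeeping you flag is resolved in the paper by the same device you describe: choosing an alpha-variant with $\bn(\alpha)\freshin P$, which together with $\bn(\alpha)\freshin f,F$ gives $\bn(\alpha)\freshin f,F,f(P)$ as needed at the state $\ac(f,F,f(P))$.
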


\begin{proof}
  The proof has been formalised in Isabelle.  It is by induction over
  formulas in~$\mathcal{A}^{F/L}$, for arbitrary~$P$.  The cases
  conjunction and negation are immediate by induction.
  \begin{description}
  \item[Case $A=\effect{f}\varphi$] We then know~$f \in F$, and have
    $\valid{P}{\effect{f}\varphi}$ iff $f(P)\vdash_\mathbf{T}\varphi$
    iff, by construction of $\vdash_{L(\mathbf{T})}$,
    $\valid{\ac(f,F,f(P))}{\varphi}$.  Since $\ef(F,P) \trans{f}
    \ac(f, F, f(P))$, it follows that
    $\valid{\ef(F,P)}{\may{f}\varphi} = L(\effect{f}\varphi)$.
    Conversely, from $\valid{\ef(F,P)}{\may{f}\varphi}$ we obtain~$P'$
    with $\ef(F,P) \trans{f} P'$ and $\valid{P'}{\varphi}$.  Since
    $\bn_{L(\mathbf{T})}(f) = \emptyset$, there are no other
    alpha-variants of this transition.  It follows that $P' =
    \ac(f,F,f(P))$.

  \item[Case $A = \effect{f}\may{\alpha}A'$] We then know $A' \in
    \mathcal{A}^{L(\alpha,F,f)/L}$ and $f \in F$ and
    $\bn(\alpha)\freshin f,F$.  Without loss of generality we assume
    also $\bn(\alpha) \freshin P$, otherwise just find an
    alpha-variant where this holds.

    Assume $\valid{P}{\effect{f}\may{\alpha}A'}$.  Then $\exists P'.\,
    f(P) \trans{\alpha} P'$ and $\valid{P'}{A'}$.  We thus have
    $\valid{\ef(L(\alpha,F,f),P')}{L(A')}$ by induction.  With
    $\ac(f,F,f(P)) \trans{\alpha} \ef(L(\alpha,F,f),P')$ we have
    $\valid{\ac(f,F,f(P))}{\may{\alpha}L(A')}$.  Since $\ef(F,P)
    \trans{f} \ac(f,F,f(P))$, it follows that
    $\valid{\ef(F,P)}{\may{f}\may{\alpha}L(A')} =
    L(\effect{f}\may{\alpha}A')$.

    Conversely, from $\valid{\ef(F,P)}{\may{f}\may{\alpha}L(A')}$ we
    obtain~$P'$ with $\ef(F,P) \trans{f} P'$ and
    $\valid{P'}{\may{\alpha}L(A')}$.  Since $\bn_{L(\mathbf{T})}(f) =
    \emptyset$, there are no other alpha-variants of this transition.
    It follows that $P' = \ac(f,F,f(P))$.  Note that $\bn(\alpha)
    \freshin f,F,f(P)$.  By construction of~$L(\mathbf{T})$ we
    obtain~$P''$ with $\ac(f,F,f(P)) \trans{\alpha}
    \ef(L(\alpha,F,f),P'')$ and $f(P)\trans{\alpha}P''$ and
    $\valid{\ef(L(\alpha,F,f),P'')}{L(A')}$.  Thus $\valid{P''}{A'}$
    by induction.  Hence $\valid{f(P)}{\may{\alpha}A'}$.  Hence
    $\valid{P}{\effect{f}\may{\alpha}A'}$.\qedhere
  \end{description}
\end{proof}

As an immediate corollary we get:
\begin{thm}
  \label{thm:lftransform}
  Let the $L$-transform be as above. Then $\ef(F,P) \bisim \ef(F,Q)
  \Longrightarrow P \fbisim Q$.
\end{thm}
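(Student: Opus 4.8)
The plan is to read this as a genuine corollary by chaining together three results already in hand, using the formula translation $L(\cdot)$ of Definition~\ref{def:lftransform} as the bridge between the two transition systems. Concretely, I would prove the implications
\[
\ef(F,P) \bisim \ef(F,Q)
\;\Longrightarrow\; \ef(F,P) \logeq \ef(F,Q)
\;\Longrightarrow\; P \flogeq Q
\;\Longrightarrow\; P \fbisim Q.
\]
For the first arrow I apply Theorem~\ref{thm:bisimlog}, which is adequacy for any nominal transition system, instantiated at $L(\mathbf{T})$: from $\ef(F,P) \bisim \ef(F,Q)$ we obtain that $\ef(F,P)$ and $\ef(F,Q)$ satisfy exactly the same formulas of the logic over $L(\mathbf{T})$.

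The middle arrow is where the $L$-transform does its work. I would take an arbitrary $A \in {\mathcal{A}}^{F/L}$ and consider its image $L(A)$, which is a formula of the logic over $L(\mathbf{T})$. The logical equivalence just obtained gives $\valid{\ef(F,P)}{L(A)}$ iff $\valid{\ef(F,Q)}{L(A)}$. Now Theorem~\ref{thm:lfformulas}, applied once to $P$ and once to $Q$, rewrites each side: $\valid{P}{A}$ iff $\valid{\ef(F,P)}{L(A)}$ and $\valid{Q}{A}$ iff $\valid{\ef(F,Q)}{L(A)}$. Composing the three biconditionals yields $\valid{P}{A}$ iff $\valid{Q}{A}$, and since $A$ ranged over all of ${\mathcal{A}}^{F/L}$ this is exactly $P \flogeq Q$. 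Finally, the $\Leftarrow$ direction of Theorem~\ref{thm:bisimloglf} delivers $P \fbisim Q$, completing the chain.

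There is no deep obstacle here, since the substantive induction has already been discharged in Theorem~\ref{thm:lfformulas}; the work is purely to compose it with the two adequacy theorems. The one point I would take care over is that the adequacy result for $L(\mathbf{T})$ presupposes that the fixed bounding cardinal $\kappa$ satisfies $\kappa > \lvert \states_{L(\mathbf{T})}\rvert$, and the state set of $L(\mathbf{T})$ is strictly larger than that of $\mathbf{T}$ (it carries an index from $\mathcal{F}$ and from $\effectset$). So I would either assume $\kappa$ was chosen large enough to bound $\lvert\states_{L(\mathbf{T})}\rvert$ as well, or remark that enlarging $\kappa$ does not affect satisfaction of the formulas actually involved. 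I would also note why the statement is one-directional: logical equivalence over $L(\mathbf{T})$ tests \emph{all} formulas of that logic, which is strictly more discriminating than $\flogeq$ (which only sees formulas in the image of $L$), so the forward implication is the one that follows for free, whereas the converse would require separately exhibiting a bisimulation on $L(\mathbf{T})$ from an $F/L$-bisimulation on $\mathbf{T}$.
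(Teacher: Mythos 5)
Your proposal is correct and is essentially the paper's own proof: the same three-step chain, Theorem~\ref{thm:bisimlog} on $L(\mathbf{T})$, then Theorem~\ref{thm:lfformulas} on both sides to transfer logical equivalence from $\ef(F,P) \logeq \ef(F,Q)$ down to $P \flogeq Q$, then the $\Leftarrow$ direction of Theorem~\ref{thm:bisimloglf}. Your caution about $\kappa$ versus $\lvert\states_{L(\mathbf{T})}\rvert$ is harmless but not actually needed here: the only result you invoke on $L(\mathbf{T})$ is Theorem~\ref{thm:bisimlog}, whose proof is a structural induction on formulas and uses no cardinality bound on states; the bound is only required for the converse adequacy direction, which you apply on $\mathbf{T}$ where the standing assumption $\kappa > \lvert\states_{\mathbf{T}}\rvert$ suffices.
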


\begin{proof}
  The proof has been formalised in Isabelle.  We first apply
  Theorem~\ref{thm:bisimlog} to get $\ef(F,P) \logeq \ef(F,Q)$, then
  Theorem~\ref{thm:lfformulas} to get $\ef(F,P) \flogeq \ef(F,Q)$, and
  finally Theorem~\ref{thm:bisimloglf} to get $P \fbisim Q$.
\end{proof}

The converse cannot be proved in the same way since~$L$ is not a
surjection on formulas. As an example, the formula $\may{f}\top$ is not in the range of $L$, since any effect in any formula in ${\mathcal{A}}^{F/L}$ must be followed by a state predicate or an action.
Nevertheless the converse holds:

\begin{thm}
  Let the $L$-transform be as above. Then $P \fbisim Q \Longrightarrow
  \ef(F,P) \bisim \ef(F,Q)$.
\end{thm}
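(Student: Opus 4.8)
The plan is to prove this directly by constructing a bisimulation on $L(\mathbf{T})$ out of a witnessing $L$-bisimulation for $\mathbf{T}$, rather than by reversing the chain of logical characterisations as in the previous theorem (which fails here precisely because $L$ is not surjective on formulas). So suppose $P \fbisim Q$, and let $\{R_G\}_{G \in \effectset}$ be a (general) $L$-bisimulation in the sense of Definition~\ref{def:generalalternatebisim} with $R_F(P,Q)$. I would define a relation $\mathcal{R}$ on the states of $L(\mathbf{T})$ by
\[
\mathcal{R} = \{(\ef(G, P'), \ef(G, Q')) \setsep R_G(P', Q')\} \;\cup\; \{(\ac(f, G, f(P')), \ac(f, G, f(Q'))) \setsep R_G(P', Q'),\ f \in G\}.
\]
Since every $R_G$ is symmetric, $\mathcal{R}$ is symmetric. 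The claim is that $\mathcal{R}$ is a bisimulation in the sense of Definition~\ref{def:bisim}; the theorem then follows because $R_F(P,Q)$ gives $(\ef(F,P), \ef(F,Q)) \in \mathcal{R}$, hence $\ef(F,P) \bisim \ef(F,Q)$.

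For the $\ef$-pairs, static implication is vacuous since $\ef(G,P') \vdash \varphi$ never holds. For simulation, the only transitions out of $\ef(G,P')$ are $\ef(G,P') \trans{f} \ac(f, G, f(P'))$ with $f \in G$; as $\bn(f) = \emptyset$ the freshness side-condition is trivial, the matching transition $\ef(G,Q') \trans{f} \ac(f, G, f(Q'))$ exists because $f \in G$, and the two targets are related by the second clause of $\mathcal{R}$. For the $\ac$-pairs, static implication at $\ac(f,G,f(P'))$ reduces to $f(P') \vdash \varphi \Rightarrow f(Q') \vdash \varphi$, which is exactly the static-implication clause of Definition~\ref{def:generalalternatebisim} for $R_G$. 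For simulation, a transition $\ac(f,G,f(P')) \trans{\alpha} \ef(L(\alpha,G,f), P'')$ arises, by Definition~\ref{def:lftransform}, from a $\mathbf{T}$-transition $f(P') \trans{\alpha} P''$ with $\bn(\alpha) \freshin f, G$; invoking the simulation clause of Definition~\ref{def:generalalternatebisim} yields $Q''$ with $f(Q') \trans{\alpha} Q''$ and $R_{L(\alpha,G,f)}(P'', Q'')$, and hence the matching transition $\ac(f,G,f(Q')) \trans{\alpha} \ef(L(\alpha,G,f), Q'')$ together with the relatedness of its target to $\ef(L(\alpha,G,f), P'')$ by the first clause of $\mathcal{R}$.

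The one point requiring care — and the step I expect to be the main obstacle — is the freshness bookkeeping in the $\ac$-simulation case. Definition~\ref{def:bisim} hands us only $\bn(\alpha) \freshin \ac(f,G,f(Q'))$, whereas the simulation clause of Definition~\ref{def:generalalternatebisim} demands $\bn(\alpha) \freshin f(Q'), G, f$. I would bridge this using the fact that $\ac$ is injective and equivariant, so $\n(\ac(f,G,f(Q'))) = \n(f) \cup \n(G) \cup \n(f(Q'))$ and freshness for the whole yields freshness for each component; equivalently, one may first alpha-convert the transition so that $\bn(\alpha)$ avoids $f, G, f(P'), f(Q')$, exactly as in the proof of Theorem~\ref{thm:logbisim}. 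Note also that no finite-support requirement on $\mathcal{R}$ is needed, since Definition~\ref{def:bisim} asks only for a symmetric relation satisfying the two clauses. With the freshness side-conditions lined up, the remaining verifications are routine, and this completes the construction of $\mathcal{R}$ and hence the proof.
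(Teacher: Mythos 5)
Your proof is correct and follows essentially the same route as the paper's: the paper likewise constructs a candidate bisimulation on $L(\mathbf{T})$ consisting of the pairs $(\ef(F,P),\ef(F,Q))$ and $(\ac(f,F,f(P)),\ac(f,F,f(Q)))$ with $f \in F$, verifies symmetry, the vacuous/static-implication clauses, and the two simulation cases exactly as you do, and handles freshness identically (from $\bn(\alpha) \freshin T$ it concludes $\bn(\alpha) \freshin f(Q),F,f$). The only cosmetic difference is that you instantiate the relation with an arbitrary witnessing $L$-bisimulation family $\{R_G\}$, whereas the paper uses the bisimilarity relations $\stackrel{G/L}{\sim}$ themselves.
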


\begin{proof}
  The proof has been formalised in Isabelle.  Define a binary
  relation~$R$ on~$\states_{L(\mathbf{T})}$ by including $(\ef(F,P),
  \ef(F,Q)) \in R$ and $(\ac(f,F,f(P)), \ac(f,F,f(Q))) \in R$ for all
  $f$, $F$, $P$, $Q$ with $f \in F$ and $P \fbisim Q$.  We prove
  that~$R$ is a bisimulation.

  Symmetry is immediate since~$\mathop{\fbisim}$ is symmetric.  Now
  assume $R(S,T)$.
  \begin{enumerate}
  \item{Static implication.}  Assume $S \vdash_{L(\mathbf{T})}
    \varphi$.  Then $S = \ac(f,F,f(P))$ for some~$F$, $f \in F$
    and~$P$ with $f(P) \vdash_{\mathbf{T}} \varphi$, and $T =
    \ac(f,F,f(Q))$ with $P \fbisim Q$.  Thus $f(Q) \vdash_{\mathbf{T}}
    \varphi$, hence $T \vdash_{L(\mathbf{T})} \varphi$.
  \item{Simulation.} Assume $S \trans{\alpha} S'$.  By construction
    of~$L(\mathbf{T})$ there are two cases:
    \begin{itemize}
    \item $S = \ef(F,P) \trans{f} \ac(f,F,f(P))=S'$ and $f \in F$.
      Then $T = \ef(F,Q)$ with $P \fbisim Q$.  We get $T \trans{f}
      \ac(f,F,f(Q)) =: T'$ and $R(S',T')$.  Note here that
      $\bn_{L(\mathbf{T})}(f) = \emptyset$.
    \item $S = \ac(f,F,f(P)) \trans{\alpha} \ef(L(\alpha, F,f),P') =
      S'$ and $f(P) \trans{\alpha} P'$ and $f \in F$.  Then $T =
      \ac(f,F,f(Q))$ with $P \fbisim Q$.  We may assume $\bn(\alpha)
      \freshin T$, hence also $\bn(\alpha) \freshin f(Q),F,f$.  Then
      by simulation $\exists Q' \,.\, f(Q) \trans{\alpha} Q'$ and $P'
      \stackrel{L(\alpha,F,f)/L}{\sim} Q'$.  Thus
      we have
        $T \trans{\alpha}
      \ef(L(\alpha, F,f),Q') =: T'$, and $R(S',T')$ as required. \qedhere
    \end{itemize}
  \end{enumerate}
\end{proof}

A consequence is that for the variants of bisimilarity considered in
Section~\ref{sec:bisimvariants} no new machinery is really needed;
they can all be obtained by extending the transition relation. For the
examples on the monadic pi-calculus, there would be substitution
transitions, which take a state to another state where the
substitution has been performed. Here the $L$ function determines
exactly which substitutions are applicable at which states, and by
varying it we can obtain e.g.\ late, open and hyperbisimulation.

\section{Unobservable actions}
\label{section:weak}
The logics and bisimulations considered so far are all of the strong variety, in the sense that all transitions are regarded as equally significant. In many models of concurrent computation there is a special action which is {\em unobservable} in the sense that in a bisimulation, and also in the definition of the action modalities, the presence of extra such transitions does not matter. This leads to notions of {\em weak} bisimulation and accompanying weak modal logics. For example, a process that has no transitions is weakly bisimilar to any process that has only unobservable transitions, and these satisfy the same weak modal logic formulas. We shall here introduce these ideas into the nominal transition systems. One main source of complication over similar treatments in process algebras turns out to be the presence of state predicates.

To cater for unobservable transitions assume a special action $\tau$ with empty support. The following definitions are standard:
\begin{defi} \label{def:weaktransition}\
\begin{enumerate}
\item $P \Trans{} P'$ is defined by induction to mean $P = P'$ or $P \trans{\tau} \circ \Trans{} P'$.
\item $P \Trans{\alpha} P'$ means $P \Trans{} \circ \trans{\alpha} \circ \Trans{} P'$.
\item $P \HTrans{\alpha} P'$ means $P \Trans{} P'$ if $\alpha=\tau$ and $P \Trans{\alpha} P'$ otherwise.
\end{enumerate}
\end{defi}
Intuitively $P \HTrans{\alpha} P'$ means that $P$ can evolve to $P'$ through transitions with the only observable content $\alpha$. We call this a weak action $\alpha$ and it will be the basis for the semantics in this section.

\subsection{Weak bisimilarity}
\label{sec:weakbisim}
The standard way to define weak bisimilarity is to weaken $Q \trans{\alpha} Q'$ to $Q \HTrans{\alpha} Q'$ in the simulation requirement. This results in the weak simulation criterion:
\begin{defi}
\label{def:weaksim}
A binary relation $R$ on $\states$ is a {\em weak simulation} if $R(P,Q)$ implies that for all $\alpha, P'$ with $\bn(\alpha)\freshin Q$ there exists $Q'$ such that
\[\mbox{if $P \trans{\alpha} P'$ then $Q \HTrans{\alpha} Q'$ and $R(P',Q')$}\]
\end{defi}

However, just replacing the simulation requirement with weak simulation in Definition~\ref{def:bisim} will not suffice. The reason is that through the static implication criterion in Definition~\ref{def:bisim}, an observer can still observe the state predicates directly, and thus distinguish between a state that satisfies $\varphi$ and a state that does not but can silently evolve to another state that satisfies $\varphi$:

\medskip
\begin{exa}~
  \label{ex:1}

  \begin{center}
\includegraphics{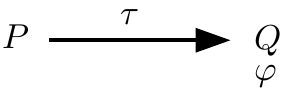}
  \end{center}

\noindent
Certainly $\{(P,Q), (Q,Q)\}$ is a weak simulation according to Definition~\ref{def:weaksim}. But $P \not\vdash \varphi$ and $Q \vdash \varphi$, thus they are in no static implication. We argue that if $\varphi$ is the {\em only} state predicate (in particular, there is no predicate $\neg\varphi$), then the only test that an observer can apply is ``if $\varphi$ then \ldots,'' and here $P$ and $Q$ will behave the same; $P$ can pass the test after an unobservable delay.  Thus $P$ and $Q$ should be deemed weakly bisimilar, and static implication as in Definition~\ref{def:bisim} is not appropriate.
\end{exa}

Therefore we need a weak counterpart of static implication where $\tau$ transitions are admitted before checking predicates, that is, if $P \vdash \varphi$ then $Q \Trans{} Q' \vdash \varphi$.  In other words, $Q$ can unobservably evolve to a state that satisfies~$\varphi$.  However, this is not quite enough by itself. Consider the following example where $P\vdash\varphi_0$, $P\vdash\varphi_1$, $R\vdash\varphi_1$ and $Q\vdash\varphi_0$, with transitions $P\trans{\tau}R$ and $Q\trans{\tau}R$:

\medskip
\begin{exa}~
  \label{ex:2}

  \begin{center}
\includegraphics{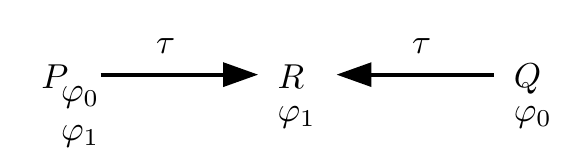}
  \end{center}

\noindent
Here we do not want to regard $P$ and $Q$ as weakly bisimilar. They  do have the same transitions and can satisfy the same predicates, possibly after a $\tau$ transition.  But an observer of $P$ can first determine that $\varphi_1$ holds, and then determine that~$\varphi_0$ holds.  This is not possible for $Q$: an observer who concludes~$\varphi_1$ must already have evolved to $R$.
\end{exa}

Similarly, consider the following example where the only difference between $P$ and $Q$ is that $\entails{P}{\varphi}$ but not $\entails{Q}{\varphi}$:

\medskip
\begin{exa}~
  \label{ex:3}

  \begin{center}
\includegraphics{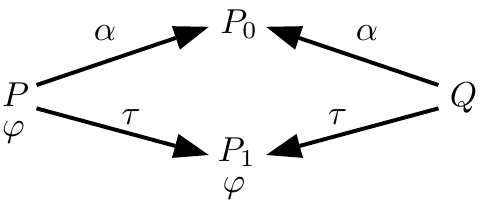}
  \end{center}

\noindent
Again we do not want to regard $P$ and $Q$ as weakly bisimilar.  Intuitively, an observer of~$Q$ who determines that $\varphi$ holds must already be at $P_1$  and thus have preempted the possibility to do~$\alpha$, whereas for~$P$, the predicate~$\varphi$ holds while retaining the possibility to do~$\alpha$.  For instance, $P$ in parallel with a process of kind ``if $\varphi$ then $\gamma$'' can perform~$\gamma$ followed by~$\alpha$, but~$Q$ in parallel with the same cannot do that sequence.
\end{exa}

In conclusion, the weak counterpart of static implication should allow the simulating state to proceed through unobservable actions to a state that {\em both} satisfies the same  predicate {\em and} continues to bisimulate. This leads to the following:
\begin{defi}
  \label{def:weakstaticimplication}
  A binary relation $R$ on states is a {\em weak static implication} if $R(P,Q)$ implies that for all $\varphi$ there exists $Q'$ such that
  \[\mbox{if $P \vdash \varphi $ then $Q \Trans{} Q'$ and $Q' \vdash \varphi$ and $R(P,Q')$ }\]
\end{defi}

\begin{defi}
  \label{def:weakbisim}
  A {\em weak bisimulation} is a symmetric binary relation on states satisfying both weak simulation and weak static implication. We write $P \wbisim Q$ to mean that there exists a weak bisimulation $R$ such that $R(P,Q)$.
\end{defi}

In Example~\ref{ex:1}, $\{(P, Q), (Q,P), (Q,Q)\}$ is a weak bisimulation.  In Examples~\ref{ex:2} and~\ref{ex:3}, $P$ and $Q$ are not weakly bisimilar.

It is interesting to compare Definition~\ref{def:weakbisim} with weak bisimilarities defined for psi-calculi~\cite{lics10}.  A psi-calculus contains a construct of kind ``if $\varphi$ then~\ldots''\ to test if a state predicate is true.  These constructs may be nested; for instance, ``if $\varphi_0$ then if $\varphi_1$ then~\ldots''\ effectively tests if both $\varphi_o$ and $\varphi_1$ are true simultaneously.  If state predicates are closed under conjunction, Definition~\ref{def:weakbisim} coincides with the definition of simple weak bisimulation in~\cite{lics10}.  In general, however, Definition~\ref{def:weakbisim} is less discriminating than in~\cite{lics10}.  Consider $P_0 \trans{\tau} P_1  \trans{\tau} P_0$ where for $i=0,1$: $P_i \vdash \varphi_i$.  Compare it to~$Q$ with no transitions where both $Q \vdash \varphi_0$ and $Q \vdash \varphi_1$:

\medskip
\begin{exa}~
  \label{ex:4}

  \begin{center}
\includegraphics{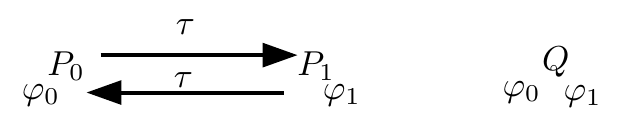}
  \end{center}

\noindent
Here all of $P_0$, $P_1$ and $Q$ are weakly bisimilar, unless the predicates are closed under conjunction, in which case the predicate $\varphi_0\wedge\varphi_1$ distinguishes between them.  In psi-calculi $Q$ would not be simply weakly bisimilar to $P_0$ or $P_1$ for the same reason. Thus the approach in the present paper is more general. In many cases it is reasonable to expect predicates to be closed under finite conjunctions, but there are circumstances when they are not, for example when the predicates can be checked only one at a time. 
  Consider for example a system where $\varphi_i$ means checking if the variable $i$ has value zero. If the system does not admit checking several variables simultaneously, then the predicates are not closed under conjunction, and there is no way to tell the systems in Example~\ref{ex:4} apart.
\end{exa}

We proceed to establish some expected properties of weak bisimilarity.

\begin{lem}
  \label{lem:weaktrans}
  If $P \wbisim Q$ and  $P \HTrans{\alpha} P'$ with $\bn(\alpha) \freshin Q$ then for some $Q'$ it holds $P' \wbisim Q'$ and  $Q \HTrans{\alpha} Q'$.
\end{lem}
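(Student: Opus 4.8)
The plan is to reduce the weak-transition simulation to two ingredients: a \emph{$\tau$-closure} property for silent moves, and a single application of weak simulation for the one observable step, with care over binding names. First I would record that $\wbisim$ is equivariant: if $R$ is a weak bisimulation then so is $\pi\cdot R$ (this follows from equivariance of $\trans{}$ and $\vdash$, exactly as in Proposition~\ref{bisimequiv}), so $P\wbisim Q$ implies $\pi\cdot P\wbisim\pi\cdot Q$. Next I would prove the auxiliary claim: if $P\wbisim Q$ and $P\Trans{}P'$ then there is $Q'$ with $Q\Trans{}Q'$ and $P'\wbisim Q'$. This goes by induction on the derivation of $P\Trans{}P'$ (Definition~\ref{def:weaktransition}(1)). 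The base case $P=P'$ is trivial. For the step $P\trans{\tau}P''\Trans{}P'$, since $\bn(\tau)=\emptyset\freshin Q$, weak simulation (Definition~\ref{def:weaksim}) yields $Q\HTrans{\tau}Q''$, i.e.\ $Q\Trans{}Q''$, with $P''\wbisim Q''$; the induction hypothesis then gives $Q''\Trans{}Q'$ with $P'\wbisim Q'$, and transitivity of $\Trans{}$ closes the case.

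\textbf{Main argument.} For the lemma itself I would split on whether $\alpha=\tau$. If $\alpha=\tau$ then $P\HTrans{\tau}P'$ is just $P\Trans{}P'$ and the auxiliary claim delivers $Q'$ directly. If $\alpha\neq\tau$, then $P\HTrans{\alpha}P'$ unfolds to $P\Trans{}P_1\trans{\alpha}P_2\Trans{}P'$ (Definition~\ref{def:weaktransition}(2)). I would apply the auxiliary claim to $P\Trans{}P_1$ to obtain $Q\Trans{}Q_1$ with $P_1\wbisim Q_1$. To simulate the single step $P_1\trans{\alpha}P_2$ I need $\bn(\alpha)\freshin Q_1$, which is not guaranteed, since $Q_1$ may have acquired new names along its silent moves. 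I would therefore pick fresh names $c$ (as many as $\bn(\alpha)$) with $c\freshin\alpha,P_2,Q_1,Q$ and set $\pi=(\bn(\alpha)\,c)$; then $P_1\trans{\pi\cdot\alpha}\pi\cdot P_2$ denotes the same transition (alpha-conversion of residuals) and $\bn(\pi\cdot\alpha)=c\freshin Q_1$. Weak simulation now gives $Q_1\HTrans{\pi\cdot\alpha}Q_2$ with $\pi\cdot P_2\wbisim Q_2$. Applying $\pi$ to $P_2\Trans{}P'$ yields $\pi\cdot P_2\Trans{}\pi\cdot P'$, and the auxiliary claim gives $Q_2\Trans{}Q_3$ with $\pi\cdot P'\wbisim Q_3$. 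Concatenating $Q\Trans{}Q_1$, the expansion $Q_1\Trans{}\circ\trans{\pi\cdot\alpha}\circ\Trans{}Q_2$, and $Q_2\Trans{}Q_3$, and absorbing adjacent $\Trans{}$'s, gives $Q\Trans{\pi\cdot\alpha}Q_3$. Finally I would undo the renaming: applying $\pi$ and using $\pi\cdot Q=Q$ (since $\bn(\alpha),c\freshin Q$) gives $Q\Trans{\alpha}\pi\cdot Q_3$, i.e.\ $Q\HTrans{\alpha}\pi\cdot Q_3$, while equivariance of $\wbisim$ turns $\pi\cdot P'\wbisim Q_3$ into $P'\wbisim\pi\cdot Q_3$. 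Thus $Q'=\pi\cdot Q_3$ works.

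\textbf{Main obstacle.} The hard part is exactly this freshness bookkeeping for $\bn(\alpha)$ across the leading silent segment: because the paper deliberately omits the name-preservation principle, the support of $Q_1$ can grow beyond that of $Q$, so $\bn(\alpha)$ may clash with $Q_1$ even though it is fresh for $Q$. The swapping-and-equivariance manoeuvre above is what repairs this, and it is the one step that genuinely uses the nominal structure (alpha-conversion of transition residuals together with equivariance of $\Trans{}$ and of $\wbisim$); everything else is the routine closure of $\wbisim$ under silent moves.
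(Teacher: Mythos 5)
Your proof is correct and follows essentially the same route as the paper's: the $\tau$-case is proved by induction on the derivation of $P \Trans{} P'$, and the non-$\tau$ case decomposes $P \HTrans{\alpha} P'$ into a silent prefix, one strong step, and a silent suffix, applying the $\tau$-case twice around a single use of weak simulation. The only difference is that you spell out the freshness bookkeeping explicitly (swap $\bn(\alpha)$ to fresh names $c$, simulate, then swap back using $\pi \cdot Q = Q$ and equivariance of $\wbisim$), where the paper compresses exactly this manoeuvre into ``without loss of generality assume $\bn(\alpha) \freshin Q_1$, otherwise just find an alpha-variant of the transition.''
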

\begin{proof}
  The proof has been formalised in Isabelle.  We first prove the lemma
  for the case $\alpha = \tau$ by induction on the derivation of $P
  \Trans{} P'$.  The base case is immediate.  In the inductive step,
  we have $P \trans{\tau} P'' \Trans{} P'$ for some~$P''$.  By weak
  simulation we obtain~$Q''$ with $Q \Trans{} Q''$ and $P'' \wbisim
  Q''$.  It then follows from the induction hypothesis that there
  exists~$Q'$ with $Q'' \Trans{} Q'$ and $P' \wbisim Q'$.

  In case $\alpha \neq \tau$, we obtain $P_1$, $P_2$ with $P \Trans{}
  P_1$ and $P_1 \trans{\alpha} P_2$ and $P_2 \Trans{} P'$.  By the
  previous case there exists~$Q_1$ with $Q \Trans{} Q_1$ and $P_1
  \wbisim Q_1$.  Without loss of generality we assume $\bn(\alpha)
  \freshin Q_1$, otherwise just find an alpha-variant of the
  transition where this holds.  By weak simulation we obtain~$Q_2$
  with $Q_1 \HTrans{\alpha} Q_2$ and $P_2 \wbisim Q_2$.  It then
  follows from the previous case that there exists~$Q'$ with $Q_2
  \Trans{} Q'$ and $P' \wbisim Q'$.
\end{proof}

\begin{lem}
  \label{lem:weakequiv}
  $\wbisim$ is an equivariant equivalence relation.
\end{lem}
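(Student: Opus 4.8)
The plan is to establish the four properties---equivariance, reflexivity, symmetry, and transitivity---separately, treating the first three as routine and concentrating effort on transitivity, where Lemma~\ref{lem:weaktrans} is the essential tool. Throughout, I would use the fact that $\wbisim$ is the union of all weak bisimulations (Definition~\ref{def:weakbisim}), so that to conclude $\wbisim$ has a closure property it suffices to exhibit a suitable weak bisimulation containing the pairs in question.

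For equivariance, I would observe that all the relations involved ($\trans{\alpha}$, $\Trans{}$, $\HTrans{\alpha}$, and $\vdash$) are equivariant, so that whenever $R$ is a weak bisimulation, so is $\pi \cdot R$; hence $P \wbisim Q$ implies $\pi \cdot P \wbisim \pi \cdot Q$. Reflexivity follows by checking that the identity relation is a weak bisimulation: weak simulation holds since $P \trans{\alpha} P'$ gives $P \HTrans{\alpha} P'$ with $(P',P')$ in the identity, and weak static implication holds via the base case $P \Trans{} P$. Symmetry is immediate, since by Definition~\ref{def:weakbisim} every weak bisimulation is symmetric.

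The substantial case is transitivity, which I would prove by showing that the relational composition $\wbisim \circ \wbisim$ is itself a weak bisimulation and hence contained in $\wbisim$. Symmetry of the composition is clear. For the weak simulation clause, suppose $P \wbisim Q \wbisim S$ and $P \trans{\alpha} P'$ with $\bn(\alpha) \freshin S$; after alpha-converting so that $\bn(\alpha)$ is also fresh for $Q$, weak simulation of $P \wbisim Q$ (Definition~\ref{def:weaksim}) yields $Q'$ with $Q \HTrans{\alpha} Q'$ and $P' \wbisim Q'$, and then Lemma~\ref{lem:weaktrans} applied to $Q \wbisim S$ transports this weak transition to $S \HTrans{\alpha} S'$ with $Q' \wbisim S'$, giving $(P',S') \in \wbisim \circ \wbisim$.

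The weak static implication clause is where I expect the main obstacle, since it requires chaining two distinct mechanisms rather than a single relay. Given $P \vdash \varphi$, weak static implication of $P \wbisim Q$ (Definition~\ref{def:weakstaticimplication}) produces $Q'$ with $Q \Trans{} Q'$, $Q' \vdash \varphi$, and $P \wbisim Q'$. Because $\tau$ has empty support, Lemma~\ref{lem:weaktrans} applies to $Q \wbisim S$ and the silent sequence $Q \Trans{} Q'$, yielding $S_1$ with $S \Trans{} S_1$ and $Q' \wbisim S_1$; a further appeal to weak static implication of $Q' \wbisim S_1$, using $Q' \vdash \varphi$, then produces $S'$ with $S_1 \Trans{} S'$, $S' \vdash \varphi$, and $Q' \wbisim S'$. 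Composing the silent sequences gives $S \Trans{} S'$, and $P \wbisim Q' \wbisim S'$ gives $(P,S') \in \wbisim \circ \wbisim$ as required. The delicate point, and the reason Lemma~\ref{lem:weaktrans} alone does not suffice, is that the predicate $\varphi$ must be \emph{re-established} after moving to a state bisimilar to $S$: transporting the silent path preserves bisimilarity but not satisfaction of $\varphi$, so weak static implication must be invoked a second time at $Q' \wbisim S_1$.
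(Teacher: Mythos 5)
Your proposal is correct and follows essentially the same route as the paper's proof: equivariance via $\pi \cdot R$ being a weak bisimulation, reflexivity via the identity relation, symmetry from the definition, and transitivity by showing $\wbisim \circ \wbisim$ is a weak bisimulation, using Lemma~\ref{lem:weaktrans} to transport weak transitions across the second bisimilarity and invoking weak static implication a second time to re-establish $\varphi$ at the far end. The paper's argument for the weak static implication clause is, modulo renaming of states, exactly your three-step chain (weak static implication, then Lemma~\ref{lem:weaktrans} on the silent sequence, then weak static implication again).
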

\begin{proof}
  The proof has been formalised in Isabelle.  Equivariance is a simple
  calculation, based on the observation that if~$R$ is a weak
  bisimulation, then $\pi \cdot R$ is a weak bisimulation.  To prove
  reflexivity of~$\wbisim$, we note that equality is a weak
  bisimulation.  Symmetry is immediate from Definition~\ref{def:weakbisim}.
  To prove transitivity, we show that the composition of~$\wbisim$
  with itself is a bisimulation.  Assume $P \wbisim S \wbisim Q$.  The
  weak simulation requirement is proved by considering a
  transition~$P\trans{\alpha} P'$ where~$\bn(\alpha)$ is fresh
  for~$Q$.  Without loss of generality we also assume $\bn(\alpha)
  \freshin S$, otherwise just find an alpha-variant of the transition
  where this holds.  Since~$\wbisim$ satisfies weak simulation, we
  obtain~$S'$ with $S \HTrans{\alpha} S'$ and $P' \wbisim S'$.  Since
  $S \wbisim Q$, Lemma~\ref{lem:weaktrans} then implies that there
  exists~$Q'$ with $Q \HTrans{\alpha} Q'$ and $S' \wbisim Q'$.  To
  prove weak static implication, assume $P\vdash\varphi$.  We use $P
  \wbisim S$ and the fact that~$\wbisim$ satisfies weak static
  implication to obtain some~$S'$ with $S \Trans{} S'$ and $P \wbisim
  S'$ and $S'\vdash\varphi$.  Since $S \wbisim Q$,
  Lemma~\ref{lem:weaktrans} then implies that there exists~$Q'$ with
  $Q \Trans{} Q'$ and $S' \wbisim Q'$.  By weak static implication
  of~$\wbisim$ we obtain~$Q''$ with $Q' \Trans{} Q''$ and $S' \wbisim
  Q''$ and $Q''\vdash\varphi$.  So we have $Q \Trans{} Q''$ and $P
  \wbisim\!\circ\!\wbisim Q''$, as required.
\end{proof}

\subsection{Weak logic}
\label{subsec:weak-logic}

We here define a Hennessy-Milner logic adequate for weak bisimilarity. Since weak bisimilarity identifies more states than strong bisimilarity, the logic needs to be correspondingly less expressive: it must not contain formulas that distinguish between weakly bisimilar states.
Our approach is to keep the definition of formulas (Definition~\ref{def:formulas}) and identify an adequate sublogic.

One main idea is to restrict the action modalities $\may{\alpha}$ to occur only in accordance with the requirement of a weak bisimulation, thus checking for $\HTrans{\alpha}$ rather than for $\trans{\alpha}$. We therefore define the derived {\em weak action} modal operator $\wmay{\alpha}$ in the following way, where $\may{\tau}^i A$ is defined to mean $A$ if $i=0$ and $\may{\tau} \may{\tau}^{i-1} A$ otherwise.

\begin{defi}[Weak action modality]
  \[ \wmay{\tau}A = \bigvee_{i \in \omega} \may{\tau}^i A \qquad \qquad \qquad \wmay{\alpha}A = \wmay{\tau}\may{\alpha}\wmay{\tau}A \quad \mbox{for $\alpha \neq \tau$}\]
\end{defi}

Note that in $\wmay{\alpha}A$ the names in $\bn(\alpha)$ are abstracted and bind into $\alpha$ and $A$.
It is straightforward to show (and formalize in Isabelle) that $\wmay{\alpha}A$ corresponds to the weak transitions used in the definition of weak bisimilarity:

\begin{prop}
  Assume $\bn(\alpha)\freshin P$.  Then
  \[\valid{P}{\wmay{\alpha} A}  \quad \mbox{iff} \quad \exists P'.\, P \HTrans{\alpha} P' \mbox{ and } \valid{P'}{A}\]
\end{prop}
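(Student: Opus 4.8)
The plan is to reduce the whole statement to a single clean lemma about the silent-closure modality $\wmay{\tau}$ and then compose. First I would establish the auxiliary equivalence
\[\valid{P}{\wmay{\tau}A} \quad\text{iff}\quad \exists P'.\, P \Trans{} P' \text{ and } \valid{P'}{A}.\]
Since $\wmay{\tau}A$ is by definition the disjunction $\bigvee_{i\in\omega}\may{\tau}^i A$, the semantics of disjunction gives that $\valid{P}{\wmay{\tau}A}$ holds exactly when $\valid{P}{\may{\tau}^i A}$ for some $i\in\omega$. A routine induction on $i$, using the $\may{\tau}$ clause of Definition~\ref{def:validity}, shows that $\valid{P}{\may{\tau}^i A}$ holds iff there is a chain $P = P_0 \trans{\tau} P_1 \trans{\tau}\cdots\trans{\tau} P_i$ with $\valid{P_i}{A}$. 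Crucially $\bn(\tau)=\emptyset$, so the last clause of Definition~\ref{def:validity} imposes no freshness side condition and the induction carries no alpha-conversion bookkeeping. Taking the union over all $i$ yields precisely the reflexive–transitive $\tau$-closure $P \Trans{} P'$ of Definition~\ref{def:weaktransition}. This already settles the case $\alpha=\tau$, because $\HTrans{\tau}$ is defined to be $\Trans{}$ and the hypothesis $\bn(\alpha)\freshin P$ is then vacuous.

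For $\alpha\neq\tau$ I would unfold $\wmay{\alpha}A = \wmay{\tau}\may{\alpha}\wmay{\tau}A$ and compose three steps: the auxiliary equivalence on the outer $\wmay{\tau}$, the $\may{\alpha}$ clause of Definition~\ref{def:validity} in the middle, and the auxiliary equivalence again on the inner $\wmay{\tau}$. Concretely, $\valid{P}{\wmay{\alpha}A}$ holds iff there is $P_1$ with $P\Trans{}P_1$ and $\valid{P_1}{\may{\alpha}\wmay{\tau}A}$; iff there are moreover $P_2$ with $P_1\trans{\alpha}P_2$ and $\valid{P_2}{\wmay{\tau}A}$; iff there are finally $P'$ with $P_2\Trans{}P'$ and $\valid{P'}{A}$. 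The resulting chain $P\Trans{}P_1\trans{\alpha}P_2\Trans{}P'$ is, by the definition of $\Trans{\alpha}$ and (for $\alpha\neq\tau$) of $\HTrans{\alpha}$, exactly $P\HTrans{\alpha}P'$, giving the claim.

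The main obstacle is the freshness bookkeeping in the middle $\may{\alpha}$ step. Definition~\ref{def:validity} evaluates $\may{\alpha}$ on a representative whose binding names are fresh for the \emph{current} state $P_1$, but $P_1$ is reached from $P$ by $\tau$-transitions and, since we do not assume name preservation, may itself involve names of $\bn(\alpha)$; so the representative supplied by the hypothesis $\bn(\alpha)\freshin P$ need not be usable at $P_1$. I would resolve this by working with the alpha-equivalence class $\may{\alpha}\wmay{\tau}A$ throughout and choosing, at $P_1$, a representative $\may{\alpha'}\wmay{\tau}A'$ whose binders are fresh for $P_1$ \emph{and} for $P$ simultaneously (possible since $P$, $P_1$, $\alpha$ and $A$ all have finite support). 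This produces the chain for $\alpha'$, and I then transport it back to $\alpha$ along the swapping permutation $\pi$ relating $(\alpha',A')$ to $(\alpha,A)$. Because $\pi$ moves only names in $\bn(\alpha)\cup\bn(\alpha')$, all of which are fresh for $P$, it fixes $P$ exactly by virtue of $\bn(\alpha)\freshin P$; so equivariance of $\models$ (the lemma following Definition~\ref{def:validity}), equivariance of the weak-transition relations of Definition~\ref{def:weaktransition}, and the alpha-conversion property of transitions stated after Definition~\ref{def:nominaltransitionsystem} turn the transported chain into a genuine witness $P\HTrans{\alpha}P''$ with $\valid{P''}{A}$. The converse direction is symmetric: given the chain on the right, I first alpha-convert the single $\trans{\alpha}$ transition so that $\bn(\alpha)$ is fresh for its source state, then read the three composed equivalences backwards.
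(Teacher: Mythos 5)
Your proposal is correct. Note that for this particular proposition the paper offers no written proof at all: it simply states that the correspondence ``is straightforward to show (and formalize in Isabelle)'' and moves on, so there is no in-text argument to compare against. Your decomposition is the natural realisation of what the authors defer to the formalisation: the auxiliary equivalence $\valid{P}{\wmay{\tau}A} \iff \exists P'.\, P \Trans{} P' \wedge \valid{P'}{A}$ (an induction on the disjunct index $i$, with no binding issues since $\bn(\tau)=\emptyset$), followed by composing outer $\wmay{\tau}$, middle $\may{\alpha}$, and inner $\wmay{\tau}$ to match $\Trans{}\circ\trans{\alpha}\circ\Trans{}$. The one genuinely delicate point --- that the representative of $\may{\alpha}\wmay{\tau}A$ supplied by the hypothesis $\bn(\alpha)\freshin P$ need not have binders fresh for the intermediate state $P_1$, precisely because the paper does not assume name preservation --- is something the paper's prose glosses over entirely, and you handle it soundly: choosing binders fresh for both $P$ and $P_1$, and transporting back along a swap $\pi$ that fixes $P$ because it moves only names in $\bn(\alpha)\cup\bn(\alpha')$, using equivariance of $\models$ and of the (weak) transition relations together with the alpha-conversion property of transitions. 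The converse direction as you sketch it also works, with the same caveat that the alpha-conversion of the $\trans{\alpha}$ step must choose names fresh for the formula as well as for the source and target states, so that the formula's alpha-equivalence class is preserved; your phrasing is terse there but the intent is clear and correct.
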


In particular, for $\alpha = \tau$, we have that $\wmay{\tau}A$ holds iff $A$ holds after zero or more $\tau$ transitions.

Thus a first step towards a weak sublogic is to replace $\may{\alpha}$ by $\wmay{\alpha}$ in Definition~\ref{def:formulas}.  By itself this is not enough; that sublogic is still too expressive.  For instance, the formula $\varphi$ asserts that $\varphi$ holds in a state; this holds for $Q$ but not for $P$ in Example~\ref{ex:1}, even though they are weakly bisimilar.

To disallow $\varphi$ as a weak formula we require that state predicates only occur guarded by a weak action $\wmay\tau$. This solves part of the problem. In Example~\ref{ex:1} we can no longer use $\varphi$ as a formula, and the formula $\wmay{\tau}\varphi$ holds of both $P$ and $Q$. Still, in Example~\ref{ex:1} there would be the formula $\wmay{\tau}\neg\varphi$ which holds for $P$ but not for $Q$, and in Example~\ref{ex:4} the formula $\wmay{\tau}(\varphi_0 \wedge \varphi_1)$ holds for $Q$ but not for $P_0$.  Clearly a logic adequate for weak bisimulation cannot have such formulas.  The more draconian restriction that state predicates occur {\em immediately} under $\wmay{\tau}$ would indeed disallow both $\wmay{\tau}\neg\varphi$ and $\wmay{\tau}(\varphi_0 \wedge \varphi_1)$ but would also disallow any formula distinguishing between $P$ and $Q$ in  Examples~\ref{ex:2} and~\ref{ex:3}.

A solution is to allow state predicates under $\wmay\tau$, and never directly under negation or in conjunction with another state predicate. The logic is:

\begin{defi}[Weak formulas]
The set of {\em weak formulas} is the sublogic of Definition~\ref{def:formulas} given by
\label{def:weaklogic}
\[
A \quad ::= \quad \conjunc A_i \casesep \neg A \casesep \wmay{\alpha}A  \casesep \wmay{\tau}(A\wedge\varphi)\]
\end{defi}

Note that since $P \HTrans\alpha \circ \Rightarrow P'$ holds iff $P \HTrans\alpha P'$ we have that $\wmay\alpha \wmay\tau A$ is logically equivalent to $\wmay\alpha A$.  We thus abbreviate $\wmay\alpha \wmay\tau (A \wedge \varphi)$ to $\wmay\alpha (A \wedge \varphi)$.  We also abbreviate $\wmay{\alpha}(\top \wedge \varphi)$ to $\wmay{\alpha}\varphi$.

Compared to Definition~\ref{def:formulas}, the state predicates can now only occur in formulas of the form $\wmay{\tau}(A\wedge\varphi)$, i.e., under a weak action, and not under negation or conjunction with another predicate.  For instance, in Example~\ref{ex:1} above, neither $\varphi$ nor $\wmay{\tau}\neg\varphi$ are weak formulas, and in fact there is no weak formula to distinguish between $P$ and $Q$.  Similarly, in Example~\ref{ex:4} $\wmay{\tau}(\varphi_0 \wedge \varphi_1)$ is not a weak formula, and no weak formula distinguishes between~$Q$ and $P_i$.

To argue that the logic still is expressive enough to provide distinguishing formulas for states that are not weakly bisimilar, consider Example~\ref{ex:2} and the formula $\wmay{\tau}((\wmay{\tau}\varphi_0) \wedge\varphi_1)$ which holds for $P$ but not for $Q$.  Similarly, in Example~\ref{ex:3} $\wmay{\tau}((\wmay{\alpha}\top) \wedge\varphi)$ holds for $P$ but not for $Q$.

\begin{defi}
  \label{def:weakequiv}
  Two states $P$ and $Q$ are {\em weakly logically equivalent}, written $P \wlogeq Q$, if for all weak formulas $A$ it holds that $\valid{P}{A}$ iff $\valid{Q}{A}$.
\end{defi}

\subsection{Logical adequacy and expressive completeness}
\label{subsec:weak-logical-adequacy-and-expressive-completeness}

We show that the logic defined in Section~\ref{subsec:weak-logic} is adequate for weak bisimilarity.  Moreover, every finitely supported set of states that is closed under weak bisimilarity can be described by a weak formula.

\begin{thm}
  \label{thm:wbisimlog}
  $P \wbisim Q\ \implies\ P \wlogeq Q$
\end{thm}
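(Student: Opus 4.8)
The plan is to prove the stronger biconditional $\valid{P}{A} \iff \valid{Q}{A}$ for every weak formula $A$, by structural induction on $A$, assuming $P \wbisim Q$. Since $\wbisim$ is symmetric (Lemma~\ref{lem:weakequiv}), in each inductive case it suffices to establish one implication; the reverse then follows by exchanging the roles of $P$ and $Q$. The two propositional cases are routine: for $A = \conjunc A_i$ validity of the conjunction unfolds to validity of each conjunct and the induction hypothesis transfers each; for $A = \neg A'$ we apply the hypothesis to $A'$ and negate.

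For the action case $A = \wmay{\alpha}A'$ I would first pick a representative of the bound names with $\bn(\alpha) \freshin P,Q$. Assuming $\valid{P}{\wmay{\alpha}A'}$, the characterisation of the weak action modality (the Proposition immediately preceding this theorem) yields $P'$ with $P \HTrans{\alpha} P'$ and $\valid{P'}{A'}$. Now Lemma~\ref{lem:weaktrans} applies, since $P \wbisim Q$ and $\bn(\alpha) \freshin Q$, giving $Q'$ with $Q \HTrans{\alpha} Q'$ and $P' \wbisim Q'$. The induction hypothesis on $A'$ turns $\valid{P'}{A'}$ into $\valid{Q'}{A'}$, and the same characterisation of $\wmay{\alpha}$ reassembles this into $\valid{Q}{\wmay{\alpha}A'}$.

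The interesting — and for me the crux — case is $A = \wmay{\tau}(A' \wedge \varphi)$, because here I must produce a single simulating state that simultaneously satisfies $\varphi$ \emph{and} still bisimulates $P'$, so that the induction hypothesis on $A'$ can be invoked. Assume $\valid{P}{\wmay{\tau}(A' \wedge \varphi)}$; unfolding the $\wmay{\tau}$ modality (note $\bn(\tau) = \emptyset$) gives $P'$ with $P \Trans{} P'$, $\valid{P'}{A'}$ and $\entails{P'}{\varphi}$. Applying Lemma~\ref{lem:weaktrans} with $\alpha = \tau$ produces $Q'$ with $Q \Trans{} Q'$ and $P' \wbisim Q'$. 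This $Q'$ need not satisfy $\varphi$, so I then invoke weak static implication (Definition~\ref{def:weakstaticimplication}) for the pair $P' \wbisim Q'$ at the predicate $\varphi$: from $\entails{P'}{\varphi}$ it yields $Q''$ with $Q' \Trans{} Q''$, $\entails{Q''}{\varphi}$ and $P' \wbisim Q''$. Composing $Q \Trans{} Q' \Trans{} Q''$ gives $Q \Trans{} Q''$, and the induction hypothesis on $A'$ applied to $P' \wbisim Q''$ with $\valid{P'}{A'}$ gives $\valid{Q''}{A'}$. Hence $\valid{Q''}{A' \wedge \varphi}$, and re-applying the $\wmay{\tau}$ characterisation yields $\valid{Q}{\wmay{\tau}(A' \wedge \varphi)}$.

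The whole difficulty is localised in this final case: it is exactly why weak static implication was designed to preserve bisimilarity alongside the predicate, and without the extra conjunct $P' \wbisim Q''$ the induction could not close. The other cases are mechanical, and the entire argument should transfer directly to a formalised structural induction once the two supporting results (Lemma~\ref{lem:weaktrans} and the weak-modality characterisation) are in place.
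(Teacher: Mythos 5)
Your proof is correct and follows essentially the same route as the paper's: structural induction on weak formulas with symmetry of $\wbisim$ handling the reverse implications, Lemma~\ref{lem:weaktrans} for the $\wmay{\alpha}$ case, and, in the crucial $\wmay{\tau}(A'\wedge\varphi)$ case, the same two-step argument of first applying Lemma~\ref{lem:weaktrans} and then weak static implication to obtain a state $Q''$ that simultaneously satisfies $\varphi$ and remains bisimilar to $P'$, so the induction hypothesis closes the case. Your closing remark correctly identifies why weak static implication was formulated to carry the bisimulation conjunct $R(P,Q')$ along with the predicate.
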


\begin{proof}
   The proof has been formalised in Isabelle. We prove by induction over weak formulas that $P \wbisim Q$ implies that $\valid{P}A$ iff $\valid{Q}A$. The cases for conjunction and negation are immediate by induction.

Case $\wmay{\alpha} A$: Assume
  $\valid{P}{\wmay{\alpha}A}$.  Then for some
  $\wmay{\alpha'}A' = \wmay{\alpha}A$, $\exists P' \,.\, P
  \HTrans{\alpha'} P'$ and $\valid{P'}{A'}$.  Without loss of
  generality we assume also $\bn(\alpha') \freshin Q$, otherwise just
  find an alpha-variant of~$\may{\alpha'}A'$ where this holds.  Then,
  by Lemma~\ref{lem:weaktrans}, $\exists Q' \,.\, Q \HTrans{\alpha'} Q'$ and $P' \wbisim Q'$.  By induction and~$\valid{P'}{A'}$ we get~$\valid{Q'}{A'}$,
  hence by definition~$\valid{Q}{\wmay{\alpha}A}$.  The proof
  of~$\valid{Q}{\wmay{\alpha}A} \implies \valid{P}{\wmay{\alpha}A}$ is
  symmetric, using the fact that~$P \wbisim Q$ entails~$Q \wbisim P$.

Case $A = \wmay\tau(B \wedge \varphi)$.
  Assume $P\wbisim Q$ and $P \models A$.  Then $P \Trans{} P'$ such that $P' \models B$ and $P' \vdash \varphi$. By $P \wbisim Q$ and Lemma~\ref{lem:weaktrans} we obtain~$Q'$ such that $Q \Trans{} Q'$ and $P' \wbisim Q'$. By weak static implication we obtain $Q''$ with $Q' \Trans{} Q''$ and $P' \wbisim Q''$ and $Q'' \vdash \varphi$. By induction $Q'' \models B$, thus $Q'' \models B \wedge \varphi$. From $Q \Trans{} Q' \Trans{} Q''$ we have $Q \Trans{} Q''$. Thus $Q \models \wmay{\tau}(B \wedge \varphi)$ as required.

The proof that $Q \models A$ implies $P \models A$ is symmetric, using the fact that $P \wbisim Q$ entails $Q \wbisim P$.
 \end{proof}
 
\begin{lem}
  \label{lemma:wdistsup}
  If $P\not\wlogeq Q$ then there exists a distinguishing weak formula $B$ for $P$ and $Q$ such that $\n(B) \subseteq \n(P)$.
\end{lem}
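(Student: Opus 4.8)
The plan is to mirror the proof of Lemma~\ref{lemma:distsup} almost verbatim, the only genuinely new ingredient being the observation that the orbit construction stays inside the weak sublogic of Definition~\ref{def:weaklogic}. Since $P \not\wlogeq Q$, there is by definition a weak formula on which $P$ and $Q$ differ; negating it if necessary (negation preserves weak formulas) we obtain a weak formula $A$ with $\valid{P}{A}$ and not $\valid{Q}{A}$. As in Lemma~\ref{lemma:distsup}, I would let $\Pi_P = \{\pi \setsep n \in \n(P) \Rightarrow \pi(n) = n\}$ be the group of permutations fixing $\n(P)$ pointwise, set $\mathcal{B} = \{\pi \cdot A \setsep \pi \in \Pi_P\}$ to be the $\Pi_P$-orbit of $A$, and take $B = \bigwedge \mathcal{B}$.

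First I would check that $\mathcal{B}$ is admissible as a conjunction. For the support bound, the argument of Lemma~\ref{lemma:distsup} carries over unchanged: if $a,b \freshin P$ and $\pi \in \Pi_P$ then $(a\,b) \circ \pi \in \Pi_P$, so $(a\,b) \cdot \mathcal{B} = \mathcal{B}$, giving $\n(\mathcal{B}) \subseteq \n(P)$. For the cardinality bound, there are only countably many finitely supported permutations of the countable name set, so $\lvert \mathcal{B} \rvert \leq \aleph_0 < \kappa$. Next I would verify that every $\pi \cdot A$ is again a weak formula: the grammar of Definition~\ref{def:weaklogic} is equivariant, since name permutation distributes over all formula constructors and $\tau$ has empty support, so that $\pi \cdot \wmay{\tau}(A' \wedge \varphi) = \wmay{\tau}(\pi \cdot A' \wedge \pi \cdot \varphi)$ is still of the permitted shape, and likewise for $\wmay{\alpha}$, $\neg$ and $\bigwedge$. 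Since the conjunction of a finitely supported, bounded set of weak formulas is itself a weak formula, $B$ is a well-formed weak formula with $\n(B) \subseteq \n(P)$.

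Finally I would establish the distinguishing property exactly as in Lemma~\ref{lemma:distsup}. For each $\pi \in \Pi_P$ we have $\pi \cdot P = P$ and, by equivariance of $\models$, $\valid{P}{\pi \cdot A}$; hence $\valid{P}{B}$. Since the identity permutation lies in $\Pi_P$, the conjunct $A$ itself occurs in $\mathcal{B}$, and from not $\valid{Q}{A}$ we conclude not $\valid{Q}{B}$.

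The only real obstacle — and it is a mild one — is confirming that the construction respects the sublogic, i.e.\ that each $\pi \cdot A$ remains a weak formula and that conjunction does not lead outside Definition~\ref{def:weaklogic}. This is precisely the extra step that appeared in Lemma~\ref{lemma:distsup2} for $\mathcal{A}^{F/L}$; here it amounts to equivariance of the weak-formula grammar together with the fact that conjunction is itself a weak-formula constructor. Everything else is a direct transcription of Lemma~\ref{lemma:distsup}.
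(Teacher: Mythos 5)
Your proposal is correct and takes essentially the same approach as the paper, which proves this lemma by exactly the same $\Pi_P$-orbit (hull) construction transcribed from Lemma~\ref{lemma:distsup}: $B = \bigwedge \{\pi \cdot A \setsep \pi \in \Pi_P\}$, with $\valid{P}{\pi\cdot A}$ by equivariance of $\models$ and $Q \not\models B$ via the identity permutation. The details you make explicit---negating to orient the distinguishing formula, the cardinality bound $\lvert\mathcal{B}\rvert \leq \aleph_0 < \kappa$, and closure of the weak-formula grammar under permutation---are glossed over in the paper's proof but are precisely the right points to check.
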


\begin{proof}
The proof has been formalised in Isabelle. Since $P\not\wlogeq Q$ there is a distinguishing weak formula $A$ for $P$ and $Q$.  Let $\Pi_P$ be the set of name permutations that leave $\n(P)$ invariant and choose $B = \bigwedge \{\pi \cdot A \,|\, \pi \in \Pi_P\}$.  In the terminology of Pitts~\cite{PittsNominalSets} ch.~5,  $B$ is the conjunction of $\mbox{hull}_{\n(P)} A$; this set is supported by $\n(P)$ (but not uniformly bounded).  Because~$\models$ is equivariant we get $\valid{P}{\pi \cdot A}$ for all conjuncts $\pi \cdot A$ of $B$, and since $Q \not\models A = \mbox{id} \cdot A$ we get $Q \not\models B$.
\end{proof}

\begin{thm}
  \label{thm:wlogbisim}
  $P \wlogeq Q\ \implies P \wbisim Q$
\end{thm}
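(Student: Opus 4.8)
The plan is to prove that weak logical equivalence $\wlogeq$ is itself a weak bisimulation, in direct analogy with Theorem~\ref{thm:logbisim}, which showed that $\logeq$ is a strong bisimulation. Once this is established, $P \wlogeq Q$ immediately yields $P \wbisim Q$ by Definition~\ref{def:weakbisim}. Symmetry of $\wlogeq$ is immediate from Definition~\ref{def:weakequiv}, so the work is to verify the two requirements of Definition~\ref{def:weakbisim}: weak simulation (Definition~\ref{def:weaksim}) and weak static implication (Definition~\ref{def:weakstaticimplication}). Both clauses I would prove by contradiction, manufacturing a distinguishing weak formula with controlled support via Lemma~\ref{lemma:wdistsup}, just as in the strong case.

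For weak simulation, I would assume $\wlogeq$ fails it: there exist $P \wlogeq Q$, $\alpha$, $P'$ with $\bn(\alpha) \freshin Q$ and $P \trans{\alpha} P'$, yet $P' \not\wlogeq Q'$ for every $Q'$ in $\mathcal{Q} = \{Q' \setsep Q \HTrans{\alpha} Q'\}$. Assuming without loss of generality also $\bn(\alpha) \freshin P$, Lemma~\ref{lemma:wdistsup} supplies for each such $Q'$ a distinguishing weak formula $B_{Q'}$ with $\n(B_{Q'}) \subseteq \n(P')$, so that $B = \bigwedge_{Q' \in \mathcal{Q}} B_{Q'}$ is a well-formed weak formula supported by $\n(P')$, satisfied by $P'$ but by no member of $\mathcal{Q}$. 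Then $\wmay{\alpha}B$ is a weak formula, and by the Proposition characterising $\wmay{\alpha}$ in terms of $\HTrans{\alpha}$ we get $\valid{P}{\wmay{\alpha}B}$ (since $P \trans{\alpha} P'$ gives $P \HTrans{\alpha} P'$ and $\valid{P'}{B}$), while $\valid{Q}{\wmay{\alpha}B}$ is impossible, as any witness would lie in $\mathcal{Q}$ yet satisfy $B$. This contradicts $P \wlogeq Q$.

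For weak static implication, I would similarly assume failure: $P \wlogeq Q$ and $P \vdash \varphi$, but $P \not\wlogeq Q'$ for every $Q'$ in $\mathcal{Q} = \{Q' \setsep Q \Trans{} Q' \text{ and } Q' \vdash \varphi\}$. Here the distinguishing formulas $B_{Q'}$ separate $P$ (not a successor of $P$) from each $Q'$, so $\n(B_{Q'}) \subseteq \n(P)$ and the conjunction $B = \bigwedge_{Q' \in \mathcal{Q}} B_{Q'}$ is a weak formula supported by $\n(P)$ with $\valid{P}{B}$ and with not $\valid{Q'}{B}$ for all $Q' \in \mathcal{Q}$. The decisive move is to form the weak formula $\wmay{\tau}(B \wedge \varphi)$, which is admissible precisely because Definition~\ref{def:weaklogic} lets a predicate sit under $\wmay{\tau}$ in conjunction with another weak formula. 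The reflexive weak transition $P \Trans{} P$, together with $\valid{P}{B}$ and $P \vdash \varphi$, gives $\valid{P}{\wmay{\tau}(B \wedge \varphi)}$; whereas $\valid{Q}{\wmay{\tau}(B \wedge \varphi)}$ would produce $Q'$ with $Q \Trans{} Q'$, $Q' \vdash \varphi$ and $\valid{Q'}{B}$, i.e.\ a member of $\mathcal{Q}$ satisfying $B$, again a contradiction to $P \wlogeq Q$.

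I expect the weak static implication clause to be the main point of care, since it is specific to the weak setting and exploits the deliberately chosen shape $\wmay{\tau}(A \wedge \varphi)$ of weak formulas. One must check both that $\wmay{\tau}(B \wedge \varphi)$ is genuinely a weak formula and that the contradiction framework automatically delivers the extra conclusion $P \wlogeq Q'$ demanded by Definition~\ref{def:weakstaticimplication}; it is this demand that forces us to distinguish $P$ itself, rather than a successor, from each candidate $Q'$. The finite-support bookkeeping for the infinite conjunctions is routine and mirrors Theorem~\ref{thm:logbisim}: since every conjunct is supported by $\n(P')$ (respectively $\n(P)$), the whole set of conjuncts is supported likewise, and its cardinality is bounded by $\lvert\states\rvert < \kappa$, so $B$ is admissible.
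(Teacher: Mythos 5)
Your proposal is correct and follows essentially the same route as the paper's own proof: establish that $\wlogeq$ is a weak bisimulation, proving both the weak simulation and weak static implication clauses by contradiction, using Lemma~\ref{lemma:wdistsup} to obtain distinguishing formulas supported by $\n(P')$ (respectively $\n(P)$) so that the infinite conjunction $B$ is well-formed, and then contradicting $P \wlogeq Q$ via the weak formulas $\wmay{\alpha}B$ and $\wmay{\tau}(B \wedge \varphi)$. Your observation that the weak static implication clause forces one to distinguish $P$ itself, rather than a successor, from each candidate $Q'$ is exactly the point the paper's proof exploits as well.
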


\begin{proof}
  The proof has been formalised in Isabelle. We establish that $\wlogeq$ is a weak bisimulation. Obviously it is symmetric. So assume $P\wlogeq Q$, we need to prove the two requirements on a weak bisimulation.
\begin{enumerate}
\item Weak static implication.
The proof is by contradiction. Assume that $\wlogeq$ does not satisfy the weak static implication requirement. Then there exist $P$, $Q$, $\varphi$ such that $P \wlogeq Q$ and $P \vdash \varphi$ and for all $Q'$ such that $Q \Trans{} Q'$ and $\entails{Q'}{\varphi}$ there exists a distinguishing formula $A_{Q'}$ such that $\valid{P}{A_{Q'}}$ and not $\valid{Q'}{A_{Q'}}$. By Lemma~\ref{lemma:wdistsup} $\n(A_{Q'}) \subseteq \n(P)$, which means that the infinite conjunction $A$ of all these $A_{Q'}$ is well formed. We thus have that $\wmay{\tau}(A \wedge  \varphi)$ is a distinguishing formula for $P$ and $Q$, contradicting $P \wlogeq Q$.
\item Weak simulation.
The proof is by contradiction. Assume that $\wlogeq$
  does not satisfy the weak simulation requirement. Then there exist
  $P,Q,P',\alpha$ with $\bn(\alpha) \freshin Q$ such that $P\wlogeq Q$
  and $P \trans{\alpha}P'$ and, letting ${\mathcal{Q}} = \{Q' \setsep Q
  \HTrans{\alpha} Q'\}$, for all $Q' \in {\mathcal{Q}}$ it holds that
  $P'\not\wlogeq Q'$. Assume $\bn(\alpha)\freshin P$, otherwise just find an alpha-variant of the transition satisfying this. By $P' \not\wlogeq Q'$, for all $Q' \in {\mathcal{Q}}$ there exists a
  weak distinguishing formula for $P'$ and $Q'$. The formula may depend
  on~$Q'$, and by Lemma~\ref{lemma:wdistsup} we can find such a
  distinguishing formula $B_{Q'}$ for $P'$ and $Q'$ with $\n(B_{Q'})
  \subseteq \n(P')$.  Let $B$ be the conjunction of all $B_{Q'}$.    
 We thus get that
 $\valid{P}{\wmay{\alpha} B}$ but not $\valid{Q}{\wmay{\alpha} B}$,
 contradicting $P\wlogeq Q$. \qedhere
\end{enumerate}
\end{proof}

We omit the proofs of the following results (which have also been formalised in Isabelle), as they are similar to the corresponding proofs for the full logic in Section~\ref{subsec:expressive-completeness}.  In addition, we need to verify that the constructions used below yield weak formulas.  This is immediate from Definition~\ref{def:weaklogic}.

\begin{lem}
  \label{lemma:weak-distinguishing-equivariant}
  If $P\not\wlogeq Q$, write~$B_{P,Q}$ for a distinguishing weak formula for~$P$ and~$Q$ such that $\n(B_{P,Q}) \subseteq \n(P)$.  Then \[D(P,Q) := \bigwedge_\pi \pi^{-1} \cdot B_{\pi \cdot P, \pi \cdot Q}\] defines a distinguishing weak formula for~$P$ and~$Q$ with support included in~$\n(P)$.  Moreover, the function $D$ is equivariant.
\end{lem}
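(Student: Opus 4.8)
The plan is to replay the proof of Lemma~\ref{lemma:distinguishing-equivariant} essentially verbatim, substituting the weak notions throughout ($\wlogeq$ for $\logeq$, weak distinguishing formulas for distinguishing formulas, and Lemma~\ref{lemma:wdistsup} for Lemma~\ref{lemma:distsup}), and then to discharge the single genuinely new obligation: that every formula produced along the way is again a \emph{weak} formula in the sense of Definition~\ref{def:weaklogic}.

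First I would establish that $D(P,Q)$ is well-formed with the claimed support. Assuming $P \not\wlogeq Q$, equivariance of $\wlogeq$ (which follows from equivariance of $\models$) gives $\pi \cdot P \not\wlogeq \pi \cdot Q$ for every $\pi$, so the formulas $B_{\pi \cdot P, \pi \cdot Q}$ exist by Lemma~\ref{lemma:wdistsup} with $\n(B_{\pi\cdot P,\pi\cdot Q}) \subseteq \n(\pi\cdot P) = \pi\cdot\n(P)$. Applying $\pi^{-1}$ yields $\n(\pi^{-1}\cdot B_{\pi\cdot P,\pi\cdot Q}) \subseteq \n(P)$, so the set of conjuncts is supported by $\n(P)$ and the conjunction is admissible with $\n(D(P,Q)) \subseteq \n(P)$.

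Next I would verify the distinguishing property and equivariance. From $\valid{\pi\cdot P}{B_{\pi\cdot P,\pi\cdot Q}}$ and equivariance of $\models$ we get $\valid{P}{\pi^{-1}\cdot B_{\pi\cdot P,\pi\cdot Q}}$ for every $\pi$, whence $\valid{P}{D(P,Q)}$. Taking $\pi$ to be the identity, the conjunct $B_{P,Q}$ is not satisfied by $Q$, so not $\valid{Q}{D(P,Q)}$. Equivariance of $D$ follows from precisely the same change-of-variables calculation as in Lemma~\ref{lemma:distinguishing-equivariant}, reindexing the conjunction by $\sigma := \pi \circ \rho^{-1}$ to rewrite $\rho \cdot D(P,Q)$ as $D(\rho\cdot P, \rho\cdot Q)$.

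The only step requiring genuinely new care—and the step I expect to be the main obstacle—is checking that $D(P,Q)$ really lies in the weak sublogic. Each $B_{\pi\cdot P,\pi\cdot Q}$ is a weak formula by the choice afforded by Lemma~\ref{lemma:wdistsup}. Since name permutation distributes over all formula constructors and the grammar of Definition~\ref{def:weaklogic} is closed under permutation, each $\pi^{-1}\cdot B_{\pi\cdot P,\pi\cdot Q}$ is again a weak formula; and since conjunction is itself a weak-formula constructor, the conjunction $D(P,Q)$ is a weak formula. This closure argument is not part of the strong-case proof, but it is immediate from the shape of Definition~\ref{def:weaklogic}, so no further work is needed.
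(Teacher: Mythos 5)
Your proposal is correct and matches the paper's approach exactly: the paper omits this proof, stating only that it is similar to the strong-case proof of Lemma~\ref{lemma:distinguishing-equivariant} and that verifying the constructions yield weak formulas is immediate from Definition~\ref{def:weaklogic}. Your replay of the strong-case argument (with Lemma~\ref{lemma:wdistsup} in place of Lemma~\ref{lemma:distsup}) together with the closure-under-permutation-and-conjunction observation is precisely that intended proof, with the details filled in.
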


\begin{defi}
  \label{def:characteristic-weak}
  A \emph{characteristic weak formula} for $P$ is a weak formula $A$ such that for all~$Q$, $P \wbisim Q$ iff~$\valid{Q}{A}$.
\end{defi}

\begin{lem}
  \label{lemma:characteristic-weak}
  Let~$D$ be defined as in Lemma~\ref{lemma:weak-distinguishing-equivariant}.  The formula $\mathrm{Char}(P) := \bigwedge_{P\not\wlogeq Q} D(P,Q)$ is a characteristic weak formula for~$P$.
\end{lem}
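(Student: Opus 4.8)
The plan is to mirror the proof of Lemma~\ref{lemma:characteristic} for the strong case, replacing $\logeq$ and $\bisim$ throughout by their weak counterparts $\wlogeq$ and $\wbisim$, and invoking the weak adequacy results (Theorems~\ref{thm:wbisimlog} and~\ref{thm:wlogbisim}) in place of the strong ones. First I would check that $\mathrm{Char}(P)$ is a legitimate weak formula. Each conjunct $D(P,Q)$ is a distinguishing weak formula by Lemma~\ref{lemma:weak-distinguishing-equivariant}, and a conjunction of weak formulas is again a weak formula by Definition~\ref{def:weaklogic}; moreover the index $Q$ ranges over a subset of $\states$, so the set of conjuncts has cardinality at most $\lvert\states\rvert < \kappa$ and hence bounded cardinality.

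Next I would establish the support bound. The key input is that Lemma~\ref{lemma:weak-distinguishing-equivariant} gives $\n(D(P,Q)) \subseteq \n(P)$ for every $Q$ with $P \not\wlogeq Q$, i.e.\ a bound on the support of each conjunct that does not depend on $Q$. Using equivariance of $D$ and of $\wlogeq$, any permutation fixing $\n(P)$ merely permutes the set $\{D(P,Q) \setsep P\not\wlogeq Q\}$ among itself, so this set is supported by $\n(P)$ and the conjunction is finitely supported with $\n(\mathrm{Char}(P)) \subseteq \n(P)$. This confirms well-formedness.

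Then I would reduce the goal: since weak bisimilarity and weak logical equivalence coincide (Theorems~\ref{thm:wbisimlog} and~\ref{thm:wlogbisim}), it suffices to prove $P \wlogeq Q$ iff $\valid{Q}{\mathrm{Char}(P)}$. For the forward direction, assuming $P \wlogeq Q$, each $D(P,Q')$ is a distinguishing formula so $\valid{P}{D(P,Q')}$ for every $Q'$ with $P\not\wlogeq Q'$, whence $\valid{P}{\mathrm{Char}(P)}$ and therefore $\valid{Q}{\mathrm{Char}(P)}$ by weak logical equivalence. For the converse I would argue by contradiction: assuming $\valid{Q}{\mathrm{Char}(P)}$ but $P\not\wlogeq Q$, the formula $D(P,Q)$ is itself one of the conjuncts, so $\valid{Q}{D(P,Q)}$, contradicting that $D(P,Q)$ distinguishes $P$ from $Q$.

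The only genuinely new ingredient over the strong case is verifying that the construction stays inside the weak sublogic of Definition~\ref{def:weaklogic}, which is immediate once the conjuncts are known to be weak formulas; the remainder is a routine transcription. I expect the main obstacle, such as it is, to be the well-formedness bookkeeping — specifically confirming the uniform support bound $\n(D(P,Q)) \subseteq \n(P)$ independent of $Q$, which is precisely why Lemma~\ref{lemma:weak-distinguishing-equivariant} was set up to provide an \emph{equivariant} choice of distinguishing formulas.
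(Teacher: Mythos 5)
Your proposal is correct and follows essentially the same route as the paper, which omits this proof precisely because it is the proof of Lemma~\ref{lemma:characteristic} transcribed with $\wlogeq$, $\wbisim$ and Theorems~\ref{thm:wbisimlog}--\ref{thm:wlogbisim} in place of their strong counterparts, plus the observation that the construction stays inside the weak sublogic of Definition~\ref{def:weaklogic}. Your well-formedness bookkeeping (support bound via Lemma~\ref{lemma:weak-distinguishing-equivariant}, cardinality bound $\lvert\states\rvert < \kappa$, closure of weak formulas under conjunction) is exactly the verification the paper describes as immediate.
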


\begin{lem}
  \label{lemma:characteristic-weak-equivariant}
  Let~$\mathrm{Char(P)}$ be defined as in Lemma~\ref{lemma:characteristic-weak}.  The function~$\mathrm{Char}\colon P \mapsto \mathrm{Char}(P)$ is equivariant.
\end{lem}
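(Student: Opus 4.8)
The plan is to verify directly that $\pi \cdot \mathrm{Char}(P) = \mathrm{Char}(\pi \cdot P)$ for every permutation~$\pi$, mirroring the proof of Lemma~\ref{lemma:characteristic-equivariant} in the strong case. The only ingredients needed are that name permutation distributes over conjunction (so that $\bigwedge$ is equivariant), that weak logical equivalence~$\wlogeq$ is equivariant (hence so is $\not\wlogeq$), and that the function~$D$ of Lemma~\ref{lemma:weak-distinguishing-equivariant} is equivariant.

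First I would push the permutation through the conjunction. Since permutation distributes over all formula constructors,
\[\pi \cdot \mathrm{Char}(P) = \pi \cdot \bigwedge_{P\not\wlogeq Q} D(P,Q) = \bigwedge_{P \not\wlogeq Q} \pi \cdot D(P,Q).\]
By equivariance of~$D$ we have $\pi \cdot D(P,Q) = D(\pi \cdot P, \pi \cdot Q)$, so the right-hand side is the conjunction of the set $\{D(\pi\cdot P, \pi \cdot Q) \setsep P \not\wlogeq Q\}$.

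The one step deserving care is the reindexing of this conjunction. Writing $Q' = \pi \cdot Q$, equivariance of~$\wlogeq$ gives $P \not\wlogeq Q \iff \pi \cdot P \not\wlogeq Q'$, and as $Q$ ranges over all states so does $Q'$. Hence the index set $\{Q \setsep P \not\wlogeq Q\}$ is carried bijectively onto $\{Q' \setsep \pi \cdot P \not\wlogeq Q'\}$, and the set of conjuncts coincides with $\{D(\pi \cdot P, Q') \setsep \pi \cdot P \not\wlogeq Q'\}$. That set is exactly the set of conjuncts defining $\mathrm{Char}(\pi \cdot P)$, so
\[\pi \cdot \mathrm{Char}(P) = \bigwedge_{\pi\cdot P \not\wlogeq Q'} D(\pi\cdot P, Q') = \mathrm{Char}(\pi\cdot P),\]
as required.

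There is no genuine obstacle here beyond bookkeeping: the reindexing is the substantive move, and it succeeds precisely because $\wlogeq$ is equivariant (so the permuted index set is again exactly the set of states not weakly logically equivalent to $\pi \cdot P$) and because a conjunction depends only on its underlying set of conjuncts, not on any chosen enumeration. Everything else is routine, and the argument is word-for-word the weak analogue of Lemma~\ref{lemma:characteristic-equivariant}.
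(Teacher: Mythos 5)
Your proof is correct and matches the paper's approach: the paper omits the proof of this lemma, referring to its strong counterpart (Lemma~\ref{lemma:characteristic-equivariant}), whose proof is precisely the verification of $\pi \cdot \mathrm{Char}(P) = \mathrm{Char}(\pi\cdot P)$ via equivariance of $\bigwedge$, of $\not\logeq$ (here $\not\wlogeq$), and of $D$ --- exactly the three ingredients you use, with the reindexing step spelled out.
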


\begin{thm}[Weak Expressive Completeness]
  \label{thm:weak-expressive-completeness}
  Let~$S$ be a finitely supported set of states that is closed under weak bisimilarity, i.e., for all~$P\in S$ and~$Q$, $P \wbisim Q$ implies $Q\in S$.  Then~$P\in S$ iff $\valid{P}{\bigvee_{P'\in S} \mathrm{Char}(P')}$.
\end{thm}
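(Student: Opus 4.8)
The plan is to follow the proof of Theorem~\ref{thm:expressive-completeness} almost verbatim, replacing strong bisimilarity by weak bisimilarity and invoking the weak analogues of the supporting lemmas. I would set $A = \bigvee_{P'\in S}\mathrm{Char}(P')$ and argue in three stages: well-formedness of $A$, the forward implication $P\in S \Rightarrow \valid{P}{A}$, and the backward implication $\valid{P}{A}\Rightarrow P\in S$.

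For well-formedness, I would first note that $\lvert S\rvert \le \lvert\states\rvert < \kappa$, so the disjunction has bounded cardinality. Since $S$ is finitely supported by assumption and $\mathrm{Char}$ is equivariant (Lemma~\ref{lemma:characteristic-weak-equivariant}), the set $\{\mathrm{Char}(P')\mid P'\in S\}$ has support bounded by $\n(S)$ and is therefore finitely supported; hence the disjunction $A$ is a well-formed formula. I would additionally confirm that $A$ lies in the weak sublogic of Definition~\ref{def:weaklogic}: each $\mathrm{Char}(P')$ is a weak formula by Lemma~\ref{lemma:characteristic-weak}, and disjunction, being built from conjunction and negation, preserves membership in the weak sublogic.

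For the forward direction, I would assume $P\in S$. By reflexivity of $\wbisim$ (Lemma~\ref{lem:weakequiv}) we have $P\wbisim P$, so Lemma~\ref{lemma:characteristic-weak} gives $\valid{P}{\mathrm{Char}(P)}$; since $\mathrm{Char}(P)$ is one of the disjuncts of $A$, we conclude $\valid{P}{A}$. For the backward direction, assume $\valid{P}{A}$. Then $\valid{P}{\mathrm{Char}(P')}$ for some $P'\in S$, and Lemma~\ref{lemma:characteristic-weak} yields $P'\wbisim P$. Since $P'\in S$ and $S$ is closed under weak bisimilarity, we obtain $P\in S$, completing the equivalence.

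I expect no real obstacle here: the argument is a direct transcription of the strong case, with $\bisim$ and $\logeq$ replaced by $\wbisim$ and $\wlogeq$ throughout. The only point genuinely new to the weak setting — and the one requiring a moment's care — is checking that the disjunction $A$ is itself a weak formula, which reduces to observing that the weak sublogic is closed under the derived disjunction operator. Everything else rides on the already-established weak versions of equivariance (Lemma~\ref{lemma:characteristic-weak-equivariant}) and the characteristic-formula property (Lemma~\ref{lemma:characteristic-weak}), so the proof is short.
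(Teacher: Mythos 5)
Your proposal is correct and follows exactly the route the paper intends: the paper omits this proof, stating that it is the same as the proof of Theorem~\ref{thm:expressive-completeness} with the weak analogues of the supporting lemmas, plus a check that the constructions yield weak formulas, ``immediate from Definition~\ref{def:weaklogic}''. Your transcription of the strong argument, together with the observation that disjunctions of weak formulas remain weak formulas (since the restriction in Definition~\ref{def:weaklogic} only concerns unguarded state predicates), is precisely that intended proof.
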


\subsection{Disjunction elimination}
\label{sect:disjunctionelemination}

As defined in Section~\ref{sec:derived-formulas}, disjunction is a derived logical operator, expressed through conjunction and negation. This is still true in the weak modal logic, but there is a twist in that neither general conjunctions nor negations may be applied to unguarded state predicates. The examples in Section~\ref{sec:weakbisim} demonstrate why this restriction is necessary: negated or conjoined state predicates in formulas would mean that adequacy no longer holds. Interestingly, we can allow disjunctions of unguarded predicates while maintaining adequacy; in fact, adding disjunction would not increase the expressive power of the logic. In this section we demonstrate this claim. An uninterested reader may skip this section without loss of continuity.

The {\em extended} weak logic is as follows, where a simultaneous induction defines both extended weak formulas (ranged over by~$E$) and preformulas (ranged over by~$B$) corresponding to subformulas with unguarded state predicates.

\begin{defi}[Extended weak formulas $E$ and preformulas $B$]
\label{extendedweaklogic}
\[
\begin{array}{l}
E \quad ::= \quad \conjunc E_i \casesep \neg E \casesep \wmay{\alpha}E  \casesep \wmay{\tau}B \\ \\
B \quad ::= \quad E\wedge B \casesep \varphi \casesep \disjunc B_i
\end{array}\]
\end{defi}

The last clause in the definition of preformulas is what distinguishes this logic from the logic in Definition~\ref{def:weaklogic}. (Thus an extended weak formula is also an ordinary weak formula if it does not contain a disjunction of unguarded state predicates.) For instance, $\wmay{\tau}(\varphi_0 \vee \varphi_1)$ is an extended weak formula, as is
\[\wmay{\tau}( ((\wmay{\beta}\top) \wedge \varphi_0) \vee ((\wmay{\gamma}\top) \wedge \varphi_1))\]
saying that it is possible to do a sequence of unobservable actions such that either continuing with $\beta$ and satisfying~$\varphi_0$ hold, or continuing with $\gamma$ and satisfying $\varphi_1$ hold.

\begin{lem}
  \label{disjunctdistrib}
  $\wmay{\tau}\disjunc B_i$ is a formula iff $\disjunc \wmay{\tau} B_i$ is, and in this case
  \[\wmay{\tau}\disjunc B_i \wlogeq \disjunc \wmay{\tau} B_i\]
\end{lem}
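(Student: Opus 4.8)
The plan is to treat the two assertions separately: first the well-formedness equivalence, and then, under the assumption that both formulas are well-formed, the semantic equivalence $\valid{P}{\wmay{\tau}\disjunc B_i}$ iff $\valid{P}{\disjunc\wmay{\tau}B_i}$ for every state $P$, which is what $\wlogeq$ between two formulas asserts.

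For the well-formedness part, the key observation is that the map $B\mapsto\wmay{\tau}B$ is an equivariant injection on formulas. Equivariance is immediate: since $\tau$ has empty support and permutation distributes over the constructors, $\pi\cdot\wmay{\tau}B = \pi\cdot\bigvee_{j\in\omega}\may{\tau}^j B = \bigvee_{j\in\omega}\may{\tau}^j(\pi\cdot B) = \wmay{\tau}(\pi\cdot B)$. For injectivity, suppose $\wmay{\tau}B=\wmay{\tau}B'$; then the conjunct sets coincide, so $\{\may{\tau}^j B\setsep j\in\omega\}=\{\may{\tau}^j B'\setsep j\in\omega\}$, giving $B=\may{\tau}^j B'$ and $B'=\may{\tau}^k B$ for some $j,k\in\omega$, hence $B=\may{\tau}^{j+k}B$, which forces $j=k=0$ by well-foundedness of formulas, i.e.\ $B=B'$. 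Now $\wmay{\tau}\disjunc B_i$ is a formula iff $\disjunc B_i$ is (as $\wmay{\tau}$ preserves and reflects well-formedness, its inner disjunction over $\omega$ being countable with support $\n(B)$), iff the set $\{B_i\setsep i\in I\}$ has bounded cardinality and finite support. Since the image operation under the equivariant injection $\wmay{\tau}$ and its inverse are both equivariant, $\{B_i\setsep i\in I\}$ has bounded cardinality and finite support iff $\{\wmay{\tau}B_i\setsep i\in I\}$ does (cardinalities agree by injectivity, and finite support transfers in both directions using $\n(f(X))\subseteq\n(f)\cup\n(X)$ for equivariant $f$); the latter is exactly the condition for $\disjunc\wmay{\tau}B_i$ to be a formula.

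For the semantic equivalence I would unfold both sides. Using the characterisation of the weak modality for $\tau$ (namely $\valid{P}{\wmay{\tau}A}$ iff $P\Trans{}P'$ and $\valid{P'}{A}$ for some $P'$, which applies since $\bn(\tau)=\emptyset$) together with the semantics of disjunction, the left-hand side $\valid{P}{\wmay{\tau}\disjunc B_i}$ holds iff there exists $P'$ with $P\Trans{}P'$ and $\valid{P'}{B_i}$ for some $i\in I$, while the right-hand side $\valid{P}{\disjunc\wmay{\tau}B_i}$ holds iff for some $i\in I$ there exists $P'$ with $P\Trans{}P'$ and $\valid{P'}{B_i}$. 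These differ only in the order of the quantifiers $\exists P'$ and $\exists i$ and are therefore equivalent, establishing $\wmay{\tau}\disjunc B_i\wlogeq\disjunc\wmay{\tau}B_i$.

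The semantic half is routine commuting of two existentials. The step needing the most care is the backward direction of the well-formedness claim, transferring bounded cardinality and finite support from $\{\wmay{\tau}B_i\setsep i\in I\}$ back to $\{B_i\setsep i\in I\}$: this is exactly where injectivity of $\wmay{\tau}$ is required, since without it the image operation only yields one inclusion. I would therefore spell out the injectivity argument explicitly, as it is the only genuinely non-formal point and relies on formulas being well-founded so that $B$ can never coincide with $\may{\tau}^j B$ for $j>0$.
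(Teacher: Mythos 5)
Your proof is correct and follows essentially the same route as the paper's: the paper's one-line argument is ``directly expanding definitions, noting that $\n(\{\wmay{\tau} B_i \setsep i \in I\})$ equals $\n(\{B_i \setsep i \in I\})$,'' and your semantic half is exactly that definitional expansion (commuting the two existentials), while your equivariance-plus-injectivity argument is precisely the justification the paper leaves implicit for the support equality and the cardinality transfer. Nothing is missing; you have simply spelled out the details the paper compresses.
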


\begin{proof}
  The proof is by directly expanding definitions, noting that $\n(\{\wmay{\tau} B_i \mid i \in I\})$ is equal to $\n(\{B_i \mid i \in I\})$.
\end{proof}

\begin{thm}
\label{thm:disjunctionelim}
  For any extended weak formula~$E$ there is an (ordinary) weak
  formula~$\Delta(E)$ such that~$E \wlogeq \Delta(E)$.
\end{thm}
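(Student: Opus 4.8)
The plan is to prove, by a single induction on the mutual grammar of Definition~\ref{extendedweaklogic}, a statement slightly stronger than the theorem. For every extended weak formula $E$ I produce an ordinary weak formula $\Delta(E)$ (in the sense of Definition~\ref{def:weaklogic}) with $E\wlogeq\Delta(E)$; simultaneously, for every preformula $B$ I produce a preformula $\nabla(B)$ in \emph{disjunctive normal form}, namely $\disjunc_{i\in I}(A_i\wedge\varphi_i)$ where each $A_i$ is an ordinary weak formula and each $\varphi_i$ a state predicate, with $B\wlogeq\nabla(B)$. Here every extended weak formula and every preformula is read as a formula of $\mathcal{A}$ (Definition~\ref{def:formulas}) through the derived operators, so that $\valid{}{}$ and hence $\wlogeq$ (meaning: satisfied by exactly the same states) apply to them. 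The point of the normal form is that a single guarded predicate $\wmay{\tau}(A_i\wedge\varphi_i)$ is precisely the one shape in which predicates may occur in an ordinary weak formula, so a top-level disjunction of such guarded clauses is again an ordinary weak formula.

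For $E$ the constructor cases $\conjunc E_i$, $\neg E$ and $\wmay{\alpha}E$ are immediate: apply $\Delta$ componentwise under the same constructor, each of which is available in the ordinary weak logic. The preformula cases drive the normalization: $B=\varphi$ becomes the single clause $\top\wedge\varphi$; a disjunction $\disjunc_i B_i$ is normalized by normalizing each $B_i$ and \emph{flattening} the resulting nested disjunctions into one; and $E\wedge B$ is normalized by putting $B$ in the form $\disjunc_j(A_j\wedge\varphi_j)$ and distributing, using $\Delta(E)\wedge(A_j\wedge\varphi_j)=(\Delta(E)\wedge A_j)\wedge\varphi_j$ with $\Delta(E)\wedge A_j$ again an ordinary weak formula. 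Crucially, the grammar only permits conjunctions of the restricted shape $E\wedge B$ and never a conjunction of two arbitrary preformulas, so no cross-multiplication of disjunctions occurs and DNF incurs no combinatorial product. This leaves the decisive case $E=\wmay{\tau}B$: by the preformula claim $\wmay{\tau}B\wlogeq\wmay{\tau}\nabla(B)$, and Lemma~\ref{disjunctdistrib} pushes the modality inside to give $\wmay{\tau}\disjunc_i(A_i\wedge\varphi_i)\wlogeq\disjunc_i\wmay{\tau}(A_i\wedge\varphi_i)$. Each disjunct is an ordinary weak formula, and a disjunction of ordinary weak formulas is again one (disjunction is derived from negation and conjunction, both permitted), yielding $\Delta(\wmay{\tau}B)$.

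The main obstacle is not the logical rewriting but the nominal bookkeeping: every conjunction and disjunction built along the way must be a genuine formula, that is, finitely supported and of cardinality below $\kappa$. I would secure finite support by defining $\Delta$ and $\nabla$ as equivariant functions, so that $\n(\Delta(E))\subseteq\n(E)$ and $\n(\nabla(B))\subseteq\n(B)$, exactly as support was controlled for $\mathrm{Char}$ in Section~\ref{subsec:expressive-completeness}; finite support of each constructed index set of disjuncts then follows. Lemma~\ref{disjunctdistrib} is tailored to this, recording that $\n(\{\wmay{\tau}B_i\})=\n(\{B_i\})$ so that the two sides share support and cardinality. The delicate point is cardinality in the flattening step: the new index set $\bigsqcup_{i\in I}J_i$ has size $\sum_{i\in I}\lvert J_i\rvert$ with $\lvert I\rvert<\kappa$ and each $\lvert J_i\rvert<\kappa$, and keeping this below $\kappa$ is exactly a union of fewer than $\kappa$ sets each of size less than $\kappa$, which stays below $\kappa$ when $\kappa$ is regular. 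Verifying these well-formedness side conditions uniformly across the induction is where the real care lies; the rest is the distributivity of Lemma~\ref{disjunctdistrib} together with the restricted conjunction shape of the grammar.
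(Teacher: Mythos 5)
Your proposal follows essentially the same route as the paper's proof: a simultaneous induction over extended weak formulas and preformulas, a normal form for preformulas consisting of guarded clauses $A\wedge\varphi$ under disjunction, distribution of the restricted conjunction $E\wedge B$ over the disjuncts of the normal form, and Lemma~\ref{disjunctdistrib} to push $\wmay{\tau}$ through disjunctions so that each clause becomes a legal weak formula $\wmay{\tau}(A\wedge\varphi)$; support is controlled by equivariance of the translation functions, exactly as the paper does with its functions $\delta$, $\varepsilon$, $\Delta$ and $\Delta_{\mathrm{pre}}$. Your observation that the grammar never conjoins two arbitrary preformulas, so no combinatorial product arises in the normalization, is also implicit in the paper's construction.

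The one genuine divergence is your insistence on a \emph{flat} normal form $\bigvee_{i\in I}(A_i\wedge\varphi_i)$, which forces the flattening step and with it the cardinality problem you yourself flag: a union of fewer than $\kappa$ index sets, each of size below $\kappa$, is only guaranteed to stay below $\kappa$ when $\kappa$ is \emph{regular}. The paper nowhere assumes regularity: Definition~\ref{def:formulas} requires only $\kappa>\aleph_0$ and $\kappa>\lvert\states\rvert$, so $\kappa$ may be singular (e.g.\ $\aleph_\omega$ when $\states$ is small), and then a flattened disjunction can reach cardinality $\kappa$ and cease to be a formula. The paper avoids this entirely by \emph{not} flattening: its normal form is $C ::= A\wedge\varphi \mid \bigvee_{i} C_i$ with nested disjunctions allowed, its $\delta$ pushes $\wmay{\tau}$ through the nesting level by level (applying Lemma~\ref{disjunctdistrib} at each level), and its $\varepsilon$ distributes the conjunct recursively. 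Every conjunction and disjunction produced that way inherits its index set from one already present in the input formula, so no cardinal arithmetic is needed at all, and nested disjunctions of weak formulas are still weak formulas since disjunction is derived from negation and conjunction (both permitted by Definition~\ref{def:weaklogic}). With flatness dropped, your argument becomes the paper's proof; as written, it proves the theorem only under an additional regularity hypothesis on $\kappa$.
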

\begin{proof}
  The idea is to push disjunctions in preformulas to top level using the fact that (finite) conjunction distributes over disjunction, and then use Lemma~\ref{disjunctdistrib}.
  Say that a preformula is in {\em normal form} if it either is $A \wedge \varphi$ where $A$ here and in the following stands for an ordinary weak formula, or is a disjunction of normal forms. We let $C$ range over normal preformulas, thus
\[C \quad ::= \quad A \wedge \varphi \casesep \bigvee_{i \in I} C_i\]
The intuition is that in a normal preformula no conjunction can contain a disjunction of preformulas.

  Define a function~$\delta$ from normal preformulas to (ordinary) weak
  formulas by
  \[
  \begin{array}{l@{\hspace{0.5em}}c@{\hspace{1em}}l}
    \delta(A\wedge\varphi) &=& \wmay{\tau}(A\wedge\varphi) \\[.5em]
    \delta(\disjunc C_i)                          &=& \disjunc \delta(C_i)
  \end{array}
  \]
  By induction~$\delta$ is equivariant, hence the disjunction in the
  above definition is finitely supported.  Moreover, $\delta(C)
  \wlogeq \wmay{\tau}C$ by Lemma~\ref{disjunctdistrib} and induction.

  Next, define a binary function~$\varepsilon$ from pairs consisting of an (ordinary) weak formula and a normal preformula, to normal preformulas,
by
  \[
  \begin{array}{l@{\hspace{0.5em}}c@{\hspace{1em}}l}
    \varepsilon(A, A'\wedge\varphi) &=& (A\wedge A') \wedge\varphi \quad \\[.5em]
    \varepsilon(A, \disjunc C_i) &=& \disjunc \varepsilon(A, C_i)
  \end{array}
  \]
  Also~$\varepsilon$ is equivariant, and $\varepsilon(A, C) \wlogeq
  A\wedge C$ by induction and distributivity of conjunction.

  We now provide an explicit transformation~$\Delta$ from extended
  weak formulas to (ordinary) weak formulas.  We also provide a
  transformation~$\Delta_\mathrm{pre}$ that maps preformulas to
  normal preformulas.  The transformations~$\Delta$
  and~$\Delta_\mathrm{pre}$ are defined by mutual recursion:
  \[
  \begin{array}{l@{\hspace{0.5em}}c@{\hspace{1em}}l}
    \Delta(\conjunc E_i)   &=& \conjunc \Delta(E_i)\\[.5em]
    \Delta(\neg E)         &=& \neg \Delta(E)\\[.5em]
    \Delta(\wmay{\alpha}E) &=& \wmay{\alpha}\Delta(E) \\[.5em]
    \Delta(\wmay{\tau}B)   &=& \delta(\Delta_\mathrm{pre}(B))\\[1em]
    \Delta_\mathrm{pre}(E\wedge B)    &=& \varepsilon(\Delta(E), \Delta_\mathrm{pre}(B))\\[.5em]
    \Delta_\mathrm{pre}(\varphi)      &=& \top\wedge\varphi\\[.5em]
    \Delta_\mathrm{pre}(\disjunc B_i) &=& \disjunc \Delta_\mathrm{pre}(B_i)
  \end{array}
  \]

  By simultaneous induction over extended weak formulas and
  preformulas it is easy to prove that both~$\Delta$
  and~$\Delta_\mathrm{pre}$ are equivariant.  Therefore, all
  conjunctions and disjunctions that appear on the right-hand side in
  the above definition are finitely supported.

  We now prove, again by simultaneous induction over extended weak
  formulas~$E$ and preformulas~$B$, that $\Delta(E) \wlogeq E$ and
  $\Delta_\mathrm{pre}(B) \wlogeq B$.
  The cases~$\Delta(\conjunc E_i)$ and~$\Delta(\neg E)$ and
  $\Delta(\wmay{\alpha}E)$ are immediate by induction.  For the
  case~$\Delta(\wmay{\tau}B)$ we have
  \[\Delta(\wmay{\tau}B) = \delta(\Delta_\mathrm{pre}(B)) \wlogeq \wmay{\tau}\Delta_\mathrm{pre}(B) \wlogeq \wmay{\tau}B\]
  by induction.  Likewise, for the case~$\Delta_\mathrm{pre}(E\wedge
  B)$ we have
  \[\Delta_\mathrm{pre}(E\wedge B) = \varepsilon(\Delta(E), \Delta_\mathrm{pre}(B)) \wlogeq \Delta(E)\wedge\Delta_\mathrm{pre}(B) \wlogeq E\wedge B\]
  by induction.  The case~$\Delta_\mathrm{pre}(\varphi)$ is trivial,
  and the case $\Delta_\mathrm{pre}(\disjunc B_i)$ is again immediate
  by induction.  
\end{proof}

\section{State predicates versus actions}
\label{sec:state-predicates-versus-actions}

\newcommand{\tsyst}{\mathbf{T}}
\newcommand{\transf}{\mathcal S}
\newcommand{\stsyst}{{\transf(\tsyst)}}
\newcommand{\validtr}[3]{#1 \nobreak \, \models_{#2} \nobreak \, #3 }

Transition system formalisms differ in how much information is considered to reside in states and how much is considered to reside in actions. One extreme is Lamport's TLA~\cite{specifying-systems-the-tla-language-and-tools-for-hardware-and-software-engineers} where all information is in the states. On the other extreme are most process algebras such as the pi-calculus, where states contain no information apart from the outgoing transitions.  Advanced process algebras such as psi-calculi use both state predicates and actions. Clearly, many ways are possible and the choice is more dependent on the traditions and modelling convenience in different areas than on hard theoretical results.

The question if state information can be encoded as actions and vice versa is old. Already in 1989 Jonsson et al.~\cite{Jonsson1990} provided a translation from a labelled transition system into an (unlabelled) Kripke structure, in order to use a model checker for CTL. With nominal transition systems it is hard to see how such a translation could work. The actions can contain binding names and encoding that into the states would require a substantial extension of our definitions. 

In the other direction, it has long been a folklore fact in process algebra that a state predicate can be represented as an action on a transition leading back to the same state. In this section we make this idea fully formal: for the purposes of checking bisimilarity this transformation is indeed sound, and there is a companion transformation on modal logic formulas. This is straightforward for strong bisimulation and modal logic, and a little more involved for weak modal formulas.

\subsection{Strong bisimulation and logic}
Our idea is that for any transition system $\tsyst$ there is another transition system $\stsyst$ where state predicates are replaced by self-loops.  To formulate this idea we again use the notation $\states_\tsyst$ to mean the states in the transition system $\tsyst$, and similarly for actions, $\bn$, transitions, bisimilarity, etc.

\begin{defi}
  The function $\transf$ from transition systems to transition systems is defined as follows:
  \begin{itemize}
  \item $\states_\stsyst = \states_\tsyst$
  \item $\actions_\stsyst = \actions_\tsyst \uplus \predicates_\tsyst$
  \item $\bn_\stsyst(\alpha) = \bn_\tsyst(\alpha)$ if $\alpha \in
    \actions_\tsyst$; $\bn_\stsyst(\varphi) = \emptyset$ if $\varphi \in
    \predicates_\tsyst$
  \item $\predicates_\stsyst = \;\vdash_\stsyst\; = \emptyset$
  \item $P \trans{\alpha}_\stsyst P'$ if $P \trans{\alpha}_\tsyst P'$
    (for $\alpha \in \actions_\tsyst$); $P \trans{\varphi}_\stsyst P$ if
    $P\vdash_\tsyst \varphi$ (for $\varphi \in \predicates_\tsyst$)
  \end{itemize}
\end{defi}

It is easy to see that if $\tsyst$ is a transition system then so is $\stsyst$. In particular equivariance of~$\rightarrow_\stsyst$ follows from equivariance of~$\rightarrow_\tsyst$ and~$\vdash_\tsyst$ and the fact that the union of equivariant relations is equivariant.\footnote{Nominal Isabelle does not admit empty types.  Therefore, the type~$\predicates_\stsyst$ is given by the unit type in our Isabelle formalisation.  We still define~$\vdash_\stsyst$ to be false everywhere, so that this minor difference has no further significance.}

\begin{thm}
  $P \bisim_\tsyst Q\ \implies\ P \bisim_\stsyst Q$
\end{thm}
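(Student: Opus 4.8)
The plan is to show that every bisimulation in~$\tsyst$ is already a bisimulation in~$\stsyst$, so that the \emph{same} relation witnesses bisimilarity in both systems. Concretely, I would take a bisimulation~$R$ with~$R(P,Q)$ in~$\tsyst$ and verify that~$R$ satisfies the two clauses of Definition~\ref{def:bisim} relative to~$\stsyst$. Symmetry of~$R$ is inherited directly, so only static implication and simulation remain to be checked.

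Static implication in~$\stsyst$ is vacuous, since by construction $\predicates_\stsyst = \emptyset$. For simulation, I would fix~$R(P,Q)$ together with a transition $P \trans{\alpha}_\stsyst P'$ where $\bn_\stsyst(\alpha) \freshin Q$, and case-split on the two kinds of action in $\actions_\stsyst = \actions_\tsyst \uplus \predicates_\tsyst$. If $\alpha \in \actions_\tsyst$ is an ordinary action, then $P \trans{\alpha}_\tsyst P'$ by construction of~$\stsyst$, and since $\bn_\stsyst(\alpha) = \bn_\tsyst(\alpha) \freshin Q$, the simulation clause for~$R$ in~$\tsyst$ yields some~$Q'$ with $Q \trans{\alpha}_\tsyst Q'$ and $R(P',Q')$; this is also a transition $Q \trans{\alpha}_\stsyst Q'$, as required.

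The interesting case is $\alpha = \varphi \in \predicates_\tsyst$. Here the transition $P \trans{\varphi}_\stsyst P'$ is a self-loop arising from $P \vdash_\tsyst \varphi$, so $P' = P$. Because~$R(P,Q)$ and~$R$ satisfies static implication in~$\tsyst$, we obtain $Q \vdash_\tsyst \varphi$, whence the self-loop $Q \trans{\varphi}_\stsyst Q$ exists by construction. Taking $Q' = Q$ gives $R(P',Q') = R(P,Q)$, which holds by assumption. Note that the freshness side-condition is automatic here, since $\bn_\stsyst(\varphi) = \emptyset$ is fresh for everything.

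I do not expect a genuine obstacle: the conceptual crux — and the only place any thought is needed — is the observation that the static implication requirement of a bisimulation in~$\tsyst$ is \emph{exactly} what is needed to simulate the predicate-labelled self-loops of~$\stsyst$, while the ordinary simulation clause transfers verbatim. Everything else is routine bookkeeping about the two action sorts. (The symmetric recovery of static implication from self-loop simulation would be used for a converse, but that is not needed for this statement.)
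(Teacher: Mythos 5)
Your proof is correct and follows essentially the same route as the paper: the paper shows that $\bisim_\tsyst$ itself is an $\stsyst$-bisimulation, with the identical case split (static implication is vacuous since $\predicates_\stsyst=\emptyset$; ordinary actions transfer verbatim; predicate-labelled self-loops are simulated via static implication in~$\tsyst$). Your only cosmetic deviation is using an arbitrary witnessing bisimulation~$R$ rather than the full bisimilarity relation, which changes nothing of substance.
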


\begin{proof}
  The proof has been formalised in Isabelle. We show that $\bisim_\tsyst$ is a $\stsyst$-bisimulation. Obviously it is symmetric. Static implication is trivial since $\vdash_\stsyst$ is false everywhere. For simulation, assume $P \bisim_\tsyst Q$ and that $P$ has a transition in $\stsyst$. If the transition has an action in $\tsyst$, then $P$ has the same transition in $\tsyst$. If the action is a state predicate in~$\tsyst$ then the transition must be $P \trans{\varphi}_\stsyst P$ where $P \vdash_\tsyst \varphi$, hence by static implication $Q \vdash_\tsyst \varphi$, thus also $Q \trans{\varphi}_\stsyst Q$. In both cases we therefore have a simulating transition from $Q$ in~$\stsyst$.
\end{proof}

\begin{thm}
  \label{thm:s_implies_t}
  $P \bisim_\stsyst Q\ \implies\ P \bisim_\tsyst Q$
\end{thm}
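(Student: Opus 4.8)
The plan is to show that $\bisim_\stsyst$, regarded as a relation on $\states_\tsyst = \states_\stsyst$, is itself a bisimulation in the original system $\tsyst$ in the sense of Definition~\ref{def:bisim}. Since $\bisim_\stsyst$ is an equivariant equivalence relation by Proposition~\ref{bisimequiv}, it is in particular symmetric, so it suffices to verify the two remaining requirements---static implication and simulation---for an arbitrary pair with $P \bisim_\stsyst Q$. The theorem then follows at once, because this makes $\bisim_\stsyst$ a witness to $P \bisim_\tsyst Q$ whenever $P \bisim_\stsyst Q$.

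The simulation requirement is the easy direction. Suppose $P \trans{\alpha}_\tsyst P'$ with $\alpha \in \actions_\tsyst$ and $\bn_\tsyst(\alpha) \freshin Q$. By construction of $\transf$ the same transition exists in $\stsyst$, i.e.\ $P \trans{\alpha}_\stsyst P'$, and $\bn_\stsyst(\alpha) = \bn_\tsyst(\alpha) \freshin Q$. Applying the simulation clause of $\bisim_\stsyst$ yields $Q'$ with $Q \trans{\alpha}_\stsyst Q'$ and $P' \bisim_\stsyst Q'$. The crucial point is that $\actions_\stsyst = \actions_\tsyst \uplus \predicates_\tsyst$ is a disjoint union: since $\alpha \in \actions_\tsyst$, the only $\alpha$-labelled transitions in $\stsyst$ are those inherited from $\tsyst$, so $Q \trans{\alpha}_\stsyst Q'$ forces $Q \trans{\alpha}_\tsyst Q'$. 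This supplies the required simulating transition in $\tsyst$.

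The static implication requirement is where the self-loop encoding is exploited, and I expect it to be the main (if still modest) obstacle. Suppose $P \vdash_\tsyst \varphi$. By construction $\stsyst$ then has the self-loop $P \trans{\varphi}_\stsyst P$, where $\varphi$ is treated as an action with $\bn_\stsyst(\varphi) = \emptyset$; hence the side-condition $\bn_\stsyst(\varphi) \freshin Q$ holds trivially. The simulation clause of $\bisim_\stsyst$ therefore provides some $Q'$ with $Q \trans{\varphi}_\stsyst Q'$ and $P \bisim_\stsyst Q'$. Appealing again to the disjointness of $\actions_\stsyst$ and to the definition of $\trans{}_\stsyst$, the only $\varphi$-labelled transitions in $\stsyst$ are the self-loops $Q \trans{\varphi}_\stsyst Q$, which exist precisely when $Q \vdash_\tsyst \varphi$. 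Thus $Q' = Q$ and $Q \vdash_\tsyst \varphi$, which is exactly static implication for $\tsyst$. The one inference to get right is this last one---that a matching $\varphi$-move in $\stsyst$ can only be the predicate self-loop, and so directly witnesses $Q \vdash_\tsyst \varphi$---after which both bisimulation clauses are discharged and the proof is complete.
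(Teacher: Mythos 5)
Your proposal is correct and follows essentially the same route as the paper's own proof: both show that $\bisim_\stsyst$ is a $\tsyst$-bisimulation, handling static implication via the predicate self-loops and simulation via the fact that $\tsyst$-transitions carry over to $\stsyst$ and that $\alpha$-labelled $\stsyst$-transitions with $\alpha \in \actions_\tsyst$ can only be inherited ones. Your write-up merely makes explicit some details the paper leaves implicit (the disjointness of $\actions_\stsyst$, the trivial freshness condition, and that a matching $\varphi$-move must be a self-loop witnessing $Q \vdash_\tsyst \varphi$).
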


\begin{proof}
  The proof has been formalised in Isabelle. We show that $\bisim_\stsyst$ is a $\tsyst$-bisimulation. Obviously it is symmetric. Assume $P \bisim_\stsyst Q$. For static implication, if $P\vdash_\tsyst \varphi$ then~$P$ has a transition with action $\varphi$ in $\stsyst$, thus $Q$ has a transition with the same action, hence $Q \vdash_\tsyst \varphi$. For simulation, any transition in $\tsyst$ is also a transition in $\stsyst$.
 \end{proof}
 
Modal logic formulas over $\stsyst$ use predicates as actions. We extend $\mathcal{S}$ to formulas as follows.
\begin{defi}
  The function $\transf$ from formulas over the transition system $\tsyst$ to formulas over the transition system $\stsyst$ is defined by
  \[\transf(\varphi) = \may{\varphi}\top\]
  and is homomorphic on the other cases in Definition~\ref{def:formulas}.
\end{defi}

\begin{thm}
  \label{thm:slog}
  $P \models_\tsyst A \quad \mbox{iff} \quad P \models_\stsyst \transf(A)$
\end{thm}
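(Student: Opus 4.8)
The plan is to prove the equivalence by structural induction on the formula~$A$ according to Definition~\ref{def:formulas}, exploiting that~$\transf$ preserves the shape of formulas and only rewrites each state predicate~$\varphi$ into~$\may{\varphi}\top$. Before starting I would record that~$\transf$ on formulas is equivariant and that $\n(\transf(\varphi)) = \n(\may{\varphi}\top) = \n(\varphi)$, because $\bn_\stsyst(\varphi) = \emptyset$. Consequently~$\transf$ sends finitely supported sets of conjuncts to finitely supported sets and hence maps well-formed formulas over~$\tsyst$ to well-formed formulas over~$\stsyst$; this ensures the statement genuinely relates formulas on both sides.

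The base case $A = \varphi$ is where the only content lies, though it is brief. Here $\transf(\varphi) = \may{\varphi}\top$. By Definition~\ref{def:validity}, $\validtr{P}{\stsyst}{\may{\varphi}\top}$ holds iff there is some~$P'$ with $P \trans{\varphi}_\stsyst P'$, since~$\top$ holds everywhere. By the construction of~$\transf$, the only $\varphi$-labelled transition out of~$P$ in~$\stsyst$ is the self-loop $P \trans{\varphi}_\stsyst P$, which is present exactly when $P \vdash_\tsyst \varphi$; moreover $\bn_\stsyst(\varphi) = \emptyset$, so no alpha-conversion is involved. Hence $\validtr{P}{\stsyst}{\may{\varphi}\top}$ iff $P \vdash_\tsyst \varphi$ iff $\validtr{P}{\tsyst}{\varphi}$.

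The inductive cases for~$\conjunc A_i$ and~$\neg A$ are immediate, since~$\transf$ is homomorphic on these constructors and validity commutes with both conjunction and negation. For the remaining case $A = \may{\alpha}A'$ with $\alpha \in \actions_\tsyst$, I would use that~$\transf$ is homomorphic, $\transf(\may{\alpha}A') = \may{\alpha}\transf(A')$, together with the two facts built into the definition of~$\transf$: for $\alpha \in \actions_\tsyst$ we have $P \trans{\alpha}_\stsyst P'$ iff $P \trans{\alpha}_\tsyst P'$, and $\bn_\stsyst(\alpha) = \bn_\tsyst(\alpha)$. Since the support and the binding names of~$\alpha$ coincide in the two systems, the freshness side-condition $\bn(\alpha)\freshin P$ and the choice of alpha-variant in the last clause of Definition~\ref{def:validity} may be taken identically on both sides. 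Thus $\validtr{P}{\tsyst}{\may{\alpha}A'}$ iff there is~$P'$ with $P \trans{\alpha}_\tsyst P'$ and $\validtr{P'}{\tsyst}{A'}$, iff, by transition preservation and the induction hypothesis, there is~$P'$ with $P \trans{\alpha}_\stsyst P'$ and $\validtr{P'}{\stsyst}{\transf(A')}$, iff $\validtr{P}{\stsyst}{\may{\alpha}\transf(A')}$.

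The only place the transformation does something nontrivial is the predicate case, so the point to be careful about is its interaction with the binding-name bookkeeping: one must check that promoting~$\varphi$ to an action does not disturb the support and freshness conditions used in the modality case. This is exactly what is guaranteed by $\bn_\stsyst(\varphi) = \emptyset$ and $\n(\may{\varphi}\top) = \n(\varphi)$, and I do not expect any genuine obstacle beyond this routine verification.
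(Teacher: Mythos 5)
Your proof is correct and follows essentially the same route as the paper's: structural induction on~$A$ for arbitrary~$P$, with the predicate case resolved by observing that $\varphi$-labelled transitions in~$\stsyst$ are exactly the self-loops witnessing $P \vdash_\tsyst \varphi$, and the modality case handled by choosing an alpha-variant with $\bn(\alpha) \freshin P$. The extra care you take about well-formedness of $\transf(A)$ and the unchanged binding-name bookkeeping is a welcome elaboration of details the paper leaves implicit, but it is not a different argument.
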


\begin{proof}
  The proof has been formalised in Isabelle.  It is by induction over formulas~$A$, for arbitrary~$P$.  If $A = \varphi$ then clearly $P \models_\tsyst \varphi$ iff $P \models_\stsyst \may{\varphi}\top$.  For the case $A = \may{\alpha}A'$ we assume without loss of generality $\bn(\alpha) \freshin P$, otherwise just find an alpha-variant where this holds.  This and the other cases are then immediate by induction.
\end{proof}

This provides an alternative proof of Theorem~\ref{thm:s_implies_t}. If $P \bisim_\stsyst Q$ then by Theorem~\ref{thm:bisimlog} for any $A$ it holds that $P \models \transf(A)$ iff $Q \models \transf(A)$, and thus $P \models A$ iff $Q \models A$, therefore by Theorem~\ref{thm:logbisim} $P\bisim_\tsyst Q$. The converse does not follow since $\transf$ is not surjective on formulas.

\subsection{Weak bisimulation }

For weak bisimulation the corresponding proofs are only a little bit more complicated. 

\begin{thm}
  \label{thm:timpliesst}
  $P \wbisim_\tsyst Q\ \implies\ P \wbisim_\stsyst Q$
\end{thm}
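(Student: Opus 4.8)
The plan is to show that $\wbisim_\tsyst$ is itself a weak bisimulation in the transformed system $\stsyst$; the theorem then follows immediately from the definition of $\wbisim_\stsyst$. Symmetry of $\wbisim_\tsyst$ is given by Lemma~\ref{lem:weakequiv}. Moreover, since $\vdash_\stsyst$ is false everywhere, the weak static implication requirement (Definition~\ref{def:weakstaticimplication}) holds vacuously in $\stsyst$: its premise $P \vdash_\stsyst \varphi$ is never satisfied. Hence the whole argument reduces to verifying weak simulation.

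A useful preliminary observation is that the silent transitions of $\stsyst$ coincide exactly with those of $\tsyst$. The only transitions added by $\transf$ are the predicate self-loops $P \trans{\varphi}_\stsyst P$, and each such $\varphi$ lives in the right-hand summand of the disjoint union $\actions_\tsyst \uplus \predicates_\tsyst$, hence is distinct from the silent action $\tau \in \actions_\tsyst$. Therefore $\Trans{}_\stsyst$ and $\Trans{}_\tsyst$ are the same relation, and consequently a weak transition $\HTrans{\alpha}_\tsyst$ over an original action $\alpha \in \actions_\tsyst$ is also a weak transition $\HTrans{\alpha}_\stsyst$.

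For weak simulation, assume $P \wbisim_\tsyst Q$ and $P \trans{\beta}_\stsyst P'$ with $\bn_\stsyst(\beta) \freshin Q$, and distinguish two cases according to the construction of $\rightarrow_\stsyst$. If $\beta = \alpha \in \actions_\tsyst$, then $P \trans{\alpha}_\tsyst P'$ and $\bn_\tsyst(\alpha) \freshin Q$; weak simulation of $\wbisim_\tsyst$ in $\tsyst$ yields $Q'$ with $Q \HTrans{\alpha}_\tsyst Q'$ and $P' \wbisim_\tsyst Q'$, and by the preliminary observation $Q \HTrans{\alpha}_\stsyst Q'$, as required. If instead $\beta = \varphi \in \predicates_\tsyst$, then the transition must be the self-loop $P \trans{\varphi}_\stsyst P$ arising from $P \vdash_\tsyst \varphi$, so $P' = P$ and, since $\bn_\stsyst(\varphi) = \emptyset$, the freshness side condition is vacuous. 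Here I would invoke weak static implication of $\wbisim_\tsyst$: from $P \vdash_\tsyst \varphi$ we obtain $Q_1$ with $Q \Trans{}_\tsyst Q_1$, $Q_1 \vdash_\tsyst \varphi$, and $P \wbisim_\tsyst Q_1$. The fact $Q_1 \vdash_\tsyst \varphi$ gives a self-loop $Q_1 \trans{\varphi}_\stsyst Q_1$ in $\stsyst$, and combined with $Q \Trans{}_\stsyst Q_1$ (using $\Trans{}_\stsyst = \Trans{}_\tsyst$) this yields $Q \HTrans{\varphi}_\stsyst Q_1$, taking the trailing silent segment to be empty. Setting $Q' = Q_1$ we have $Q \HTrans{\varphi}_\stsyst Q'$ and $P' = P \wbisim_\tsyst Q'$, completing the simulation.

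The main obstacle is precisely the predicate case. For strong bisimulation a predicate corresponds to a single self-loop and static implication supplies a matching loop directly, but in the weak setting the simulating state may first require a silent prefix before reaching a state satisfying $\varphi$. Weak static implication is exactly the tool that provides this prefix, and its extra demand that the reached state $Q_1$ still satisfy $P \wbisim_\tsyst Q_1$ is what lets us match the unchanged target $P' = P$: because the self-loop fixes the state, the residual requirement $R(P',Q')$ coincides with $P \wbisim_\tsyst Q_1$, which we already have. The remaining case and the vacuity of static implication in $\stsyst$ are routine.
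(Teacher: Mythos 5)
Your proof is correct and takes essentially the same approach as the paper's: both show that $\wbisim_\tsyst$ is itself a weak bisimulation on $\stsyst$, with weak static implication holding vacuously (since $\vdash_\stsyst$ is empty), transitions labelled by original actions matched via the weak simulation property of $\wbisim_\tsyst$, and predicate self-loops matched via weak static implication, producing a silent prefix followed by a matching self-loop. The only difference is presentational: you make explicit the observation that the silent transitions of $\tsyst$ and $\stsyst$ coincide, which the paper uses implicitly.
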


\begin{proof}
  The proof has been formalised in Isabelle. We show that $\wbisim_\tsyst$ is a weak $\stsyst$-bisimulation. Obviously it is symmetric. Weak static implication is trivial since $\vdash_\stsyst$ is false everywhere. For weak simulation, assume $P \wbisim_\tsyst Q$. There are two cases.
  \begin{enumerate}
  \item $P \trans{\alpha}_\stsyst P'$ where $\alpha \in \actions_\tsyst$. Then $P \trans{\alpha}_\tsyst P'$ and by $P \wbisim_\tsyst Q$ we get $Q \HTrans{\alpha}_\tsyst Q'$,  which implies $Q \HTrans{\alpha}_\stsyst Q'$, with $P' \wbisim_\tsyst Q'$.
  \item $P \trans{\varphi}_\stsyst P$ where $\varphi \in \predicates_\tsyst$ and $P\vdash_\tsyst \varphi$. By weak static implication $Q \Trans{}_\tsyst Q'$ and $Q' \vdash_\tsyst \varphi$ and $Q' \wbisim_\tsyst P$. Thus $Q\Trans{}_\stsyst Q'$ and $Q' \trans{\varphi}_\stsyst Q'$, hence $Q \HTrans{\varphi}_\stsyst Q'$. \qedhere
  \end{enumerate}
\end{proof}

The converse uses the following lemma, which is familiar in many process algebras and interesting in its own right.

\begin{lem}
  \label{lem:interpolate}
  If $P \Trans{} Q \Trans{} R$ and $P \wbisim R$ then $Q \wbisim R$.
\end{lem}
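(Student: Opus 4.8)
The plan is to exhibit a weak bisimulation containing the pair $(Q,R)$. Given the specific $P$, $Q$, $R$ from the hypothesis, I would take $\mathcal{R} = {\wbisim} \cup \{(Q,R),(R,Q)\}$, which is symmetric since $\wbisim$ is (Lemma~\ref{lem:weakequiv}). Because $\wbisim$ is itself a weak bisimulation, every pair already in $\wbisim$ satisfies weak simulation (Definition~\ref{def:weaksim}) and weak static implication (Definition~\ref{def:weakstaticimplication}) with matching states again in $\wbisim \subseteq \mathcal{R}$; so the only work is to verify the two new pairs $(Q,R)$ and $(R,Q)$, and the key point is that in every case the matching state lands back in $\wbisim$.

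The guiding idea is an asymmetry: moves of $Q$ are matched by routing through the \emph{prefix} $P \Trans{} Q$, while moves of $R$ are matched directly through the \emph{suffix} $Q \Trans{} R$. Concretely, for weak simulation of $(Q,R)$, if $Q \trans{\alpha} Q'$ with $\bn(\alpha) \freshin R$, then $P \Trans{} Q \trans{\alpha} Q'$ yields $P \HTrans{\alpha} Q'$; applying Lemma~\ref{lem:weaktrans} to $P \wbisim R$ gives $R'$ with $R \HTrans{\alpha} R'$ and $Q' \wbisim R'$, as needed. For $(R,Q)$, if $R \trans{\alpha} R'$ with $\bn(\alpha) \freshin Q$, then $Q \Trans{} R \trans{\alpha} R'$ gives $Q \HTrans{\alpha} R'$ immediately, and $R' \wbisim R'$ by reflexivity.

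For weak static implication the same asymmetry applies. For $(R,Q)$, if $R \vdash \varphi$ then taking the witness to be $R$ itself settles it directly, since $Q \Trans{} R$, $R \vdash \varphi$ and $R \wbisim R$. The one genuinely composite case is $(Q,R)$: assuming $Q \vdash \varphi$, I would first use $P \Trans{} Q$ and Lemma~\ref{lem:weaktrans} (with $\alpha = \tau$) to obtain $R_1$ with $R \Trans{} R_1$ and $Q \wbisim R_1$, and then apply weak static implication of $\wbisim$ to the pair $(Q,R_1)$ to obtain $R'$ with $R_1 \Trans{} R'$, $R' \vdash \varphi$ and $Q \wbisim R'$; composing $R \Trans{} R_1 \Trans{} R'$ gives the required witness.

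Having checked all four conditions, $\mathcal{R}$ is a weak bisimulation, whence $Q \wbisim R$. I expect the main obstacle to be the weak static implication for $(Q,R)$, where one must chain Lemma~\ref{lem:weaktrans} with the weak static implication of $\wbisim$; the remaining cases reduce, after the observations that $P \Trans{} Q \trans{\alpha} Q'$ entails $P \HTrans{\alpha} Q'$ and $Q \Trans{} R \trans{\alpha} R'$ entails $Q \HTrans{\alpha} R'$, to routine reflexivity and freshness bookkeeping.
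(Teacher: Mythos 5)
Your proposal is correct and follows essentially the same route as the paper's own proof: the same candidate relation $\wbisim \cup \{(Q,R),(R,Q)\}$, the same asymmetric matching (moves of $Q$ routed through the prefix $P \Trans{} Q$ and Lemma~\ref{lem:weaktrans}, moves of $R$ matched directly via $Q \Trans{} R$), and the same chaining of Lemma~\ref{lem:weaktrans} with weak static implication of $\wbisim$ in the one composite case. In fact you spell out that last case slightly more explicitly than the paper does.
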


\begin{proof}
  The proof has been formalised in Isabelle.  We establish that $\{(Q,R), (R,Q)\}\; \cup \wbisim$ is a weak bisimulation.  Obviously it is symmetric.

  For weak static implication, clearly if $R \vdash \varphi$ then $Q \Trans{} R \vdash \varphi$; and if $Q \vdash \varphi$ then $P\Trans{} Q \vdash \varphi$, so by $P \wbisim R$, Lemma~\ref{lem:weaktrans}, and weak static implication we obtain $R'$ with $R \Trans{} R' \vdash \varphi$ as required.

  For weak simulation, clearly if $R \trans{\alpha} R'$ then $Q \Trans{} R \trans{\alpha} R'$, hence $Q \HTrans{\alpha} R'$ (and obviously $R' \wbisim R'$); and if $Q \trans{\alpha} Q'$ then $P \HTrans{\alpha} Q'$, so again by $P \wbisim R$ and Lemma~\ref{lem:weaktrans} we obtain~$R'$ with $R \HTrans{\alpha} R'$ and $Q' \wbisim R'$.
\end{proof}

\begin{thm}
  \label{thm:stimplt}
  $P \wbisim_\stsyst Q\ \implies\ P \wbisim_\tsyst Q$
\end{thm}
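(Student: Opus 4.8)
The plan is to prove that the relation $\wbisim_\stsyst$ is itself a weak $\tsyst$-bisimulation; the theorem then follows at once, since $P \wbisim_\stsyst Q$ places the pair in this relation. Symmetry is inherited directly from $\wbisim_\stsyst$. The crucial preliminary observation is that $\tau \in \actions_\tsyst$, so the silent transition relation $\trans{\tau}$, and hence $\Trans{}$ and $\HTrans{\alpha}$ for every $\alpha \in \actions_\tsyst$, coincide in $\tsyst$ and $\stsyst$. The only transitions that $\stsyst$ adds are the predicate self-loops $P \trans{\varphi}_\stsyst P$, whose labels lie in $\predicates_\tsyst$ and are therefore disjoint from the actions of $\tsyst$.

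Weak simulation is the routine half. Assume $P \wbisim_\stsyst Q$ and $P \trans{\alpha}_\tsyst P'$ with $\alpha \in \actions_\tsyst$ and $\bn(\alpha) \freshin Q$. The same transition exists in $\stsyst$, so weak simulation in $\stsyst$ yields $Q'$ with $Q \HTrans{\alpha}_\stsyst Q'$ and $P' \wbisim_\stsyst Q'$. Since $\alpha \in \actions_\tsyst$, this weak transition decomposes entirely into $\Trans{}$-steps and one $\alpha$-step, all of which coincide in the two systems, so $Q \HTrans{\alpha}_\tsyst Q'$, as required.

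The main obstacle is weak static implication, because in $\stsyst$ the predicate information is observable only through self-loops, and simulating such a self-loop may force $Q$ to make genuine silent progress \emph{past} the state where $\varphi$ holds. Concretely, from $P \vdash_\tsyst \varphi$ we obtain the self-loop $P \trans{\varphi}_\stsyst P$, whose binding names are empty so that the freshness side-condition is vacuous. Weak simulation in $\stsyst$ then gives $Q \HTrans{\varphi}_\stsyst Q'$ with $P \wbisim_\stsyst Q'$. As $\varphi \neq \tau$, this factors as $Q \Trans{} Q_1 \trans{\varphi}_\stsyst Q_1 \Trans{} Q'$, and since the only $\varphi$-labelled transitions of $\stsyst$ are self-loops witnessing the predicate, we read off $Q_1 \vdash_\tsyst \varphi$ and $Q \Trans{}_\tsyst Q_1 \Trans{}_\tsyst Q'$. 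What is missing is exactly $P \wbisim_\stsyst Q_1$: the bisimilarity we obtained sits at $Q'$, one $\Trans{}$-step too late.

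To repair this I would invoke the interpolation Lemma~\ref{lem:interpolate} together with the fact that $\wbisim_\stsyst$ is an equivalence (Lemma~\ref{lem:weakequiv}). From $P \wbisim_\stsyst Q$ and $P \wbisim_\stsyst Q'$, transitivity and symmetry give $Q \wbisim_\stsyst Q'$; applying Lemma~\ref{lem:interpolate} to $Q \Trans{} Q_1 \Trans{} Q'$ then yields $Q_1 \wbisim_\stsyst Q'$, whence $P \wbisim_\stsyst Q_1$ again by transitivity. Thus $Q_1$ witnesses weak static implication in $\tsyst$, namely $Q \Trans{}_\tsyst Q_1$, $Q_1 \vdash_\tsyst \varphi$, and $P \wbisim_\stsyst Q_1$, completing the verification that $\wbisim_\stsyst$ is a weak $\tsyst$-bisimulation. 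I expect this last interpolation-plus-transitivity step to be the only genuinely delicate point; everything else is a direct transfer of transitions between the two systems.
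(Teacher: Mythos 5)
Your proposal is correct and follows essentially the same route as the paper's own proof: both show that $\wbisim_\stsyst$ is a weak $\tsyst$-bisimulation, transfer the $\varphi$-self-loop through weak simulation in $\stsyst$, decompose $Q \HTrans{\varphi}_\stsyst Q'$ into $Q \Trans{} Q_1 \trans{\varphi}_\stsyst Q_1 \Trans{} Q'$, and then recover bisimilarity at the intermediate state $Q_1$ via Lemma~\ref{lem:interpolate} combined with symmetry and transitivity of $\wbisim_\stsyst$ (Lemma~\ref{lem:weakequiv}). The step you flag as the delicate one is precisely the step the paper handles the same way.
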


\begin{proof}
  The proof has been formalised in Isabelle.  We show that $\wbisim_\stsyst$ is a weak $\tsyst$-bisimulation. Obviously it is symmetric.

  For weak static implication, assume $P \wbisim_\stsyst Q$ and $P \vdash_\tsyst \varphi$. Thus $P \trans{\varphi}_\stsyst P$. By weak simulation we obtain~$Q'$ with $Q \HTrans{\varphi}_\stsyst Q'$ and $P \wbisim_\stsyst Q'$. By construction of~$\stsyst$ this means that there is a $Q''$ such that $Q \Trans{}_\stsyst Q'' \trans{\varphi}_\stsyst Q'' \Trans{}_\stsyst{Q'}$. Thus $Q \Trans{}_\tsyst Q''$ and $Q'' \vdash_\tsyst \varphi$. Moreover, by applying Lemma~\ref{lem:interpolate} to $Q \Trans{}_\stsyst Q'' \Trans{}_\stsyst{Q'}$ and $Q \wbisim_\stsyst P \wbisim_\stsyst Q'$ we get $P \wbisim_\stsyst Q''$ as required.

  For weak simulation, assume $P \wbisim_\stsyst Q$ and $P \trans{\alpha}_\tsyst P'$. This implies $P \trans{\alpha}_\stsyst P'$, and thus by weak simulation $Q \HTrans{\alpha}_\stsyst Q'$, which implies $Q \HTrans{\alpha}_\tsyst Q'$, with $P' \wbisim_\stsyst Q'$ as required.
\end{proof}

\subsection{Weak logic}
For the weak logic, the correspondence between state predicates and actions is less obvious, and
 there appear to be more than one alternative.
The transformation $\transf$ of formulas does not preserve the property of being a weak formula since
\[\transf(\wmay{\tau}(A \wedge \varphi)) =  \wmay{\tau}( \transf(A) \wedge \may{\varphi}\top)\]
and the subformula $\may{\varphi}\top$ is not weak. For the transition system $\stsyst$ we thus get an alternative logic adequate for weak bisimilarity by taking the formulas $\{\transf(A) \setsep \mbox{$A$ is a weak formula}\}$, i.e., with the last clause of Definition~\ref{def:weaklogic} replaced by formulas of kind $\wmay{\tau}(A \wedge \may{\varphi}\top)$.

\begin{cor}
  $P \wbisim_\stsyst Q$ \quad iff \quad for all weak formulas $A$: $P\models_\stsyst \transf(A)$ iff $Q\models_\stsyst \transf(A)$
\end{cor}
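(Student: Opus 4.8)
The plan is to assemble the corollary from results already established, routing the argument through the weak logic of the \emph{original} system $\tsyst$ rather than analysing the translated formulas $\transf(A)$ directly. The essential observation is that we never need the set $\{\transf(A) \setsep A \text{ a weak formula}\}$ to coincide with the native weak formulas of $\stsyst$; and indeed they do not coincide, since the former reaches a state predicate through a \emph{strong} modality $\may{\varphi}\top$ guarded under $\wmay{\tau}$, whereas in $\stsyst$ (where $\predicates_\stsyst = \emptyset$) predicate information is available only through the weak action modality $\wmay{\varphi}$, the predicate $\varphi$ having become an action. Avoiding this comparison is what keeps the proof clean.

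Concretely, I would establish a chain of biconditionals. First, $P \wbisim_\stsyst Q$ iff $P \wbisim_\tsyst Q$, by combining Theorems~\ref{thm:timpliesst} and~\ref{thm:stimplt} (which together show that $\wbisim_\stsyst$ and $\wbisim_\tsyst$ are the same relation on the common state set $\states_\stsyst = \states_\tsyst$). Second, $P \wbisim_\tsyst Q$ iff $P \wlogeq_\tsyst Q$, i.e.\ iff $P$ and $Q$ satisfy exactly the same weak formulas of $\tsyst$; this is the adequacy pair Theorems~\ref{thm:wbisimlog} and~\ref{thm:wlogbisim} instantiated at $\tsyst$, read off via Definition~\ref{def:weakequiv}. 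Third, for each weak formula $A$ of $\tsyst$ we have $P \models_\tsyst A$ iff $P \models_\stsyst \transf(A)$, which is exactly Theorem~\ref{thm:slog}. Applying this last equivalence to both $P$ and $Q$ converts ``$P$ and $Q$ agree on every weak formula $A$ under $\models_\tsyst$'' into ``$P$ and $Q$ agree on every $\transf(A)$ under $\models_\stsyst$,'' which is precisely the right-hand side of the corollary; composing the three equivalences then yields the statement.

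The argument is largely bookkeeping, so the one step I would watch most carefully — and the only potential pitfall — is the justification that Theorem~\ref{thm:slog} may be invoked for weak formulas. Here I would make explicit that the validity relation $\models$ of Definition~\ref{def:validity} is a single relation defined on all of $\mathcal{A}$, that by Definition~\ref{def:weaklogic} every weak formula is in particular an ordinary formula of Definition~\ref{def:formulas}, and that $\transf$ is defined on all of $\mathcal{A}$. Consequently Theorem~\ref{thm:slog} applies verbatim, and we need no separate ``weak'' version of it; in particular we never have to track how $\transf$ commutes with the derived modalities $\wmay{\alpha}$ and $\wmay{\tau}$. With this observation in place, the three equivalences compose immediately to give the claimed biconditional.
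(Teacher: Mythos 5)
Your proof is correct and takes essentially the same approach as the paper: the paper's own proof simply cites the identical collection of results (Theorems~\ref{thm:wbisimlog}, \ref{thm:wlogbisim}, \ref{thm:slog}, \ref{thm:timpliesst} and~\ref{thm:stimplt}) as making the corollary immediate, and your three-step chain is exactly that composition spelled out. Your careful remark that Theorem~\ref{thm:slog} applies verbatim because weak formulas are a sublogic of Definition~\ref{def:formulas} is the same (implicit) justification the paper relies on, so nothing further is needed.
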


\begin{proof}
  The proof has been formalised in Isabelle.  It is immediate by Theorems~\ref{thm:wbisimlog}, \ref{thm:wlogbisim} and \ref{thm:slog}--\ref{thm:stimplt}.
\end{proof}

For this corollary to hold it is critical that the transition system is an image of $\transf$, i.e., that the transitions involving actions $\varphi$ only occur in loops. If this is not the case, there may be formulas in the image of $\transf$ that can distinguish between weakly bisimilar states. An example is the following:

\vspace{0.3cm}

\includegraphics{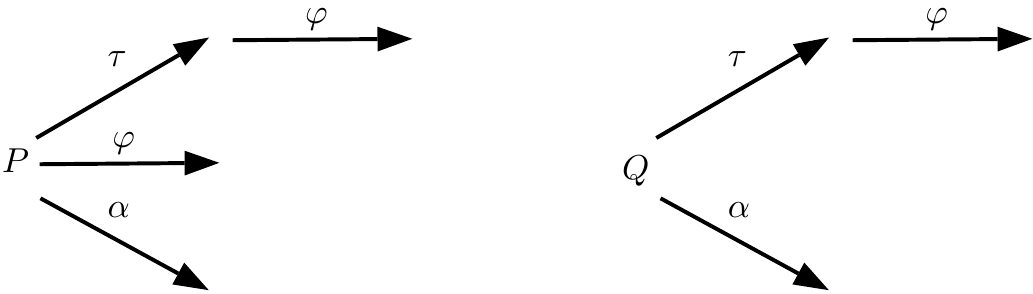}

\vspace{0.3cm}

\noindent Here $P$ and $Q$ are weakly bisimilar. Let $A$ be the weak formula $\wmay{\tau}(\wmay{\alpha}\top \wedge \varphi)$. Then $\transf(A)= \wmay{\tau}(\wmay{\alpha}\top \wedge \may{\varphi}\top)$, and $P\models\transf(A)$ but not $Q \models\transf(A)$.

Ideally, we would want an alternative transformation onto weak formulas, but it seems difficult to formulate this in a succinct way. Through expressive completeness (Section~\ref{subsec:weak-logical-adequacy-and-expressive-completeness}) we get the following:

\begin{thm}
  \label{thm:weak-transformation}
  Let~$A$ be a weak formula over~$\tsyst$.  Then there exists a weak formula~$A'$ over~$\stsyst$ such that
  \[ P \models_\tsyst A \quad \mbox{iff} \quad P \models_\stsyst A' \]
  for all states~$P$.
\end{thm}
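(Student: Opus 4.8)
The plan is to invoke Weak Expressive Completeness (Theorem~\ref{thm:weak-expressive-completeness}) applied to the transformed system~$\stsyst$, exploiting the fact that weak bisimilarity agrees on~$\tsyst$ and~$\stsyst$. The key observation is that although the syntactic transformation~$\transf$ fails to map weak formulas to weak formulas (because $\may{\varphi}\top$ is not weak), the \emph{semantic} content of a weak formula---the set of states satisfying it---is a well-formed property, and can therefore be recaptured by a fresh weak formula over~$\stsyst$.

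Concretely, given a weak formula~$A$ over~$\tsyst$, I would first consider the set of states
\[ S = \{ P \setsep P \models_\tsyst A \}. \]
I would then show that~$S$ is a well-formed property of~$\stsyst$, i.e., finitely supported and closed under~$\wbisim_\stsyst$. Finite support is immediate from equivariance of validity: $\n(S) \subseteq \n(A)$, and since the states of~$\tsyst$ and~$\stsyst$ coincide as a nominal set, this support is the same in either system. Closure under weak bisimilarity uses adequacy: by Theorem~\ref{thm:wbisimlog} applied to~$\tsyst$, $P \wbisim_\tsyst Q$ implies that $P$ and $Q$ satisfy the same weak formulas of~$\tsyst$; in particular $P \models_\tsyst A$ iff $Q \models_\tsyst A$, so~$S$ is closed under~$\wbisim_\tsyst$. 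Since $\wbisim_\stsyst$ and~$\wbisim_\tsyst$ coincide (Theorems~\ref{thm:timpliesst} and~\ref{thm:stimplt}), $S$ is also closed under~$\wbisim_\stsyst$.

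With~$S$ established as a well-formed property of~$\stsyst$, Theorem~\ref{thm:weak-expressive-completeness} applied to~$\stsyst$ yields a weak formula---explicitly $A' = \bigvee_{P' \in S} \mathrm{Char}(P')$, where $\mathrm{Char}$ denotes the characteristic weak formula in~$\stsyst$---satisfying $P \models_\stsyst A'$ iff $P \in S$. Chaining the equivalences gives $P \models_\stsyst A'$ iff $P \in S$ iff $P \models_\tsyst A$ for all states~$P$, as required. Note that~$A'$ is a genuine weak formula over~$\stsyst$, since the characteristic formulas and their disjunction produced by the weak expressive completeness construction are weak formulas.

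I expect the main (and essentially the only) obstacle to be the bookkeeping across the two transition systems: one must take care that ``finitely supported'' and ``closed under weak bisimilarity'' are read in~$\stsyst$ when feeding expressive completeness, and then converted back to~$\tsyst$. This is precisely where the coincidence of~$\wbisim_\tsyst$ and~$\wbisim_\stsyst$ does the decisive work, turning a property closed under~$\tsyst$-weak-bisimilarity (what adequacy over~$\tsyst$ supplies) into one closed under~$\stsyst$-weak-bisimilarity (what expressive completeness over~$\stsyst$ consumes). Everything else is a direct application of already-established results.
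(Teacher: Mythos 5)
Your proposal is correct and follows essentially the same route as the paper: both form the set $S = \{P \setsep P \models_\tsyst A\}$, note it is finitely supported with support in $\n(A)$ and closed under $\wbisim_\stsyst$ (the paper cites Theorems~\ref{thm:wbisimlog} and~\ref{thm:stimplt} directly, where you route through closure under $\wbisim_\tsyst$ first---the same argument), and then apply Theorem~\ref{thm:weak-expressive-completeness} over $\stsyst$ to obtain $A' = \bigvee_{P'\in S}\mathrm{Char}(P')$.
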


\begin{proof}
  The proof has been formalised in Isabelle.  Note that $\{Q\mid Q \models_\tsyst A\}$ is finitely supported (with support included in~$\n(A)$), and moreover closed under~$\mathop{\wbisim_\stsyst}$ by Theorems~\ref{thm:wbisimlog} and~\ref{thm:stimplt}.  It then follows from Theorem~\ref{thm:weak-expressive-completeness} that $A' = \bigvee_{Q \models_\tsyst A} \mathrm{Char}(Q)$ has the desired properties.
\end{proof}

Inherent in this proof is a construction of the weak formula $A'$ from the weak formula~$A$,
but the construction depends implicitly on distinguishing formulas (cf.~Theorem~\ref{thm:logbisim}) and thus on the entire transition system.

We have failed to find a transformation of weak formulas that yields equivalent weak formulas over~$\stsyst$ and is defined by induction over formulas, or, at the very least, is independent of the transition system.  We have also failed to prove that such a transformation cannot exist, and thus have to leave this problem open. The following counterexamples shed light on the difficulties:

\paragraph{{\bf Counterexample}} Define the transformation $\transf$ on weak formulas by

\[\transf(\wmay{\tau}(A \wedge \varphi))= \wmay{\varphi}\transf(A)\]

With this definition, the counterpart of Theorem~\ref{thm:slog} fails. A counterexample is $A = \neg\wmay{\alpha}\top$,  $P \vdash_\tsyst{\varphi}$ with $P \trans{\tau}_\tsyst Q$ and $P \trans{\alpha}_\tsyst Q$ for some $\alpha \neq \tau$, where $Q$ has no outgoing transitions, cf.\ the diagrams below:

\vspace{0.3cm}

\includegraphics{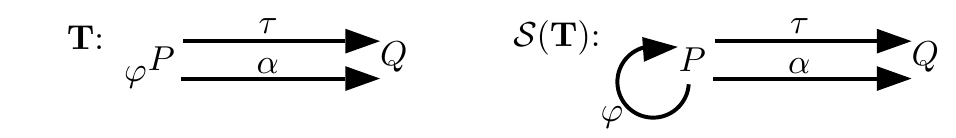}

\noindent
Since $P \Trans{\varphi}_\stsyst 
Q$ and $Q$ has no $\wmay{\alpha}$ action, we have that
\[\validtr{P}{\stsyst}{\wmay{\varphi} \neg \wmay{\alpha} \top}\]
The only state in {\bf T} that satisfies $\varphi$ also has an $\wmay{\alpha}$ action, thus it does {\em not} hold that
\[\validtr{P}{\tsyst}{\wmay{\tau}(\neg \wmay{\alpha} \top \wedge \varphi)}\]

\paragraph{\bf Counterexample} Define the transformation $\transf$ on weak formulas by
\[\transf(\wmay{\tau}(A \wedge \varphi))= \wmay{\tau}(\transf(A) \wedge \wmay{\varphi}\top)\]
Then again the counterpart of Theorem~\ref{thm:slog} fails. A counterexample here is $A = \wmay{\alpha}\top$, with $P \trans{\tau}_\tsyst Q$ and $Q \vdash_\tsyst{\varphi}$, $P \trans{\alpha}_\tsyst Q$ for some $\alpha \neq \tau$, cf.\ the diagrams below:

\vspace{0.3cm}

\includegraphics{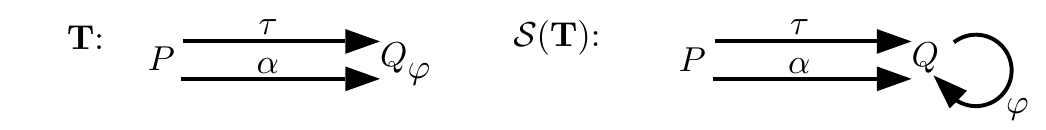}

\noindent
Here it holds that
\[\validtr{P}{\stsyst}{\wmay{\tau}(\wmay{\alpha}\top \wedge \wmay{\varphi}\top)}\]
and it does {\em not} hold that
\[\validtr{P}{\tsyst}{\wmay{\tau}(\wmay{\alpha} \top \wedge \varphi)}\]

Finally, consider
the {\em partial} transformation $\transf'$ on weak formulas by
  \[\transf'(\wmay{\tau}((\wmay{\tau}A) \wedge \varphi)) = \wmay{\varphi}\transf'(A)\]
  where $\transf'$ is homomorphic on the first three cases in Definition~\ref{def:weaklogic}.

$\transf'$ is not total since a formula $\wmay{\tau}(A \wedge \varphi)$ is in its domain only when $A=\wmay{\tau}A'$ for some~$A'$. It is easy to see that $\transf'$ is injective and surjective, i.e., every weak formula $A$ on~$\stsyst$ has a unique formula $B$ on $\tsyst$ such that $\transf'(B)=A$. We write $\transf'^{-1}$ for the inverse of $\transf'$. Thus 
\[\transf'^{-1}(\wmay{\varphi}A) = \wmay{\tau}((\wmay{\tau}\transf'^{-1}(A)) \wedge \varphi)\]
and $\transf'^{-1}$ is homomorphic on all other operators.

\begin{thm}
  \label{thm:validtransform}
  $\validtr{P}{\stsyst}{A} \quad \mbox{iff} \quad \validtr{P}{\tsyst}{\transf'^{-1}(A)}$
\end{thm}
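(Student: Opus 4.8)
The plan is to prove both directions simultaneously by structural induction on the weak formula~$A$ over~$\stsyst$, establishing $\validtr{P}{\stsyst}{A}$ iff $\validtr{P}{\tsyst}{\transf'^{-1}(A)}$ for all states~$P$. Since $\predicates_\stsyst = \emptyset$, the fourth clause of Definition~\ref{def:weaklogic} is vacuous, so a weak formula over~$\stsyst$ is built only from conjunction, negation, and the weak modality~$\wmay{\gamma}$ with $\gamma$ ranging over $\actions_\stsyst = \actions_\tsyst \uplus \predicates_\tsyst$. The induction therefore splits into four cases, and the definition of~$\transf'^{-1}$ matches this partition exactly: it is homomorphic on conjunction, negation, and on~$\wmay{\alpha}$ with $\alpha \in \actions_\tsyst$, while the clause $\transf'^{-1}(\wmay{\varphi}A) = \wmay{\tau}((\wmay{\tau}\transf'^{-1}(A)) \wedge \varphi)$ covers the predicate actions $\varphi \in \predicates_\tsyst$. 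Note also that each~$\transf'^{-1}(A)$ is a genuine weak formula over~$\tsyst$, so no new well-formedness obligations arise.

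First I would record how the weak transition relations of the two systems are related. Because the silent action~$\tau$ lies in~$\actions_\tsyst$ and the predicate actions form a disjoint summand, every $\tau$-transition of~$\stsyst$ comes from~$\tsyst$ and conversely; hence $\Trans{}_\stsyst$ and~$\Trans{}_\tsyst$ coincide, and more generally $P \HTrans{\alpha}_\stsyst P'$ iff $P \HTrans{\alpha}_\tsyst P'$ for every $\alpha \in \actions_\tsyst$. Moreover, by construction of~$\stsyst$ the relation $\trans{\varphi}_\stsyst$ is precisely the self-loop relation $\{(R,R) \setsep R \vdash_\tsyst \varphi\}$. Throughout I would invoke the proposition relating~$\wmay{\alpha}$ to~$\HTrans{\alpha}$ (passing to an alpha-variant so that $\bn(\alpha) \freshin P$; for predicate actions $\bn(\varphi) = \emptyset$, so this is automatic). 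The conjunction and negation cases are then immediate by the induction hypothesis, and the case $\wmay{\alpha}A$ with $\alpha \in \actions_\tsyst$ follows by combining that proposition, the coincidence of~$\HTrans{\alpha}$, and the induction hypothesis applied at the reached state.

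The main case, and the only one requiring genuine unwinding, is $\wmay{\varphi}A$ with $\varphi \in \predicates_\tsyst$. On the~$\stsyst$ side, the proposition gives $\validtr{P}{\stsyst}{\wmay{\varphi}A}$ iff there is~$P'$ with $P \HTrans{\varphi}_\stsyst P'$ and $\validtr{P'}{\stsyst}{A}$; decomposing this weak action through the self-loop $\trans{\varphi}_\stsyst$ and using $\Trans{}_\stsyst = \Trans{}_\tsyst$, it is equivalent to the existence of states $R, P'$ with $P \Trans{}_\tsyst R$, $R \vdash_\tsyst \varphi$, $R \Trans{}_\tsyst P'$ and $\validtr{P'}{\stsyst}{A}$. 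On the~$\tsyst$ side, unwinding $\wmay{\tau}((\wmay{\tau}\transf'^{-1}(A)) \wedge \varphi)$ by the semantics of $\wmay{\tau}(\,\cdot \wedge \varphi\,)$ yields exactly the existence of $R, P'$ with $P \Trans{}_\tsyst R$, $R \vdash_\tsyst \varphi$, $R \Trans{}_\tsyst P'$ and $\validtr{P'}{\tsyst}{\transf'^{-1}(A)}$. These two statements coincide once the induction hypothesis identifies $\validtr{P'}{\stsyst}{A}$ with $\validtr{P'}{\tsyst}{\transf'^{-1}(A)}$.

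I expect the only delicate point to be this bookkeeping in the~$\wmay{\varphi}$ case: one must verify that the self-loop nature of the $\varphi$-transition in~$\stsyst$, together with the shared silent transitions, lines up precisely with the pattern ``silently reach a $\varphi$-state, then silently continue'' encoded by $\wmay{\tau}((\wmay{\tau}\,\cdot\,) \wedge \varphi)$, with no spurious extra transitions introduced on either side. Everything else is routine propagation of the induction hypothesis through the formula constructors.
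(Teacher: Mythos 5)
Your proof is correct and follows essentially the same route as the paper's: induction over weak formulas on~$\stsyst$, with all cases except $\wmay{\varphi}A$ handled homomorphically, and the $\wmay{\varphi}$ case resolved by decomposing the weak transition through the self-loop $\trans{\varphi}_\stsyst$ and using that the silent transitions of the two systems coincide. The extra bookkeeping you make explicit (vacuity of the $\wmay{\tau}(A\wedge\varphi)$ clause over~$\stsyst$ since $\predicates_\stsyst=\emptyset$, and the identification $\Trans{}_\stsyst = \Trans{}_\tsyst$) is precisely what the paper's proof leaves implicit.
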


\begin{proof}
  By induction over weak formulas on $\stsyst$. All cases are trivial by induction except for the case $\wmay{\varphi}A$ with $\varphi \in \predicates_\tsyst$. Suppose $\validtr{P}{\stsyst}{\wmay{\varphi}A}$. Then $P \Trans{}_\stsyst P' \trans{\varphi}_\stsyst P'' \Trans{}_\stsyst Q$ with $\validtr{Q}{\stsyst}{A}$. By construction of $\stsyst$ we get $P'=P''$, and $P \Trans{}_\tsyst P' \Trans{}_\tsyst Q$ with $P' \vdash_\tsyst \varphi$. By induction $\validtr{Q}{\tsyst}{\transf'^{-1}}(A)$. This establishes that $\validtr{P'}{\tsyst}{\wmay{\tau}\transf'^{-1}(A) \wedge \varphi}$ and thus $\validtr{P}{\tsyst}{\wmay{\tau}(\wmay{\tau}\transf'^{-1}(A) \wedge \varphi)}$ as required. Conversely, if $\validtr{P}{\tsyst}{\wmay{\tau}(\wmay{\tau}\transf'^{-1}A \wedge \varphi)}$ then $P \Trans{}_\tsyst P' \Trans{}_\tsyst Q$, and by construction of $\stsyst$ and induction we get $\validtr{P}{\stsyst}{\wmay{\varphi} A}$.
\end{proof}

An interesting consequence is that to express the distinguishing formulas guaranteed by~Theorem~\ref{thm:wlogbisim}, it is enough to consider formulas in $\mbox{dom}(\transf')$, i.e., in the last clause of Definition~\ref{def:weaklogic}, it is enough to consider $A = \wmay{\tau}A'$. The reason is that if $P \not\wbisim_\tsyst Q$ then by Theorem~\ref{thm:stimplt} also $P \not\wbisim_\stsyst Q$, which by Theorem~\ref{thm:wlogbisim} means there is a distinguishing formula~$B$ for~$P$ and $Q$ in $\stsyst$, which by Theorem~\ref{thm:validtransform} means that $\transf'^{-1}(B)$ is a distinguishing formula in $\tsyst$. 

In conclusion, the results in this section show that the transformation $\transf$ indicates an alternative weak logic with formulas $\wmay{\tau}(A \wedge \may{\varphi}\top)$, which is appropriate if $\varphi$ only occurs in self-loops. The transformation $\transf'$, on the other hand, indicates that a sublogic with formulas $\wmay{\tau}(\wmay{\tau}A \wedge \varphi)$ is expressively equivalent. It admits a smooth transformation of predicates into actions at the cost of a more complicated definition of a weak logic.

Through expressive completeness we obtain a transformation on our original weak logic,  but it does not really shed light on the nature of state predicates and actions. Certainly, it means that any impossibility result for a transformation onto weak formulas must be qualified with a notion of independence from the transition system. This would require a proper definition of the set of transition systems that act as models for a given logic.
One difficulty here is that the disjoint union of a set of transition systems is itself a transition system. By expressive completeness we get a transformation (dependent on this union); that same transformation thus applies to all the members in the set. Consider the set of all transition systems of cardinality at most $\kappa$; this set may have a cardinality higher than $\kappa$.
Thus an impossibility result may actually depend on the cardinality limit in Definition~\ref{def:formulas}.

\newcommand{\outprefix}[1]{\overline{#1}}
\newcommand{\inprefix}[1]{\underline{#1}}

\newcommand{\inlabel}[1]{\underline{#1}\,}
\newcommand{\outlabel}[2]{\overline{#1}#2}
\newcommand{\boutlabel}[3][\tilde{c}]{\overline{#2}(\nu\,#1)#3}
\newcommand{\substlabel}[2]{{#1}/{#2}}

\newcommand{\free}[1]{\langle{#1}\rangle}
\newcommand{\procpair}[2]{#1 \mathrel{\triangleright} #2}
\newcommand{\pair}[2]{(#1,#2)}

\newcommand{\fresh}[1][\;]{#1\operatorname{fresh}}
\newcommand{\names}[1]{\n(#1)}
\newcommand{\dom}[1]{\operatorname{dom}(#1)}
\newcommand{\range}[1]{\operatorname{range}(#1)}

\newcommand{\class}[2][P]{[#2]_{#1}}
\newcommand{\encode}[2][P]{[\![{#2}]\!]_{#1}}

\newcommand{\known}[1]{\mathcal{S}(#1)}
\newcommand{\eval}[1]{\mathbf{e}(#1)}
\newcommand{\enc}[2]{\operatorname{E}(#1,#2)}
\newcommand{\dec}[2]{\operatorname{D}(#1,#2)}

\section{Applications} \label{sec:applications} 
In this section we consider standard process calculi and their accompanying labelled bisimilarities, and investigate how to obtain an adequate modal logic using our framework. In each of the final two examples, no HML has to our knowledge yet been proposed, and we immediately obtain one by instantiating the logic in the present paper.

\topic{Pi-calculus} The pi-calculus by Milner et al.~\cite{MPWpi} has a labelled transition system with early input, defined using a structural operational semantics. This transition system satisfies the axioms of a nominal transition system, and the logic of Section~\ref{sec:nominaltransitionsystems} (with an empty set of state predicates) is adequate for early bisimilarity.  Other variants of bisimilarity can be obtained by instead using the original transition system (with late input) and one of the variants of our logic described in Section~\ref{sec:variants}.

The pi-calculus already has several notions of weak bisimulation, and Definition~\ref{def:weakbisim} corresponds to the early weak bisimulation. In the pi-calculus there are no state predicates, thus the weak static implication is unimportant. The weak logic of Definition~\ref{def:weaklogic} is adequate for early weak bisimulation.

\topic{Applied pi-calculus} The applied pi-calculus by Abadi and Fournet (2001)~\cite{abadi.fournet:mobile-values} comes equipped with 
a labelled transition system and a notion of weak labelled bisimulation.  
States contain a record of emitted messages; 
this record has a domain and can be used to equate open terms $M$ and $N$ modulo some rewrite system.  
The definition of bisimulation requires bisimilar processes to have the same domain and equate the same open terms, i.e., to be strongly statically equivalent.  
We model these requirements using state predicates ``$x\in\operatorname{dom}$'' and ``$M\equiv N$''.
Since satisfaction is invariant under silent transitions, weak and strong static implication coincide, and our weak HML is adequate for Abadi and Fournet's early weak labelled bisimilarity.

\topic{Spi-calculus}
The spi-calculus by Abadi and Gordon (1999)~\cite{abadi.gordon:calculus-cryptographic} has a formulation as an environment-sensitive labelled transition system by Boreale et al.~(2001)~\cite{bdp01cryptographic} equipped with state formulae (predicates) $\phi$.  Adding state predicates ``$x\in\operatorname{dom}$'' to the state predicates makes our weak HML adequate with respect to Boreale's early weak bisimilarity.

\topic{Concurrent constraint pi calculus}
The concurrent constraint pi calculus (CC-pi) by Buscemi and Montanari (2007)~\cite{buscemi.montanari:cc-pi} 
extends the explicit fusion calculus~\cite{gardner.wischik:explicit-fusions} with a more general notion of constraint stores $c$.
  Using the labelled transition system of CC-pi and the associated
  bisimulation (Definition~\ref{def:bisim}), we immediately get an
  adequate modal logic.

The reference equivalence for CC-pi is open bisimulation~\cite{buscemi08openCCpi} 
(closely corresponding to hyperbisimulation in the fusion calculus~\cite{fusion}), 
which differs from labelled bisimulation in two ways:
First, two equivalent processes must be equivalent under all store extensions.
  To encode this, we let the effects $\mathcal{F}$ be the set of constraint stores $c$ different from $0$, and let $c(P)= c\mid P$.
Second,
  when simulating a labelled transition $P\trans\alpha P'$, 
  the simulating process $Q$ can use any transition $Q\trans\beta Q'$ with an equivalent label, 
  as given by a state predicate $\alpha=\beta$.
  As an example, if $\alpha = \overline a\free x$ is a free output label then $P\vdash\alpha=\beta$
  iff $\beta=\overline b\free y$ where $P\vdash a=b$ and $P\vdash x=y$.
  To encode this, we transform the labels of the transition system by replacing them with their equivalence classes, i.e., 
  $P\trans{\alpha}P'$ becomes $P\trans{\class{\alpha}}P'$ where $\beta\in\class{\alpha}$ iff $P\vdash \beta=\alpha$.
Hyperbisimilarity (Definition~\ref{def:alternatebisim}) on this transition system then corresponds to open bisimilarity, 
and the modal logic defined in Section~\ref{sec:variants} is adequate.

\topic{Psi-calculi}
In psi-calculi by Bengtson et al.~(2011)~\cite{PsiLMCS}, the labelled transitions take the form $\procpair \Psi P \trans\alpha P'$,
where the assertion environment $\Psi$ is unchanged after the step.  
We model this as a nominal transition system by letting the set of states be pairs $\pair \Psi P$ of assertion environments and processes, 
and define the transition relation by
  $\pair \Psi P \trans\alpha \pair \Psi{P'}$\linebreak if $\procpair \Psi P \trans\alpha P'$.
The notion of bisimulation used with psi-calculi also uses an assertion environment and is required to be closed under environment extension, 
i.e., if $\procpair \Psi P \sim Q$, then $\procpair{\Psi\otimes\Psi'} P \sim Q$ for all $\Psi'$.
We let the effects~$\mathcal{F}$ be the set of assertions, and\linebreak define $\Psi(\pair{\Psi'}P)= \pair{\Psi\otimes\Psi'}P$.
Hyperbisimilarity on this transition system then subsumes the standard psi-calculi bisimilarity, 
and the modal logic defined in Section~\ref{sec:variants} is adequate.

\topic{Weak bisimilarity for CC-pi and psi}
Both CC-pi and psi-calculi has a special unobservable action $\tau$, but until now only psi-calculi have a notion of weak labelled bisimulation (as remarked in Section~\ref{sec:weakbisim}), and neither has a weak HML. Through this paper they both gain both weak bisimulation and logic, although more work is needed to establish how compatible the bisimulation equivalence is with their respective syntactic constructs.
A complication is that the natural formulation of bisimulation makes use of effects (store or assertion extensions) which are bisimulation requirements on neither predicates nor actions. In order to map them into our framework these would need to be cast as actions. Their interactions with the silent action could be an interesting topic for further research.

\section{Related work}
\label{sec:related}
We here discuss other modal logics for process calculi, 
with a focus on how their constructors can be captured by finitely supported conjunction in our HML.
This comparison is by necessity somewhat informal:
 formal correspondence fails to hold due to differences in the conjunction operator of the logic (finite, uniformly bounded or unbounded vs.~bounded support).

\topic{HML for CCS}
The first published HML is Hennessy and Milner (1980--1985)~\cite{DBLP:conf/icalp/HennessyM80,DBLP:conf/caap/Milner81,DBLP:journals/jacm/HennessyM85}.  
They work with image-finite CCS processes, 
  where finite (binary) conjunction suffices for adequacy, 
  and define both strong and weak versions of the logic. 
The logic by Hennessy and Liu (1995)~\cite{HennessyLiu1995} for a value-passing calculus also uses binary conjunction, 
  where image-finiteness is due to a late semantics and the logic contains quantification over data values. 
A similar idea and argument is in a logic for LOTOS by Calder et al.~(2002)~\cite{calder2002modal}, 
   though that paper only considers stratified bisimilarity up to $\omega$.

 Hennessy and Liu's value-passing calculus is based on 
ordinary CCS. 
In this calculus, a receiving process $a(x).P$ can participate in a synchronisation on $a$, becoming an abstraction $(x)P$ where $v$ is a bound variable. 
Dually, a sending process $\overline{a}\,v.Q$ becomes a concretion $(v, Q)$ where $v$ is a value.
The abstraction and concretion above react as part of the synchronisation on $a$, yielding $P\ssubst vx \mid Q$.
To capture the operations of abstractions and concretions in our framework,
we add effects $\stateid$ and $?v$, with $?v((x)P)= P\ssubst vx$, and transitions $(v, P)\trans{!v}P$.  
Letting $L(a?,\_)=\{?v \setsep v\in\text{values}\}$ and $L(\alpha,\_)=\{\stateid\}$ otherwise,
late bisimilarity is $\{\stateid\}/L$-bisimilarity as defined in Section~\ref{sec:variants}.
We can then encode their 
  universal quantifier $\forall x.A$ as $\bigwedge_{v}\may{?v}A\ssubst vx$,  
  which has support $\names{A}\setminus\{x\}$,
and their 
  output modality $\may{c!x}A$ as $\may{c!}\bigvee_{v}\may{!v}A\ssubst vx$,  
  with support $\{c\}\cup(\names{A}\setminus\{x\})$.

An infinitary HML for CCS is discussed in Milner's book (1989)~\cite{MilnerCCS}, where also the process syntax contains infinite summation.
There are no restrictions on the indexing sets and no discussion about how this can exhaust all names.
The adequacy theorem is proved by stratifying bisimilarity  and using transfinite induction over all ordinals, where the successor step basically is the contraposition of the argument in Theorem~\ref{thm:logbisim}, though without any consideration of finite support. A more rigorous treatment of the same ideas is by Abramsky (1991)~\cite{DBLP:journals/iandc/Abramsky91} where uniformly bounded conjunction is used throughout.

Koutavas et al.~(2018)~\cite{Koutavas18logicTransactions} study a transactional CCS,
extending the natural transition system in order to provide three equivalent adequate HMLs.
Neither processes nor actions contain bound names, which is a stark difference from the problem treated in the present paper.
Formulas contain countably infinite conjunction and thus may contain an infinite number of variable names, which are ``interpreted nominally''. 
The paper does not define alpha-equivalence for formulas,
and states that variable instantiation "does not require any notion of alpha-equivalence, as in~\cite{DBLP:conf/concur/ParrowBEGW15}".
In order to find a fresh variable name, the adequacy proof instead uses an unspecified notion of renaming (Prop.~4.5).

Simpson (2004)~\cite{Simpson04:GSOSsequentHML} gives a sequent calculus for proving HML properties of process calculi in GSOS format, using binary conjunction and assuming finite branching and a finite number of actions.

\topic{$\mu$-calculus}
Kozen's modal $\mu$-calculus (1983)~\cite{Kozen:1983} 
  subsumes several other weak temporal logics including CTL* 
  (Cranen et al.~2011)~\cite{Cranen11LinearCTL}, 
  and can encode weak transitions using least fixed points. 
Dam (1996)~\cite{Dam:1996pd} gives a modal $\mu$-calculus for the pi-calculus, 
  treating bound names using abstractions and concretions, 
  and provides a model checking algorithm. 
Bradford and Stevens (1999)~\cite{bradfield99:observational_mu_calculus} 
  give a generic framework for parameterising the $\mu$-calculus 
  on data environments, state predicates, and action expressions.
The logic defined in the present paper 
  can encode the least fixpoint operator of $\mu$-calculi 
  by a disjunction of its finite unrollings, 
  as seen in Section~\ref{sec:encod-least-fixp}.
We can immediately encode the atomic $\mu$-calculus by Klin and {\L}e{\l}yk (2017)~\cite{klin_et_al:LIPIcs:2017:7699}.

\topic{Pi-calculus}
The first HML for the pi-calculus is by Milner et al.~(1993)~\cite{Milner:1993ys}, 
  where infinite conjunction is used in the early semantics 
  and conjunctions are restricted to use a finite set of free names. 
The adequacy proof has the same structure as in this paper. 
The logic of Section~\ref{sec:nominaltransitionsystems},
applied to the pi-calculus transition system from which bound input actions $x(y)$ have been removed, contains the logic $\mathcal{F}$ of Milner et al., 
or the equipotent logic $\mathcal{FM}$ if we take the set of name matchings $[a=b]$ as state predicates.
Gabbay~\cite{Gabbay2003piinFM} gives the first nominal syntax and operational semantics for the pi-calculus, in Fraenkel-Mostowski set theory.

Koutavas and Hennessy (2012)~\cite{DBLP:journals/cl/KoutavasH12} give a weak HML for a higher-order pi-calculus with both higher-order and first-order communication using an en\-vironment-sensitive LTS. 
Xu and Long (2015)~\cite{Xu2015} define a weak HML with countable conjunction for a purely higher-order pi-calculus. The adequacy proof uses stratification.

Ahn et al.~(2017)\cite{ahn_et_al:LIPIcs:2017:7789} give an innovative intuitionistic logic characterising open bisimilarity. In their setting, substitution effects are only employed inside the definition of the logical implication operator.  Their soundness proof is standard; for the completeness proof the distinguishing formula for $P$ wrt.~$Q$ is constructed in parallel with one for $Q$ wrt.~$P$.

There are several extensions of HML with spatial modalities. The one most closely related to our logic is by Berger et al.~(2008)~\cite{Berger2008}. They define an HML with both strong and weak action modalities, fixpoints, spatial conjunction and adjunction, and a scope extrusion modality, to study a typed value-passing pi-calculus with selection and recursion. The logic has three (may, must, and mixed) proof systems that are sound and relatively complete.

\topic{Spi Calculus}
Frendrup et al.~(2002)~\cite{DBLP:journals/entcs/FrendrupHJ02} provide three
Hennessy-Milner logics for the spi calculus~\cite{abadi.gordon:calculus-cryptographic}. 
All three logics use infinite quantification without any consideration of finite support.
The transition system used is a variant of the one by Boreale et
al.~(2001)~\cite{bdp01cryptographic}, where a state is a pair $\procpair \sigma P$ of a process $P$ and its environment~$\sigma$: 
a substitution that maps environment variables to public names and messages received from the process.
This version of the spi calculus has expressions $\xi$, that are terms constructed from names and environment variables using encryption and decryption operators, and messages $M$, that only contain names and encryption.
Substitution $\xi\sigma$ replaces environment variables in $\xi$ with their values in $\sigma$,
and evaluation $\eval{\xi}$ is a partial function that attempts to perform the decryptions in $\xi$, 
yielding a message $M$ if all decryptions are successful.

As usual for the spi calculus, the bisimulation (and logic) is defined in terms of the environment actions, 
rather than the process actions.  In Frendrup's version of Boreale's environment-sensitive transition system, 
the transition labels are related to the process actions in the following way:
when a process $P$ receives message $M$ on channel $a$, the label of a corresponding environment-sensitive transition
$\procpair \sigma P \trans{a\,\xi}\procpair{\sigma'}{P'}$ describes how the environment $\sigma$ computed the message $M=\eval{\xi\sigma}$. 
For process output of message $M$ on channel $a$, the corresponding environment-sensitive transition is simply $\procpair \sigma P \trans{\overline a}\procpair{\sigma'}{P'}$; the message $M$ can be recovered from the updated environment $\sigma'$.

The logics of Frendrup et al.~include a matching modality $[M=N]A$ 
that is defined using implication: 
$\procpair \sigma P \models[M=N]A$ iff 
  $\eval{M\sigma}=\eval{N\sigma}$ implies $\procpair \sigma P \models A$.
This is equipotent to having matching as a state predicate, 
since we can rewrite all non-trivial guards by $[M=N]A \iff A\lor\neg[M=N]\top$.

The logic of Section~\ref{sec:nominaltransitionsystems},
applied to the nominal transition system with the environment labels $\tau$, $\overline a$ and $a\,\xi$ 
above has the same modalities as the logic $\mathcal{F}$ of Frendrup et al.

  The logic $\mathcal{EM}$ by Frendrup et al.~replaces the simple input modality by an
  early input modality $\may{\inlabel{a}(x)}^{E}A$,
  which (after a minor manipulation of the input labels) can be encoded as the conjunction
$\bigwedge_{\xi}\may{{a}\,\xi}A\ssubst{\xi}{x}$
  with support  $\names{A}\setminus\{x\}$.
  We do not consider their logic $\mathcal{LM}$ that uses a late input
  modality, since its application relies on sets that do not have
  finite support \cite[Theorem 6.12]{DBLP:journals/entcs/FrendrupHJ02}, which are not meaningful in
  nominal logic.

Frendrup et al.~claim to ``characterize early and late versions of the environment sensitive bisimilarity of~\cite{bdp01cryptographic}'',
but this claim only holds with some modification.
First the definition of static equivalence \cite[Definition 22]{frendrup01:MSc} that is used in the adequacy proofs 
is strictly stronger than the one that appears in the published paper \cite[Definition 3.4]{DBLP:journals/entcs/FrendrupHJ02}.
Thus, the adequacy results \cite[Theorems 6.3, 6.4, and 6.14]{DBLP:journals/entcs/FrendrupHJ02} are false as stated,
but can be repaired by substituting the stronger notion of static equivalence.
Then Frendrup's logics and bisimilarities become sound, but not complete, 
with respect to the bisimilarity of~\cite{bdp01cryptographic}, since the latter uses the weaker notion of static equivalence (Definition 3.4).
In Section~\ref{sec:applications} we sketched an instance of our logic, for Boreale's labelled transition system, that is adequate for early bisimilarity.

\topic{Applied Pi-calculus}
 A more recent work is a weak HML by H{\"u}ttel and Pedersen (2007)~\cite{journals/entcs/HuttelP07} for the applied pi-calculus by Abadi and Fournet (2001)~\cite{abadi.fournet:mobile-values},
where completeness relies on an assumption of image-finiteness of the weak transitions. 
Similarly to the spi calculus, there is a requirement that terms $M$ received by a process $P$ can be computed from the current knowledge available to an observer of the process, which we here write $M \in \known P$.

The logic contains atomic formulae for term equality (indistinguishability) in the frame of a process, corresponding to our state predicates.
However, H{\"u}ttel and Pedersen use a notion of equality (and thus static equivalence) that is stronger than the corresponding relation by Abadi and Fournet, 
and that is \emph{not} well-defined modulo alpha-renaming~\cite[p.~20]{pedersen2006}. 
Since our framework is based on nominal sets, we must identify alpha-equivalent processes, 
and instead use Abadi and Fournet's notion of term equality.

H{\"u}ttel and Pedersen's logic includes an early input modality and an existential quantifier. 
The early input modality $\may{\inlabel{a}(x)}A$ can be straightforwardly encoded as the conjunction 
$\bigwedge_{M}\may{\inlabel{a}M}A\ssubst Mx$, 
with support  $\{a\}\cup(\names{A}\setminus\{x\})$.
The definition of the existential quantifier takes the observer knowledge into account: 
$P$ satisfies $\exists x.A$ if $x\freshin P$ and there is $M\in\known P$ such that $\ssubst Mx\mid P$ satisfies $A$.
The condition $M \in \known P$ makes the quantifier difficult to encode using effects, 
since there is no corresponding state predicate 
(for good reason: the main property modelled by cryptographic process calculi is that different
cipher texts $\enc{M}k$ and $\enc{N}k$ are indistinguishable unless the key $k$ is known).
To treat the existential quantifier, we instead add an action $(x)$ with $\bn((x))=x$ and transitions $P \trans{(x)}\ssubst Mx\mid P$ if 
$M \in \known P$ and $x\freshin P$.
We can then encode $\exists x. A$ as $\may{(x)} A$.

\topic{Fusion calculus}
In an HML for the fusion calculus by Haugstad et al.~(2006)~\cite{haugstad2006modal} the fusions (i.e., equality relations on names) are action labels $\varphi$.  
The corresponding modal operator $\may\varphi A$ has the semantics that the formula $A$ must be satisfied for all substitutive effects of~$\varphi$ (intuitively, substitutions that map each name to a fixed representative for its equivalence class). 
  In order to represent fusion actions in the logics in this paper,
  we add substitution effects $\sigma$ such that $\sigma(P')=P'\sigma$.
  The fusion modality $\may\varphi A$ can then be encoded in our
  framework as $\may\varphi\bigvee_{\!\sigma}\effect\sigma A\sigma$,
  where the parameter $\sigma$ of the disjunction ranges over the (finite set of)
  substitutive effects of $\varphi$. 
Their adequacy theorem uses the contradiction argument with infinite conjunction, 
with no argument about finiteness of names for the distinguishing formula.

\topic{Nominal transition systems}
De Nicola and Loreti~(2008)~\cite{NicolaLoreti08} define a general format for multiple-labelled transition systems
with labels for name revelation and resource management,
and an associated modal logic with name equality predicates, name quantification ($\exists$ and $\new$), and a fixed-point modality.
In contrast, we seek a small and expressive HML for general nominal transition systems.
Indeed, the logic of De Nicola and Loreti can be seen as a special case of ours: 
their different transition systems can be merged into a single one, and 
we can encode their quantifiers and fixpoint operator as described in Section~\ref{sec:derived-formulas}. 
Nominal SOS of Cimini et al.~(2012)~\cite{cimini12nominalSOS} is also a special case of our nominal transition systems.
Aceto et al.~(2017)~\cite{aceto_et_al:LIPIcs:2017:7786} give conditions on rule formats for transitions from processes to residuals so that they generate nominal transition systems.

\section{Formalisation}
\label{sec:formalisation}
\newcommand{\rawformset}{\mathcal{R}}
We have formalised results of Sections~\ref{sec:nominaltransitionsystems} and~\ref{sec:variants}--\ref{sec:state-predicates-versus-actions}.
We use Nominal Isabelle~\cite{nominal2}, an implementation of nominal
logic in Isabelle/HOL~\cite{isabelle}.  This is a popular interactive
proof assistant for higher-order logic with convenient specification
mechanisms for, and automation to reason about, data types with
binders.  Our Isabelle theories are available from the Archive of
Formal Proofs~\cite{Modal_Logics_for_NTS-AFP}.  We here comment on
some of the interesting aspects, and the difficulties involved for the
results yet without a formal proof.

The main challenge, and perhaps most prominent contribution from the
perspective of formalisation, is the definition of formulas
(Definitions~\ref{def:formulas} and~\ref{def:fl-formulas}).  Nominal
Isabelle does not directly support infinitely branching data types.
We therefore construct formulas from first principles in higher-order
logic, by defining an inductive data type of \emph{raw} formulas
(where alpha-equivalent raw formulas are \emph{not} identified).  The
constructor for conjunction recurses through sets of raw formulas of
bounded cardinality, a feature made possible only by a recent
re-implementation of Isabelle/HOL's data type package~\cite{datatype}.

\begin{defi} \label{def:rawformulas}
  The set of raw formulas $\rawformset$ ranged over by $R$ is defined by induction as follows:
  \[ R \quad ::= \quad
    \conjunc R_i \casesep
    \neg R \casesep
    \varphi \casesep
    \alpha.R \]
  In $\conjunc R_i$ it is required that $\lvert I\rvert < \kappa$.
\end{defi}

There are no name abstractions or binders in raw formulas: $\alpha.R$,
unlike~$\may{\alpha} A$, is \emph{not} an abbreviation for an
equivalence class, and~$\bn(\alpha)$ plays no role in the definition
of raw formulas.  Name permutation distributes over all raw
constructors.  Raw formulas need not have finite support: for
instance, consider the raw formula $\bigwedge_{a\in S} \varphi_a$
where $\n(\varphi_a)=\{a\}$ and $S\subset\nameset$ is not finitely
supported.

We then define the \emph{concrete} alpha-equivalence of raw formulas
(not to be confused with nominal alpha-equivalence as defined in
Section~\ref{sec:background}), in the following just called
alpha-equivalence, by well-founded recursion.

\begin{defi}\label{def:rawalpha}
  Two raw formulas $\conjunc R_i$ and $\conjunc S_i$ are
  \emph{alpha-equivalent}~($\mathop{\approx_\alpha}$) if for every
  conjunct~$R_i$ there is an alpha-equivalent conjunct~$S_j$, and vice
  versa.  Two raw formulas $\alpha.R$ and $\beta.S$ are
  alpha-equivalent if there exists a permutation~$\pi$ with $\pi \cdot
  \alpha = \beta$
such
  that $\pi \cdot R \approx_\alpha S$.  Moreover,~$\pi$ must leave
  names that are in $(\n(\alpha) \cup N)\setminus \bn(\alpha)$ invariant,
  for some set of names $N$ that supports $[R]_{\approx_\alpha}$.
  The other cases in the definition of alpha-equivalence are standard.
\end{defi}

We note that alpha-equivalence is equivariant, i.e., $R \approx_\alpha
S$ iff $\pi \cdot R \approx_\alpha \pi \cdot S$.  Moreover, all raw
constructors respect alpha-equivalence.  To obtain formulas, we
quotient raw formulas by alpha-equivalence, and finally carve out the
subtype of all terms that can be constructed from finitely supported
ones.

\begin{defi}\label{def:hfs}
  An alpha-equivalence class $[R]_{\approx_\alpha}$ is
  \emph{hereditarily finitely supported~(h.f.s.)}\ if, for each
  subformula~$S$ of~$R$, $[S]_{\approx_\alpha}$ has finite support.
\end{defi}

Definitions~\ref{def:rawformulas}--\ref{def:hfs} have been formalised
in Isabelle.  To our knowledge, this is the first mechanisation of
infinitely branching nominal data types in a proof assistant.
Fortunately, we need not keep the details of this construction in
mind, since the formulas obtained in this way agree with those
obtained from Definition~\ref{def:formulas}.  We give an explicit
bijection.

\begin{defi}
  If~$R$ is a raw formula, the corresponding formula~$A_R$ is defined
  homomorphically except for the case $R=\alpha.S$ where
  $A_{\alpha.S}=\may\alpha A_{S}$.
\end{defi}

This defines a partial function $\Gamma\colon \mathcal{R}
\hookrightarrow \mathcal{A}$, $R \mapsto A_R$.  $\Gamma$ is partial
because $\Gamma(\conjunc R_i) = \conjunc \Gamma(R_i)$ is well-formed
only when $\{\Gamma(R_i) \setsep i\in I\}$ is finitely supported.

\begin{lem}\label{lem:a-equivariant}
  The partial function~$\Gamma$ is equivariant on its domain: $R \in
  \dom \Gamma$ iff $\pi \cdot R \in \dom \Gamma$, and in this case
  $\pi \cdot \Gamma(R) = \Gamma(\pi \cdot R)$.
\end{lem}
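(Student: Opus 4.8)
The plan is to prove both halves of the statement simultaneously by structural induction on the raw formula~$R$, using the facts that name permutation distributes over the raw constructors (as noted in the formalisation section) and over the formula constructors of Definition~\ref{def:formulas}. Throughout, the key nominal ingredient is that the support function is equivariant, so that $\n(\pi\cdot X)=\pi\cdot\n(X)$ for any object~$X$; in particular $\n(X)$ is finite if and only if $\n(\pi\cdot X)$ is finite, since~$\pi$ is a bijection on names.

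The cases $R=\varphi$, $R=\neg S$ and $R=\alpha.S$ are routine. For $R=\varphi$, $\Gamma(\varphi)=\varphi$ is always defined and $\pi\cdot\varphi=\Gamma(\pi\cdot\varphi)$ since $\predicates$ is a nominal set. For $R=\neg S$ we have $\pi\cdot(\neg S)=\neg(\pi\cdot S)$ and $\neg S\in\dom\Gamma$ iff $S\in\dom\Gamma$, so the domain equivalence and the equation $\pi\cdot\Gamma(\neg S)=\neg(\pi\cdot\Gamma(S))=\neg\Gamma(\pi\cdot S)=\Gamma(\pi\cdot(\neg S))$ follow immediately from the induction hypothesis for~$S$. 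For $R=\alpha.S$, note that $\may{\alpha}A$ is a well-formed formula whenever~$A$ is (its support is $\n(\alpha,A)\setminus\bn(\alpha)$, which is finite), so again $\alpha.S\in\dom\Gamma$ iff $S\in\dom\Gamma$; the equation follows from $\pi\cdot\may{\alpha}\Gamma(S)=\may{\pi\cdot\alpha}(\pi\cdot\Gamma(S))$ together with the induction hypothesis.

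The conjunction case $R=\bigwedge_{i\in I}R_i$ is the crux. Here $\bigwedge_{i\in I}R_i\in\dom\Gamma$ holds iff (a) every $R_i\in\dom\Gamma$ and (b) the set $\{\Gamma(R_i)\setsep i\in I\}$ is finitely supported. Since $\pi\cdot\bigwedge_{i\in I}R_i=\bigwedge_{i\in I}(\pi\cdot R_i)$, the induction hypothesis turns (a) into the statement that every $\pi\cdot R_i\in\dom\Gamma$. Assuming (a), the induction hypothesis also gives $\Gamma(\pi\cdot R_i)=\pi\cdot\Gamma(R_i)$ for each~$i$, hence $\{\Gamma(\pi\cdot R_i)\setsep i\in I\}=\pi\cdot\{\Gamma(R_i)\setsep i\in I\}$; by equivariance of support this set is finitely supported iff $\{\Gamma(R_i)\setsep i\in I\}$ is, so (b) transfers across~$\pi$. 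This establishes $R\in\dom\Gamma$ iff $\pi\cdot R\in\dom\Gamma$. When both hold, the equation follows by chaining $\pi\cdot\bigwedge_{i\in I}\Gamma(R_i)=\bigwedge_{i\in I}(\pi\cdot\Gamma(R_i))=\bigwedge_{i\in I}\Gamma(\pi\cdot R_i)=\Gamma(\pi\cdot\bigwedge_{i\in I}R_i)$.

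The main obstacle is precisely the transfer of the finite-support side condition~(b) in the conjunction case: one must know that applying~$\pi$ to the entire set of conjunct images neither creates nor destroys finite support, and that this set is exactly the $\pi$-image of the original set of images, which in turn relies on the induction hypothesis to identify $\Gamma(\pi\cdot R_i)$ with $\pi\cdot\Gamma(R_i)$ for each conjunct. Since membership in $\dom\Gamma$ is a hereditary finite-support condition on all conjunction subformulas, some care is also needed to ensure the structural induction is carried out over the infinitary raw-formula datatype with its accompanying induction scheme.
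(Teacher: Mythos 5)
Your proof is correct and follows essentially the same route as the paper, whose entire proof reads ``By structural induction on~$R$, using the fact that all constructors of (raw) formulas are equivariant.'' Your elaboration---in particular the conjunction case, where the finite-support side condition defining $\dom\Gamma$ is transferred across~$\pi$ via the induction hypothesis and equivariance of support---is exactly the detail that one-line sketch leaves implicit.
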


\begin{proof}
  By structural induction on~$R$, using the fact that all constructors
  of (raw) formulas are equivariant.
\end{proof}

\begin{thm}\label{thm:a-iff}
  For all $R$, $S \in \dom \Gamma$, we have $R \approx_\alpha S$ iff
  $\Gamma(R) = \Gamma(S)$.
\end{thm}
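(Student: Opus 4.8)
The plan is to prove the biconditional by well-founded induction on the structure of the raw formula~$R$, simultaneously strengthening the statement with the auxiliary claim that $\n([R]_{\approx_\alpha}) = \n(\Gamma(R))$ for every $R \in \dom\Gamma$. Since $\Gamma$ is homomorphic and the formula constructors $\bigwedge$, $\neg$, $\varphi$, $\may{\cdot}$ are pairwise distinct and injective, both $\Gamma(R) = \Gamma(S)$ and $R \approx_\alpha S$ can hold only when $R$ and $S$ share the same outermost constructor, so throughout I would argue by cases on that constructor. For $R = \varphi$ the claim is immediate; for $R = \neg R'$, $S = \neg S'$ both $\approx_\alpha$ and equality in $\mathcal{A}$ reduce to the corresponding relation on $R',S'$, so the induction hypothesis closes the case. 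For $R = \bigwedge_{i\in I}R_i$ and $S = \bigwedge_{j\in J}S_j$ I would use that equality of two conjunction-formulas in $\mathcal{A}$ means equality of the underlying conjunct sets $\{\Gamma(R_i)\} = \{\Gamma(S_j)\}$, while $\approx_\alpha$ (Definition~\ref{def:rawalpha}) means that every $R_i$ has an $\approx_\alpha$-equivalent $S_j$ and vice versa; the induction hypothesis $R_i \approx_\alpha S_j \iff \Gamma(R_i) = \Gamma(S_j)$ makes these two ``matching'' conditions identical.

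The heart of the argument is the action case. Here $\Gamma(\alpha.R') = \may{\alpha}\Gamma(R')$, which by the definition of name abstraction (Section~\ref{sec:background}) is the $=_\alpha$-class of $(\bn(\alpha),(\alpha,\Gamma(R')))$. I would first unfold $\may{\alpha}\Gamma(R') = \may{\beta}\Gamma(S')$ to: there exists a permutation~$\pi$ with $\pi\cdot\alpha = \beta$, $\pi\cdot\Gamma(R') = \Gamma(S')$, and $\pi$ fixing every name in $(\n(\alpha)\cup\n(\Gamma(R')))\setminus\bn(\alpha)$ (equivariance of $\bn$ lets me drop the redundant $\pi\cdot\bn(\alpha)=\bn(\beta)$). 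The definition of $\alpha.R' \approx_\alpha \beta.S'$ asks instead for a $\pi$ with $\pi\cdot\alpha=\beta$, $\pi\cdot R' \approx_\alpha S'$, and $\pi$ fixing $(\n(\alpha)\cup N)\setminus\bn(\alpha)$ for some finite $N$ supporting $[R']_{\approx_\alpha}$. By the induction hypothesis together with equivariance of $\Gamma$ (Lemma~\ref{lem:a-equivariant}), the condition $\pi\cdot R' \approx_\alpha S'$ is equivalent to $\pi\cdot\Gamma(R') = \Gamma(S')$, so the first two requirements coincide; and the auxiliary support equality $\n([R']_{\approx_\alpha}) = \n(\Gamma(R'))$ lets me pass between the two freshness requirements (taking $N = \n([R']_{\approx_\alpha})$ in one direction, and using that every supporting $N$ contains the minimal support in the other).

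The main obstacle is exactly this auxiliary support equality. The inclusion $\n(\Gamma(R)) \subseteq \n([R]_{\approx_\alpha})$ is cheap: the $\Rightarrow$ direction of the theorem shows that $\Gamma$ factors through $\approx_\alpha$, yielding an equivariant map $\hat\Gamma\colon [R]_{\approx_\alpha}\mapsto\Gamma(R)$, and equivariant maps do not enlarge support. The reverse inclusion is delicate and is genuinely entangled with the $\Leftarrow$ direction, i.e.\ with injectivity of $\hat\Gamma$; this is why I would prove the biconditional and the support equality together in a single induction rather than separately. Concretely, in the conjunction case injectivity of $\hat\Gamma$ on the conjunct classes (from the induction hypothesis) makes $\hat\Gamma$ an equivariant bijection between $\{[R_i]_{\approx_\alpha}\}$ and $\{\Gamma(R_i)\}$, whose inverse is then also equivariant, so the two sets have equal support; in the action case the support of the abstraction class is computed from that of $R'$ exactly as $\n(\may{\alpha}\Gamma(R')) = (\n(\alpha)\cup\n(\Gamma(R')))\setminus\bn(\alpha)$.

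A secondary, purely technical point I would handle carefully is the nominal strong-induction principle needed to apply the induction hypothesis to $\pi\cdot R'$ in the action case (where the comparison is against a permuted subformula of the same structural size rather than a literal subterm). This is a standard manoeuvre in Nominal Isabelle and is presumably what the formalisation relies on; with it in place, the four cases above assemble into the required equivalence $R \approx_\alpha S \iff \Gamma(R) = \Gamma(S)$.
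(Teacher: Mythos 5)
Your proposal matches the paper's proof in all essentials: structural induction on~$R$ for arbitrary $S \in \dom\Gamma$, with the cases $\varphi$, $\neg$, $\bigwedge$ handled immediately and the action case resolved by converting between the two freshness side-conditions via the key equality $\n([R']_{\approx_\alpha}) = \n(\Gamma(R'))$, using equivariance of~$\Gamma$ (Lemma~\ref{lem:a-equivariant}) and of~$\approx_\alpha$. The only difference is presentational: you carry that support equality as an explicitly strengthened simultaneous induction claim, whereas the paper derives it on the fly from the unstrengthened biconditional induction hypothesis (the biconditional, applied to $S = \pi^{-1}\cdot R'$, forces $[R']_{\approx_\alpha}$ and $\Gamma(R')$ to have the same permutation stabilizer and hence the same support) --- a derivation which also shows your worry about a strong induction principle for $\pi\cdot R'$ to be avoidable, since one can keep $R'$ fixed and permute the universally quantified~$S$ instead.
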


\begin{proof}
  By structural induction on~$R$, for arbitrary~$S \in \dom{\Gamma}$.
  The case $R=\varphi$ is trivial.  The cases $R=\neg R'$ and
  $R=\conjunc R_i$ follow from the induction hypothesis.  Finally
  consider $R=\alpha.R'$.

  $\Longrightarrow$: From $R \approx_\alpha S$ we obtain $S =
  \beta.S'$ and a permutation~$\pi$ with $\pi \cdot \alpha = \beta$
  and $\pi \cdot R' \approx_\alpha S'$.  Equivariance
  of~$\approx_\alpha$ implies $R' \approx_\alpha \pi^{-1} \cdot S'$,
  hence $\pi \cdot \Gamma(R') = \Gamma(S')$ by induction and
  Lemma~\ref{lem:a-equivariant}.
  Moreover, $\pi$ must leave names in $(\n(\alpha)\cup N) \setminus
  \bn(\alpha)$ invariant, where~$N$ supports~$[R']_{\approx_\alpha}$.
  The induction hypothesis implies $\n([R']_{\approx_\alpha}) =
  \n(\Gamma(R'))$.  Since $\n([R']_{\approx_\alpha})$ is the smallest
  set that supports $[R']_{\approx_\alpha}$ we have
  $\n([R']_{\approx_\alpha}) \subseteq N$.  Hence~$\pi$ witnesses
  $\may{\alpha}\Gamma(R') = \may{\beta}\Gamma(S')$.

  $\Longleftarrow$: From $\Gamma(R) = \Gamma(S)$ we obtain $S =
  \beta.S'$ and a permutation~$\pi$ with $\pi \cdot \alpha = \beta$
  and $\pi \cdot \Gamma(R') = \Gamma(S')$.
  Lemma~\ref{lem:a-equivariant} implies $\Gamma(R') = \Gamma(\pi^{-1}
  \cdot S')$, hence $\pi \cdot R' \approx_\alpha S'$ by induction and
  equivariance of~$\approx_\alpha$.
  Moreover, $\pi$ must leave names in $(\n(\alpha)\cup\n(\Gamma(R')))
  \setminus \bn(\alpha)$ invariant.  The induction hypothesis implies
  $\n([R']_{\approx_\alpha}) = \n(\Gamma(R'))$.  Hence $\pi$ witnesses
  $\alpha.R' \approx_\alpha \beta.S'$.
\end{proof}

Theorem~\ref{thm:a-iff} implies that~$\Gamma$ can be lifted to a
partial function on alpha-equivalence classes, $\hat{\Gamma} \colon
\mathcal{R}/_{\approx_\alpha} \hookrightarrow \mathcal{A}$,
$[R]_{\approx_\alpha} \mapsto A_R$ for $R\in\dom \Gamma$.  The
following lemma shows that this function is defined on all
alpha-equivalence classes that are h.f.s.

\begin{lem}
If~$[R]_{\approx_\alpha}$ is h.f.s, then $R \approx_\alpha S$ for some
$S \in \dom \Gamma$.
\end{lem}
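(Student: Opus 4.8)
The plan is to argue by structural (well-founded) induction on the raw formula~$R$, taking as induction hypothesis the statement of the lemma applied to the proper subformulas of~$R$. This is legitimate because every subformula~$S'$ of~$R$ again has an h.f.s.\ class: the subformulas of~$S'$ are themselves subformulas of~$R$, so Definition~\ref{def:hfs} transfers downward. Thus for each immediate constituent of~$R$ the induction hypothesis supplies an $\approx_\alpha$-variant lying in~$\dom\Gamma$.

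The easy cases come first. For $R=\varphi$ I take $S=R$, since $\varphi\in\dom\Gamma$ unconditionally. For $R=\neg R'$ the hypothesis gives $S'\approx_\alpha R'$ with $S'\in\dom\Gamma$, and $S=\neg S'$ works because both~$\approx_\alpha$ and the domain condition are compatible with negation. The binding case $R=\alpha.R'$ is analogous: the hypothesis supplies $S'\approx_\alpha R'$ in~$\dom\Gamma$, and $S=\alpha.S'$ satisfies $S\approx_\alpha R$ (take $\pi=\mathrm{id}$ in Definition~\ref{def:rawalpha}, using that $[R']_{\approx_\alpha}$ is finitely supported so a supporting set~$N$ exists) while $S\in\dom\Gamma$ because membership at a binder only requires the body to lie in~$\dom\Gamma$.

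The crux is the conjunction $R=\conjunc R_i$. Applying the induction hypothesis to each conjunct yields $S_i\approx_\alpha R_i$ with $S_i\in\dom\Gamma$ (choosing one such~$S_i$ for each~$i$), and I set $S=\conjunc S_i$ over the same index set, so the cardinality bound $\lvert I\rvert<\kappa$ is preserved. Then $S\approx_\alpha R$, since $[S_i]_{\approx_\alpha}=[R_i]_{\approx_\alpha}$ means the two conjunctions determine the same set of conjunct-classes (Definition~\ref{def:rawalpha}). It remains to verify $S\in\dom\Gamma$, i.e.\ that $\{\Gamma(S_i)\mid i\in I\}$ is finitely supported. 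Here is the essential point, and the reason the construction passes through $\approx_\alpha$-variants of the conjuncts rather than working with the raw conjuncts directly: the set of raw conjuncts need not be finitely supported, but h.f.s.\ provides finite support of the class $[\conjunc R_i]_{\approx_\alpha}$, hence of the set of classes $\{[R_i]_{\approx_\alpha}\mid i\in I\}$, because a permutation fixes that class exactly when it permutes the conjunct-classes among themselves. Since each $[R_i]_{\approx_\alpha}$ lies in the domain of~$\hat\Gamma$ (this is precisely what the induction hypothesis guarantees), I may transport this finite support along the equivariant injection~$\hat\Gamma$ (Lemma~\ref{lem:a-equivariant} and Theorem~\ref{thm:a-iff}): $\{\Gamma(S_i)\mid i\}=\hat\Gamma(\{[R_i]_{\approx_\alpha}\mid i\})$ is then finitely supported, which is exactly the well-formedness condition for $\conjunc\Gamma(S_i)$ in~$\mathcal{A}$.

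I expect the main obstacle to be this support bookkeeping in the conjunction case. One must scrupulously separate three notions that are easy to conflate: finite support of the individual conjunct-classes (needed so that $\hat\Gamma$ is defined on each), finite support of the \emph{set} of conjunct-classes (supplied by h.f.s.\ of the enclosing conjunction, and crucially \emph{not} implying finiteness of the union of the individual supports, cf.\ the set-of-all-names example in Section~\ref{sec:background}), and finite support of the raw conjuncts (which may genuinely fail). Confirming that passing to $\hat\Gamma$ preserves finite support of the relevant set is the technical heart; the remaining cases are routine re-wrappings justified directly by the induction hypothesis.
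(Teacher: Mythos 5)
Your proof is correct and takes essentially the same approach as the paper: the paper's own proof is precisely a structural induction on~$R$ (stated without further detail), and your treatment of the only nontrivial case---deriving finite support of the set of conjunct-classes from h.f.s.\ of the conjunction's class and transporting it through the equivariant injection~$\hat{\Gamma}$ to get well-formedness of $\conjunc \Gamma(S_i)$---is the right way to discharge it.
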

  
\begin{proof}
  By structural induction on~$R$.
\end{proof}

To prove that~$\hat{\Gamma}$ is a bijection between
h.f.s.\ alpha-equivalence classes and formulas, it remains to show
that every formula is the image of some h.f.s\ alpha-equivalence
class.

\begin{thm}
  For any formula~$A$, there is h.f.s~$[R]_{\approx_\alpha}$ such that
  $\hat{\Gamma}([R]_{\approx_\alpha}) = A$.
\end{thm}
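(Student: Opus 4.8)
The plan is to prove the statement by establishing that $\hat\Gamma$ is \emph{surjective} onto $\mathcal{A}$, which together with Theorem~\ref{thm:a-iff} (injectivity) and the preceding lemma (that $\hat\Gamma$ is defined on every h.f.s.\ class) completes the proof that $\hat\Gamma$ is a bijection between h.f.s.\ alpha-equivalence classes and formulas. The proof proceeds by structural induction on the formula $A$ (Definition~\ref{def:formulas}), exhibiting in each case a raw formula $R\in\dom\Gamma$ whose class $[R]_{\approx_\alpha}$ is h.f.s.\ and satisfies $\hat\Gamma([R]_{\approx_\alpha})=A$. The base case $A=\varphi$ is handled by $R=\varphi$, since $\Gamma(\varphi)=\varphi$ and the single class $[\varphi]_{\approx_\alpha}$ has finite support. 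For $A=\neg A'$, the induction hypothesis gives an h.f.s.\ class with representative $R'\in\dom\Gamma$ and $\hat\Gamma([R']_{\approx_\alpha})=A'$; then $R=\neg R'$ works, as $[\neg R']_{\approx_\alpha}$ has the same (finite) support as $[R']_{\approx_\alpha}$ and its only additional subformula is the negation itself. For the modality $A=\may{\alpha}A'$, take $R=\alpha.R'$: here $\Gamma(\alpha.R')=\may{\alpha}\Gamma(R')=\may{\alpha}A'$, and $[\alpha.R']_{\approx_\alpha}$ has finite support $\n(\alpha,A')\setminus\bn(\alpha)$, so together with the h.f.s.\ property of $R'$ the whole class is h.f.s.

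The main obstacle is the conjunction case $A=\conjunc A_i$. By the definition of formula the set $\{A_i\setsep i\in I\}$ is finitely supported and of cardinality below $\kappa$, and the induction hypothesis supplies for each $i$ an h.f.s.\ class mapping to $A_i$ under $\hat\Gamma$. The difficulty is to choose raw representatives so that the resulting raw conjunction $\conjunc R_i$ again lies in $\dom\Gamma$ (i.e.\ has a finitely supported class). By the alpha-equivalence rule for conjunctions (Definition~\ref{def:rawalpha}), the class $[\conjunc R_i]_{\approx_\alpha}$ depends only on the \emph{set} of classes $\mathcal{C}=\{[R_i]_{\approx_\alpha}\setsep i\in I\}$, so it suffices to show $\mathcal{C}$ is finitely supported. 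Here I would invoke injectivity: by Theorem~\ref{thm:a-iff} the h.f.s.\ preimage of each $A_i$ is \emph{unique}, so $\mathcal{C}$ is precisely the image of $\{A_i\setsep i\in I\}$ under the (partial) inverse of $\hat\Gamma$. Since $\hat\Gamma$ is equivariant (the lifting of Lemma~\ref{lem:a-equivariant}) and injective, its inverse is equivariant on its domain, whence $\mathcal{C}$ is the image of a finitely supported set under an equivariant map and is therefore finitely supported, with $\n(\mathcal{C})\subseteq\n(\{A_i\setsep i\in I\})$.

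With $\mathcal{C}$ finitely supported, picking any representatives $R_i$ yields $\conjunc R_i\in\dom\Gamma$ (well-formedness of $\Gamma(\conjunc R_i)=\conjunc\Gamma(R_i)=\conjunc A_i$ holds because $\{\Gamma(R_i)\setsep i\in I\}=\{A_i\setsep i\in I\}$ is finitely supported), and $\hat\Gamma([\conjunc R_i]_{\approx_\alpha})=\conjunc A_i=A$. The class $[\conjunc R_i]_{\approx_\alpha}$ is h.f.s.: its own support is finite (as $\mathcal{C}$ is finitely supported), and every proper subformula occurs inside some $R_i$, whose class is h.f.s.\ by the induction hypothesis. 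I expect the cardinality and finite-support bookkeeping in this conjunction case to be the only genuinely delicate point; the crux is the reduction, via uniqueness and equivariance of $\hat\Gamma$, of ``finite support of a chosen family of representatives'' to ``finite support of the given conjunct set'', which sidesteps any need for a coherent simultaneous choice of representatives.
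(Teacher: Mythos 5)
Your proposal is correct and takes essentially the same route as the paper: the paper also establishes surjectivity by structural induction on~$A$, first using injectivity (Theorem~\ref{thm:a-iff}) to obtain the partial inverse $\hat{\Gamma}^{-1}$ and Lemma~\ref{lem:a-equivariant} to make it equivariant on its domain, then showing inductively that $A \in \dom{\hat{\Gamma}^{-1}}$ and that $\hat{\Gamma}^{-1}(A)$ is h.f.s. Your explicit treatment of the conjunction case---reducing finite support of the set of preimage classes to finite support of $\{A_i \setsep i \in I\}$ via uniqueness and equivariance of the inverse---is exactly the content packed into the paper's terse appeal to the equivariant $\hat{\Gamma}^{-1}$.
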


\begin{proof}
  By Theorem~\ref{thm:a-iff}, $\hat{\Gamma}$ is injective.  It follows
  that the inverse function $\hat{\Gamma}^{-1} \colon \mathcal{A}
  \hookrightarrow \mathcal{R}/_{\approx_\alpha}$ is well-defined
  on~$\hat{\Gamma}(\mathcal{R}/_{\approx_\alpha})$.  Equivariance
  of~$\Gamma$ (Lemma~\ref{lem:a-equivariant}) implies
  that~$\hat{\Gamma}^{-1}$ is equivariant on its domain.
  We show by induction on~$A$ that $A \in \dom{\hat{\Gamma}^{-1}}$ and
  that $\hat{\Gamma}^{-1}(A)$ is h.f.s.
\end{proof}

In order to conveniently work with formulas in Isabelle we have proved
important lemmas; for instance, a strong induction principle for
formulas that allows the bound names in~$\may{\alpha} A$ to be chosen
fresh for any finitely supported context.
This is why we prefer the logic for weak bisimulation
(Definition~\ref{def:weaklogic}) to be a sublogic: that way we do not
have to re-do these proofs.  Armed with these definitions and
induction principles, most of the proof mechanisation proceeded
smoothly.  The formal proofs follow closely the sketches provided in
this paper, and often helped clarify points and correct minor errors.
Our development currently amounts to approximately 8800 lines of Isar
proof scripts; this is divided according to
Table~\ref{table:isabelle-loc}.

\begin{table}
\begin{tabular}{llr}
  \toprule
\textbf{Topic}         & \textbf{Section}                    & \textbf{LOC} \\
\midrule
Basics                 & \ref{sec:nominaltransitionsystems}  & 3062 \\
L-bisimilarity         & \ref{sec:variants}                  & 3215 \\
Unobservables          & \ref{section:weak}                  & 1730 \\
Predicates vs actions & \ref{sec:state-predicates-versus-actions} & 817 \\
\bottomrule
\end{tabular}
\caption{Size (lines of code) of the Isabelle formalisation.}
\label{table:isabelle-loc}
\end{table}

Some parts of this paper lack formal proofs.  Among them are the
definitions and results in Section~\ref{sec:fixedpoints} on fixpoint
operators.  Here proof mechanisation would be tedious, since Isabelle
does not provide enough support for transfinite induction: explicit
reasoning about ordinals and cardinals would require further library
development~\cite{DBLP:conf/itp/Blanchette0T14}.  The results of
Sections~\ref{sec:applications} and~\ref{sec:related}, on applications
and comparisons with related work, lack proofs for two reasons.
First, the correspondences are not exact since we allow arbitrary
finitely supported conjunctions, and second, any formal proof must
begin by constructing a formal model of the related work under study,
which would require a large effort including resolving any ambiguities
in that work.

Finally, the result on disjunction elimination
(Section~\ref{sect:disjunctionelemination},
Theorem~\ref{thm:disjunctionelim}) is challenging for a different
reason.  The proof uses a case analysis and induction over derived
operators.  While we believe that our proof is correct, to formalise
it we would need to prove that this is well-defined, despite the
ranges of the derived operators not being disjoint (recall that
$\wmay{\alpha}A$ is shorthand for a disjunction, which in turn is
shorthand for a negation).  This turned out to be more difficult than
first anticipated, and at the time of writing, we have not finished
the formal proofs.

\section{Conclusion}
\label{sec:conclusion}

We have given a general account of transition systems and Hennessy-Milner Logic using nominal sets.  The advantage of our approach is that it is more expressive than previous work.  We allow infinite conjunctions that are not uniformly bounded, meaning that we can encode, e.g., quantifiers and the next-step operator.  We have given ample examples of how the definition captures different variants of bisimilarity and how it relates to many different versions of~HML in the literature. Our main results have been formalised in the interactive proof assistant Isabelle.

There are many interesting avenues for further research. Many process algebras can be given a semantics  where the operators of the algebra correspond to functions on nominal transition systems with designated initial states. There may be interesting classes of such functions, for example finitely supported bisimilarity preserving functions, that merit study. For particular such functions it would be interesting to explore compositionality of~$\models$. Similar work abounds for ordinary transition systems and modal logics; for nominal transition systems in general this area is very much open.

\section*{Acknowledgements}

We thank Andrew Pitts for enlightening discussions on nominal data types with infinitary constructors, and Dmitriy Traytel for providing an Isabelle/HOL formalisation of cardinality-bounded sets. We also thank Ross Horne for discussions on modal logics for open bisimilarity, and for pointing out an inadequacy of an earlier version of $F/L$-bisimilarity. We are very grateful to the anonymous referees for many constructive comments.

\bibliographystyle{alpha}
\bibliography{pi,logics}

\end{document}